\def\Z{\mathbb{Z}}
\def\p{\partial}
\def\C{\mathbb{C}}
\def\o{\omega}
\def\sm{{\mathrm{reg}}}
\def\tE{{\widetilde{E}}}
\def\bz{{\bar{z}}}
\def\deep{{\delta_{\epsilon}}}
\def\Conf{{\mathrm{Conf}}}
\def\bp{\bar{\p}}
\def\bw{{\bar{w}}}
\def\tz{{\tilde{z}}}
\def\tw{{\tilde{w}}}
\def\btz{{\bar{\tz}}}
\def\a{\alpha}
\def\bfe{{\mathbf{e}}}
\def\bfw{{\mathbf{w}}}
\def\bfi{{\mathbf{i}}}
\def\half{{\frac{1}{2}}}
\def\bfk{{\mathbf{k}}}
\def\lan{{\langle}}
\def\bfl{{\mathbf{l}}}
\def\b{\beta}
\def\btw{{\bar{\tw}}}
\def\Op{\mathrm{Op}}
\def\End{\mathrm{End}}
\def\ep{\epsilon}
\numberwithin{equation}{section}
\theoremstyle{plain}
\newtheorem{thm}{Theorem}[section]
\newtheorem{lem}[thm]{{{L}}emma}
\newtheorem{cor}[thm]{{C}orollary}
\newtheorem{prop}[thm]{{P}roposition}
\newtheorem{conj}[thm]{{C}onjecture}
\newtheorem{rem}[thm]{Remark}
\newtheorem{defn}[thm]{{D}efinition}
\def\R{\mathbb{R}}
\def\Xint#1{\mathchoice
	{\XXint\displaystyle\textstyle{#1}}%
	{\XXint\textstyle\scriptstyle{#1}}%
	{\XXint\scriptstyle\scriptscriptstyle{#1}}%
	{\XXint\scriptscriptstyle\scriptscriptstyle{#1}}%
	\!\int}
\def\XXint#1#2#3{{\setbox0=\hbox{$#1{#2#3}{\int}$}
		\vcenter{\hbox{$#2#3$}}\kern-.5\wd0}}
\def\dashint{\Xint-}
\newcommand{\be}{\begin{equation}}
	\newcommand{\ee}{\end{equation}}
\newcommand{\ba}{\begin{aligned}}
	\newcommand{\ea}{\end{aligned}}
\begin{document}
	\title{Feynman Graph Integrals on K\"ahler Manifolds}
	\author{Minghao Wang\footnote{Department of Mathematics \& Statistics, Boston University, 02215, Boston, USA, minghaow@bu.edu}
		\and
		Junrong Yan\footnote{Department of Mathematics, Northeastern University, 02115, Boston, USA, j.yan@northeastern.edu} 
	}
	\maketitle
	\begin{abstract}
In this paper, we establish the convergence of Feynman graph integrals on closed real-analytic K\"ahler manifolds and uncover the structural mechanism underlying this convergence. The key insight is that, using Getzler's rescaling technique, the graph integrands extend canonically to the Fulton--MacPherson compactification of configuration spaces as forms with divisorial-type singularities. This allows the Feynman graph integrals to be rigorously defined as Cauchy principal value integrals. As an application, these integrals provide a mathematically rigorous construction of the higher-genus B-model invariants on Calabi–Yau threefolds in the sense of Bershadsky–Cecotti–Ooguri–Vafa (BCOV).
\end{abstract}
	\tableofcontents

	\section{Introduction}
Feynman graph integrals lie at the core of perturbative quantum field theory and have driven developments across geometry, topology, algebra, and analysis in  mathematics. 
While their algebraic and combinatorial structures are well understood, providing a mathematically rigorous meaning remains subtle. 
Understanding the convergence of Feynman integrals in certain quantum field theories has far-reaching applications across mathematics and physics. 
Notable examples include the configuration space integrals for topological quantum field theories developed by S. Axelrod and I. Singer \cite{axelrod1994chern,Axelrod1991ChernSimonsPT} and by M. Kontsevich \cite{kontsevich1994feynman}, which have proved exceptionally powerful. 
These integrals offer modern approaches to knot invariants \cite{kontsevich1994feynman}, 3-manifold invariants \cite{bott2018integral,bott1999integral}, smooth structures on fiber bundles \cite{kontsevich1994feynman}, operad theory \cite{getzler1994operads,kontsevich1999operads}, deformation quantization of Poisson manifolds \cite{kontsevich2003deformation}, and numerous other applications.

In contrast, a systematic theory of Feynman graph integrals in holomorphic quantum field theories has emerged only in the past decade \cite{Costello2015QuantizationOO,Li:2011mi,li2021regularized,li2023regularized,Li:2016gcb,budzik2023feynman,Williams:2018ows,wang2024factorization,wang2025feynman}. 
A major technical challenge is that these integrals are typically not absolutely convergent (see \cite[Example 2.1.1]{wang2025feynman}). 
To overcome this, the first author recently introduced a heat-kernel regularization and established the finiteness of holomorphic field theories on affine spaces \cite{wang2025feynman}. For a comprehensive account of heat-kernel regularization , we refer the reader to \cite{costellorenormalization}.

This work offers an alternative approach to \cite{wang2025feynman}. We show that the Feynman graph integrand extends canonically to the Fulton–MacPherson compactification of configuration spaces with mild singularities, allowing the corresponding graph integrals to be defined rigorously as Cauchy principal value integrals (c.f. \cite{herrera1971residues}). The resulting convergence theorem extends from affine spaces to arbitrary closed \emph{real-analytic K\"ahler manifolds}, thereby broadening the scope of holomorphic quantum field theories.

Similar to topological quantum field theories, the Feynman graph integrals of holomorphic field theories have numerous applications.  Below, we highlight a few:
	\begin{enumerate}[(1)]
		\item Constructions of invariants as in  \Cref{BCOV invariant}.
		\item  Construction of factorization algebras, as in \cite{wang2024factorization}; see also \cite{costello_gwilliam_2016, costello_gwilliam_2021} for foundational discussions on factorization algebras.
		\item Constructions of higher chiral algebras as in \cite{minghaoBrianZhengping2025}.
	\end{enumerate}
	Moreover, the results established in this paper have potential applications in the following areas:
	\begin{enumerate}[(1)]
    	\item \textbf{Mirror Symmetry.} In mirror symmetry, the Gromov–Witten invariants of Calabi–Yau manifolds are conjecturally identified with the Feynman graph integrals arising from the Kodaira–Spencer gravity theory (also known as the BCOV theory, see \cite{bershadsky1994kodaira,costello2012quantum}) on the mirror manifolds. Our results define the higher-genus B-model invariants and provide a rigorous mathematical formulation of the BCOV mirror symmetry conjecture (see also \cref{section of BCOV conjecture}).
		\item \textbf{Gauge field theories on $\mathbb{R}^{4}$.} In the well-known Penrose–Ward correspondence (see \cite{Penrose:1976js,WARD197781}), it is conjectured that perturbative self-dual Yang-Mills theory should correspond to a perturbative holomorphic quantum field theory on the twistor space. We expect that our theorems offer new insights toward understanding this correspondence. For example, our results should imply the finiteness of Feynman graph integrals in self-dual Yang-Mills theory.
		\item \textbf{Holomorphic ``knot" invariants.} Feynman graph integrals arising from topological field theories have proven useful in the study of knot invariants. Our theorems have potential applications in the study of embeddings of Riemann surfaces, viewed as holomorphic ``knots", into complex manifolds. For the study of holomorphic linking numbers in the literature, see \cite{Atiyah:1981ey,Frenkel:2005qk}.

	\end{enumerate}
	
	\subsection{Main results}\label{section of main results}
	For simplicity, and to avoid introducing excessive technical notions at the outset, we present here a special case of our main results (Theorems~\ref{Thm-Fey-Graph-Int} and~\ref{defining function and Feynman graph integral}), which provides a rigorous mathematical construction of the higher-genus B-model in the framework of Bershadsky--Cecotti--Ooguri--Vafa (BCOV).

    Let $M$ be a closed Calabi--Yau threefold equipped with a Calabi--Yau metric and a holomorphic volume form $\Omega$.  
Consider the bundle
\[
\widetilde{E} = \wedge^\ast T^{1,0}M \otimes \wedge^\ast (T^{0,1}M)^\ast,
\]
whose space of sections is
\[
\Gamma(\widetilde{E}) = \Omega^{0,\ast}\big(M, \wedge^\ast T^{1,0}M\big).
\]

Contraction with $\Omega$ gives an isomorphism
\[
(-)^\vee_\Omega : \Omega^{0,j}(M,\wedge^i T^{1,0}M) \;\longrightarrow\; \Omega^{3-i,j}(M).
\]
We view the operators $\partial$ and $\bar\partial$ on $\Omega^{\ast,\ast}(M)$ as operators on $\Omega^{0,\ast}(M,\wedge^\bullet T^{1,0}M)$ via this identification.

Consider the trace map
\be\label{defn of trace}
\operatorname{tr} : \Omega^{0,j}(M,\wedge^i T^{1,0}M) \to \Omega^{6-i,j}(M), \quad
\alpha \mapsto (\alpha^\vee_\Omega)\wedge\Omega.
\ee
%This trace map gives rise to a pairing $\o$ considered in BCOV theory is the natural pairing
%\[
%\o( \alpha, \beta) = \operatorname{Tr}(\alpha\wedge\beta), \qquad 
%\alpha,\beta \in \Gamma(M,\wedge^\bullet T^{1,0}M).
%\]
Let $P$ be a distribution-valued section of the bundle $\tE\boxtimes\tE$ over $M\times M$, such that the following equations hold (see Definition \ref{ordinary propagator} and \Cref{BCOV's pairing} for details):
	\[
	\begin{cases}
		\left(\bar{\partial} \otimes \operatorname{id}+\operatorname{id} \otimes \bar{\partial}\right) P=\delta-\mathcal{H},\\
		P\in \mathrm{Im}(\bar{\partial}^* \otimes \mathrm{id}),
	\end{cases}
	\]
	where $\delta$ and $\mathcal{H}$ are the delta-distribution and the integral kernel of the harmonic projection operator, respectively.

The \emph{BCOV propagator} is given by
\[
P_{\mathrm{BCOV}} := (\partial\otimes \mathrm{id})\,P.
\]

Let $\tilde{\rho}:M\times M\rightarrow\mathbb{R}_{\geq 0}$ be a non-negative function, such that:
	\begin{enumerate}[(1)]
		\item $\tilde{\rho}^{2}$ is smooth and $\tilde{\rho}^{-1}(0)=\triangle$, where $\triangle:=\{(p,p)\in M\times M : p\in M\}$.
		\item There exists an open neighborhood $U\subset M\times M$ of $\triangle$ such that \[\tilde{\rho}^{2}|_{U}=\rho^{2}|_{U},\]
		where $\rho$ is the distance function on $M$.
	\end{enumerate}
	Then we have
	\begin{thm}\label{main result1}
		Let $\vec{\Gamma}$ be a directed graph, whose vertex set $\vec{\Gamma}_{0}$ and edge set $\vec{\Gamma}_{1}$ are ordered sets. Then
		\be\label{convergence}
		\lim_{\epsilon\rightarrow0}\int_{M^{\vec{\Gamma}_{0}}}\chi_{\epsilon}(p_{1},\cdots,p_{|\vec{\Gamma}_{0}|})\left(\bigotimes_{v\in \vec{\Gamma}_0}\mathrm{tr}_v\right)
\left(\prod_{e\in\vec{\Gamma}_1}
P_{\mathrm{BCOV}}(p_{t(e)},p_{h(e)})\right)
		\ee
		exists, where $t,h:\vec{\Gamma}_{1}\rightarrow\vec{\Gamma}_{0}$ denotes the tail and head of a given directed edge, and $|\vec{\Gamma}_{0}|$ is the number of vertices. Here $\chi_{\epsilon}$ is the function on $M^{|\vec{\Gamma}_{0}|}$ defined by
		\[
		\chi_{\epsilon}(p_{1},\cdots,p_{|\vec{\Gamma}_{0}|})=
		\begin{cases}
			0,&\text{if }\prod_{i<j\in\vec{\Gamma}_{0}}\tilde{\rho}^2(p_i,p_j)\leq \epsilon,\\
			1,&\text{if }\prod_{i<j\in\vec{\Gamma}_{0}}\tilde{\rho}^2(p_i,p_j)> \epsilon,
		\end{cases}
		\]
        and for a finite set $A$, $|A|$ denotes number of elements in $A$.
        
		Moreover, the integral does not depend on the choice of $\tilde{\rho}$.
	\end{thm}

 The convergence of the limits in \eqref{convergence} arises from a structural property of the Feynman graph integrand, described in \Cref{main result 0} below. Let $\widetilde{Conf}_{\vec{\Gamma}_{0}}(M)$ denote the Fulton–MacPherson compactification (see \Cref{Fulton-MacPherson}) of the configuration space of $|\vec{\Gamma}_0|$ points in $M$. We prove the following:

\begin{thm}\label{main result 0} Under the assmuption of \Cref{main result1},
the Feynman graph integrand
\[
\left(\bigotimes_{v\in \vec{\Gamma}_0}\mathrm{tr}_v\right)
\left(\prod_{e\in\vec{\Gamma}_1}
P_{\mathrm{BCOV}}(p_{t(e)},p_{h(e)})\right)
\]
extends canonically to a form on $\widetilde{Conf}_{\vec{\Gamma}_{0}}(M)$
with divisorial-type singularities (\Cref{divisorial type singularities}) .
\end{thm}

	Using Cauchy principal value integrals in Appendix \ref{section Cauchy principal value} and \Cref{main result 0}, we can reformulate Theorem \ref{main result1} as follows:
	\begin{thm}\label{main result2}
		Let $\vec{\Gamma}$ be a directed graph, whose vertex set $\vec{\Gamma}_{0}$ and edge set $\vec{\Gamma}_{1}$ are ordered sets. Then
		\begin{align*}
			&\lim_{\epsilon\rightarrow0}\int_{M^{\vec{\Gamma}_{0}}}\chi_{\epsilon}(p_{1},\dots,p_{|\vec{\Gamma}_{0}|})\left(\bigotimes_{v\in \vec{\Gamma}_0}\mathrm{tr}_v\right)
\left(\prod_{e\in\vec{\Gamma}_1}
P_{\mathrm{BCOV}}(p_{t(e)},p_{h(e)})\right)\\
			&=\dashint_{\widetilde{Conf}_{\vec{\Gamma}_{0}}(M)}\left(\bigotimes_{v\in \vec{\Gamma}_0}\mathrm{tr}_v\right)
\left(\prod_{e\in\vec{\Gamma}_1}
P_{\mathrm{BCOV}}(p_{t(e)},p_{h(e)})\right),
		\end{align*}
		 where the integral on the right hand side is in the sense of Cauchy principal value integral (see Appendix \ref{section Cauchy principal value}).
	\end{thm}
	
	\Cref{main result1}, \Cref{main result 0} and \Cref{main result2} are special cases of our main results Theorem \ref{Thm-Fey-Graph-Int} and \Cref{defining function and Feynman graph integral}, where we consider general closed real analytic K\"ahler manifolds and Hermitian holomorphic vector bundles with holomorphic pairings.
	
	Notably, the proof of our main theorems draws on Getzler’s rescaling technique (see \cite{berline2003heat, getzler1986short}), originally developed for the Atiyah–Singer local index theorem, highlighting a fruitful interplay between index-theoretic methods and Feynman graph integrals.
	In future work, we plan to adapt the machinery employed in the proof of the family version of the Atiyah–Singer local index theorem to the study of Feynman graph integrals.

\subsection{BCOV's formulation of Mirror symmetry conjecture}\label{section of BCOV conjecture}
In this section, under the same setting as in \cref{section of main results}, we apply our convergence theorems to briefly describe the higher-genus B-model invariants in the BCOV framework and to formulate their mirror symmetry conjecture.

It follows from \Cref{main result2} and the graded symmetric property of $P_{\mathrm{BCOV}}$ that
\begin{cor}
Let $\Gamma$ be a trivalent undirected graph.  
Then the integral
\[
W_{\mathrm{BCOV}}(\Gamma)
= 
\dashint_{\widetilde{\mathrm{Conf}}_\Gamma(M)}
\left(\bigotimes_{v\in \Gamma_0}\mathrm{tr}_v\right)
\left(\prod_{e\in\Gamma_1}
P_{\mathrm{BCOV}}(p_{t(e)},p_{h(e)})\right)
\]
is well-defined in the sense of Cauchy principal value.
\end{cor}

\begin{defn} \label{BCOV invariant}
Let $\mathrm{Graph}_3$ denote the set of connected trivalent undirected graphs.
For $g\ge2$, define
\[
F_g^{\mathrm{BCOV}}(M)
= 
\sum_{\substack{\Gamma\in \mathrm{Graph}_3\\ g(\Gamma)=g}}
\frac{1}{|\mathrm{Aut}(\Gamma)|}\, W_{\mathrm{BCOV}}(\Gamma),
\qquad 
g(\Gamma)=|\Gamma_1|-|\Gamma_0|+1.
\]
\end{defn}

Now we can formulate BCOV conjecture mathematical rigorously as follows:
\begin{conj}[Bershadsky–Cecotti–Ooguri–Vafa]
For $g\ge2$, $F_g^{\mathrm{BCOV}}(M)$ defines the genus-$g$ B-model invariant, i.e.
\begin{enumerate}
\item $F_g^{\mathrm{BCOV}}(M)$ is independent of the K\"ahler class of $M$;
\item if $M$ and $M^{\vee}$ are mirror manifolds, then the holomorphic limit (see \cite{kanazawa2015lectures}) of $F_g^{\mathrm{BCOV}}(M)$ coincides with the genus-$g$ Gromov–Witten invariants of the mirror $M^\vee$.
\end{enumerate}
\end{conj}

\begin{rem}
For the elliptic curve $E_\tau$, S.~Li and K.~Costello \cite{li2011bcov,costello2012quantum} constructed a variant of $F_g^{\mathrm{BCOV}}(E_\tau)$ with punctures, whose large complex structure limit as $\bar\tau\to\infty$ (with $\tau$ fixed) recovers the punctured Gromov–Witten invariants.
\end{rem}

	\subsection{Organizations}
	
	In \Cref{sec2}, we introduce the notion of the propagator. Assuming \Cref{propagator has divisorial type singularities}, that is, that the propagator has the desired singularities, we define Feynman graph integrals using Cauchy principal value integrals. This construction forms the core of our main results, Theorem~\ref{Thm-Fey-Graph-Int} and Theorem~\ref{defining function and Feynman graph integral}.

	Sections \Cref{Section 3} and \Cref{Section 4} are devoted to the proof of \Cref{propagator has divisorial type singularities}. In \Cref{Section 3}, we introduce the notion of regular expressions, which can be viewed as the analog of divisorial type singularities in the context of heat kernels. In \Cref{subsection regular expressions and divisorial type singularities}, we establish the connection between regular expressions of the heat kernel and the divisorial type singularities of the propagator, as stated in \Cref{thm41}. This reduces the proof of \Cref{propagator has divisorial type singularities} to verifying that the heat kernel admits a regular expression, namely \Cref{main0}. 
	
	Finally, in \Cref{Section 4}, we prove \Cref{main0} using Getzler’s rescaling technique.

	\subsection*{Acknowledgments}
	We thank Keyou Zeng, Zhengping Gui, Si Li, Ezra Getzler, Maciej Szczesny, Brian Williams, and Kevin Costello for valuable discussions. Part of this work was completed during the first author’s visit to the Institut Mittag‐Leffler. He gratefully acknowledges its hospitality.

	\section{Feynman Graph Integrals on K\"ahler Manifolds}\label{sec2}
	
	In this section, we introduce the Feynman graph integrals from holomorphic quantum field theories. 
	\subsection{Propagators}\label{subsection propagators}
	In this subsection, we introduce the notion of the \emph{propagator} in holomorphic quantum field theory, and establish its existence (\Cref{existence of propagator}) and uniqueness (\Cref{uniqueness of propagator}).

	We begin by recalling some basic facts from Hodge theory on K\"ahler manifolds.
	
	Let $M$ be a closed K\"ahler manifold, and let $E \to M$ be a Hermitian holomorphic vector bundle.
	
	Consider the twisted bundle $\widetilde{E} = E \otimes \Lambda^\bullet T_{0,1}^* M$. The space of smooth sections $\Gamma(\widetilde{E})$ forms the Dolbeault complex of $E$, equipped with the Dolbeault operator $\bar{\partial}_{\widetilde{E}}$ and its formal adjoint $\bar{\partial}_{\widetilde{E}}^*$ with respect to the natural Hermitian inner product. The associated Laplacian is given by
	\[
	\Delta_{\widetilde{E}} = \bar{\partial}_{\widetilde{E}} \circ \bar{\partial}_{\widetilde{E}}^* + \bar{\partial}_{\widetilde{E}}^* \circ \bar{\partial}_{\widetilde{E}}.
	\]
	
	Similarly, the dual bundle $E^{*}$ is a Hermitian holomorphic vector bundle. Define $\widetilde{E}' = E^{*} \otimes K_M \otimes \Lambda^\bullet T_{0,1}^* M$, where $K_M$ denotes the canonical bundle of $M$. The space $\Gamma(\widetilde{E}')$ carries natural operators $\bar{\partial}_{\widetilde{E}'}$, $\bar{\partial}_{\widetilde{E}'}^*$, and the Laplacian $\Delta_{\widetilde{E}'}$.
	
	Moreover, there is a natural pairing between $\Gamma(\widetilde{E}')$ and $\Gamma(\widetilde{E})$ given by
	\be\label{natural pairing}
	( s_{1} \otimes u_1,\, s_{2} \otimes u_2 ) = \int_M \langle s_1, s_2 \rangle\, u_1 \wedge u_2,
	\ee
	where $s_1 \in \Gamma(E^{*})$, $s_2 \in \Gamma(E)$, $u_1 \in \Gamma(K_M \otimes \Lambda^\bullet T_{0,1}^* M)$, and $u_2 \in \Gamma(\Lambda^\bullet T_{0,1}^* M)$, and $\lan s_1, s_2 \rangle$ denotes the natural pairing between $E$ and $E^{*}$.
	\begin{rem}
		In this paper, $ \Gamma(E) $ denotes the space of smooth sections of a vector bundle $ E $, rather than holomorphic sections.  When necessary, we also write $ \Gamma(E) $  as $ \Gamma(M;E) $ to emphasize the base manifold $ M $.
		
	\end{rem}

	\begin{lem}\label{dual operator properties}
		Let $\alpha\in \Gamma(\widetilde{E}')$, $\beta \in \Gamma(\widetilde{E})$. We denote the degrees of the differential forms $\alpha$ and $\beta$ by $|\alpha|$ and $|\beta|$, respectively. Then we have
		\begin{enumerate}[(1)]
			\item \[
			(\bar{\partial}_{\widetilde{E}'}\alpha,\beta)=(-1)^{|\alpha|+1}(\alpha,\bar{\partial}_{\widetilde{E}}\beta).
			\]
			\item \[
			(\bar{\partial}_{\widetilde{E}'}^{*}\alpha,\beta)=(-1)^{|\alpha|}(\alpha,\bar{\partial}_{\widetilde{E}}^{*}\beta).
			\]
		\end{enumerate}
	\end{lem}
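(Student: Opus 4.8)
The plan is to prove both identities by integration by parts, reducing everything to Stokes' theorem on the closed manifold $M$. By bilinearity of the pairing \eqref{natural pairing} and $\C$-linearity of all the operators involved, I may assume (after a partition of unity localizing $\alpha$ to a coordinate chart) that $\alpha = s_1\otimes\nu\otimes\phi$ and $\beta = s_2\otimes\psi$, where $s_1\in\Gamma(E^*)$, $s_2\in\Gamma(E)$, $\nu$ is a local holomorphic frame of $K_M$ (so that $\bar\partial\nu=0$), and $\phi\in\Omega^{0,|\alpha|}(M)$, $\psi\in\Omega^{0,|\beta|}(M)$ are scalar-valued forms. Writing $n=\dim_{\C}M$, both sides of (1) vanish for degree reasons unless $|\alpha|+|\beta|=n-1$, so I assume this throughout.

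For (1), consider the $(n,n-1)$-form $\eta:=\langle s_1,s_2\rangle\,\nu\wedge\phi\wedge\psi$. Since there are no $(n+1,\bullet)$-forms on $M$, we have $\partial\eta=0$, hence $d\eta=\bar\partial\eta$, and Stokes' theorem gives $\int_M\bar\partial\eta=0$. Expanding by the Leibniz rule, using $\bar\partial\nu=0$ together with the holomorphicity of the evaluation pairing $E^*\otimes E\to\mathcal O_M$ — which yields $\bar\partial\langle s_1,s_2\rangle=\langle\bar\partial_{E^*}s_1,s_2\rangle+\langle s_1,\bar\partial_E s_2\rangle$ — I will identify the resulting four terms with the integrands of $(\bar\partial_{\widetilde E'}\alpha,\beta)$ and $(\alpha,\bar\partial_{\widetilde E}\beta)$. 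Concretely, I expect to obtain
\[
0=\int_M\bar\partial\eta=(-1)^{n}(\bar\partial_{\widetilde E'}\alpha,\beta)+(-1)^{n+|\alpha|}(\alpha,\bar\partial_{\widetilde E}\beta),
\]
and cancelling the common factor $(-1)^{n}$ then gives precisely $(\bar\partial_{\widetilde E'}\alpha,\beta)=(-1)^{|\alpha|+1}(\alpha,\bar\partial_{\widetilde E}\beta)$.

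The main obstacle is the sign bookkeeping. The subtle point is that each antiholomorphic differential created by $\bar\partial_{E^*}s_1$, $\bar\partial_E s_2$, $\bar\partial\phi$ or $\bar\partial\psi$ must be commuted past the degree-$n$ factor $\nu$, producing a factor $(-1)^{n}$; this is exactly what defeats a naive term-by-term matching. The key observation is that this $(-1)^{n}$ occurs uniformly in the contributions coming from $\bar\partial_{\widetilde E'}\alpha$ and from $\bar\partial_{\widetilde E}\beta$, so it cancels, and the genuine relative sign $(-1)^{|\alpha|}$ arises solely from commuting the new differential of $\bar\partial_{\widetilde E}\beta$ past the degree-$|\alpha|$ form $\phi$. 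To make this unambiguous I will fix once and for all a convention for the Dolbeault operators on $\widetilde E$ and $\widetilde E'$ and for the ordering of the $K_M$-factor relative to the antiholomorphic form degree.

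For (2), I will deduce the adjoint identity from (1). The formal adjoints $\bar\partial^*_{\widetilde E}$ and $\bar\partial^*_{\widetilde E'}$ are defined through the Hermitian inner products on $\Gamma(\widetilde E)$ and $\Gamma(\widetilde E')$, which are related to the bilinear pairing \eqref{natural pairing} by the conjugate-linear Hodge star: the metric induces a conjugate-linear isomorphism $\tau:\Gamma(\widetilde E)\to\Gamma(\widetilde E')$ (combining the Hermitian identification $E\cong\overline{E^*}$ with the star $\Lambda^{0,q}\to K_M\otimes\Lambda^{0,n-q}$) under which the Hermitian product equals $(\tau(\cdot),\cdot)$, and similarly on $\widetilde E'$. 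Using the Hodge-theoretic formula $\bar\partial^*=-\bar{\ast}\,\bar\partial\,\bar{\ast}$ and feeding it through (1), the identity $(\bar\partial^*_{\widetilde E'}\alpha,\beta)=(-1)^{|\alpha|}(\alpha,\bar\partial^*_{\widetilde E}\beta)$ will follow, the sign $(-1)^{|\alpha|}$ emerging from the graded commutation exactly as before. (Alternatively, one could rerun the Stokes argument after expressing $\bar\partial^*$ via $\bar{\ast}$, but routing through (1) is cleaner.) I expect the only delicate points here, apart from signs, to be the precise compatibility between $\tau$, the two Hermitian products, and the bilinear pairing.
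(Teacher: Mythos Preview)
Your proposal is correct and follows essentially the same approach as the paper: part (1) is handled by Stokes' theorem (the paper states this in one line, while you spell out the sign bookkeeping), and part (2) is deduced from (1) via the Hodge-star formula $\bar\partial^*=-\bar{*}\,\bar\partial\,\bar{*}$ together with the compatibility between the Hermitian inner products on $\Gamma(\widetilde E)$, $\Gamma(\widetilde E')$ and the bilinear pairing \eqref{natural pairing}. The paper's proof of (2) carries out exactly this computation as an explicit chain of equalities using the operators $\bar{*}_E:\widetilde E\to\widetilde E'$ and $\bar{*}_{E^*}:\widetilde E'\to\widetilde E$, which are precisely your conjugate-linear isomorphisms $\tau$.
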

	\begin{proof}
		The first assertion follows from Stokes' theorem.

		To prove the second, recall the identity:
		\be\label{identity-p E star}
		\bar{\partial}_{\widetilde{E}}^{*} = -\bar{*}_{E^*} \circ \bar{\partial}_{\widetilde{E}'} \circ \bar{*}_E,
		\ee
		where $ \bar{*}_E : \widetilde{E} \to \widetilde{E}' $ and $ \bar{*}_{E^*} : \widetilde{E}' \to \widetilde{E} $ denote the Hodge star operators, see \cite[Definition 4.1.6]{huybrechts2005complex}. %, such that for any $\a_1,\a_2\in\Gamma(\tE'),\b_1,\b_2\in\Gamma(\tE)$,
		%\be\label{Hodge star}
		%\lan \a_1,\bar{*}_{E^*}\a_2\ran=(\a_1,\a_2)_{\tE'}\text{ and }\lan \b_1,\bar{*}_{E}\b_2\ran=(\b_1,\b_2)_{\tE}.
		%\ee
		Let $(\cdot,\cdot)_{\widetilde{E}}$ and $(\cdot,\cdot)_{\widetilde{E}'}$ denote the Hermitian inner products on $\Gamma(\widetilde{E})$ and $\Gamma(\widetilde{E}')$ respectively.
		Then for differential forms $\alpha\in\Gamma(\tE'),\beta\in\Gamma(\tE)$,
		\begin{alignat*}{2}
			&\quad( \alpha, \bar{\partial}_{\widetilde{E}}^* \beta )
			= -( \alpha, \bar{*}_{E^*} \bar{\partial}_{\widetilde{E}'} \bar{*}_E \beta )  = -(\alpha, \bar{\partial}_{\widetilde{E}'} \bar{*}_E \beta)_{\widetilde{E}'} 
			\\
			&= -\overline{(\bar{\partial}_{\widetilde{E}'} \bar{*}_E \beta, \alpha)_{\widetilde{E}'}}
			= -\overline{( \bar{\partial}_{\widetilde{E}'} \bar{*}_E \beta, \bar{*}_{E^*} \alpha )}
			\\
			&= (-1)^{|\beta|} \overline{( \bar{*}_E \beta, \bar{\partial}_{\widetilde{E}} \bar{*}_{E^*} \alpha )}
			= (-1)^{|\beta| + |\alpha| + 1} \overline{( \bar{*}_E \beta, 
				\bar{*}_{E^*} \bar{*}_E \bar{\partial}_{\widetilde{E}} \bar{*}_{E^*} \alpha )}
			\\
			&= (-1)^{|\beta| + |\alpha|} \overline{( \bar{*}_E \beta, 
				\bar{*}_{E^*} \bar{\partial}_{\widetilde{E}'}^* \alpha )}
			= (-1)^{|\beta| + |\alpha|} (\bar{\partial}_{\widetilde{E}'}^* \alpha, 
			\bar{*}_E \beta)_{\widetilde{E}} \\
			&= (-1)^{|\beta| + |\alpha|} ( \bar{\partial}_{\widetilde{E}'}^* \alpha, 
			\bar{*}_{E^*} \bar{*}_E \beta )
			= (-1)^{|\alpha|} ( \bar{\partial}_{\widetilde{E}'}^* \alpha, \beta ).
		\end{alignat*}
		This completes the proof.
	\end{proof}
	
	Consider the linear space of continuous linear functionals on $\Gamma(\widetilde{E}')$, denoted by $\mathcal{D}(\widetilde{E})$. Using the pairing \eqref{natural pairing}, we have an embedding $\Gamma(\widetilde{E}) \rightarrow \mathcal{D}(\widetilde{E})$. We call $\mathcal{D}(\widetilde{E})$ the space of distribution-valued sections on $\widetilde{E}$. 
	Inspired by Lemma \ref{dual operator properties}, we extend the action of $\bar{\partial}_{\widetilde{E}}$ and $\bar{\partial}_{\widetilde{E}}^*$ to $\mathcal{D}(\widetilde{E})$ by setting, for $\alpha \in \Gamma(\widetilde{E}')$ and $\beta \in \mathcal{D}(\widetilde{E})$,
	\[
	\begin{cases}
		( \alpha, \bar{\partial}_{\widetilde{E}} \beta ) := (-1)^{|\alpha| + 1} ( \bar{\partial}_{\widetilde{E}'} \alpha, \beta ), \\
		( \alpha, \bar{\partial}_{\widetilde{E}}^* \beta ) := (-1)^{|\alpha|} ( \bar{\partial}_{\widetilde{E}'}^* \alpha, \beta ).
	\end{cases}
	\]
	
	By Hodge theory, we have the decomposition
	\[
	\Gamma(\widetilde{E}) \cong \mathrm{Ker}(\Delta_{\widetilde{E}}) \oplus \mathrm{Im}(\Delta_{\widetilde{E}}).
	\]
	Let $\mathcal{H}'$ be the harmonic projection operator, defined as the composition
	\[
	\Gamma(\widetilde{E}) \rightarrow \mathrm{Ker}(\Delta_{\widetilde{E}}) \rightarrow \Gamma(\widetilde{E}).
	\]
	Consider the distribution-valued section $\mathcal{H}''$ of $\widetilde{E} \boxtimes \widetilde{E}'$ given by
	\[
	( \alpha \boxtimes \beta, \mathcal{H}'' ) = ( \alpha, \mathcal{H}' \beta ),
	\]
	where $\alpha \in \Gamma(\widetilde{E}')$ and $\beta \in \Gamma(\widetilde{E})$. Here $\boxtimes$ is the exterior tensor product of vector bundles.
	
	\begin{lem}
		The distribution-valued section $\mathcal{H}''$ turns out to be a smooth section of $\widetilde{E} \boxtimes \widetilde{E}'$.
	\end{lem}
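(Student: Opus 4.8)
The plan is to write down $\mathcal{H}''$ explicitly as a finite sum of exterior products of smooth sections, exploiting the fact that on a closed manifold the harmonic projection is a finite-rank operator whose range consists of smooth sections.

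First I would invoke Hodge theory for the elliptic Laplacian $\Delta_{\widetilde{E}}$ on the closed manifold $M$: the space $\mathrm{Ker}(\Delta_{\widetilde{E}})$ is finite-dimensional, and by elliptic regularity every harmonic section lies in $\Gamma(\widetilde{E})$, i.e. is smooth. I then fix an orthonormal basis $\{e_1,\dots,e_N\}$ of $\mathrm{Ker}(\Delta_{\widetilde{E}})$ with respect to the Hermitian inner product $(\cdot,\cdot)_{\widetilde{E}}$, so that the harmonic projection takes the standard form
\[
\mathcal{H}'\beta=\sum_{i=1}^{N}(\beta,e_i)_{\widetilde{E}}\,e_i,\qquad \beta\in\Gamma(\widetilde{E}).
\]

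Next I would rewrite the Hermitian inner product through the Hodge star $\bar{*}_E:\widetilde{E}\to\widetilde{E}'$ from \eqref{identity-p E star}, using the relation $(\beta,e_i)_{\widetilde{E}}=(\bar{*}_E e_i,\beta)$ between the (conjugate-linear) Hodge star, the Hermitian inner product, and the bilinear pairing \eqref{natural pairing}, exactly as in the proof of \Cref{dual operator properties}. This produces the candidate kernel
\[
\mathcal{H}''=\sum_{i=1}^{N}e_i\boxtimes\bar{*}_E e_i\ \in\ \Gamma(\widetilde{E}\boxtimes\widetilde{E}'),
\]
which is a finite sum of exterior products of smooth sections and is therefore manifestly a smooth section of $\widetilde{E}\boxtimes\widetilde{E}'$. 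A short computation unwinding the pairing on $M\times M$ factor by factor then gives, for all $\alpha\in\Gamma(\widetilde{E}')$ and $\beta\in\Gamma(\widetilde{E})$,
\[
(\alpha\boxtimes\beta,\mathcal{H}'')=\sum_{i}(\alpha,e_i)(\bar{*}_E e_i,\beta)=\sum_{i}(\alpha,e_i)(\beta,e_i)_{\widetilde{E}}=(\alpha,\mathcal{H}'\beta),
\]
so this candidate satisfies the defining identity for $\mathcal{H}''$; since the distribution-valued section is uniquely determined by that identity, smoothness follows.

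There is no serious analytic obstacle here, the statement being the familiar fact that a finite-rank smoothing operator has a smooth Schwartz kernel. The only care needed is bookkeeping: one should check that $\sum_i e_i\boxtimes\bar{*}_E e_i$ is independent of the chosen orthonormal basis, which follows from the unitarity of a change of basis together with the conjugate-linearity of $\bar{*}_E$; and one must match the precise sign and conjugation conventions relating $(\cdot,\cdot)_{\widetilde{E}}$, the bilinear pairing \eqref{natural pairing}, and $\bar{*}_E$, exactly as tracked in the proof of \Cref{dual operator properties}.
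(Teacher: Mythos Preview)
Your proof is correct and takes a genuinely different route from the paper. The paper argues abstractly: it checks that $\mathcal{H}''$ is annihilated by both $\bar{\partial}_{\widetilde{E}}\otimes\mathrm{id}+\mathrm{id}\otimes\bar{\partial}_{\widetilde{E}'}$ and its adjoint, hence is harmonic for the product Laplacian on $M\times M$, and then invokes elliptic regularity on the product manifold to conclude smoothness. You instead apply elliptic regularity on $M$ alone to the individual harmonic sections $e_i$, and then build the kernel by hand as a finite sum of smooth exterior products. Your approach is more concrete and yields an explicit formula for $\mathcal{H}''$, at the cost of tracking the Hodge-star and pairing conventions (which you correctly flag); the paper's approach is basis-free and sidesteps that bookkeeping, but uses the slightly heavier fact that the distributional kernel itself solves an elliptic equation on the product. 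Both arguments ultimately rest on the same finite-dimensionality of $\mathrm{Ker}(\Delta_{\widetilde{E}})$ and on elliptic regularity, just applied at different stages.
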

	\begin{proof}
		We note $\mathcal{H}''$ satisfies the following equations
		\[
		\begin{cases}
			(\bar{\partial}_{\widetilde{E}}\otimes \mathrm{id}+\mathrm{id}\otimes\bar{\partial}_{\widetilde{E}'})\mathcal{H}''=0\\
			(\bar{\partial}_{\widetilde{E}}^{*}\otimes \mathrm{id}+\mathrm{id}\otimes\bar{\partial}_{\widetilde{E}'}^{*})\mathcal{H}''=0.
		\end{cases}
		\]
		Therefore, $\mathcal{H}''$ is a harmonic section of $\widetilde{E}\boxtimes\widetilde{E}'$. Since the Laplacian operator is an elliptic operator, we conclude that $\mathcal{H}''$ is a smooth section of $\widetilde{E}\boxtimes\widetilde{E}'$ by the regularity of elliptic differential equations (see \cite{hormander1983analysis}).
	\end{proof}
	Now, we assume that there is a non-degenerate holomorphic bundle map
	\[
	\omega : E \otimes E \to K_M.
	\]
	The non-degeneracy induces two isomorphisms:
	\[
	\begin{cases}
		\omega_{L}:\alpha\in \Gamma(E) \rightarrow\omega (\alpha\otimes-)\in \Gamma(E^{*}\otimes K_{M}) \\
		\omega_{R}:\alpha \in \Gamma(E)\rightarrow\omega (-\otimes\alpha)\in \Gamma(E^{*}\otimes K_{M}).
	\end{cases}
	\]
	We denote the inverses of the induced isomorphisms by $\omega^{-1}_{L}$ and $\omega^{-1}_{R}$.
	\begin{defn}\label{harmonic delta distribution}
		Let $M$ be a closed K\"ahler manifold, $E \to M$ be a Hermitian holomorphic vector bundle,  $
		\omega : E \otimes E \to K_M,
		$ be a non-degenerate bundle map. The \textbf{Harmonic projection section} $\mathcal{H}\in\Gamma(\widetilde{E}\boxtimes\widetilde{E})$ is defined by
		\[
		\mathcal{H}:=(\mathrm{id}\otimes \omega^{-1}_{L})\mathcal{H}''.
		\] 
		The \textbf{delta distribution section} $\delta\in \mathcal{D}(\widetilde{E}\boxtimes\widetilde{E})$ is defined by
		\[
		\delta(\alpha\boxtimes\beta)=\big(\alpha,\omega^{-1}_{R}(\beta)\big),
		\]
		where $\alpha,\beta\in\Gamma(\widetilde{E}')$.
	\end{defn}

	Now, we can define propagators.
	\begin{defn}\label{ordinary propagator}
		Let $(M,E,\omega)$ be a triple as in \Cref{harmonic delta distribution}. A \textbf{propagator} is a distributional section of $\widetilde{E}\boxtimes\widetilde{E}$, such that 
		$$
		\left\{\begin{array}{l}
			\left(\bar{\partial}_{\widetilde{E}} \otimes \operatorname{id}+\operatorname{id} \otimes \bar{\partial}_{\widetilde{E}}\right) P=\delta-\mathcal{H} \\
			P\in \mathrm{Im}(\bar{\partial}_{\widetilde{E}}^* \otimes \mathrm{id}) . \\
		\end{array}\right.
		$$
	\end{defn}

\begin{rem}\label{BCOV's pairing}
   In the framework of BCOV theory, \( E = \wedge^{\bullet} T^{1,0}M \). The pairing \(\omega\) is   
\[
\omega(\alpha, \beta) = \operatorname{tr}(\alpha \wedge \beta), \qquad 
\alpha, \beta \in \Gamma(M, \wedge^{\bullet} T^{1,0}M),
\]
where \(\operatorname{tr}\) denotes the trace map introduced in \eqref{defn of trace}.

\end{rem}

	The propagator exists and is unique. Let's first prove the uniqueness.
	\begin{prop}\label{uniqueness of propagator}
		Let $(M,E,\omega)$ be a triple as in \Cref{harmonic delta distribution}. If $P\in \mathcal{D}(\widetilde{E}\boxtimes\widetilde{E})$ and $P'\in \mathcal{D}(\widetilde{E}\boxtimes\widetilde{E})$ are propagators, then $P=P'$.
	\end{prop}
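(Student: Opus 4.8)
The plan is to reduce the uniqueness statement to the vanishing of the difference $Q := P - P'$, and then show that the two defining conditions of a propagator force $Q$ to be zero. Setting $Q = P - P'$, linearity gives
\[
\left(\bar{\partial}_{\widetilde{E}} \otimes \operatorname{id} + \operatorname{id} \otimes \bar{\partial}_{\widetilde{E}}\right) Q = 0,
\qquad Q \in \mathrm{Im}(\bar{\partial}_{\widetilde{E}}^* \otimes \mathrm{id}),
\]
since the right-hand sides $\delta - \mathcal{H}$ coincide for $P$ and $P'$ and the image condition is preserved under subtraction. So the task becomes: any $Q \in \mathcal{D}(\widetilde{E} \boxtimes \widetilde{E})$ that is both closed under the total $\bar{\partial}$-operator and lies in the image of $\bar{\partial}_{\widetilde{E}}^* \otimes \mathrm{id}$ must vanish.

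The heart of the argument I would carry out is a Hodge-theoretic orthogonality estimate, pushed to the distributional setting via the pairing \eqref{natural pairing} and the adjunction formulas of \Cref{dual operator properties}. Write $Q = (\bar{\partial}_{\widetilde{E}}^* \otimes \mathrm{id}) R$ for some distributional $R$. I would pair $Q$ against suitable smooth test sections in $\Gamma(\widetilde{E}' \boxtimes \widetilde{E}')$ and use that the total Laplacian $\Delta_{\widetilde{E}} \otimes \operatorname{id} + \operatorname{id} \otimes \Delta_{\widetilde{E}}$ is elliptic, so its distributional kernel consists of smooth harmonic sections. The closedness condition, combined with $Q$ being in the image of a $\bar{\partial}^*$-operator, is designed to make $Q$ simultaneously $\bar{\partial}$-coclosed in the first factor and $\bar{\partial}$-closed overall; the standard Hodge decomposition then orthogonally separates the image of $\bar{\partial}^*$ from the kernel of $\bar{\partial}$, forcing $Q$ into the harmonic part, where the coexactness condition kills it. Concretely, I expect to show first that $Q$ is smooth (elliptic regularity applied to a Laplace-type equation that the two conditions jointly imply), and then that the smooth section $Q$ is orthogonal to itself under the natural pairing, hence zero.

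The main obstacle, and the step I would treat most carefully, is that $Q$ is only a \emph{distribution}, so the inner-product manipulations of \Cref{dual operator properties} and the Hodge orthogonality $\mathrm{Im}(\bar{\partial}^*) \perp \mathrm{Ker}(\bar{\partial})$ are not immediately available. The cleanest route is to first upgrade regularity: from $Q \in \mathrm{Im}(\bar{\partial}_{\widetilde{E}}^* \otimes \mathrm{id})$ one gets $(\operatorname{id} \otimes \bar{\partial}_{\widetilde{E}})Q$ controlled, and combined with $(\bar{\partial}_{\widetilde{E}} \otimes \operatorname{id} + \operatorname{id} \otimes \bar{\partial}_{\widetilde{E}})Q = 0$ this should let me apply $\bar{\partial}_{\widetilde{E}}^* \otimes \operatorname{id}$ and its total-Laplacian cousin to deduce that $Q$ solves an elliptic equation in the distributional sense, hence is smooth. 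Once smoothness is in hand, the computation descends to the genuine $L^2$ Hodge theory on the product, and the two conditions become the statement that a harmonic (hence $\bar{\partial}^*$-closed) section also lies in $\mathrm{Im}(\bar{\partial}^*)$, which by orthogonality of the Hodge decomposition is zero. I would be careful to verify that the $\boxtimes$-structure and the asymmetry between the two factors (the image condition only involves the first factor) do not obstruct the orthogonality argument, which is the subtle point that distinguishes this from the classical single-factor uniqueness.
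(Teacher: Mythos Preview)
Your overall strategy --- set $Q = P - P'$, note it is $(\bar\partial\otimes\mathrm{id}+\mathrm{id}\otimes\bar\partial)$-closed and lies in $\mathrm{Im}(\bar\partial^*_{\widetilde E}\otimes\mathrm{id})$, then argue Hodge-theoretically --- is right. But the specific route you sketch has a gap at the regularity step. From the two conditions you can indeed derive $(\Delta_{\widetilde E}\otimes\mathrm{id})Q=0$: since $(\bar\partial^*_{\widetilde E}\otimes\mathrm{id})Q=0$ and $(\bar\partial_{\widetilde E}\otimes\mathrm{id})Q=-(\mathrm{id}\otimes\bar\partial_{\widetilde E})Q$, one computes
\[
(\Delta_{\widetilde E}\otimes\mathrm{id})Q=(\bar\partial^*_{\widetilde E}\otimes\mathrm{id})(\bar\partial_{\widetilde E}\otimes\mathrm{id})Q=-(\mathrm{id}\otimes\bar\partial_{\widetilde E})(\bar\partial^*_{\widetilde E}\otimes\mathrm{id})Q=0.
\]
However, $\Delta_{\widetilde E}\otimes\mathrm{id}$ is \emph{not} elliptic on $M\times M$, and you have no control over $(\mathrm{id}\otimes\bar\partial^*_{\widetilde E})Q$, so you cannot conclude that the \emph{total} Laplacian $\Delta_{\widetilde E}\otimes\mathrm{id}+\mathrm{id}\otimes\Delta_{\widetilde E}$ annihilates $Q$. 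Your appeal to ``$Q$ solves an elliptic equation, hence is smooth'' is therefore unjustified as stated. (One can still push through via $(\Delta_{\widetilde E}\otimes\mathrm{id})Q=0$ by using that $\ker\Delta_{\widetilde E}$ is finite-dimensional and the harmonic projector on the first factor is smoothing, but that is a different and more delicate argument than the one you describe.)

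The paper sidesteps the regularity issue entirely, and this is the key simplification you are missing: instead of upgrading $Q$, apply the Hodge decomposition to the \emph{smooth test section} $\alpha\in\Gamma(\widetilde E')$. Writing $\alpha=\alpha_H+(\bar\partial_{\widetilde E'}\bar\partial^*_{\widetilde E'}+\bar\partial^*_{\widetilde E'}\bar\partial_{\widetilde E'})\alpha'$, one computes $(Q,\alpha\boxtimes\beta)$ by repeatedly moving $\bar\partial$ and $\bar\partial^*$ across the pairing via \Cref{dual operator properties}. The piece involving $\alpha_H+\bar\partial^*_{\widetilde E'}\bar\partial_{\widetilde E'}\alpha'$ dies because $Q=(\bar\partial^*_{\widetilde E}\otimes\mathrm{id})G$ and $\bar\partial^*_{\widetilde E'}$ kills that test; the piece involving $\bar\partial_{\widetilde E'}\bar\partial^*_{\widetilde E'}\alpha'$ dies after using the total-$\bar\partial$-closedness to shift to the second factor and then again the coexactness of $Q$. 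Everything happens on the smooth side, so no distributional subtleties arise. Your plan to ``pair $Q$ against suitable smooth test sections'' was the right instinct; the point is that this alone already finishes the proof, and the detour through elliptic regularity and $L^2$ orthogonality for $Q$ itself is unnecessary.
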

	\begin{proof}
		To prove $P = P'$, it suffices to show that, for any $\alpha, \beta \in \Gamma(\widetilde{E}')$,
		\[
		( P - P',\, \alpha \boxtimes \beta ) = 0.
		\]
		We begin by observing that
		\begin{equation}
			\left\{
			\begin{aligned}
				&\big(\bar{\partial}_{\widetilde{E}} \otimes \operatorname{id} + \operatorname{id} \otimes \bar{\partial}_{\widetilde{E}} \big)\big(P - P'\big) = 0, \\
				&P - P' \in \operatorname{Im}(\bar{\partial}_{\widetilde{E}}^* \otimes \operatorname{id}).
			\end{aligned}
			\right.
		\end{equation}
		Hence, there exists a distributional section $G \in \mathcal{D}(\widetilde{E} \boxtimes \widetilde{E})$ such that
		\[
		P - P' = (\bar{\partial}_{\widetilde{E}}^* \otimes \mathrm{id}) G.
		\]
		
		By Hodge decomposition on $\Gamma(\widetilde{E}')$, we may write
		\[
		\alpha = \alpha_H + (\bar{\partial}_{\widetilde{E}'} \bar{\partial}_{\widetilde{E}'}^* + \bar{\partial}_{\widetilde{E}'}^* \bar{\partial}_{\widetilde{E}'}) \alpha',
		\]
		where $\alpha_H$ is harmonic. Then we compute:
		\begin{align*}
			&\quad( P - P',\, \alpha \boxtimes \beta ) 
			= \big( P - P',\, (\alpha_H + \bar{\partial}_{\widetilde{E}'}^* \bar{\partial}_{\widetilde{E}'} \alpha') \boxtimes \beta \big) 
			+ \big( P - P',\, (\bar{\partial}_{\widetilde{E}'} \bar{\partial}_{\widetilde{E}'}^* \alpha') \boxtimes \beta \big)\\
			&= \big( (\bar{\partial}_{\widetilde{E}}^* \otimes \mathrm{id}) G,\, (\alpha_H + \bar{\partial}_{\widetilde{E}'}^* \bar{\partial}_{\widetilde{E}'} \alpha') \boxtimes \beta \big) 
			\pm \big( (\bar{\partial}_{\widetilde{E}'} \otimes \mathrm{id})(P - P'),\, (\bar{\partial}_{\widetilde{E}'}^* \alpha') \boxtimes \beta \big) \\
			&= \pm \Big( G,\, \big(\bar{\partial}_{\widetilde{E}'}^* \otimes \mathrm{id}\big)\big((\alpha_H + \bar{\partial}_{\widetilde{E}'}^* \bar{\partial}_{\widetilde{E}'} \alpha') \boxtimes \beta\big) \Big) 
			\pm \big( (\mathrm{id} \otimes \bar{\partial}_{\widetilde{E}'})(P - P'),\, (\bar{\partial}_{\widetilde{E}'}^* \alpha') \boxtimes \beta \big) \\
			&= 0 \pm \big( (\bar{\partial}_{\widetilde{E}}^* \otimes \mathrm{id})(\mathrm{id} \otimes \bar{\partial}_{\widetilde{E}'}) G,\, (\bar{\partial}_{\widetilde{E}'}^* \alpha') \boxtimes \beta \big) = 0.
		\end{align*}
		
		This completes the proof.
		
	\end{proof}
	
	Now, we construct the propagator using the heat kernel.
	
	\begin{lem}
		Let $(M,E,\omega)$ be a triple as in \Cref{harmonic delta distribution}. Then there exists a unique solution $H_t \in C^{\infty}((0,\infty); \Gamma(\widetilde{E} \boxtimes \widetilde{E}))$ satisfying:
		\begin{equation} \label{heateq1}
			\begin{cases}
				\displaystyle \p_t H_t = -(\Delta_{\widetilde{E}} \otimes \mathrm{id}) H_t, \\
				\displaystyle \lim_{t \to 0} H_t = \delta.
			\end{cases}
		\end{equation}
		Moreover, we have $\displaystyle \lim_{t \to \infty} H_t = \mathcal{H}$ and
		\be\label{vanish25}
		(\bar{\partial}_{\widetilde{E}} \otimes \mathrm{id} + \mathrm{id} \otimes \bar{\partial}_{\widetilde{E}}) H_t = 0.
		\ee
		Here $\delta$ and $\mathcal{H}$ are induced by the pairing $\o$ as in \Cref{harmonic delta distribution}.
	\end{lem}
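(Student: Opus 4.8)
The plan is to realize $H_t$ as the Schwartz kernel of the heat semigroup $e^{-t\Delta_{\widetilde{E}}}$ acting on the first factor, transported into $\widetilde{E}\boxtimes\widetilde{E}$ by the isomorphism $\omega_L^{-1}$ on the second factor, and then to read off the three asserted properties from standard spectral theory together with the commutation of $\bar{\partial}_{\widetilde{E}}$ with $\Delta_{\widetilde{E}}$. First I would invoke the standard theory of heat kernels for a nonnegative, formally self-adjoint elliptic operator on a closed manifold (see \cite{berline2003heat}): since $\Delta_{\widetilde{E}}$ is such an operator, the semigroup $e^{-t\Delta_{\widetilde{E}}}$ exists, is smoothing for $t>0$, and possesses a unique smooth Schwartz kernel $K_t\in\Gamma(\widetilde{E}\boxtimes\widetilde{E}')$ (the second slot being $\widetilde{E}'$ because operators are represented through the pairing \eqref{natural pairing}); it solves $\p_t K_t=-(\Delta_{\widetilde{E}}\otimes\mathrm{id})K_t$ with $\lim_{t\to0}K_t$ equal to the kernel of the identity, and this solution is unique among distribution-valued sections. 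I then set $H_t:=(\mathrm{id}\otimes\omega_L^{-1})K_t$. Because $\omega_L^{-1}$ is a fixed bundle isomorphism acting only on the second factor while the heat operator acts only on the first, $H_t$ is smooth for $t>0$, solves the stated heat equation, and inherits uniqueness.

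For the initial condition I would compare $(\mathrm{id}\otimes\omega_L^{-1})$ applied to the identity kernel with the section $\delta$ of \Cref{harmonic delta distribution}; this is a direct bookkeeping check using the definitions of $\omega_L$, $\omega_R$ and the pairing, and yields $\lim_{t\to0}H_t=\delta$. For the long-time limit I would use the spectral decomposition of $\Delta_{\widetilde{E}}$: its spectrum is discrete, $0=\lambda_0<\lambda_1\le\cdots$, with zero-eigenspace the harmonic sections, so $e^{-t\Delta_{\widetilde{E}}}=\mathcal{H}'+\sum_{\lambda_j>0}e^{-t\lambda_j}P_j$ converges, as $t\to\infty$, to the harmonic projection $\mathcal{H}'$ in every Sobolev norm (the nonzero eigenvalues being bounded below by $\lambda_1>0$). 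Passing to kernels gives $\lim_{t\to\infty}K_t=\mathcal{H}''$, and applying $\mathrm{id}\otimes\omega_L^{-1}$ gives $\lim_{t\to\infty}H_t=\mathcal{H}$ by the definition $\mathcal{H}=(\mathrm{id}\otimes\omega_L^{-1})\mathcal{H}''$.

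For the vanishing \eqref{vanish25} I would argue by uniqueness of heat flow. Since $\bar{\partial}_{\widetilde{E}}^2=0$, the operator $\bar{\partial}_{\widetilde{E}}$ commutes with $\Delta_{\widetilde{E}}$; hence the section $A_t:=(\bar{\partial}_{\widetilde{E}}\otimes\mathrm{id}+\mathrm{id}\otimes\bar{\partial}_{\widetilde{E}'})K_t$ again solves $\p_t A_t=-(\Delta_{\widetilde{E}}\otimes\mathrm{id})A_t$ (the first-factor term because $[\bar{\partial}_{\widetilde{E}},\Delta_{\widetilde{E}}]=0$, the second-factor term because it acts on a different factor and so commutes with $\Delta_{\widetilde{E}}\otimes\mathrm{id}$). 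Its distributional initial value is $(\bar{\partial}_{\widetilde{E}}\otimes\mathrm{id}+\mathrm{id}\otimes\bar{\partial}_{\widetilde{E}'})$ of the identity kernel, which vanishes because it is the kernel-level expression of the commutation of $\bar{\partial}_{\widetilde{E}}$ with the identity, the distributional actions being those extended via \Cref{dual operator properties}. By the uniqueness already established, $A_t\equiv0$. Finally, since $\omega$ is holomorphic the map $\omega_L$ intertwines the Dolbeault operators, $\bar{\partial}_{\widetilde{E}'}\omega_L=\omega_L\bar{\partial}_{\widetilde{E}}$, so $\omega_L^{-1}\bar{\partial}_{\widetilde{E}'}=\bar{\partial}_{\widetilde{E}}\omega_L^{-1}$; applying $\mathrm{id}\otimes\omega_L^{-1}$ to $A_t\equiv0$ therefore converts $\mathrm{id}\otimes\bar{\partial}_{\widetilde{E}'}$ into $\mathrm{id}\otimes\bar{\partial}_{\widetilde{E}}$ and gives $(\bar{\partial}_{\widetilde{E}}\otimes\mathrm{id}+\mathrm{id}\otimes\bar{\partial}_{\widetilde{E}})H_t=0$.

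I expect the main obstacle to be not any single deep step---heat kernel existence, the spectral gap, and the commutation $[\bar{\partial}_{\widetilde{E}},\Delta_{\widetilde{E}}]=0$ are all standard---but rather the careful bookkeeping of the pairing $\omega$: matching the $t\to0$ limit to the specific section $\delta$, which is defined via $\omega_R^{-1}$ while the transport uses $\omega_L^{-1}$, and correctly transporting $\bar{\partial}_{\widetilde{E}'}$ to $\bar{\partial}_{\widetilde{E}}$ through the holomorphic isomorphism $\omega_L$ in the verification of \eqref{vanish25}. These sign- and convention-tracking checks, rather than any analytic difficulty, are where care is required.
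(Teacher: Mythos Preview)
Your proposal is correct and follows essentially the same approach as the paper: existence and uniqueness are drawn from the standard heat kernel theory in \cite{berline2003heat}, the long-time limit comes from the spectral gap, and \eqref{vanish25} is obtained by observing that $(\bar{\partial}_{\widetilde{E}}\otimes\mathrm{id}+\mathrm{id}\otimes\bar{\partial}_{\widetilde{E}})H_t$ again satisfies the heat equation with zero initial data and invoking uniqueness. The only difference is presentational: you work with the kernel $K_t\in\Gamma(\widetilde{E}\boxtimes\widetilde{E}')$ and transport explicitly through $\omega_L^{-1}$, while the paper works directly with $H_t$ and leaves that bookkeeping implicit.
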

	
	\begin{proof}
		The existence and uniqueness of the heat kernel follow from \cite{berline2003heat}. The convergence $\displaystyle \lim_{t \to \infty} H_t = \mathcal{H}$ could be proved as in \cite[Section 2.6]{berline2003heat}.
		
		To show that 
		\[
		(\bar{\partial}_{\widetilde{E}} \otimes \mathrm{id} + \mathrm{id} \otimes \bar{\partial}_{\widetilde{E}}) H_t = 0,
		\]
		we note that 
		\[
		(\bar{\partial}_{\widetilde{E}} \otimes \mathrm{id} + \mathrm{id} \otimes \bar{\partial}_{\widetilde{E}}) H_t
		\]
		satisfies the heat equation:
		\[
		\begin{cases}
			\displaystyle \p_t \left( (\bar{\partial}_{\widetilde{E}} \otimes \mathrm{id} + \mathrm{id} \otimes \bar{\partial}_{\widetilde{E}}) H_t \right) 
			= -(\Delta_{\widetilde{E}} \otimes \mathrm{id}) \left( (\bar{\partial}_{\widetilde{E}} \otimes \mathrm{id} + \mathrm{id} \otimes \bar{\partial}_{\widetilde{E}}) H_t \right), \\
			\displaystyle \lim_{t \to 0} (\bar{\partial}_{\widetilde{E}} \otimes \mathrm{id} + \mathrm{id} \otimes \bar{\partial}_{\widetilde{E}}) H_t = 0.
		\end{cases}
		\]
		By uniqueness of the solution to the heat equation, it follows that
		\[
		(\bar{\partial}_{\widetilde{E}} \otimes \mathrm{id} + \mathrm{id} \otimes \bar{\partial}_{\widetilde{E}}) H_t = 0.
		\]
	\end{proof}
	
	The following proposition proves the existence of the propagator:
	
	\begin{prop}\label{existence of propagator}
		Let $(M,E,\omega)$ be a triple as in \Cref{harmonic delta distribution}. Then
		\[
		P := \int_{0}^{\infty} dt \wedge \left( \bar{\partial}_{\widetilde{E}}^* \otimes \mathrm{id} \right) H_t \in \mathcal{D}(\widetilde{E} \boxtimes \widetilde{E})
		\]
		is a propagator.
		
		Moreover, $P|_{M \times M \setminus \triangle}$ is smooth, where $\triangle := \{ (p, p) \in M \times M : p \in M \}$.
	\end{prop}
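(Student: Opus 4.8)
The plan is to verify the two defining conditions of \Cref{ordinary propagator} after first confirming that the $t$-integral defines a distribution, and then to treat the off-diagonal smoothness separately using heat-kernel estimates. To see that $P$ is a well-defined element of $\mathcal{D}(\widetilde{E}\boxtimes\widetilde{E})$, I would pair with a test section $\alpha\boxtimes\beta$, $\alpha,\beta\in\Gamma(\widetilde{E}')$, and move the operator onto the test side through the extended action of $\bar{\partial}_{\widetilde{E}}^*$ on distributions, so that the integrand becomes $\pm\big(H_t,(\bar{\partial}_{\widetilde{E}'}^*\otimes\mathrm{id})(\alpha\boxtimes\beta)\big)$. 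Since $\lim_{t\to0}H_t=\delta$, this is bounded as $t\to0$; since $(\bar{\partial}_{\widetilde{E}}^*\otimes\mathrm{id})\mathcal{H}=0$ by harmonicity of $\mathcal{H}$ and $H_t\to\mathcal{H}$ exponentially fast by the spectral gap of $\Delta_{\widetilde{E}}$, the integrand decays exponentially as $t\to\infty$. Hence the integral converges absolutely.

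For the first equation, I would apply $\bar{\partial}_{\widetilde{E}}\otimes\mathrm{id}+\mathrm{id}\otimes\bar{\partial}_{\widetilde{E}}$ under the integral sign, which is legitimate since these operators are continuous on distributions. Using \eqref{vanish25} to rewrite $\mathrm{id}\otimes\bar{\partial}_{\widetilde{E}}$ as $-\bar{\partial}_{\widetilde{E}}\otimes\mathrm{id}$ on $H_t$, together with the Koszul anticommutation $(\mathrm{id}\otimes\bar{\partial}_{\widetilde{E}})(\bar{\partial}_{\widetilde{E}}^*\otimes\mathrm{id})=-(\bar{\partial}_{\widetilde{E}}^*\otimes\mathrm{id})(\mathrm{id}\otimes\bar{\partial}_{\widetilde{E}})$, a short sign computation yields
\[
\big(\bar{\partial}_{\widetilde{E}}\otimes\mathrm{id}+\mathrm{id}\otimes\bar{\partial}_{\widetilde{E}}\big)\big(\bar{\partial}_{\widetilde{E}}^*\otimes\mathrm{id}\big)H_t=(\Delta_{\widetilde{E}}\otimes\mathrm{id})H_t=-\p_t H_t,
\]
the last equality being the heat equation \eqref{heateq1}. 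The fundamental theorem of calculus then gives $-\int_0^\infty \p_t H_t\,dt=\lim_{t\to0}H_t-\lim_{t\to\infty}H_t=\delta-\mathcal{H}$, as required.

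For the second condition, I would set $G:=\int_0^\infty dt\wedge(H_t-\mathcal{H})$. The subtraction of $\mathcal{H}$ is essential, since $\int_0^\infty H_t\,dt$ diverges, whereas $H_t-\mathcal{H}$ decays exponentially as $t\to\infty$ and is bounded against test sections near $t=0$, so $G\in\mathcal{D}(\widetilde{E}\boxtimes\widetilde{E})$. Because $(\bar{\partial}_{\widetilde{E}}^*\otimes\mathrm{id})\mathcal{H}=0$, we obtain $P=(\bar{\partial}_{\widetilde{E}}^*\otimes\mathrm{id})G\in\mathrm{Im}(\bar{\partial}_{\widetilde{E}}^*\otimes\mathrm{id})$, which is the second defining property.

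Finally, for the smoothness of $P|_{M\times M\setminus\triangle}$, I would fix a compact set $K\subset M\times M\setminus\triangle$, on which the two points stay a distance $\geq c>0$ apart, and invoke the off-diagonal Gaussian estimates for the heat kernel and all of its derivatives, of the form $|\nabla^k H_t(x,y)|\leq C_k\, t^{-N}e^{-c^2/(Ct)}$ for small $t$, together with the exponential convergence $H_t\to\mathcal{H}$ in every $C^k$-norm on $K$ for large $t$. These make $\int_0^\infty(\bar{\partial}_{\widetilde{E}}^*\otimes\mathrm{id})H_t\,dt$ converge in $C^\infty(K)$, so $P$ is smooth off the diagonal. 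The main obstacle is the first equation: one must correctly track the graded signs when commuting the two tensor factors of $\bar{\partial}$ and $\bar{\partial}^*$ and combine them with \eqref{vanish25} to recognize the Laplacian, whereas the analytic inputs—integrability at $t=0$ and the off-diagonal decay—are standard consequences of the heat-kernel theory of \cite{berline2003heat}.
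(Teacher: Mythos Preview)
Your proof is correct and shares the same algebraic core as the paper's: both reduce the first defining equation to the identity
\[
\big(\bar{\partial}_{\widetilde{E}}\otimes\mathrm{id}+\mathrm{id}\otimes\bar{\partial}_{\widetilde{E}}\big)\big(\bar{\partial}_{\widetilde{E}}^*\otimes\mathrm{id}\big)H_t=(\Delta_{\widetilde{E}}\otimes\mathrm{id})H_t=-\partial_t H_t,
\]
via \eqref{vanish25}, and then integrate in $t$. The technical packaging differs in two places. First, for the image condition $P\in\mathrm{Im}(\bar{\partial}_{\widetilde{E}}^*\otimes\mathrm{id})$, the paper works with the cutoff $G_{\epsilon,L}=\int_\epsilon^L H_t\,dt$ and invokes closedness of $\bar{\partial}_{\widetilde{E}}^*\otimes\mathrm{id}$ to pass to the limit, whereas you construct a global primitive $G=\int_0^\infty(H_t-\mathcal{H})\,dt$ directly by subtracting the harmonic projection; your route is slightly cleaner since it sidesteps the closed-operator argument. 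Second, for off-diagonal smoothness the paper appeals to elliptic regularity applied to $(\Delta_{\widetilde{E}}\otimes\mathrm{id})\lim_{\epsilon\to 0}G_{\epsilon,L}=\delta-H_L$, while you use off-diagonal Gaussian bounds on $H_t$ and its derivatives; both are standard and yield the same conclusion, with the elliptic-regularity argument being somewhat more economical since it avoids tracking derivative estimates.
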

	
	\begin{proof}
		The proof is similar to that of \cite[Theorem 2.38]{berline2003heat}, and we outline the main steps.
		
		For $\epsilon, L \in (0, \infty)$, set
		\[
		G_{\epsilon,L} := \int_{\epsilon}^{L} dt \wedge H_t \in C^{\infty} \big( (0, \infty); \Gamma(\widetilde{E} \boxtimes \widetilde{E}) \big).
		\]
		Then by \eqref{vanish25}:
		\begin{equation}\label{regularized Green's function}
			\begin{cases}
				(\Delta_{\widetilde{E}} \otimes \mathrm{id}) G_{\epsilon,L}
				= \int_{\epsilon}^{L} dt \wedge (\Delta_{\widetilde{E}} \otimes \mathrm{id}) H_t 
				= H_\epsilon - H_L, \\
				(\bar{\partial}_{\widetilde{E}} \otimes \mathrm{id} + \mathrm{id} \otimes \bar{\partial}_{\widetilde{E}}) G_{\epsilon,L}
				= \int_{\epsilon}^{L} dt \wedge (\bar{\partial}_{\widetilde{E}} \otimes \mathrm{id} + \mathrm{id} \otimes \bar{\partial}_{\widetilde{E}}) H_t 
				= 0.
			\end{cases}
		\end{equation}
		Applying $(\ref{regularized Green's function})$, we obtain
		\[
		\begin{aligned}
			& (\bar{\partial}_{\widetilde{E}} \otimes \mathrm{id} + \mathrm{id} \otimes \bar{\partial}_{\widetilde{E}}) 
			\int_{\epsilon}^{L} dt \wedge \left( \bar{\partial}_{\widetilde{E}}^* \otimes \mathrm{id} \right) H_t \\
			&= (\bar{\partial}_{\widetilde{E}} \otimes \mathrm{id} + \mathrm{id} \otimes \bar{\partial}_{\widetilde{E}}) 
			\left( \bar{\partial}_{\widetilde{E}}^* \otimes \mathrm{id} \right) G_{\epsilon,L} \\
			&= (\Delta_{\widetilde{E}} \otimes \mathrm{id}) G_{\epsilon,L} = H_\epsilon - H_L.
		\end{aligned}
		\]
		Moreover,
		\[
		\int_{\epsilon}^{L} dt \wedge \left( \bar{\partial}_{\widetilde{E}}^* \otimes \mathrm{id} \right) H_t 
		= \left( \bar{\partial}_{\widetilde{E}}^* \otimes \mathrm{id} \right) G_{\epsilon,L}
		\in \mathrm{Im} \left( \bar{\partial}_{\widetilde{E}}^* \otimes \mathrm{id} \right).
		\]
		One can show that  
		\[
		\lim_{\substack{\epsilon \to 0 \\ L \to \infty}} \int_{\epsilon}^{L} dt \wedge \left( \bar{\partial}_{\widetilde{E}}^* \otimes \mathrm{id} \right) H_t
		\]  
		exists (this is the nontrivial part of the argument), and
		since $\bar{\partial}_{\widetilde{E}}^* \otimes \mathrm{id}$ is a closed operator, we obtain:
		\[
		\begin{cases}
			(\bar{\partial}_{\widetilde{E}} \otimes \mathrm{id} + \mathrm{id} \otimes \bar{\partial}_{\widetilde{E}}) P 
			= \lim_{\epsilon \to 0} H_\epsilon - \lim_{L \to \infty} H_L 
			= \delta - \mathcal{H}, \\
			P \in \mathrm{Im}(\bar{\partial}_{\widetilde{E}}^* \otimes \mathrm{id}).
		\end{cases}
		\]
		Hence $P$ is a propagator.
		
		To prove $P|_{M \times M \setminus \triangle}$ is smooth, observe:
		\[
		(\Delta_{\widetilde{E}} \otimes \mathrm{id}) \lim_{\epsilon \to 0} G_{\epsilon,L} = \delta - H_L.
		\]
		Since the singularities of the right-hand side are supported on the diagonal $\triangle$, it follows from elliptic regularity (cf. \cite{hormander1983analysis}) that 
		$\lim_{\epsilon \to 0} G_{\epsilon,L}$ is smooth off $\triangle$. Hence, so is
		\[
		\int_0^L dt \wedge \left( \bar{\partial}_{\widetilde{E}}^* \otimes \mathrm{id} \right) H_t.
		\]
		
		By \cite[Proposition 2.37]{berline2003heat}, the integral 
		\[
		\int_L^{\infty} dt \wedge \left( \bar{\partial}_{\widetilde{E}}^* \otimes \mathrm{id} \right) H_t
		\]
		is smooth globally. Therefore, \[
		P = \int_0^L dt \wedge \left( \bar{\partial}_{\widetilde{E}}^* \otimes \mathrm{id} \right) H_t 
		+ \int_L^{\infty} dt \wedge \left( \bar{\partial}_{\widetilde{E}}^* \otimes \mathrm{id} \right) H_t
		\] is smooth away from the diagonal.
		
	\end{proof}

	The mildness of the singularities of $P$ is crucial for defining Feynman graph integrals using the Cauchy principal value. In fact, the singularities of $P$ are fully determined by the singularities of the heat kernel $H_t$ at $t = 0$:
	
	Let $\epsilon \in (0,\infty)$, and choose a cutoff function $\eta \in C_c^\infty(\mathbb{R})$ such that $\eta(t) \equiv 1$ for $|t| \leq \epsilon$ and $\eta(t) = 0$ for $|t| \geq 2\epsilon$.
	
	\begin{prop}\label{propagator singularities and heat}
		The integral
		\[
		\int_{0}^{\infty} (1 - \eta(t))\, dt \wedge \left( \bar{\partial}_{\widetilde{E}}^* \otimes \mathrm{id} \right) H_t
		\]
		defines a smooth section over $M \times M$. In particular, the singularities of $P$ coincide with those of
		\[
		\int_{0}^{\infty} \eta(t)\, dt \wedge \left( \bar{\partial}_{\widetilde{E}}^* \otimes \mathrm{id} \right) H_t.
		\]
	\end{prop}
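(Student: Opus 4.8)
The plan is to exploit the fact that the cutoff $1-\eta$ vanishes identically for $t\le \epsilon$, so that the integrand never sees the short-time region where the heat kernel develops its diagonal singularity. Concretely, since $1-\eta(t)=0$ for $t\le\epsilon$, the integral in question equals $\int_{\epsilon}^{\infty}(1-\eta(t))\,dt\wedge(\bar{\partial}_{\widetilde{E}}^*\otimes\mathrm{id})H_t$, and I would split this at a large time $L>2\epsilon$ into a compact piece $\int_{\epsilon}^{L}$ and a tail $\int_{L}^{\infty}$.

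For the compact piece, the key input is standard parabolic regularity: for every $t>0$ the heat kernel $H_t$ is a smooth section of $\widetilde{E}\boxtimes\widetilde{E}$ over all of $M\times M$ (including the diagonal), and the family $t\mapsto H_t$ is smooth in $t$ on $(0,\infty)$. Applying the differential operator $\bar{\partial}_{\widetilde{E}}^*\otimes\mathrm{id}$ preserves joint smoothness, so $(1-\eta(t))(\bar{\partial}_{\widetilde{E}}^*\otimes\mathrm{id})H_t$ is a smooth section jointly in $(t,p,q)\in[\epsilon,L]\times M\times M$. Integrating a jointly smooth family over the compact interval $[\epsilon,L]$, with differentiation under the integral sign justified by compactness, yields a smooth section of $\widetilde{E}\boxtimes\widetilde{E}$.

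For the tail $\int_{L}^{\infty}$, where $1-\eta\equiv 1$, I would invoke the long-time behavior of the heat kernel. Since $\lim_{t\to\infty}H_t=\mathcal{H}$ and $\mathcal{H}$ is harmonic, we have $(\bar{\partial}_{\widetilde{E}}^*\otimes\mathrm{id})H_t=(\bar{\partial}_{\widetilde{E}}^*\otimes\mathrm{id})(H_t-\mathcal{H})$, and the spectral gap of $\Delta_{\widetilde{E}}$ forces $H_t-\mathcal{H}$ to decay exponentially in $t$ together with all of its spatial derivatives. This is precisely the content of \cite[Proposition 2.37]{berline2003heat}, which already guarantees that $\int_{L}^{\infty}dt\wedge(\bar{\partial}_{\widetilde{E}}^*\otimes\mathrm{id})H_t$ is a globally smooth section. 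Adding the two pieces shows that the displayed integral is smooth on $M\times M$. For the ``in particular'' clause, I would then write $P=\int_{0}^{\infty}\eta(t)\,dt\wedge(\bar{\partial}_{\widetilde{E}}^*\otimes\mathrm{id})H_t+\int_{0}^{\infty}(1-\eta(t))\,dt\wedge(\bar{\partial}_{\widetilde{E}}^*\otimes\mathrm{id})H_t$; the second summand is smooth by what was just proved, so modulo a smooth section the singular behavior of $P$ is carried entirely by the first summand, giving the claimed coincidence of singularities.

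I expect the only genuinely nontrivial ingredient to be the exponential large-time decay with all derivatives underlying the tail estimate; but since this is exactly \cite[Proposition 2.37]{berline2003heat}, the argument reduces to bookkeeping. The main point is simply to separate the short-time (singular) region cleanly from the two smooth regimes, and the cutoff $\eta$ accomplishes this separation by design.
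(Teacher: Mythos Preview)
Your proposal is correct and matches the paper's approach: the paper's proof is the single line ``This follows directly from \cite[Proposition 2.37]{berline2003heat},'' and your argument is precisely the unpacking of that citation into the compact-interval piece (joint smoothness of $H_t$ for $t>0$) and the tail piece (exponential decay of $H_t-\mathcal{H}$ with all derivatives). The split at $L$ is not strictly necessary since Proposition~2.37 already gives uniform $C^k$ bounds on $[\epsilon,\infty)$, but it does no harm.
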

	
	\begin{proof}
		This follows directly from \cite[Proposition 2.37]{berline2003heat}.
	\end{proof}
	
	\Cref{propagator singularities and heat} implies that to analyze the singularities of $P$, it suffices to understand the short-time asymptotics of the heat kernel $H_t$ near $t = 0$.
	
	\vspace{0.5em}
	We now introduce a useful notion:
	
	\begin{defn}\label{propagator}
		\textbf{The propagator in Schwinger space} is defined by
		\[
		P_t := -dt \wedge \big( \bar{\partial}_{\widetilde{E}}^* \otimes \mathrm{id} \big) H_t + H_t \in \Omega^*\Big( (0,+\infty); \Gamma(\widetilde{E} \boxtimes \widetilde{E}) \Big),
		\]
		where $\Omega^*\left( (0,+\infty); \Gamma(\widetilde{E} \boxtimes \widetilde{E}) \right)$ denotes the space of $\Gamma(\widetilde{E} \boxtimes \widetilde{E})$-valued smooth forms on $(0,\infty)$.
	\end{defn}
	
	We observe that
	\[
	\int_{0}^{\infty} \eta(t) P_t = -\int_{0}^{\infty} \eta(t)\, dt \wedge \left( \bar{\partial}_{\widetilde{E}}^* \otimes \mathrm{id} \right) H_t.
	\]
	Hence, by \Cref{propagator singularities and heat}, the singularities of $P$ are fully determined by the singularities of $P_t$ at $t = 0$.

	\subsection{Feynman graph integrals}
	In this subsection, we introduce Feynman graph integrals for holomorphic field theories and establish their convergence in the sense of the Cauchy principal value. For the reader’s convenience, relevant definitions and properties of the Cauchy principal value are collected in Appendix \ref{section Cauchy principal value}.
	
	Let $\triangle$ denote the diagonal of $M \times M$. The blow-up of $M \times M$ along the diagonal is denoted by $\mathrm{Bl}_{\triangle}(M \times M)$, with the canonical projection map 
	\[
	p: \mathrm{Bl}_{\triangle}(M \times M) \rightarrow M \times M,
	\]
	such that $p^{-1}(\triangle)$ is the exceptional divisor.
	
	The propagator $P$ can be regarded as a bundle-valued smooth differential form on 
	\[
	M \times M \setminus \triangle \cong \mathrm{Bl}_{\triangle}(M \times M) \setminus p^{-1}(\triangle).
	\]
	The theorem below shows that the propagator has singularities of the desired type.
	\begin{thm}\label{propagator has divisorial type singularities}
		Let $M$ be a closed real analytic K\"ahler manifold, $E \to M$ a real analytic Hermitian holomorphic vector bundle, and let
		\[
		\omega : E \otimes E \to K_M
		\]
		be a non-degenerate bundle map. Then:
		\begin{enumerate}[(1)]
			\item $P$ has divisorial type singularities along $p^{-1}(\triangle)$ (see Definition \ref{divisorial type singularities}). Moreover, all holomorphic derivatives of $P$ also have divisorial type singularities along $p^{-1}(\triangle)$.
			
			\item The pullback
			\[
			\triangle^*P := \lim_{\epsilon \rightarrow 0} \triangle^* \left( \int_{\epsilon}^{\infty} dt \wedge \big( \bar{\partial}_{\widetilde{E}}^* \otimes \mathrm{id} \big) H_t \right)
			\]
			defines a smooth bundle-valued differential form on $M$, where $\triangle^*$ denotes the pullback of bundle-valued differential forms along the diagonal embedding
			\[
			\triangle: M \rightarrow M \times M.
			\]
			Moreover, the pullbacks of all holomorphic derivatives of $P$ along $\triangle$ are also smooth bundle-valued differential forms.
		\end{enumerate}
	\end{thm}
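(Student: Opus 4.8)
The plan is to reduce the entire statement to the short-time behavior of the heat kernel and then to integrate out the Schwinger parameter $t$. By \Cref{propagator singularities and heat} and \Cref{propagator}, the singularities of $P$ along $p^{-1}(\triangle)$ are governed \emph{solely} by
\[
\int_{0}^{\infty}\eta(t)\,dt\wedge(\bar\partial_{\tE}^{*}\otimes\mathrm{id})H_t,
\]
i.e.\ by the behavior of $(\bar\partial_{\tE}^{*}\otimes\mathrm{id})H_t$ as $t\to 0$, since the complementary piece is already smooth on $M\times M$. Writing $\dim_{\mathbb{C}}M=n$, I would start from the classical short-time expansion near $\triangle$,
\[
H_t(p,q)\sim(4\pi t)^{-n}\,e^{-\rho^{2}(p,q)/4t}\sum_{i\ge 0}t^{i}\,\Phi_i(p,q),
\]
valid together with all derivatives, where $\rho$ is the distance function and $\Phi_i\in\Gamma(\tE\boxtimes\tE)$ solve the usual transport equations. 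The real-analyticity hypotheses on $(M,E)$ make the $\Phi_i$ real analytic, which is exactly what will let me track holomorphic derivatives within one class. I would then recast this expansion into a normal form adapted to $\mathrm{Bl}_{\triangle}(M\times M)$ (the \emph{regular expression} announced in the organization): after pulling back to the blow-up, the $t$-dependence decouples into the explicit factor $(4\pi t)^{-n}e^{-\rho^{2}/4t}$ times a series with coefficients smooth up to the exceptional divisor.

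Next I would integrate over $t$. The first-order operator $\bar\partial_{\tE}^{*}\otimes\mathrm{id}$ either differentiates a coefficient $\Phi_i$ or hits the Gaussian, the latter producing a factor $t^{-1}$ times first derivatives of $\rho^{2}$ (which vanish on $\triangle$); thus each resulting model term has the form $(4\pi t)^{-n-\varepsilon+i}e^{-\rho^{2}/4t}\Psi$ with $\varepsilon\in\{0,1\}$. Integrating against $\eta(t)\,dt$ and substituting $u=\rho^{2}/4t$ converts $\int_{0}^{\infty}\eta(t)\,t^{b}\,e^{-\rho^{2}/4t}\,dt/t$ (with $b=i+1-n-\varepsilon$) into a constant multiple of $\rho^{2b}$ plus a smooth cutoff remainder. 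On the blow-up I would write $\rho^{2}=r^{2}g$ with $r$ a local defining function of $p^{-1}(\triangle)$ and $g>0$ smooth (using that $\rho^{2}$ is smooth and vanishes to exact order two transverse to $\triangle$), so $\rho^{2b}=r^{2b}g^{b}$ becomes a power of $r$. Since $\bar\partial_{\tE}^{*}$ is applied only once, only finitely many terms have $b<0$ (bounded by $n$): negative integer $b$ gives pure poles $r^{2b}$, while the marginal case $b=0$ gives $\log r$ (the effective lower cutoff $u\sim\rho^{2}$ turns $\int u^{-1}e^{-u}\,du$ into $-2\log\rho=-2\log r-\log g$). This is precisely a logarithmic divisorial type singularity in the sense of \Cref{divisorial type singularities}. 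Holomorphic differentiation raises the order only by a finite amount and preserves real-analyticity of the coefficients, so all holomorphic derivatives stay in the same class, giving (1).

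For (2) I would isolate the restriction $\rho=0$. The only dangerous contributions are the negative powers $\rho^{2b}$, $b<0$; but each such term arises either from $\bar\partial_{\tE}^{*}$ hitting the Gaussian, and hence carries a factor of first derivatives of $\rho^{2}$ that vanish on $\triangle$, or it appears as a $\partial_t$-exact contribution whose $\epsilon$-dependence is a total derivative. Consequently, in $\triangle^{*}\int_{\epsilon}^{\infty}dt\wedge(\bar\partial_{\tE}^{*}\otimes\mathrm{id})H_t$ the divergent-in-$\epsilon$ pieces vanish upon restriction to the diagonal, and the surviving $b\ge 0$ terms restrict to smooth sections; letting $\epsilon\to 0$ then yields a smooth bundle-valued form on $M$. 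Inserting holomorphic derivatives throughout and repeating the argument proves the final assertion.

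The main obstacle is to make the \emph{regular expression} genuinely rigorous: one must show that $H_t$ admits the structured expansion uniformly up to the exceptional divisor, with a remainder after finitely many terms that is integrable in $t$ and contributes only strictly less singular (hence still divisorial) terms, and one must track the form-degree (Clifford) content so that $\bar\partial_{\tE}^{*}$ and the holomorphic derivatives keep everything inside the divisorial class. This is exactly where Getzler's rescaling is indispensable: it installs a grading under which the leading rescaled kernel is the explicit Mehler/harmonic-oscillator Gaussian and all remainders are provably subleading, which is what forces the finiteness of the negative-power terms and legitimizes the term-by-term integration in $t$.
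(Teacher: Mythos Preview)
Your overall architecture matches the paper's: reduce to short time via \Cref{propagator singularities and heat}, organize the heat kernel expansion relative to the blow-up, integrate out $t$ via a Mellin-type substitution, and invoke Getzler rescaling to control the structure. But there is a genuine gap in the middle step that the paper works hard to close, and your proposal does not.

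You model the singular terms as $(4\pi t)^{-n-\varepsilon+i}e^{-\rho^{2}/4t}\Psi$ with $\Psi$ merely smooth (or analytic), integrate to get $\rho^{2b}\Psi$, write $\rho^{2}=r^{2}g$ on the blow-up, and declare $r^{2b}g^{b}$ a divisorial pole. This is false: in local defining coordinates $r^{2b}=|z_{1}-w_{1}|^{2b}=(z_{1}-w_{1})^{b}(\bar z_{1}-\bar w_{1})^{b}$, and for $b<0$ the antiholomorphic factor $(\bar z_{1}-\bar w_{1})^{b}$ is \emph{not} of divisorial type in the sense of \Cref{divisorial type singularities}, which allows only holomorphic poles and logarithms times smooth. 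The antiholomorphic poles must be cancelled by matching powers of $\bar z-\bar w$ hidden in $\Psi$, and a generic smooth $\Psi$ does not do this. The paper's resolution is precisely \Cref{main0}: one shows $e^{\rho^{2}/2t}P_{t}$ is a \emph{regular expression}, i.e.\ a polynomial in $y_{i}=(\bar z_{i}-\bar w_{i})/t$ and $dy_{i}$ with smooth coefficients. After integration this becomes a polynomial in $\tilde y_{i}=(\bar z_{i}-\bar w_{i})/\rho^{2}$ (\Cref{thm41}), and $\tilde y_{i}$ has only a holomorphic pole on the blow-up because the numerator exactly cancels one antiholomorphic factor of $\rho^{2}$ (proof of \Cref{lem44}).

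Your final paragraph identifies this as the ``main obstacle'' but mischaracterizes the role of Getzler rescaling. Finiteness of the negative $t$-powers is automatic from the heat kernel expansion; what Getzler rescaling actually buys (\Cref{cor39}, \Cref{prop318}) are specific \emph{cancellations}: the terms $C_{l}$ with $l<0$ vanish identically, and a further identity $\sum_{i}\iota_{\partial_{\bar z_{i}}}\partial_{z_{i}}\tilde C_{0}=0$ kills the residual bad $dt$-piece in \Cref{prop325}. These are not remainder estimates but exact algebraic facts obtained by passing to the rescaled limit and solving the model transport ODE. Without them, $e^{\rho^{2}/2t}P_{t}$ would contain terms outside $S_{\sm}$ and your $t$-integration would produce non-divisorial singularities. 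Your argument for part~(2) has the same issue: the vanishing of divergent pieces on the diagonal is not explained by ``$\partial_{t}$-exactness'' but again by the regular-expression structure, via \Cref{lem44}(2).
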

	
	\Cref{propagator has divisorial type singularities} is the most technically involved result in this paper.
	The key ingredients for its proof are presented in \cref{Section 3} and \cref{Section 4}, while the complete proof is given in \cref{proof of technical theorem}. In this section, we assume the validity of \Cref{propagator has divisorial type singularities} and use it to construct Feynman graph integrals.
	
	\begin{rem}
		The analyticity assumption in \Cref{propagator has divisorial type singularities} is essential for our proof. It remains an interesting question whether this theorem holds without it.
	\end{rem}
	
	Before introducing Feynman graph integrals, we recall the Fulton–MacPherson compactification of configuration spaces.
	
	\begin{defn}
		Let $\vec{I}$ be an ordered finite set and $M$ a closed complex manifold. The configuration space of $M$ with respect to $\vec{I}$ is defined by
		\[
		\Conf_{\vec{I}}(M) = \big\{ (p_1, \dots, p_{|\vec{I}|}) \in M^{|\vec{I}|}:\, p_i \neq p_j \text{ for } i \neq j \big\},
		\]
		where $|\vec{I}|$ denotes the number of elements in $\vec{I}$.
	\end{defn}
	
	For any subset $\vec{J} \subset \vec{I}$, define
	\[
	\triangle_{\vec{J} \subset \vec{I}} := \big\{ (p_1, \dots, p_{|\vec{I}|}) \in M^{|\vec{I}|} \,:\, p_i = p_j \text{ for all } i, j \in \vec{J}\, \big\}.
	\]
	Denote the blow-up of $M^{|\vec{I}|}$ along $\triangle_{\vec{J} \subset \vec{I}}$ by $\mathrm{Bl}_{\triangle_{\vec{J} \subset \vec{I}}}(M^{|\vec{I}|})$. We then have a natural map
	\[
	i: \Conf_{\vec{I}}(M) \rightarrow \prod_{\vec{J} \subset \vec{I}} \mathrm{Bl}_{\triangle_{\vec{J} \subset \vec{I}}}(M^{|\vec{I}|}).
	\]
	
	\begin{defn}\label{Fulton-MacPherson}
		The \textbf{Fulton–MacPherson compactification} of $\Conf_{\vec{I}}(M)$ is defined as the closure of $\mathrm{Im}(i)$ in $\prod_{\vec{J} \subset \vec{I}} \mathrm{Bl}_{\triangle_{\vec{J} \subset \vec{I}}}(M^{|\vec{I}|})$. We denote it by $\widetilde{\Conf}_{\vec{I}}(M)$.
	\end{defn}
	
	\begin{prop}
		Let $\vec{I}$ be an ordered finite set. Then $\widetilde{\Conf}_{\vec{I}}(M)$ is a closed complex manifold. Moreover, the complement $\widetilde{\Conf}_{\vec{I}}(M) \setminus \Conf_{\vec{I}}(M)$ is a simple normal crossing divisor.
	\end{prop}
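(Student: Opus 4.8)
The plan is to realize $\widetilde{\Conf}_{\vec I}(M)$ as an explicit iterated blow-up of $M^{|\vec I|}$ along smooth centers and to read off both assertions from that model. Write $n=|\vec I|$. For each subset $\vec J\subset\vec I$ with $|\vec J|\ge 2$, the diagonal $\triangle_{\vec J\subset\vec I}$ is a smooth closed complex submanifold of $M^{n}$, isomorphic to $M^{\,n-|\vec J|+1}$, and the intersection of two such diagonals is again a diagonal of the same shape, namely the one attached to the partition of $\vec I$ generated by merging the blocks $\vec J,\vec J'$. Thus the family $\{\triangle_{\vec J}\}_{|\vec J|\ge 2}$ is an arrangement whose intersection lattice is the partition lattice of $\vec I$, and it constitutes the minimal \emph{building set} in the sense of De Concini–Procesi. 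The first step is to record this combinatorial structure together with the crucial local fact that, in suitable holomorphic coordinates around any point of $M^{n}$, the diagonals are cut out by linear equations in coordinate differences; hence every local model is a genuine linear subspace arrangement.

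Next I would fix a total order on the relevant subsets refining reverse inclusion, so that deeper (smaller-dimensional) diagonals are blown up first, and form the tower
\[
X_0=M^{n},\qquad X_k=\mathrm{Bl}_{\,\widetilde{\triangle}_{\vec J_k}}(X_{k-1}),
\]
where $\widetilde{\triangle}_{\vec J_k}$ is the strict transform of $\triangle_{\vec J_k}$ in $X_{k-1}$; write $X$ for the final space. The inductive claim, proved chart by chart via the local linear model, is that at each stage the center $\widetilde{\triangle}_{\vec J_k}$ is a smooth submanifold meeting the previously produced exceptional divisors in normal crossings. Since blowing up a smooth center transverse to a simple normal crossing divisor preserves smoothness and keeps the accumulated exceptional locus simple normal crossing, it follows that $X$ is a smooth compact complex manifold whose boundary $X\setminus\Conf_{\vec I}(M)=\bigcup_{\vec J}D_{\vec J}$ is a union of smooth irreducible divisors crossing normally, i.e.\ a simple normal crossing divisor.

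It then remains to identify $X$ with $\widetilde{\Conf}_{\vec I}(M)$ of \Cref{Fulton-MacPherson}. In $X$ the strict transform of each $\triangle_{\vec J}$ has become a divisor, so the universal property of the blow-up supplies, for each $\vec J$, a morphism $X\to\mathrm{Bl}_{\triangle_{\vec J\subset\vec I}}(M^{n})$; assembling these gives a morphism
\[
X\longrightarrow \prod_{\vec J\subset\vec I}\mathrm{Bl}_{\triangle_{\vec J\subset\vec I}}(M^{n})
\]
whose restriction to the dense open subset $\Conf_{\vec I}(M)\subset X$ is exactly the map $i$. Because $X$ is proper and $\Conf_{\vec I}(M)$ is dense in it, the image is closed and equals $\overline{\mathrm{Im}(i)}=\widetilde{\Conf}_{\vec I}(M)$, and one checks that the induced map $X\to\widetilde{\Conf}_{\vec I}(M)$ is an isomorphism, transporting the manifold and simple normal crossing structures. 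This is precisely the content of the original Fulton–MacPherson construction applied to the polydiagonal arrangement on $M^{n}$.

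The main obstacle is the inductive step of the second paragraph: verifying that, after the deeper diagonals have been blown up, the strict transforms of the remaining diagonals stay smooth and meet the accumulated exceptional divisors in normal crossings. This transversality is the combinatorial heart of the argument, governed by the nested-set combinatorics of the partition lattice; the cleanest route is to reduce, via the local linear model, to the De Concini–Procesi analysis of wonderful models of subspace arrangements, where exactly this statement is established. By comparison, compactness of $X$ (immediate from that of $M$ and of the blow-up centers) and the identification with the closure are routine.
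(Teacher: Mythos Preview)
The paper does not give its own proof here; it simply cites the original Fulton--MacPherson paper. Your proposal is a correct outline of precisely that construction (iterated blow-up along diagonals ordered by reverse inclusion, with the nested-set/building-set combinatorics controlling the normal-crossing property), so you are in effect sketching the content of the cited reference rather than departing from it.
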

	\begin{proof}
		See \cite{17c23791-d52f-3aa0-a5a9-5abc4d54669e}.
	\end{proof}
	
	\begin{prop}
		Let $\vec{J} \subset \vec{I}$ be an ordered finite set. Then there exists a holomorphic map
		\[
		\pi_{\vec{J} \subset \vec{I}}: \widetilde{\Conf}_{\vec{I}}(M) \rightarrow \widetilde{\Conf}_{\vec{J}}(M)
		\]
		extending the natural forgetful map from $\Conf_{\vec{I}}(M)$ to $\Conf_{\vec{J}}(M)$. Moreover,
		\[
		\pi^{-1}_{\vec{J} \subset \vec{I}} \left( \widetilde{\Conf}_{\vec{J}}(M) \setminus \Conf_{\vec{J}}(M) \right) \subset \widetilde{\Conf}_{\vec{I}}(M) \setminus \Conf_{\vec{I}}(M).
		\]
	\end{prop}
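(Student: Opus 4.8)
The plan is to construct the forgetful map first on the ambient products of blow-ups into which the two compactifications are embedded, and then to restrict it to the closures by continuity. Write $q \colon M^{|\vec{I}|} \to M^{|\vec{J}|}$ for the coordinate projection forgetting the points indexed by $\vec{I}\setminus\vec{J}$. The key observation is that $q$ is a smooth, hence flat, morphism, being the trivial projection $M^{|\vec{J}|}\times M^{|\vec{I}\setminus\vec{J}|}\to M^{|\vec{J}|}$, and that for every subset $\vec{T}\subset\vec{J}$ one has $q^{-1}(\triangle_{\vec{T}\subset\vec{J}}) = \triangle_{\vec{T}\subset\vec{I}}$, since the defining equations $p_i=p_j$ for $i,j\in\vec{T}$ are literally the same on both sides.

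First I would produce, for each $\vec{T}\subset\vec{J}$, a natural holomorphic map
\[
\mathrm{Bl}_{\triangle_{\vec{T}\subset\vec{I}}}(M^{|\vec{I}|}) \longrightarrow \mathrm{Bl}_{\triangle_{\vec{T}\subset\vec{J}}}(M^{|\vec{J}|}).
\]
Because blowing up commutes with flat base change, the left-hand side is canonically $M^{|\vec{I}|}\times_{M^{|\vec{J}|}}\mathrm{Bl}_{\triangle_{\vec{T}\subset\vec{J}}}(M^{|\vec{J}|})$, and the desired map is the second projection. Equivalently, one may invoke the universal property of the blow-up: the pullback of $\triangle_{\vec{T}\subset\vec{J}}$ along $q$ is the Cartier divisor cut out by the exceptional divisor over $\triangle_{\vec{T}\subset\vec{I}}$, so the composite to $M^{|\vec{J}|}$ lifts through the blow-up. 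Since $\vec{T}\subset\vec{J}\subset\vec{I}$ forces $\vec{T}\subset\vec{I}$, I then assemble these componentwise into a single holomorphic map
\[
F \colon \prod_{\vec{S}\subset\vec{I}}\mathrm{Bl}_{\triangle_{\vec{S}\subset\vec{I}}}(M^{|\vec{I}|}) \longrightarrow \prod_{\vec{T}\subset\vec{J}}\mathrm{Bl}_{\triangle_{\vec{T}\subset\vec{J}}}(M^{|\vec{J}|}),
\]
whose $\vec{T}$-component is the above map precomposed with the projection to the $\vec{T}\subset\vec{I}$ factor.

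Next I would check that $F$ restricts to the forgetful map on the open strata and hence descends to the closures. Tracing the definitions, $F\circ i = i'\circ q$ on $\Conf_{\vec{I}}(M)$, where $i$ and $i'$ are the embeddings defining the two compactifications: on a configuration of distinct points every blow-up coordinate records the point itself away from the exceptional divisor, and the base-change maps send these data to the forgotten configuration. Since $q$ carries $\Conf_{\vec{I}}(M)$ into $\Conf_{\vec{J}}(M)$ and $\widetilde{\Conf}_{\vec{J}}(M)=\overline{\mathrm{Im}(i')}$ is closed, continuity of $F$ yields $F(\widetilde{\Conf}_{\vec{I}}(M))=F(\overline{\mathrm{Im}(i)})\subset\widetilde{\Conf}_{\vec{J}}(M)$. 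The restriction is the sought holomorphic map $\pi_{\vec{J}\subset\vec{I}}$, and it extends the forgetful map by construction.

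Finally, the boundary inclusion is a formal consequence. Since $\pi_{\vec{J}\subset\vec{I}}$ agrees with $q$ on $\Conf_{\vec{I}}(M)$ and $q(\Conf_{\vec{I}}(M))\subset\Conf_{\vec{J}}(M)$, any $x$ with $\pi_{\vec{J}\subset\vec{I}}(x)\in\widetilde{\Conf}_{\vec{J}}(M)\setminus\Conf_{\vec{J}}(M)$ cannot lie in $\Conf_{\vec{I}}(M)$; hence
\[
\pi^{-1}_{\vec{J}\subset\vec{I}}\big(\widetilde{\Conf}_{\vec{J}}(M)\setminus\Conf_{\vec{J}}(M)\big)\subset\widetilde{\Conf}_{\vec{I}}(M)\setminus\Conf_{\vec{I}}(M).
\]
I expect the only genuine technical point to be the well-definedness and holomorphy of the component maps $\mathrm{Bl}_{\triangle_{\vec{T}\subset\vec{I}}}(M^{|\vec{I}|})\to\mathrm{Bl}_{\triangle_{\vec{T}\subset\vec{J}}}(M^{|\vec{J}|})$; once compatibility of blow-ups with the flat projection $q$ is established, the remaining steps follow from continuity and density.
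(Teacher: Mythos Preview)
The paper does not give its own proof of this statement; it simply cites the original Fulton--MacPherson paper. Your argument supplies a self-contained proof, and the approach is sound. The key technical point---that for each $\vec{T}\subset\vec{J}$ the blow-up $\mathrm{Bl}_{\triangle_{\vec{T}\subset\vec{I}}}(M^{|\vec{I}|})$ is canonically the fiber product $M^{|\vec{I}|}\times_{M^{|\vec{J}|}}\mathrm{Bl}_{\triangle_{\vec{T}\subset\vec{J}}}(M^{|\vec{J}|})$ because $q$ is smooth (hence flat) and $q^{-1}(\triangle_{\vec{T}\subset\vec{J}})=\triangle_{\vec{T}\subset\vec{I}}$ scheme-theoretically---is correct. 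Assembling the component maps into $F$, checking $F\circ i=i'\circ q$ on $\Conf_{\vec{I}}(M)$, and then passing to closures by continuity are routine; the boundary inclusion is indeed the formal contrapositive you describe. Your proposal is correct and more detailed than what the paper itself provides.
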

	\begin{proof}
		See \cite{17c23791-d52f-3aa0-a5a9-5abc4d54669e}.
	\end{proof}
	
	Now, let's introduce the notion of Feynman graph integrals. 
	
	Let $\{e_1,\ldots,e_r\}$ be a local holomorphic frame of the holomorphic vector bundle $E \to M$, where $r = \mathrm{rank}(E)$.  
	Then any section $\varphi \in \Gamma(E)$ can be locally expressed as  
	\begin{equation}\label{dec-E}
		\varphi = \sum_{j=1}^r \varphi^j e_j,
	\end{equation}  
	where the coefficients $\varphi^j$ are smooth functions defined on some open subset of $M$.
	
	\def\bfj{{\mathbf{j}}}
	
	Let $\mathbf{j} = (j_1, \ldots, j_n) \in \mathbb{Z}_{\geq 0}^n$. Given a local holomorphic coordinate system $(z_1, \ldots, z_n)$ on $M$, we define the holomorphic differential operator  
	\begin{equation}\label{multi-derivative-defn}
		\partial^{\mathbf{j}} := \frac{\partial^{|\mathbf{j}|}}{\partial z_1^{j_1} \cdots \partial z_n^{j_n}}, \quad \text{where } |\mathbf{j}| := j_1 + \cdots + j_n.
	\end{equation}
	
	\begin{defn}
		A \textbf{degree-$k$ holomorphic Lagrangian density} $I$ is a linear map from $\Gamma(E)^{\otimes k}$ to $\Gamma(K_M)$, which is locally of the form  
		\be\label{defn-I}
		\begin{aligned}
			I(\varphi_1, \ldots, \varphi_k) = \sum c_{i_1 \ldots i_k\, \mathbf{j}_1 \ldots \mathbf{j}_k} \, d^n \mathbf{z} \, 
			\wedge\partial^{\mathbf{j}_1}(\varphi_1^{i_1}) \wedge
			\partial^{\mathbf{j}_2}(\varphi_2^{i_2}) \wedge \cdots\wedge 
			\partial^{\mathbf{j}_k}(\varphi_k^{i_k}),
		\end{aligned}
		\ee
		where each $\varphi_l$ is a local section of $E$ expressed as in \eqref{dec-E}, $\varphi_l^{i_l}$ denotes the $i_l$-th component function in this decomposition, $\partial^{\mathbf{j}_l}$ is a holomorphic differential operator of the form \eqref{multi-derivative-defn} with $\mathbf{j}_l \in \mathbb{Z}_{\geq 0}^n$, and the coefficients $c_{i_1 \ldots i_k\, \mathbf{j}_1 \ldots \mathbf{j}_k}$ are holomorphic functions. Lastly, $d^n\mathbf{z}=dz_1\wedge\cdots\wedge dz_n.$
		
		The density $I$ naturally extends to $\Gamma(\widetilde{E}^{\otimes k})$ in a way compatible with \eqref{defn-I}, and we shall continue to denote it by $I$.
		
	\end{defn}
	\begin{rem}
		For a coordinate-free and equivalent definition of holomorphic Lagrangian densities, see \cite[Definition 2.15]{Williams:2018ows}.
	\end{rem}
	
	\begin{defn}
		A directed graph $\vec{\Gamma}$ consists of the following data:
		\begin{enumerate}[(a)]
			\item An ordered set of vertices $\vec{\Gamma}_0$ and an ordered set of edges $\vec{\Gamma}_1$.
			
			\item Two maps between sets
			\[ t, h : \vec{\Gamma}_1 \rightarrow \vec{\Gamma}_0 \]
			which are the assignments of tail and head to each directed edge.
			\item For each vertex $v \in \vec{\Gamma}_0$ of degree $\mathrm{deg}(v)\in\mathbb{Z}_{\geq 0}$, we assign a $\mathrm{deg}(v)$-degree holomorphic Lagrangian density $I_v$.
			
			% Furthermore, we say $\Gamma$ is decorated if we have a map
			%\[ \bk : \Gamma_1 \rightarrow (\mathbb{Z}^{\geqslant 0})^n, \text{\quad$e
				%  \rightarrow (k_{1, e}, k_{2, e}, \ldots, k_{n, e})$} \]
			%which associates each edge $n$ non-negative integer. We use $(\Gamma, \bk)$
			%to denote a decorated graph. If $k$ is the zero map, we will simply write
			%$\Gamma$ for $(\Gamma, \bk)$.
		\end{enumerate}
	\end{defn}
	
	We will use $| \vec{\Gamma}_0 |$ and $| \vec{\Gamma}_1 |$ to denote the number of vertices
	and edges respectively.
	
	Given a directed graph $\vec{\Gamma}$, we have
	\[
	\otimes_{e \in \vec{\Gamma}_1} P \in \mathcal{D}\left((M \times M)^{|\vec{\Gamma}_{1}|};\boxtimes_{e \in \vec{\Gamma}_{1}}(\widetilde{E} \boxtimes \widetilde{E})\right),
	\]
	and
	\[
	\otimes_{v \in \vec{\Gamma}_0} I_v : \Gamma\Big(\prod_{v \in \vec{\Gamma}_0} M^{\deg(v)}; \boxtimes_{v \in \vec{\Gamma}_0} \widetilde{E}^{\boxtimes \deg(v)}\Big) \to \Gamma\left(\boxtimes_{v \in \vec{\Gamma}_0}(K_M \otimes \Lambda^\bullet T_{0,1}^* M)\right).
	\]
	There exist natural isomorphisms
	\[
	\tau^{\vec{\Gamma}} : (M \times M)^{|\vec{\Gamma}_1|} \cong \prod_{v \in \vec{\Gamma}_0} M^{\deg(v)},
	\]
	and
	\[
	\tau^{\vec{\Gamma}}_* : \mathcal{D}\Big((M \times M)^{|\vec{\Gamma}_{1}|};\boxtimes_{e \in \vec{\Gamma}_1} \widetilde{E} \boxtimes \widetilde{E}\Big) \cong \mathcal{D}\Big(\prod_{v \in \vec{\Gamma}_0} M^{\deg(v)};\boxtimes_{v \in \vec{\Gamma}_0} \widetilde{E}^{\boxtimes \deg(v)}\Big),
	\]
	which are determined by the structure of the directed graph $\vec{\Gamma}$. Hence, $\tau^{\vec{\Gamma}}_*\left(\otimes_{e \in \vec{\Gamma}_1} P\right)$ is an element of
	\[
	\mathcal{D}\Big(\prod_{v \in \vec{\Gamma}_0} M^{\deg(v)}; \boxtimes_{v \in \vec{\Gamma}_0} \widetilde{E}^{\boxtimes \deg(v)}\Big),
	\]
	with singularities along the diagonals. 
	%Observe that $\otimes_{v \in \vec{\Gamma}_0} I_v$ is locally given by compositions of pullbacks of holomorphic derivatives of differential forms to diagonals $\triangle_{\vec{I} \subset \vec{\Gamma}_{0}}$. 
	Therefore, by Theorem~\ref{propagator has divisorial type singularities},
	\[
	\left((\otimes_{v \in \vec{\Gamma}_0} I_v) \circ \tau^{\vec{\Gamma}}_* \right)\left(\otimes_{e \in \vec{\Gamma}_1} P\right)
	\]
	is well-defined as a differential form on $\Conf_{\vec{\Gamma}_0}(M)$. We denote it by
	\[
	\left(\otimes_{v \in \vec{\Gamma}_0} I_v, \otimes_{e \in \vec{\Gamma}_1} P\right)_{\vec{\Gamma}} \in \Gamma\left(\Conf_{\vec{\Gamma}_0}(M); \boxtimes_{v \in \vec{\Gamma}_0}(K_M \otimes \Lambda^\bullet T_{0,1}^* M)\right).
	\]
	
	\begin{thm}\label{Thm-Fey-Graph-Int}
		Let $(M,E,\omega)$ be a triple as in \Cref{propagator has divisorial type singularities}.
		Let $\vec{\Gamma}$ be a directed graph. Then
		\[
		\left(\otimes_{v \in \vec{\Gamma}_0} I_v, \otimes_{e \in \vec{\Gamma}_1} P\right)_{\vec{\Gamma}}
		\]
		is a differential form on $\widetilde{\Conf}_{\vec{\Gamma}_0}(M)$ with divisorial type singularities along $\widetilde{\Conf}_{\vec{\Gamma}_0}(M) \setminus \Conf_{\vec{\Gamma}_0}(M)$ (see Definition~\ref{divisorial type singularities}). In particular,
		\[
		\dashint_{\widetilde{\Conf}_{\vec{\Gamma}_0}(M)} \left(\otimes_{v \in \vec{\Gamma}_0} I_v, \otimes_{e \in \vec{\Gamma}_1} P\right)_{\vec{\Gamma}}
		\]
		is well-defined.
	\end{thm}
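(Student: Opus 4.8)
The plan is to show that the bundle-valued form $\left(\otimes_{v\in\vec{\Gamma}_0} I_v,\, \otimes_{e\in\vec{\Gamma}_1} P\right)_{\vec{\Gamma}}$, which is a priori defined and smooth only on the open configuration space $\Conf_{\vec{\Gamma}_0}(M)$, extends to a form on $\widetilde{\Conf}_{\vec{\Gamma}_0}(M)$ with divisorial type singularities along the boundary $\widetilde{\Conf}_{\vec{\Gamma}_0}(M)\setminus \Conf_{\vec{\Gamma}_0}(M)$. Once this is established, well-definedness of $\dashint$ is immediate from the integrability of forms with divisorial type singularities along a simple normal crossing divisor proved in Appendix \ref{section Cauchy principal value}. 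Since having divisorial type singularities (Definition \ref{divisorial type singularities}) is a local condition along the boundary divisor, the whole problem localizes near a point of a boundary stratum of $\widetilde{\Conf}_{\vec{\Gamma}_0}(M)$.

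First I would record the closure properties of the class of forms with divisorial type singularities along a fixed simple normal crossing divisor: it is closed under finite sums, under wedge products, under multiplication by smooth (real analytic) coefficients such as the $c_{i_1\ldots}$ and the holomorphic volume forms, and --- crucially --- under pullback by the blow-down and forgetful maps of the Fulton--MacPherson construction. This last property reduces to the elementary fact that each such map sends the defining function of an exceptional divisor to a monomial in the defining functions of the boundary components of $\widetilde{\Conf}_{\vec{\Gamma}_0}(M)$, a statement that can be read off from the explicit local charts of the compactification (Definition \ref{Fulton-MacPherson}).

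Next I would treat a single propagator. The two-point compactification $\widetilde{\Conf}_{\{a,b\}}(M)$ is precisely $\mathrm{Bl}_\triangle(M\times M)$, and by Theorem \ref{propagator has divisorial type singularities}(1) the propagator $P$ is a form on it with divisorial type singularities along the exceptional divisor. For an edge $e$ with distinct endpoints $a=h(e)$, $b=t(e)$, I pull $P$ back along the forgetful map $\pi_{\{a,b\}\subset\vec{\Gamma}_0}\colon\widetilde{\Conf}_{\vec{\Gamma}_0}(M)\to\widetilde{\Conf}_{\{a,b\}}(M)$; since this map carries the boundary into the boundary, the exceptional divisor pulls back to a monomial in the boundary components, so $\pi_{\{a,b\}\subset\vec{\Gamma}_0}^*P$ has divisorial type singularities along the boundary of $\widetilde{\Conf}_{\vec{\Gamma}_0}(M)$. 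A self-loop at a vertex $v$ is instead controlled by Theorem \ref{propagator has divisorial type singularities}(2), which makes $\triangle^* P$ (and its holomorphic derivatives) a smooth form on the $v$-copy of $M$. The holomorphic differential operators appearing in the Lagrangian densities $I_v$ are handled by the Leibniz rule: differentiating at a vertex distributes over the incident propagators, and each resulting term is a holomorphic derivative of some $P$, which again has divisorial type singularities by the ``all holomorphic derivatives'' clauses of Theorem \ref{propagator has divisorial type singularities}. Assembling over all edges and vertices, $\left(\otimes_v I_v, \otimes_e P\right)_{\vec{\Gamma}}$ is a finite sum of wedge products of such factors with smooth coefficients, hence has divisorial type singularities by the closure properties.

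The main obstacle is the geometry and combinatorics of the simultaneous resolution of all partial diagonals in the Fulton--MacPherson compactification, i.e.\ the pullback step. When several points collide along a deep boundary stratum, many propagators become singular along the same or nested boundary divisors, and one must verify that the preimage of each diagonal's exceptional divisor is a normal crossing union of boundary components with the correct multiplicities, so that the product of the individual divisorial singularities is again of divisorial type along the simple normal crossing boundary. This requires working in the explicit local coordinates of $\widetilde{\Conf}_{\vec{\Gamma}_0}(M)$ and tracking how the defining functions of the various $\triangle_{\vec{J}}$ factor through the boundary divisors; the real analyticity hypothesis of Theorem \ref{propagator has divisorial type singularities} is what keeps all coefficients in the admissible real analytic class throughout.
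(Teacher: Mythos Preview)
Your proposal is correct and follows essentially the same approach as the paper: the paper's proof simply cites Theorem~\ref{propagator has divisorial type singularities} together with Proposition~\ref{pull back of divisorial type singularities} (stability of divisorial type singularities under holomorphic pullback) and the Cauchy principal value machinery of Appendix~\ref{section Cauchy principal value}. What you identify as the ``main obstacle'' --- tracking how defining functions of diagonals factor through the Fulton--MacPherson boundary --- is exactly what Proposition~\ref{pull back of divisorial type singularities} absorbs into a single abstract statement, so your detailed unpacking of closure properties, self-loops, and holomorphic derivatives from the $I_v$ is a correct elaboration of what the paper leaves implicit; one small correction is that the coefficients in Definition~\ref{divisorial type singularities} need only be smooth, not real analytic, so real analyticity is used only inside the proof of Theorem~\ref{propagator has divisorial type singularities} and not in the closure/pullback steps here.
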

	
	\begin{proof}\label{finiteness of Feynman graph integrals}
		%First, observe that locally,
		%\[
		%\left(\otimes_{v \in \vec{\Gamma}_0} I_v, \otimes_{e \in \vec{\Gamma}_1} P\right)_{\vec{\Gamma}}
		%\]
		%is given by products of holomorphic derivatives of $P$ and the pullbacks of holomorphic derivatives of $P$ to $\triangle$.
		
		By Theorem~\ref{propagator has divisorial type singularities} and Proposition~\ref{pull back of divisorial type singularities}, we conclude that
		\[
		\left(\otimes_{v \in \vec{\Gamma}_0} I_v, \otimes_{e \in \vec{\Gamma}_1} P\right)_{\vec{\Gamma}}
		\]
		has divisorial type singularities. By the construction described in Appendix~\ref{section Cauchy principal value}, the integral is well-defined in the sense of Cauchy principal value.
	\end{proof}

	\begin{defn}\label{Fey-Graph-Int}
		Let $(M,E,\omega)$ be triples as in \Cref{propagator has divisorial type singularities}.
		Let $\vec{\Gamma}$ be a directed graph. The {\textbf{Feynman graph integral}} on
		$M$ is defined by:
		\[
		\dashint_{\widetilde{\Conf}_{\vec{\Gamma}_{0}}(M)}\left(\otimes_{v \in \vec{\Gamma}_0} I_v, \otimes_{e \in \vec{\Gamma}_1} P\right)_{\vec{\Gamma}}.
		\]
	\end{defn}
	\begin{rem}
		When $M$ is an elliptic curve and $E$ is the trivial vector bundle, the Feynman graph integrals can be computed explicitly, see \cite{yang2024feynman}.
	\end{rem}
	In the remainder of this subsection, we introduce an alternative formulation of \Cref{Thm-Fey-Graph-Int}.
	
	\begin{defn}
		Let $M$ be a closed real analytic K\"ahler manifold. A \textbf{fake distance function} $\tilde{\rho}:M\times M\rightarrow\mathbb{R}$ is a non-negative function satisfying:
		\begin{enumerate}
			\item $\tilde{\rho}^{2}$ is smooth and $\tilde{\rho}^{-1}(0)=\triangle$, where $\triangle:=\{(p,p)\in M\times M : p\in M\}$.
			\item There exists an open neighborhood $U\subset M\times M$ of $\triangle$ such that \[\tilde{\rho}^{2}|_{U}=\rho^{2}|_{U},\]
			where $\rho$ is the distance function on $M$.
		\end{enumerate}
	\end{defn}
	
	\begin{lem}\label{fake distance function is defining function}
		Let $M$ be a closed real analytic K\"ahler manifold, and let $\tilde{\rho}$ be a fake distance function. Then 
		\[
		p^{*}\tilde{\rho}^{2}:\mathrm{Bl}_{\triangle}(M \times M)\rightarrow\mathbb{R}
		\]
		is a smooth defining function (see \ref{continuous defining function}) for the divisor $p^{-1}(\triangle)$, where $p:\mathrm{Bl}_{\triangle}(M \times M)\rightarrow M\times M$ is the canonical blow-up map along $\triangle:=\{(p,p)\in M\times M : p\in M\}$.
	\end{lem}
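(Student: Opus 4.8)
The plan is to localize the statement near the exceptional divisor and then to exploit the Kähler condition to show that the squared distance function pulls back, through the blow-up, to $|t|^{2}$ times a smooth positive function, where $t$ is a local holomorphic equation for $p^{-1}(\triangle)$.

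First I would dispose of the easy part. Away from $p^{-1}(\triangle)$ the map $p$ is a biholomorphism, so $p^{*}\tilde\rho^{2}=\tilde\rho^{2}\circ p$ is smooth and strictly positive there; combined with $\tilde\rho^{-1}(0)=\triangle$ this shows that $p^{*}\tilde\rho^{2}$ is smooth and nonnegative on all of $\mathrm{Bl}_{\triangle}(M\times M)$ and vanishes exactly on $p^{-1}(\triangle)$. It therefore remains to understand $p^{*}\tilde\rho^{2}$ on a neighborhood of the exceptional divisor. Such a neighborhood maps under $p$ into the open set $U$ of property (2) of the fake distance function, so on it we may replace $\tilde\rho^{2}$ by the honest squared geodesic distance $\rho^{2}$.

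The heart of the proof is the following local factorization claim: in holomorphic coordinates, writing $u^{i}:=z^{i}-w^{i}$, one has
\[
\rho^{2}(z,w)=\sum_{i,j}h_{i\bar j}(z,w)\,u^{i}\overline{u^{j}},
\]
with $h_{i\bar j}$ real-analytic and, near the diagonal, Hermitian and positive-definite, with $h_{i\bar j}|_{\triangle}=g_{i\bar j}$ the Kähler metric. Granting this, the Lemma follows at once: in a blow-up chart adapted to $p^{-1}(\triangle)$ we may take $u^{i_{0}}=t$ and $u^{i}=t\,\eta_{i}$ for $i\ne i_{0}$ (with $\eta_{i_{0}}:=1$), where $t$ is a holomorphic defining equation for the exceptional divisor, and then $p^{*}\rho^{2}=|t|^{2}\sum_{i,j}h_{i\bar j}\,\eta_{i}\overline{\eta_{j}}$. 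The factor $\Phi:=\sum h_{i\bar j}\eta_{i}\overline{\eta_{j}}$ is smooth and, because $h_{i\bar j}\succ0$ and $\eta\ne0$ on the chart, strictly positive; this is precisely the form required of a smooth defining function in the sense of Definition~\ref{continuous defining function}.

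Establishing the factorization is the main obstacle, and it is here that both the Kähler hypothesis and real-analyticity are used. The factorization is equivalent to the statement that the Taylor expansion of $\rho^{2}$ along the diagonal is \emph{balanced}, i.e. contains no monomial that is purely holomorphic or purely antiholomorphic in $u$. To prove balancedness I would pass to Riemann normal coordinates $\xi=\exp_{w}^{-1}(z)$, so that $\rho^{2}=|\xi|^{2}_{g(w)}$, whose expression in complex coordinates is a positive multiple of $\sum g_{i\bar j}(w)\,\xi^{i}\overline{\xi^{j}}$, and then analyze the geodesic expansion of $\xi$ in $u$. The crucial point is that for a Kähler metric the only nonvanishing Christoffel symbols are $\Gamma^{i}_{jk}$ and their conjugates, so the geodesic equation for the holomorphic components reads $\ddot\gamma^{i}=-\Gamma^{i}_{jk}(\gamma)\dot\gamma^{j}\dot\gamma^{k}$, carrying two holomorphic velocity factors; an induction on total degree then shows that the holomorphic components $\xi^{i}(u,\bar u)$ contain no purely antiholomorphic monomial, each monomial carrying at least one factor $u$. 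Substituting into $\rho^{2}\propto\sum g_{i\bar j}\,\xi^{i}\overline{\xi^{j}}$ gives balancedness. Finally I would upgrade this formal statement to a genuine real-analytic factorization by complexifying $\rho^{2}$ (legitimate since $M$ and the metric are real-analytic): balancedness says the holomorphic extension vanishes both on $\{z=w\}$ and on the conjugate locus, hence lies in the product of the two transverse ideals, which equals the ideal generated by the $u^{i}\overline{u^{j}}$; dividing produces real-analytic coefficients $h_{i\bar j}$, which I symmetrize using the reality of $\rho^{2}$ and whose positive-definiteness near $\triangle$ follows by continuity from $h_{i\bar j}|_{\triangle}=g_{i\bar j}$.
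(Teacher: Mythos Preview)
Your proposal is correct and follows the same overall architecture as the paper: reduce to a neighborhood of the exceptional divisor, establish that $\rho^{2}$ factors as a positive Hermitian form in $u^{i}=z^{i}-w^{i}$, and then read off $p^{*}\rho^{2}=|t|^{2}\Phi$ with $\Phi>0$ in blow-up coordinates. The difference lies in how the key factorization $\rho^{2}=\sum h_{i\bar j}u^{i}\overline{u^{j}}$ is obtained. The paper does not argue via geodesics; instead it invokes the expansion \eqref{distance function on arbitrary} from Appendix~\ref{local-coord}, which is proved there by plugging the Taylor series of $\rho^{2}$ into the Eikonal equation in K\"ahler normal coordinates. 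That route packages the balancedness statement into a single formula and avoids complexification. Your route via the K\"ahler geodesic equation and complexification is a genuine alternative: it is more self-contained (no need to set up K\"ahler normal coordinates or the Eikonal equation) and makes the geometric mechanism---that $\Gamma^{i}_{jk}$ are the only nonzero Christoffel symbols---completely explicit. On the other hand, the paper's approach also yields slightly sharper information (the expansion \eqref{Kahler distance} with vanishing to fourth order at the center), which is used elsewhere. For the positivity of $\Phi$, the paper makes the same continuity argument you do but writes it out quantitatively: it bounds the remainder terms $f_{i\bar j}$ by $\min\{\tfrac{1}{3},\tfrac{1}{3(n-1)},\tfrac{1}{3(n-1)^{2}}\}$ after shrinking $U$, so that the leading term $1+\sum|\lambda_{i}|^{2}$ dominates.
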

	
	\begin{proof}
		Since being a smooth defining function is a local property, we may reduce to a local neighborhood in $\C^n$. It suffices to prove the statement over an open set $U \subset \C^n \times \C^n$ containing $(0,0)$, where we may assume $\tilde{\rho} = \rho$.
		
		We identify $\mathrm{Bl}_{\triangle}(\C^n \times \C^n)$ with the closure of the image of the natural embedding
		\[
		i:\C^n \times \C^n \setminus \triangle \hookrightarrow \left( \C^n \times \C^n \right) \times \left( \C^n \times \C^n \setminus \triangle \right)/\C^*, \quad (\mathbf{z},\mathbf{w}) \mapsto (\mathbf{z},\mathbf{w},\mathbf{z},\mathbf{w}),
		\]
		where the $\C^*$-action is given by $\lambda \cdot (\mathbf{z},\mathbf{w}) = (\lambda \mathbf{z}, \lambda \mathbf{w})$ for $\lambda \in \C^*$.
		
		\def\P{\mathbb{P}}
		This yields an identification of $(\C^n \times \C^n \setminus \triangle)/\C^*$ with $\C^n \times \P^{n-1}$ via
		\[
		(\mathbf{z},\mathbf{w}) \mapsto (\mathbf{w},\; [z_1 - w_1 : \cdots : z_n - w_n]).
		\]
		Under this identification, we can write
		\begin{align*}
			&\quad\mathrm{Bl}_{\triangle}(\C^n \times \C^n)
			\\&\cong \{(\mathbf{z},\mathbf{w},[\lambda_1 : \cdots : \lambda_n]) \in \C^n \times \C^n \times \P^{n-1} :\mathbf{z} - \mathbf{w} = k \cdot \boldsymbol{\lambda} \text{ for some } k \in \C\},
		\end{align*}
		where $\boldsymbol{\lambda} = (\lambda_1,\dots,\lambda_n)\in\C^n$.
		
		Let $\{U_i\}_{i=1}^n$ be the standard affine cover of $\P^{n-1}$, corresponding to $\lambda_i \neq 0$. Over $U_1$, we can use local coordinates
		\[
		(z_1 - w_1, \lambda_2, \ldots, \lambda_n, w_1, \ldots, w_n).
		\]
		
		Using the expansion from~\eqref{distance function on arbitrary}, we compute:
		\begin{align}
			\frac{\rho^2(\mathbf{z},\bar{\mathbf{z}},\mathbf{w},\bar{\mathbf{w}})}{|z_1 - w_1|^2}
			&= 1 + \sum_{i=2}^{n} \lambda_i \bar{\lambda}_i \label{first term distance} \\
			&\quad + \Big(f_{1\bar{1}} + \sum_{i=2}^{n} \lambda_i f_{i\bar{1}} + \sum_{i=2}^{n} \bar{\lambda}_i f_{1\bar{i}} + \sum_{i,j=2}^{n} \lambda_i \bar{\lambda}_j f_{i\bar{j}}\Big), \label{second term distance}
		\end{align}
		where $f_{i\bar{j}}(0,0,0,0) = 0$.
		
		By shrinking $U$ if necessary, we may ensure $|f_{i\bar{j}}| < \min\left\{\frac{1}{3}, \frac{1}{3(n-1)}, \frac{1}{3(n-1)^2} \right\}$. Then the absolute value of the term~\eqref{first term distance} dominates that of~\eqref{second term distance}, so the full expression is a positive smooth function on $p^{-1}(U) \cap U_1$. Thus, $\rho^2$ is a smooth defining function on this chart. A similar argument works on other charts $U_i$, completing the proof.
	\end{proof}
	
	\begin{thm}\label{defining function and Feynman graph integral}
		Let $(M,E,\omega)$ be triples as in \Cref{propagator has divisorial type singularities}.
		Let $\vec{\Gamma}$ be a directed graph, and $\tilde{\rho}$ a fake distance function. Then we have
		\[
		\lim_{\epsilon\rightarrow0}\int_{\prod_{i<j\in \vec{\Gamma}_{0}}\tilde{\rho}^{2}_{ij}>\epsilon}\left(\otimes_{v \in \vec{\Gamma}_0} I_v, \otimes_{e \in \vec{\Gamma}_1} P\right)_{\vec{\Gamma}}=\dashint_{\widetilde{\Conf}_{\vec{\Gamma}_{0}}(M)}\left(\otimes_{v \in \vec{\Gamma}_0} I_v, \otimes_{e \in \vec{\Gamma}_1} P\right)_{\vec{\Gamma}},
		\]
		where $\tilde{\rho}^{2}_{ij}=\pi_{\{i<j\}\subset\vec{\Gamma}_{0}}^{*}(\tilde{\rho}^{2})$.
	\end{thm}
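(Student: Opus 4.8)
The plan is to pull everything back to the Fulton--MacPherson compactification $\widetilde{\Conf}_{\vec{\Gamma}_0}(M)$ and then recognize $\prod_{i<j}\tilde{\rho}^2_{ij}$ as a rotationally symmetric defining function for the boundary divisor, so that the asserted equality becomes an instance of the independence of the Cauchy principal value on the choice of such a function. Write $\eta:=\left(\otimes_{v}I_v,\otimes_{e}P\right)_{\vec{\Gamma}}$. By \Cref{Thm-Fey-Graph-Int}, $\eta$ is a top-degree form on $\widetilde{\Conf}_{\vec{\Gamma}_0}(M)$ with divisorial type singularities (\Cref{divisorial type singularities}) along the simple normal crossing divisor $\partial:=\widetilde{\Conf}_{\vec{\Gamma}_0}(M)\setminus\Conf_{\vec{\Gamma}_0}(M)$. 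Let $q:\widetilde{\Conf}_{\vec{\Gamma}_0}(M)\to M^{|\vec{\Gamma}_0|}$ be the canonical map; it restricts to a biholomorphism over $\Conf_{\vec{\Gamma}_0}(M)$. Since the cutoff region $\{\prod_{i<j}\tilde{\rho}^2_{ij}>\epsilon\}$ lies in $\Conf_{\vec{\Gamma}_0}(M)$ and $\eta=q^{*}\eta$ there, the left-hand integral equals $\int_{\{g>\epsilon\}}\eta$, where $g:=\prod_{i<j}\tilde{\rho}^2_{ij}$ is now viewed as a smooth function on $\widetilde{\Conf}_{\vec{\Gamma}_0}(M)$. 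Thus it suffices to prove $\lim_{\epsilon\to0}\int_{\{g>\epsilon\}}\eta=\dashint_{\widetilde{\Conf}_{\vec{\Gamma}_0}(M)}\eta$.

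Next I would describe the vanishing behaviour of $g$ along $\partial$. By \Cref{fake distance function is defining function}, $\tilde{\rho}^2$ pulls back to a smooth defining function of the exceptional divisor of $\widetilde{\Conf}_{\{i<j\}}(M)\cong\mathrm{Bl}_{\triangle}(M\times M)$; composing with the forgetful map $\pi_{\{i<j\}\subset\vec{\Gamma}_0}$ shows that $\tilde{\rho}^2_{ij}$ is a smooth defining function for $\bigcup_{\vec{J}\supseteq\{i,j\}}D_{\vec{J}}$, where $D_{\vec{J}}$ denotes the boundary component along which the points indexed by $\vec{J}$ collide. Consequently $g$ vanishes exactly on $\partial$, and in a local normal crossing chart around a point lying on components $D_{\vec{J}_1},\dots,D_{\vec{J}_r}$ with holomorphic defining coordinates $s_1,\dots,s_r$ one has $g=\phi\prod_{\alpha=1}^{r}|s_\alpha|^{2m_\alpha}$, where $\phi>0$ is smooth and each $m_\alpha=\binom{|\vec{J}_\alpha|}{2}$ is a positive integer counting the colliding pairs. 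In particular $g$ is invariant under the rotations $s_\alpha\mapsto e^{i\theta_\alpha}s_\alpha$, and $\{g>\epsilon\}$ provides a rotationally symmetric exhaustion of $\widetilde{\Conf}_{\vec{\Gamma}_0}(M)\setminus\partial$.

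With this in hand, the plan is to invoke the description of the Cauchy principal value from Appendix~\ref{section Cauchy principal value}. Using a partition of unity subordinate to an atlas of normal crossing charts (together with \Cref{pull back of divisorial type singularities}), the claim localizes to a single polydisk, where I must compare $\lim_{\epsilon\to0}\int_{\{g>\epsilon\}}\eta$ with the principal value computed by the standard regularization. In the local model the singular part of $\eta$ is a sum of terms with holomorphic poles $s_1^{-a_1}\cdots s_r^{-a_r}$ (possibly times powers of $\log|s_\alpha|^2$) wedged against smooth forms; since $g$ and the standard cutoffs are both invariant under each rotation $s_\alpha\mapsto e^{i\theta_\alpha}s_\alpha$, integrating out the angular variables annihilates every genuinely singular term (a nonzero holomorphic pole averages to zero over a circle), leaving an absolutely integrable remainder for which any such exhaustion yields the same limit.

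The hard part will be the corner analysis, i.e.\ the regions where several $|s_\alpha|$ are simultaneously small. There the multiplicities $m_\alpha>1$ bend $\{g>\epsilon\}$ into a curved ``hyperbolic'' region whose symmetric difference with the rectangular region $\{|s_\alpha|^{2m_\alpha}>\epsilon\ \forall\alpha\}$ used in the standard regularization is nontrivial. I expect to control this by combining the pointwise divisorial type bound on $\eta$ (which limits the order of the poles and the powers of logarithms) with the vanishing of the angularly averaged holomorphic singular terms, so that the contribution of the discrepancy region tends to zero as $\epsilon\to0$; this reduces the remaining estimate to the radial variables, where the two region families differ only by sets of vanishing weighted measure. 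Once the corner contributions are shown to match, the independence of the principal value on the rotationally symmetric exhaustion gives the stated equality, and the proof concludes.
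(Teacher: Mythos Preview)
Your approach is correct and follows the same strategy as the paper, but you are re-deriving appendix machinery rather than citing it. The paper's proof is short: $\prod_{i<j}\tilde{\rho}^2_{ij}$ is a smooth defining function of the exceptional divisor on $\prod_{i<j}\widetilde{\Conf}_{\{i<j\}}(M)$ by \Cref{fake distance function is defining function}; its pullback along $\prod_{i<j}\pi_{\{i<j\}\subset\vec{\Gamma}_0}$ is then a smooth defining function of $\partial$ by \Cref{pull back of defining function}; and the conclusion is immediate from \Cref{prop-a18}, which asserts $\lim_{\delta\to0}\int_{\{h>\delta\}}\alpha=\dashint_M\alpha$ for any smooth defining function $h$ and any compactly supported form $\alpha$ with divisorial type singularities. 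Your local computation $g=\phi\prod_\alpha|s_\alpha|^{2m_\alpha}$ with $\phi>0$ smooth is precisely the verification of \Cref{continuous defining function}, so at that point you may simply invoke \Cref{prop-a18}; the angular-averaging and corner analysis you anticipate as the hard part are already packaged in Propositions~\ref{prop-a7} and~\ref{prop-a8}.
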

	
	\begin{proof}
		Consider the following natural map:
		\[
		\prod_{i<j\in\vec{\Gamma}_{0}}\pi_{\{i<j\}\subset\vec{\Gamma}_{0}}:\widetilde{\Conf}_{\vec{\Gamma}_{0}}(M)\rightarrow \prod_{i<j\in\vec{\Gamma}_{0}}\widetilde{\Conf}_{\{i<j\}}(M).
		\]
		We note that $\widetilde{\Conf}_{\{i<j\}}(M)\cong \mathrm{Bl}_{\triangle}(M\times M)$, so the function $\prod_{i<j\in \vec{\Gamma}_{0}}\tilde{\rho}^{2}_{ij}$ is a smooth defining function of the divisor
		\[
		\bigcup_{i<j\in \vec{\Gamma}_{0}}p_{ij}^{-1}(\triangle)\subset \prod_{i<j\in\vec{\Gamma}_{0}}\widetilde{\Conf}_{\{i<j\}}(M),
		\]
		where $p_{ij}$ is the composition
		\[
		\prod_{i'<j'\in\vec{\Gamma}_{0}}\widetilde{\Conf}_{\{i'<j'\}}(M)\rightarrow \widetilde{\Conf}_{\{i<j\}}(M)\rightarrow M\times M.
		\]
		Since
		\[
		\Big(\prod_{i<j\in\vec{\Gamma}_{0}}\pi_{\{i<j\}\subset\vec{\Gamma}_{0}}\Big)^{-1}\Big(\bigcup_{i<j\in \vec{\Gamma}_{0}}p_{ij}^{-1}(\triangle)\Big)=\widetilde{\Conf}_{\vec{\Gamma}_{0}}(M)\setminus \Conf_{\vec{\Gamma}_{0}}(M),
		\]
		it follows from Proposition \ref{pull back of defining function} that $\prod_{i<j\in \vec{\Gamma}_{0}}\tilde{\rho}^{2}_{ij}$ is a smooth defining function of 
		\[
		\widetilde{\Conf}_{\vec{\Gamma}_{0}}(M)\setminus \Conf_{\vec{\Gamma}_{0}}(M).
		\]
		The theorem then follows from Proposition \ref{prop-a18}.
	\end{proof}
	
	\begin{rem}
		Theorem~\ref{defining function and Feynman graph integral} provides a definition of Feynman graph integrals that does not rely on the Fulton–MacPherson compactification. This shows that the notion of Feynman graph integral introduced in this paper is intrinsic.
		
	\end{rem}
	
	\def\bfz{{\mathbf{z}}}
	\def\bfw{{\mathbf{w}}}
	\section{Regular Expressions}\label{Section 3}
	
	In this section, we introduce a structure we call regular expressions, which serves as an analog of divisorial type singularities in the context of heat kernels. More precisely, in \Cref{subsection regular expression}, we define regular expressions and establish their basic properties. In \Cref{subsection regular expressions and divisorial type singularities}, we show the connection between regular expressions and divisorial type singularities.

	\subsection{Regular expressions and its filtration}\label{subsection regular expression}
	
	\begin{defn}
		Let $U\subset\mathbb{C}^{n}$ be a open subset, $E$ is a trivial holomorphic vector bundle on $\mathbb{C}^{n}$, and $\widetilde{E}=E\otimes \Lambda^\bullet T_{0,1}^*\mathbb{C}^{n}$. The linear space of \textbf{regular expressions} $S_{\sm}(U\times U)\subset\Omega^{*}\Big((0,\infty);\Gamma\big((\widetilde{E}\boxtimes \widetilde{E})|_{U\times U}\big)\Big)$ is generated by expressions of the form
		\[
		a(\mathbf{y}, d\mathbf{y})\, b,
		\]
		where $a(\mathbf{y}, d\mathbf{y})$ is a polynomial in the variables
		\[
		\mathbf{y} = (y_1, y_2, \ldots, y_n), \quad \text{and} \quad d\mathbf{y} = (dy_1, dy_2, \ldots, dy_n),
		\]
		with
		\[
		y_i = \frac{\overline{z_i} - \overline{w_i}}{t}.
		\]
		Here, $d$ denotes the de Rham differential on $(0,\infty) \times U \times U$, so
		\[
		dy_i = d\left( \frac{\overline{z_i} - \overline{w_i}}{t} \right) = \frac{d(\overline{z_i} - \overline{w_i})}{t} - \frac{(\overline{z_i} - \overline{w_i})\, dt}{t^2}.
		\]
		The factor $b \in \Omega^*\Big((0,\infty); \Gamma\big((\widetilde{E} \boxtimes \widetilde{E})|_{U \times U}\big)\Big)$ is assumed to extend smoothly to a differential form in $\Omega^*\Big([0,+\infty); \Gamma\big((\widetilde{E} \boxtimes \widetilde{E})|_{U \times U}\big)\Big)$.
	\end{defn}

	\begin{prop}\label{indep-holo-coord}
		Let $f:U\subset\mathbb{C}^{n}\rightarrow W\subset \mathbb{C}^{n}$ be a biholomorphism, then $f^{*}(S_{\sm}(W\times W))=S_{\sm}(U\times U)$.
	\end{prop}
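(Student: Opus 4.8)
The plan is to reduce everything to understanding how the generating variables $y_i=(\bar z_i-\bar w_i)/t$ transform under $f$, since a regular expression is by definition a polynomial in the $y_i,dy_i$ whose coefficients (the factors $b$) are smooth up to $t=0$. Throughout, $f^*$ denotes pullback along $\mathrm{id}_{(0,\infty)}\times(f\times f)$, where $f\times f:U\times U\to W\times W$, $(z,w)\mapsto(f(z),f(w))$; I write $(\zeta,\xi)$ for the two copies of coordinates on $W\times W$ and $Y_i=(\bar\zeta_i-\bar\xi_i)/t$ for the corresponding variables there. Since $(f^{-1})^*$ is a two-sided inverse of $f^*$, it suffices to establish the single inclusion $f^*(S_{\sm}(W\times W))\subseteq S_{\sm}(U\times U)$: applying it to the biholomorphism $f^{-1}$ gives $(f^{-1})^*(S_{\sm}(U\times U))\subseteq S_{\sm}(W\times W)$, and applying $f^*$ to this yields $S_{\sm}(U\times U)\subseteq f^*(S_{\sm}(W\times W))$, hence equality.

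The crux is a holomorphic factorization. Writing $f=(f_1,\dots,f_n)$, each holomorphic function $(z,w)\mapsto f_i(z)-f_i(w)$ vanishes on the diagonal, and I claim there are holomorphic functions $h_{ij}$ on $U\times U$ with
\[
f_i(z)-f_i(w)=\sum_{j=1}^n (z_j-w_j)\,h_{ij}(z,w).
\]
Near the diagonal this is given by the integral formula $h_{ij}(z,w)=\int_0^1 \tfrac{\partial f_i}{\partial z_j}\big(w+s(z-w)\big)\,ds$ on convex neighborhoods, while off the diagonal one divides by any nonvanishing $z_{j_0}-w_{j_0}$. Conjugating yields $\overline{f_i(z)}-\overline{f_i(w)}=\sum_j(\bar z_j-\bar w_j)\,\overline{h_{ij}}$, where each $\overline{h_{ij}}$ is antiholomorphic and smooth, hence $t$-independent and a legitimate factor of type $b$.

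Granting the factorization, the transformed generators compute cleanly. Since $\zeta=f(z)$ and $\xi=f(w)$,
\[
f^*Y_i=\frac{\overline{f_i(z)}-\overline{f_i(w)}}{t}=\sum_j \overline{h_{ij}}\;y_j,
\]
and, because pullback commutes with $d$ and leaves $t$ untouched,
\[
f^*dY_i=d\Big(\sum_j\overline{h_{ij}}\,y_j\Big)=\sum_j\overline{h_{ij}}\,dy_j+\sum_j\big(\bar{\partial}\,\overline{h_{ij}}\big)\,y_j.
\]
The coefficients $\overline{h_{ij}}$ and $\bar{\partial}\,\overline{h_{ij}}$ are smooth, $t$-independent, and—crucially—$\bar{\partial}\,\overline{h_{ij}}$ involves only $d\bar z,d\bar w$, so it is a section of $\Lambda^{\ge 1}T^*_{0,1}$ and a valid $b$-factor; thus $f^*Y_i,f^*dY_i\in S_{\sm}(U\times U)$. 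Now $S_{\sm}(U\times U)$ is closed under wedge products, as $(a_1b_1)(a_2b_2)=\pm(a_1a_2)(b_1b_2)$ is again a polynomial in the $y_j,dy_j$ times a form smooth up to $t=0$; moreover $f^*b$ is smooth on $[0,\infty)\times U\times U$ because $\mathrm{id}\times(f\times f)$ is $t$-independent and biholomorphic, so $f^*b$ is itself a $b$-factor. Since $f^*$ is a homomorphism of differential graded algebras, for any generator $a(\mathbf Y,d\mathbf Y)\,b$ we obtain $f^*\big(a(\mathbf Y,d\mathbf Y)\,b\big)=a\big(f^*\mathbf Y,f^*d\mathbf Y\big)\,f^*b\in S_{\sm}(U\times U)$, giving the inclusion.

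The main obstacle is the \emph{global} factorization: for non-convex $U$ the segment $w+s(z-w)$ need not stay in $U$, so a single holomorphic $h_{ij}$ on all of $U\times U$ is not produced directly. I would resolve this by locality. On each member $\Omega_\alpha$ of a suitable open cover of $U\times U$ the factorization holds (convex pieces near $\triangle$, coordinate division off $\triangle$), so $f^*\big(a(\mathbf Y,d\mathbf Y)\,b\big)\big|_{\Omega_\alpha}$ is a regular expression there. Choosing a partition of unity $\{\psi_\alpha\}$ subordinate to $\{\Omega_\alpha\}$, I write the global form as $\sum_\alpha\psi_\alpha\,f^*\big(a(\mathbf Y,d\mathbf Y)\,b\big)$; since each $y_j$ is globally defined on $U\times U$ and the cutoff $\psi_\alpha$ multiplies only the $b$-factor (which remains smooth up to $t=0$), every summand is a global generator and the sum lies in $S_{\sm}(U\times U)$. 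In the applications $U$ is a coordinate ball, where the factorization is immediate from the integral formula, so this patching is needed only for the statement in full generality.
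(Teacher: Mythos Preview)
Your proof is correct and follows essentially the same route as the paper: both rely on the holomorphic factorization $f_i(z)-f_i(w)=\sum_j(z_j-w_j)h_{ij}(z,w)$ to rewrite $f^*Y_i$ and $f^*dY_i$ as regular expressions in the $y_j,dy_j$, then use that $S_{\sm}$ is closed under products. The paper simply asserts the existence of the holomorphic $f_{ij}$ without discussing the convexity issue you raise; your partition-of-unity patching is a legitimate way to make this step rigorous for general $U$, and your explicit reduction to a single inclusion via $f^{-1}$ is a detail the paper leaves implicit.
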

	\begin{proof}
		Let $(\mathbf{z},\mathbf{w})\in U\times U$, we have
		\[
		f_i(\mathbf{z}) = f_i(\mathbf{w}) + \sum_{j=1}^n f_{ij}(\mathbf{z}, \mathbf{w}) (z_j - w_j),
		\]
		for some holomorphic functions $f_{ij}$.
		
		Now let $y_i' := \frac{\bar{f}_i(\mathbf{z}) - \bar{f}_i(\mathbf{w}) }{t}$. Then 
		\[
		y_i' 
		= \frac{\bar{f}_i(\mathbf{z}) - \bar{f}_i(\mathbf{w})}{t}= \sum_j \overline{f_{ij}( \mathbf{z},\mathbf{w})} \cdot \frac{\bar{z}_j - \bar{w}_j}{t}
		= \sum_j \overline{f_{ij}( \mathbf{z},\mathbf{w})} \cdot y_j \in S_{\sm}(U\times U)
		\]
		and
		\[
		dy_i' = \sum_j \left( d\overline{f_{ij}(\mathbf{z} , \mathbf{w})} \right) y_j
		+ \overline{f_{ij}(\mathbf{z} ,\mathbf{w})} \, dy_j \in S_{\sm}(U\times U),
		\]
		where we recall that $d$ denotes the de Rham differential on $(0,\infty) \times M \times M$. This proves our claim.
	\end{proof}
	\begin{rem}
		Proposition \ref{indep-holo-coord} shows that the definition of regular expressions is independent of the choice of coordinates. This allows us to define the concept globally on an complex manifold, see \Cref{global-regular-expression}.
	\end{rem}

	In general, it is difficult to check whether a given element in $\Omega^{*}\big((0,\infty);\Gamma(\widetilde{E}\boxtimes \widetilde{E})\big)$ is a regular expression. To obtain more characterizations of regular expressions, we restrict ourselves to analytic expressions.
	\begin{defn}
		Let $U$ be an open subset of $\mathbb{C}^{n}$. For any $r>0$, we define
		\[
		N(U,r)=\Big\{(\mathbf{z},\mathbf{w})\in\mathbb{C}^{n}\times\mathbb{C}^{n}:\mathbf{w}\in U,|\mathbf{z}-\mathbf{w}|<r\Big\}.
		\]
	\end{defn}
	
	On $N(U,r)$, we introduce the coordinates $$(\tilde{\mathbf{z}},\mathbf{w})=(\mathbf{z}-\mathbf{w},\mathbf{w}).$$ We will always use these coordinates for $N(U,r)$ in this paper.
	\begin{defn}
		Let $p\in\mathbb{C}^n$, $U\subset\mathbb{C}^{n}$ be a neighborhood of $p$,  $E$ is a trivial holomorphic vector bundle on $\mathbb{C}^{n}$, and $\widetilde{E}=E\otimes \Lambda^\bullet T_{0,1}^*\mathbb{C}^{n}$. For $r>0$, the linear space of \textbf{analytic expressions} $A\big(N(U,r);p\big)\subset\Omega^{*}\big((0,\infty);\Gamma((\widetilde{E}\boxtimes \widetilde{E})|_{N(U,r)})\big)$ is generated by expressions of the form
		\[
		a\, t^m+a't^{m'}dt,
		\]
		where $m,m'\in\mathbb{Z}$, and $a,a'$ are convergent power series at $(p,p)$ valued in $(\widetilde{E}\boxtimes \widetilde{E})|_{N(U,r)}$. Similarly, the linear space of \textbf{regular analytic expressions} $$A_{\sm}(N(U,r);p)\subset\Omega^{*}((0,\infty);\Gamma((\widetilde{E}\boxtimes \widetilde{E}|_{N(U,r)})))$$ is generated by expressions of the form 
		\[
		a(\mathbf{y}, d\mathbf{y})\, b\,t^m+a'(\mathbf{y}, d\mathbf{y})\, b'\,t^{m'}dt,
		\]
		where $m,m'\in\mathbb{Z}_{\geq 0}$, $a(\mathbf{y}, d\mathbf{y}),a'(\mathbf{y}, d\mathbf{y})$ are polynomials in the variables
		\[
		\mathbf{y} = (y_1, y_2, \ldots, y_n), \quad \text{and} \quad d\mathbf{y} = (dy_1, dy_2, \ldots, dy_n),
		\]
		with
		\[
		y_i = \frac{\bar{\tilde{z}}_{i}}{t}=\frac{\overline{z_i} - \overline{w_i}}{t}.
		\] The factors $b,b'$ are convergent power series at $(p,p)$ of $(\widetilde{E}\boxtimes \widetilde{E})|_{N(U,r)}$.
	\end{defn}

	Given $A(N(U,r);p)$, where $p\in U\subset\mathbb{C}^{n}$ and $r>0$, we will define a scaling action—namely, a version of Getzler's rescaling—on $ A(N(U,r);p) $. For $ \epsilon \in (0,1] $, the scaling operator $ \delta_\epsilon $ is defined by
	\be\label{rescaling}
	\delta_\epsilon\big(a(\tilde{\bfz}, {\bar{\tilde{\bfz}}}, t, \bfw, \bar{\bfw}, d {\bar{\tilde{\bfz}}}, d \bar{\bfw}, dt)\big)=a (\tilde{\bfz}, \epsilon {\bar{\tilde{\bfz}}}, \epsilon t, \bfw, \bar{\bfw}, \epsilon d {\bar{\tilde{\bfz}}}, d \bar{\bfw}, \epsilon dt),
	\ee
	for any $a\in A\big(N(U,r);p\big)$.
	
	This scaling action defines a filtration on $A(N(U,r);p)$:
	
	\begin{defn}
		For $m\in \mathbb{Z}$, we define
		\[
		F_mA\big(N(U,r);p\big)=\Big\{a\in A\big(N(U,r);p\big)\;:\;\lim_{\epsilon\rightarrow 0}\epsilon^{-m}\delta_\epsilon(a) \text{ exists}\Big\}.
		\]
		This gives a descending filtration:
		\[
		\cdots\supset F_mA\big(N(U,r);p\big)\supset F_{m+1}A\big(N(U,r);p\big)\supset\cdots.
		\]
	\end{defn}
	
	We now state two propositions regarding the filtration $F_{*}A\big(N(U,r);p\big)$.
	
	\begin{prop}\label{limit 0 check}
		Let $a\in A\big(N(U,r);p\big)$. Then $a\in F_mA\big(N(U,r);p\big)$ if and only if 
		\[
		\lim_{\epsilon\rightarrow 0}\epsilon^{-m+1}\delta_\epsilon(a)=0.
		\]
	\end{prop}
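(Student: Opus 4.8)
The plan is to reduce both conditions to a single statement about the weight decomposition of $\delta_\epsilon(a)$ under the rescaling \eqref{rescaling}. First I would observe that $\delta_\epsilon$ acts diagonally on monomials: expanding a generator $a\,t^{m}+a'\,t^{m'}\,dt$ of $A(N(U,r);p)$ by writing the convergent power series $a,a'$ as (infinite) sums of monomials in $\tilde{\bfz},\bar{\tilde{\bfz}},\bfw,\bar{\bfw}$ wedged with the $d\bar{\tilde z}_i$ and $d\bar w_i$, each such monomial is an eigenvector of $\delta_\epsilon$ with eigenvalue $\epsilon^{k}$, where the \emph{weight}
\[
k=\#\{\bar{\tilde z}\text{-factors}\}+\#\{d\bar{\tilde z}\text{-factors}\}+(t\text{-power})+\#\{dt\text{-factors}\}
\]
is read off directly from \eqref{rescaling}, since exactly the variables $\bar{\tilde{\bfz}}$, $t$, $d\bar{\tilde{\bfz}}$, $dt$ carry a factor of $\epsilon$. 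Grouping monomials by weight gives a decomposition
\[
\delta_\epsilon(a)=\sum_{k\ge k_0}\epsilon^{k}\,a_{k},
\]
in which each $a_k$ is again an analytic expression (a convergent power series times the fixed-weight part), the $a_k$ are independent of $\epsilon$, and the weights are bounded below because $a$ is a \emph{finite} sum of terms $a_j t^{m_j}$ and the $\bar{\tilde z}$- and $d\bar{\tilde z}$-weights are non-negative.

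Granting this decomposition, the filtration condition becomes transparent. Multiplying by $\epsilon^{-m}$ gives $\epsilon^{-m}\delta_\epsilon(a)=\sum_{k\ge k_0}\epsilon^{k-m}a_k$, and because distinct powers of $\epsilon$ cannot cancel, the dominant behavior as $\epsilon\to 0$ is governed by the lowest weight $k$ with $a_k\neq 0$. Hence $\lim_{\epsilon\to 0}\epsilon^{-m}\delta_\epsilon(a)$ exists if and only if every term of strictly negative exponent vanishes, i.e. $a_k=0$ for all $k<m$ (in which case the limit equals $a_m$), so that
\[
a\in F_mA\big(N(U,r);p\big)\iff a_k=0\text{ for all }k<m.
\]
Applying the same reasoning to $\epsilon^{-m+1}\delta_\epsilon(a)=\sum_{k\ge k_0}\epsilon^{k-m+1}a_k$: this limit exists \emph{and equals zero} precisely when there are no terms of non-positive exponent, that is $a_k=0$ for all $k\le m-1$, which is the identical condition $a_k=0$ for all $k<m$. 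Comparing the two characterizations yields the claimed equivalence, proving the proposition.

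The main point I expect to require care is the analytic bookkeeping behind the phrase ``the limit exists.'' One must check that for $\epsilon\in(0,1]$ the operator $\delta_\epsilon$ keeps us inside the common domain of convergence of the power series at $(p,p)$, so that $\delta_\epsilon(a)$ and each $a_k$ are genuine elements of $A\big(N(U,r);p\big)$ and the series $\sum_k\epsilon^k a_k$ converges uniformly on compact subsets of the interior $\{t>0\}$ (where any negative $t$-powers are harmless). Once this is secured, the claim that the lowest surviving power of $\epsilon$ controls the $\epsilon\to 0$ behavior is elementary, and both stated conditions collapse to the same vanishing requirement on the low-weight components $a_k$ with $k<m$.
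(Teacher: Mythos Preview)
Your argument is correct and is essentially the same as the paper's own proof: both observe that $\delta_\epsilon a$ admits a Laurent expansion $\sum_{k\ge k_0}\epsilon^k a_k$ with $a_k\in A(N(U,r);p)$, and then note that existence of $\lim_{\epsilon\to 0}\epsilon^{-m}\delta_\epsilon(a)$ and vanishing of $\lim_{\epsilon\to 0}\epsilon^{-m+1}\delta_\epsilon(a)$ are both equivalent to $a_k=0$ for all $k<m$. The paper states this more tersely, but you have supplied the same reasoning with additional detail on how the weight decomposition arises.
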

	\begin{proof}
		This follows from the fact that $\delta_\epsilon a$ admits a real Laurent expansion in $\epsilon$ of the form
		\[
		\delta_\epsilon a = \sum_{i=-l}^{\infty} a_i \epsilon^i,
		\]
		for some $l \in \mathbb{Z}_{\geq 0}$, with $a_i \in A\big(N(U,r);p\big)$. The limit $\epsilon^{-m}\delta_\epsilon(a)$ exists if and only if all terms with $i < m$ vanish, which is equivalent to saying $\epsilon^{-m+1}\delta_\epsilon(a) \to 0$ as $\epsilon \to 0$.
	\end{proof}
	
	\begin{prop}\label{independent of coordinates for filtration}
		Let $f:U\subset \mathbb{C}^{n}\rightarrow V\subset \mathbb{C}^{n}$ be a holomorphic map with $p\in U$ and $r,r'>0$. Suppose that $f\big(N(U,r)\big)\subset N(V,r')$ and that 
		\[
		f^{*}\Big(A\big(N(V,r');f(p)\big)\Big)\subset A\big(N(U,r);p\big).
		\]
		Then for any $m\in \mathbb{Z}$, we have 
		\[
		f^{*}\big(F_{m}A\big(N(V,r');f(p)\big)\big) \subset F_{m}A\big(N(U,r);p\big).
		\]
	\end{prop}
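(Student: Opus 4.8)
The plan is to recast the filtration in terms of the lowest power of $\epsilon$ produced by the scaling, and then to show that $f^{*}$ can never lower this order. Recall from the proof of \Cref{limit 0 check} that every $a\in A(N(V,r');f(p))$ has a one-sided Laurent expansion $\delta_{\epsilon}a=\sum_{i\geq -l}a_{i}\,\epsilon^{i}$ with $a_{i}\in A(N(V,r');f(p))$. Since the scaling is a monoid action, $\delta_{\epsilon}\circ\delta_{\eta}=\delta_{\epsilon\eta}$, and comparing coefficients shows that each $a_{i}$ is \emph{homogeneous of weight $i$}, i.e. $\delta_{\eta}a_{i}=\eta^{i}a_{i}$, with $a=\sum_{i}a_{i}$. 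Setting $\nu(a):=\min\{i:a_{i}\neq 0\}$, \Cref{limit 0 check} identifies $F_{m}A(N(V,r');f(p))=\{a:\nu(a)\geq m\}$, and likewise on the source with $\delta_{\epsilon}^{U}$. Thus the proposition is equivalent to the single inequality $\nu_{U}(f^{*}a)\geq\nu_{V}(a)$ for all $a$. Since $f^{*}$ is $\mathbb{C}$-linear, $f^{*}a=\sum_{i}f^{*}a_{i}$ with $a_{i}=0$ for $i<\nu_{V}(a)$, so it suffices to prove $\nu_{U}(f^{*}a_{i})\geq i$ for each homogeneous component.

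Next I would record that $\nu_{U}$ is a valuation: multiplicativity of $\delta_{\epsilon}^{U}$ gives $\nu_{U}(\alpha\beta)\geq\nu_{U}(\alpha)+\nu_{U}(\beta)$, and $\nu_{U}(\alpha+\beta)\geq\min(\nu_{U}(\alpha),\nu_{U}(\beta))$; moreover $f^{*}$ is a homomorphism for the wedge product and for substitution into convergent power series, which is well defined here because $f(N(U,r))\subset N(V,r')$ and lands in $A(N(U,r);p)$ by hypothesis. A homogeneous weight-$i$ element is a convergent combination of monomials in the coordinate functions $\tilde Z_{j},\bar{\tilde Z}_{j},W_{j},\bar W_{j}$, the forms $d\bar{\tilde Z}_{j},d\bar W_{j}$, and the Schwinger factors $t^{m},dt$, each such monomial having total weight $i$. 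Hence it is enough to verify, generator by generator, that $\nu_{U}(f^{*}x)\geq w(x)$, where $w$ is the target weight read off from \eqref{rescaling}: $w=1$ for the heavy generators $\bar{\tilde Z}_{j},\,t,\,d\bar{\tilde Z}_{j},\,dt$ (and $w=m$ for $t^{m}$), and $w=0$ for the light generators $\tilde Z_{j},W_{j},\bar W_{j},d\bar W_{j}$.

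The generator check is where holomorphicity of $f$ enters, and it is the only delicate point. The light and Schwinger generators are immediate: $f^{*}t=t$, $f^{*}dt=dt$, while $f^{*}\tilde Z_{j}$ and $f^{*}W_{j}$ are holomorphic (hence unscaled, weight $0$) via the expansion $f_{i}(\bfz)=f_{i}(\bfw)+\sum_{j}f_{ij}(\bfz,\bfw)(z_{j}-w_{j})$ as in \Cref{indep-holo-coord}, and $f^{*}\bar W_{j},f^{*}d\bar W_{j}$ depend only on $\bar{\bfw},d\bar{\bfw}$ (weight $0$). The essential case is the heavy antiholomorphic coordinate, where by \Cref{indep-holo-coord}
\[
f^{*}\bar{\tilde Z}_{i}=\sum_{j}\overline{f_{ij}(\bfz,\bfw)}\,\bar{\tilde z}_{j}.
\]
The coefficient $\overline{f_{ij}(\bfz,\bfw)}$ is antiholomorphic, so as a power series in the heavy variable $\bar{\tilde{\bfz}}$ and the light variable $\bar{\bfw}$ its lowest-weight term is the constant $\overline{f_{ij}(\bfw,\bfw)}$; thus $\nu_{U}(\overline{f_{ij}})\geq 0$, while $\nu_{U}(\bar{\tilde z}_{j})=1$, giving $\nu_{U}(f^{*}\bar{\tilde Z}_{i})\geq 1=w(\bar{\tilde Z}_{i})$, and Leibniz's rule treats $f^{*}d\bar{\tilde Z}_{i}$ identically. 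The main obstacle to keep in view is precisely that $\delta_{\epsilon}^{U}\circ f^{*}\neq f^{*}\circ\delta_{\epsilon}^{V}$: pulling back $\bar{\tilde Z}_{i}$ introduces coefficients $\overline{f_{ij}}$ that themselves depend on the rescaled variable $\bar{\tilde{\bfz}}$, so the two operators genuinely differ. The crux of the argument is that this discrepancy feeds in only \emph{nonnegative} extra powers of $\epsilon$, exactly because $\nu_{U}(\overline{f_{ij}})\geq 0$, which is the content of $f$ being holomorphic; consequently $f^{*}$ can only raise $\nu$, never lower it. Combining the generator estimates through the valuation property yields $\nu_{U}(f^{*}a_{i})\geq i$ and hence the proposition.
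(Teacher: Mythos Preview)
Your proof is correct and follows essentially the same route as the paper: both hinge on the identity $f^{*}\bar{\tilde Z}_{i}=\sum_{j}\overline{f_{ij}(\bfz,\bfw)}\,\bar{\tilde z}_{j}$ and the observation that the antiholomorphic coefficients contribute only nonnegative extra powers of $\epsilon$, so that $\delta_{\epsilon}^{U}\circ f^{*}$ and $f^{*}\circ\delta_{\epsilon}^{V}$ agree to leading order. The only difference is packaging: the paper writes $a'$ in the standard monomial form in $\bar{\tilde{\bfz}}',d\bar{\tilde{\bfz}}',t,dt$ and applies \Cref{limit 0 check} directly, whereas you rephrase the same computation as the inequality $\nu_{U}(f^{*}x)\geq w(x)$ on generators and propagate it via the valuation property---a cosmetic reorganization rather than a different idea.
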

	\begin{proof}
		Let $(\tilde{\bfz},\bfw)$ and $(\tilde{\bfz}',\bfw')$ denote the coordinates on $N(U,r)$ and $N(V,r')$, respectively. For $\epsilon>0$, denote by $\delta_\epsilon$ and $\delta'_\epsilon$ the scaling operators on $N(U,r)$ and $N(V,r')$. Then
		\[
		\begin{cases}
			\delta'_\epsilon\bar{\tilde{z}}'_i = \epsilon\big(\bar{f}_i(\bfz)-\bar{f}_i(\bfw)\big) = \delta_\epsilon\bar{\tilde{z}}'_i + O(\epsilon),\\
			\delta'_\epsilon t = \delta_\epsilon t = \epsilon t.
		\end{cases}
		\]
		
		For $a'\in A(N(V,r');f(p))$, we can write
		\[
		a' = \sum_{i=l_1}^{l_2} t^i \Big( \sum_{\substack{\mathbf{j}\in\mathbb{Z}_{\geq0}^{n} \\ \mathbf{k}\in \{0,1\}^{n}}} b^i_{\mathbf{j}\mathbf{k}} \bar{\tilde{\bfz}}'^{\mathbf{j}} d\bar{\tilde{\bfz}}'^{\mathbf{k}} + \sum_{\substack{\mathbf{j}\in\mathbb{Z}_{\geq0}^{n} \\ \mathbf{k}\in \{0,1\}^{n}}} c^i_{\mathbf{j}\mathbf{k}} \bar{\tilde{\bfz}}'^{\mathbf{j}} d\bar{\tilde{\bfz}}'^{\mathbf{k}} dt \Big),
		\]
		where $b^i_{\mathbf{j}\mathbf{k}}$ and $c^i_{\mathbf{j}\mathbf{k}}$ are analytic sections of $(\widetilde{E}\boxtimes \widetilde{E})|_{N(V,r')}$ that are independent of $\bar{\tilde{\bfz}}$ and $d\bar{\tilde{\bfz}}$. Here for $\bfi\in\Z^n_{\geq0}$, $\bfz^\bfi:=z_1^{i_1}\cdots z_n^{i_n},$ and for $\bfi=(i_1,\cdots,i_n)\in\{0,1\}^n$ such that $i_{j_1}=\cdots i_{j_l}=1$, $d z^\bfi:=dz_{j_1}\wedge\cdots\wedge dz_{j_l}.$
		
		If $a'\in F_mA(N(V,r');f(p))$, then
		\[
		\begin{cases}
			b^i_{\mathbf{j}\mathbf{k}} = 0 & \text{if } |\mathbf{j}| + |\mathbf{k}| + i < m,\\
			c^i_{\mathbf{j}\mathbf{k}} = 0 & \text{if } |\mathbf{j}| + |\mathbf{k}| + i + 1 < m.
		\end{cases}
		\]
		Here for $\bfj=(j_1,\cdots,j_n)\in\Z_{\geq 0}^n \text{ or }\{0,1\}^n, |\bfj|:=j_1+\cdots+ j_n$.
		
		Then we have 
		\[
		\epsilon^{-m+1} \delta_\epsilon f^*(a') = f^*(\epsilon^{-m+1} \delta'_\epsilon a') + O(\epsilon).
		\]
		By Proposition \ref{limit 0 check}, we conclude that $f^*(a') \in F_m A(N(U,r);p)$.
	\end{proof}
	
	\begin{rem}
		Proposition \ref{independent of coordinates for filtration} shows that the filtration $F_*A(N(U,r);p)$ is independent of the choice of coordinates. This allows us to define the filtration globally on complex manifold. The idea of introducing scaling actions and filtrations is motivated by Getzler's rescaling, which plays a crucial role in the study of heat kernel expansions and index theorems. See \cite{berline2003heat, getzler1986short} for further details.
		
	\end{rem}

	Consider the vector field
	\begin{equation} \label{vf-X}
		X := t\partial_{t} + \sum_{i=1}^{n} \bar{\tilde{z}}_{i} \partial_{\bar{\tilde{z}}_{i}}.
	\end{equation}
	Using the filtration $F_{*}A(N(U,r);p)$, we have the following characterization of analytic regular expressions:
	
	\begin{prop}\label{filtration and regularity}
		Let $p \in U \subset \mathbb{C}^n$ and $r > 0$. An analytic expression $a \in A(N(U,r);p)$ lies in $A_\sm(N(U,r);p)$ if and only if $a \in F_0 A(N(U,r);p)$ and
		\[
		\iota_X\left(\lim_{\epsilon \to 0} \delta_\epsilon(a)\right) = 0,
		\]
		where $\iota_X$ denotes the interior product with respect to the vector field $X$.
	\end{prop}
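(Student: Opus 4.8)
The plan is to pass to the coordinates $(t,\mathbf{y},\tz,\bfw,\bar{\bfw})$ on $(0,\infty)\times N(U,r)$, where $y_i=\btz_i/t$, and exploit the fact that the Getzler field \eqref{vf-X} becomes trivial in them. Computing on coordinate functions, $Xt=t$, $Xy_j=X(\btz_j/t)=-\btz_j/t+\btz_j/t=0$, and $X\tz=Xw=X\bar w=0$, so that $X=t\p_t$. Consequently $\iota_X dt=t$, while $\iota_X dy_i=\iota_X d\bar w_j=0$ and $\iota_X$ annihilates functions. Moreover, since $\deep$ leaves $y_i,dy_i,\tz,\bfw,\bar{\bfw},d\bar{\bfw}$ invariant and multiplies $t,dt$ by $\ep$, the $\deep$-weight of any monomial in these variables is simply (its power of $t$) plus (the number of $dt$-factors, which is $0$ or $1$). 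Thus $a\in F_0A(N(U,r);p)$ exactly when every monomial has nonnegative weight, and $a_0:=\lim_{\ep\to0}\deep(a)$ is precisely the weight-$0$ part. The whole proposition will follow once I show that each of the two sides is equivalent to the single condition that $a$, expanded in these coordinates, contains no negative powers of $t$.

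For the forward direction I would expand a generator $a(\mathbf{y},d\mathbf{y})\,b\,t^m$ of $A_\sm$ (and likewise the trailing-$dt$ type) using $\btz_i=ty_i$ and $d\btz_i=y_i\,dt+t\,dy_i$. Since $m\ge0$ and $b$ is a convergent power series in $\btz=t\mathbf{y}$, every resulting monomial carries a nonnegative power of $t$; hence $a\in F_0A$. A weight-$0$ monomial containing a $dt$-factor would need the power $t^{-1}$, which is now absent, so $a_0$ has no $dt$-component, and therefore $\iota_X a_0=0$.

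For the converse, suppose $a\in F_0A$ with $\iota_X a_0=0$. Writing the weight-$0$ part as $a_0=\mu_0+t^{-1}\,dt\wedge\mu_1$, with $\mu_0,\mu_1$ polynomial in $\mathbf{y},d\mathbf{y}$ (coefficients convergent in $\tz,\bfw,\bar{\bfw}$, with $d\bar{\bfw}$-parts), the contraction collapses to $\iota_X a_0=t^{-1}(\iota_X dt)\wedge\mu_1=\mu_1$, so the hypothesis forces $\mu_1=0$. The key point is that, granting $a\in F_0$, these are the only monomials of the whole of $a$ that can carry a negative power of $t$: a monomial $t^p(dt)^b(\cdots)$ has weight $p+b\ge0$, and $p<0$ forces $b=1,\ p=-1$, i.e. a weight-$0$ term $t^{-1}dt\wedge(\cdots)$. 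Hence $\mu_1=0$ means $a$ has no negative powers of $t$ at all. It remains to reorganize such an $a=\sum_{p\ge0}t^p\gamma_p$ (plus its $dt$-part) into $A_\sm$: for fixed $p$ each $\gamma_p$ is a polynomial in $\mathbf{y},d\mathbf{y}$ because an analytic expression involves only finitely many original $t$-powers $m\in S$, so the $y$-degree $N=p-m$ is bounded by $p-\min S$; each monomial $t^py^Ndy^K(\cdots)$ is then rewritten as $a(\mathbf{y},d\mathbf{y})\,\btz^{N_1}t^{\,p-|N_1|}$ with $|N_1|=\min(p,|N|)$, which is a regular expression.

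I expect the main obstacle to be precisely this last bookkeeping step. One must check that the rewriting terminates in a \emph{finite} $A_\sm$-combination with \emph{convergent} series $b$: finiteness hinges on the uniform bound $|N|-|N_1|\le|\min S|$ on the polynomial $y$-degree, which comes from the finiteness of the $t$-spectrum of an analytic expression, and convergence of the collected coefficients is inherited from the convergence of $a$. By contrast, the conceptual heart of the argument—the identity $X=t\p_t$—makes both the filtration $F_\ast A$ and the contraction $\iota_X$ completely transparent, so that the two stated conditions reduce cleanly to the absence of negative powers of $t$.
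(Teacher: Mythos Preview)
Your approach is correct and shares the same conceptual core as the paper's proof: both hinge on the observation that in the coordinates $(t,\mathbf{y},\tilde{\bfz},\bfw,\bar{\bfw})$ the vector field $X$ becomes $t\partial_t$, so that $\iota_X dy_i=0$ while $\iota_X dt=t$, and $\delta_\epsilon$ simply scales $t,dt$ by $\epsilon$. Your intermediate characterization ``$a$ has no negative powers of $t$'' is a nice unification that the paper does not make explicit; the paper instead immediately splits $a=a_0+(a-a_0)$ and handles the two pieces separately.

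The only organizational difference lies in the converse direction. The paper writes $a_0=\sum_k a_k(\mathbf{y},d\mathbf{y})b_k+\tfrac{dt}{t}\sum_k a'_k(\mathbf{y},d\mathbf{y})b'_k$, uses $\iota_X a_0=0$ to kill the second sum, and then asserts that $a-a_0\in A_{\sm}$ is ``easily checked''---which amounts to observing that $a-a_0\in F_1A$ decouples (by linear independence of the $t^m$) into vanishing conditions on each Taylor jet of $c_m$, after which Hadamard's lemma and the identity $\bar{\tilde{\bfz}}^N d\bar{\tilde{\bfz}}^K=t^{|N|+|K|}\mathbf{y}^N d\mathbf{y}^K+t^{|N|+|K|-1}dt\wedge(\cdots)$ finish the job. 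Your direct resummation in the $(t,\mathbf{y})$-expansion accomplishes the same thing but with more bookkeeping; the uniform bound $|N|-|N_1|\le|\min S|$ you identify is exactly what makes the number of polynomial types finite, and convergence of the collected $b$'s is indeed inherited. In short, your route is slightly more explicit where the paper is terse, and the paper's $a_0/(a-a_0)$ split is a somewhat cleaner packaging of what you do monomial-by-monomial.
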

	
	\begin{proof}
		We first prove the “only if” direction. Suppose $a \in A_\sm(N(U,r);p)$. Then
		\[
		\lim_{\epsilon \to 0} \delta_\epsilon(a) = \sum_{k=0}^{m} a_k(\mathbf{y}, d\mathbf{y}) \cdot b_k,
		\]
		for some polynomials $a_k(\mathbf{y}, d\mathbf{y})$ in the variables
		\[
		\mathbf{y} = (y_1, \dots, y_n), \quad d\mathbf{y} = (dy_1, \dots, dy_n), \quad y_j = \frac{\bar{\tilde{z}}_j}{t} = \frac{\bar{z}_j - \bar{w}_j}{t},
		\]
		and some $b_k$, which are analytic sections of $\widetilde{E} \boxtimes \widetilde{E}$ over $N(U,r)$, independent of $\bar{\tilde{\bfz}}$ and $d\bar{\tilde{\bfz}}$.

		Since $\iota_X(y_j) = \iota_X(dy_j) = \iota_X(b_k) = 0$, it follows that
		\[
		\iota_X\big(\lim_{\epsilon \to 0} \delta_\epsilon(a)\big) = 0.
		\]
		Moreover, by definition of regular expressions, $a \in F_0 A(N(U,r);p)$.
		
		For the “if” direction, suppose $a \in F_0 A(N(U,r);p)$ and $\iota_X\left(\lim_{\epsilon \to 0} \delta_\epsilon(a)\right) = 0$. Then we can write
		\[
		\lim_{\epsilon \to 0} \delta_\epsilon(a) = \sum_{k=0}^{m} \left( a_k(\mathbf{y}, d\mathbf{y}) \cdot b_k + \frac{dt}{t} \cdot a_k'(\mathbf{y}, d\mathbf{y}) \cdot b_k' \right),
		\]
		for some polynomials $a_k$, $a_k'$ in the variables $\mathbf{y}, d\mathbf{y}$, and some analytic sections $b_k$, $b_k'$ of $\widetilde{E} \boxtimes \widetilde{E}$ over $N(U,r)$, which are independent of $\bar{\tilde{\bfz}}$ and $d\bar{\tilde{\bfz}}$.
		By assumption,
		\[
		\iota_X\big(\lim_{\epsilon \to 0} \delta_\epsilon(a)\big) = \sum_{k=0}^{m} a_k'(\mathbf{y}, d\mathbf{y}) \cdot b_k' = 0.
		\]
		Since $$\sum_{k=0}^{m} a_k(\mathbf{y}, d\mathbf{y}) \cdot b_k \in A_{\sm}\big(N(U, r); p\big),$$ and it is easily checked that $$a - \lim_{\epsilon \to 0} \delta_\epsilon(a) \in A_{\sm}\big(N(U, r); p\big),$$ we conclude that $a \in A_{\sm}\big(N(U, r); p\big)$.
		
	\end{proof}
	
	We now present an important fiberwise criterion for regularity.
	
	Let $\pi_2: N(U,r) \subset \mathbb{C}^n \times \mathbb{C}^n \to \mathbb{C}^n$ denote the projection to the second factor:
	\[
	\pi_2(\tilde{\bfz}, \bfw) = \bfw.
	\]
	
	\begin{defn}
		Let $p \in U \subset \mathbb{C}^n$ and $r > 0$. An analytic expression $a \in A(N(U,r);p)$ is said to be \textbf{regular along} $\pi_2^{-1}(p')$ for $p' \in U$ if there exists $a' \in A_\sm(N(U,r);p)$ such that
		\[
		a|_{\pi_2^{-1}(p')} = a'|_{\pi_2^{-1}(p')}.
		\]
	\end{defn}
	
	We write
	\[
	N(p',r) := \{\bfz \in \mathbb{C}^n : |\bfz - p'| < r\}
	\]
	and use $A_\sm(N(p',r);p)$ to denote the space of analytic expressions that are regular along $\pi_2^{-1}(p') \cong N(p',r)$.
	
	\begin{prop}\label{fiberwise regular}
		Let $p \in U \subset \mathbb{C}^n$ and $r > 0$. Then $a \in A_\sm(N(U,r);p)$ if and only if $a \in A_\sm(N(p',r);p)$ for every $p' \in U$.
	\end{prop}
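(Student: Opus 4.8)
The plan is to treat the two directions separately. The ``only if'' direction will be immediate: if $a\in A_\sm(N(U,r);p)$, then taking $a'=a$ in the definition shows at once that $a\in A_\sm(N(p',r);p)$ for every $p'\in U$. All of the work is in the converse, and I would prove it by verifying the intrinsic criterion of Proposition~\ref{filtration and regularity}: I must show $a\in F_0A(N(U,r);p)$ and $\iota_X\big(\lim_{\epsilon\to0}\delta_\epsilon a\big)=0$, and the idea is to reduce each of these conditions to the corresponding statement on the individual fibers of $\pi_2$.

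The engine of the reduction is a commutation property that I would establish first. The scaling $\delta_\epsilon$ of \eqref{rescaling} moves only $\bar{\tilde{\bfz}}, t, d\bar{\tilde{\bfz}}, dt$ and fixes $\bfw,\bar{\bfw},d\bar{\bfw}$; likewise the vector field $X$ of \eqref{vf-X} involves only $\p_t$ and the $\p_{\bar{\tilde z}_i}$, so it is tangent to the fibers of $\pi_2$ and annihilates $d\bar{\bfw}$. Restriction along $\pi_2^{-1}(p')$ evaluates coefficients at $\bfw=p'$ while retaining the generators $t,dt,\bar{\tilde{\bfz}},d\bar{\tilde{\bfz}},d\bar{\bfw}$, so it touches a set of variables disjoint from those seen by $\delta_\epsilon$ and by $\iota_X$. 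Consequently both operators commute with restriction, and restriction commutes with $\lim_{\epsilon\to0}$ whenever that limit exists. Applying Proposition~\ref{filtration and regularity} on the fiber $\pi_2^{-1}(p')\cong N(p',r)$, and using that the restriction of a regular expression is again regular, I then read off that regularity of $a$ along $\pi_2^{-1}(p')$ forces $a|_{\pi_2^{-1}(p')}\in F_0$ and $\iota_X\big(\lim_{\epsilon\to0}\delta_\epsilon(a|_{\pi_2^{-1}(p')})\big)=0$.

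To conclude, I would expand $a$ into a finite sum of monomials $c_{i,\bfj,\bfk,\bfl}(\tilde{\bfz},\bfw,\bar{\bfw})\,t^i\bar{\tilde{\bfz}}^{\bfj}d\bar{\tilde{\bfz}}^{\bfk}d\bar{\bfw}^{\bfl}$ together with the analogous $dt$-terms, the coefficients being convergent power series. Under $\delta_\epsilon$ each monomial scales by $\epsilon^{i+|\bfj|+|\bfk|}$ (or by $\epsilon^{i+|\bfj|+|\bfk|+1}$ for the $dt$-terms) with its coefficient untouched, so by Proposition~\ref{limit 0 check} the condition $a\in F_0$ is exactly the vanishing of all coefficients of negative weight, while $\iota_X(\lim_{\epsilon\to0}\delta_\epsilon a)=0$ is a linear condition on the weight-zero coefficients. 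Both are conditions on the $c_{i,\bfj,\bfk,\bfl}$ as sections over $N(U,r)$, and by the commutation above the hypothesis says precisely that they hold after setting $\bfw=p'$, for every $p'\in U$. Since the fibers $\pi_2^{-1}(p')$ cover $N(U,r)$ and the coefficients are real-analytic, vanishing on every fiber forces vanishing identically; hence the global conditions hold and $a\in A_\sm(N(U,r);p)$ by Proposition~\ref{filtration and regularity}.

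The step I expect to be the crux is the commutation in the second paragraph, and specifically the bookkeeping of what ``restriction to $\pi_2^{-1}(p')$'' means: it must retain the $d\bar{\bfw}$-generators, i.e.\ it is the restriction of a bundle-valued form and not the pullback of a form on the base, since otherwise the $d\bar{\bfw}$-components would be invisible on every fiber and the fiberwise criterion could not possibly detect them. Once $\delta_\epsilon$ and $X$ are recognized as vertical for $\pi_2$, the remainder is the observation that regularity is detected coefficientwise and that the fibers exhaust $N(U,r)$.
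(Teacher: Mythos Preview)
Your proposal is correct and follows essentially the same approach as the paper: both reduce the question to the two conditions of Proposition~\ref{filtration and regularity} and then observe that, because $\delta_\epsilon$ and the vector field $X$ act only in the fiber directions of $\pi_2$, these conditions can be checked fiberwise. The paper's proof is a two-sentence sketch of exactly this idea; your version simply fills in the commutation argument and the coefficientwise reasoning that the paper leaves implicit, including the correct observation that restriction must be understood as evaluation of bundle-valued sections (retaining $d\bar{\bfw}$) rather than pullback of forms.
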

	
	\begin{proof}
		By Proposition \ref{filtration and regularity}, $a$ is regular if and only if
		\[
		\begin{cases}
			\lim_{\epsilon \to 0} \epsilon \delta_\epsilon(a) = 0, \\
			\iota_X\left( \lim_{\epsilon \to 0} \delta_\epsilon(a) \right) = 0.
		\end{cases}
		\]
		Since the vector field $\sum \bar{\tilde{z}}_i \partial_{\bar{\tilde{z}}_i}$ is tangent to the fibers of $\pi_2$, both conditions are equivalent to holding along every fiber $\pi_2^{-1}(p')$. This proves the proposition.
	\end{proof}
	
	\begin{rem}
		Proposition \ref{fiberwise regular} reduces the verification of regularity to a fiberwise check. As we will see in later sections, this greatly simplifies the proof of the regularity of the propagator.
	\end{rem}
	
	\subsection{Regular expressions and divisorial type singularities}\label{subsection regular expressions and divisorial type singularities}
	\def\bfy{{\mathbf{y}}}
	
	In this subsection, we establish the connection between regular expressions of and divisorial type singularities. Let $U \subset \mathbb{C}^n$ be an open subset, and let $E$ be a trivial holomorphic vector bundle over $\mathbb{C}^n$. Set $\widetilde{E} = E \otimes \Lambda^\bullet T^{*}_{0,1}\mathbb{C}^n$. We further assume that $U \subset \mathbb{C}^n$ is equipped with a real analytic K\"ahler metric such that the squared distance function $\rho^2 : U \times U \to \mathbb{R}$ is smooth.
	
	Let $\eta \in C_c^\infty(\mathbb{R})$ be a cutoff function satisfying $\eta(t) \equiv 1$ for $|t| \leq 1$, and $\eta(t) = 0$ for $|t| \geq 2$.
	
	We begin with the following:
	
	\begin{prop}\label{thm41}
		Let $ C \in S_\sm(U \times U) $. Then there exist polynomials $ \alpha_k(\tilde{y}_i, d\tilde{y}_i) $, $ \tilde{\alpha}_k(\tilde{y}_i, d\tilde{y}_i) $ and smooth differential forms $ \beta_k, \tilde{\beta}_k \in \Gamma(\widetilde{E} \boxtimes \widetilde{E}),k=1,\cdots,m $ such that
		\begin{equation}\label{thm41eq0}
			\int_0^\infty \eta(t)\, e^{-\frac{\rho^2}{2t}}\, C 
			= \sum_{k=0}^{m} \alpha_k(\tilde{y}_i, d\tilde{y}_i)\, \beta_k
			+ \sum_{k=0}^{m} \tilde{\alpha}_k(\tilde{y}_i, d\tilde{y}_i)\, \tilde{\beta}_k \ln(\rho^2),
		\end{equation}
		where 
		\[
		\tilde{y}_i = \frac{\bar{z}_i - \bar{w}_i}{\rho^2(\bfz, \bar{\bfz}, \bfw, \bar{\bfw})},
		\]
		and the integral is taken over the $ t $-variable. That is, if 
		\[
		w = \alpha + dt \wedge \beta, \quad \text{with } \alpha, \beta \in C^\infty\big((0,\infty); \Gamma(\widetilde{E} \boxtimes \widetilde{E}) \big),
		\]
		then 
		\[
		\int_0^\infty w := \int_0^\infty \beta \, dt.
		\]
		
	\end{prop}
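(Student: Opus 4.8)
The plan is to compute the $t$-integral explicitly after isolating the $dt$-component, reducing everything to a single scalar model integral whose asymptotics produce the stated poles, logarithms, and smooth pieces. By bilinearity of \eqref{thm41eq0} in $C$, it suffices to treat one generator $C=a(\mathbf y,d\mathbf y)\,b$ of $S_{\sm}(U\times U)$, and after expanding we may take $a=y^{\mathbf a}(d\mathbf y)^{B}$ for a multi-index $\mathbf a\in\Z_{\geq0}^{n}$ and a subset $B=\{i_1<\dots<i_p\}\subset\{1,\dots,n\}$, with $b$ extending smoothly to $t=0$. Writing $u_i:=\bar z_i-\bar w_i$, so that $y_i=u_i/t$ and $dy_i=du_i/t-u_i\,dt/t^2$, only the $dt$-component contributes to $\int_0^\infty$. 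A direct computation gives the identity
\[
\big[\,y^{\mathbf a}(d\mathbf y)^{B}\,\big]_{dt}=\pm\,\frac{u^{\mathbf a}}{t^{\,|\mathbf a|+|B|+1}}\;\iota_{\mathbf u}\big((du)^{B}\big)\,dt,\qquad \mathbf u:=\sum_{i}u_i\,\partial_{u_i},
\]
where $\iota_{\mathbf u}$ denotes contraction with the radial (Euler) vector field $\mathbf u$. The emergence of $\iota_{\mathbf u}$ is the structural feature that drives the entire argument. Pulling out the $t$-independent smooth form $u^{\mathbf a}\,\iota_{\mathbf u}((du)^{B})$, the problem reduces to scalar integrals
\[
\Psi_k(s):=\int_0^\infty \eta(t)\,e^{-s/(2t)}\,t^{-k}\,c(t)\,dt,\qquad s:=\rho^2,
\]
with $c$ smooth up to $t=0$ and $k=|\mathbf a|+|B|+1$ (the $dt$-part of $b$ gives the milder exponent $k=|\mathbf a|+|B|$).

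Next I would prove the scalar lemma that $\Psi_k(s)=\sum_{j=1}^{k-1}A_j(s)\,s^{-j}+G(s)+H(s)\ln s$, with $A_j,G,H$ smooth in $s$ and in the spatial variables. The proof is an induction on $k$ using the identity $e^{-s/(2t)}=\tfrac{2t^2}{s}\,\partial_t\big(e^{-s/(2t)}\big)$ and integration by parts: this trades $t^{-k}$ for $t^{-k+1}$ and $t^{-k+2}\partial_t c$ at the cost of a global factor $s^{-1}$, while the term in which $\partial_t$ hits $\eta$ is supported away from $t=0$ and hence is smooth in $s$. The recursion terminates at $\Psi_0$ and $\Psi_1$, which I would treat directly via $v=s/(2t)$: the cutoff truncates $\int e^{-v}\,dv/v$ near $v\sim s$, producing $\ln s$ with a smooth coefficient, and $\Psi_0$ follows from $\Psi_0'=-\tfrac12\Psi_1$. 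This is the origin of the logarithms in \eqref{thm41eq0}; only the smoothness of $c$ up to $t=0$ is used.

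Finally I would reinsert this decomposition and convert negative powers of $s$ into the variables $\tilde y_i=u_i/s$ by means of $u_i=s\,\tilde y_i$ and $du_i=s\,d\tilde y_i+\tilde y_i\,ds$. The smooth and the $\ln s$ pieces of $\Psi_k$ directly yield the smooth-form and $\ln(\rho^2)$-form summands of \eqref{thm41eq0}. For a pole $s^{-j}$ with $1\le j\le k-1$, after expanding $(du)^{B}$ into $d\tilde y_i$ and $ds$ and counting powers of $s$, the degree identity $k=|\mathbf a|+|B|+1$ forces every surviving power of $s$ to be nonnegative — giving polynomials in $(\tilde y_i,d\tilde y_i)$ times smooth forms — with the single exception of the top pole $j=k-1$ paired with a $ds$-factor, which a priori produces a forbidden term of order $s^{-1}\,ds=d\ln\rho^2$. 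Showing that these terms cancel is the heart (and the main difficulty) of the proof. They vanish for a combinatorial reason: from $\iota_{\mathbf u}\big((du)^{B}\big)=\sum_r(-1)^{r-1}u_{i_r}(du)^{B\setminus\{i_r\}}$, extracting the $ds$-factor forces a coefficient $\tilde y_{i_r}\tilde y_{i_j}$ that is symmetric in the unordered pair $\{r,j\}$, wedged with the signed form $d\tilde y^{\,B\setminus\{i_r,i_j\}}$ that is antisymmetric in $\{r,j\}$; summing over ordered pairs $r\ne j$ annihilates the contribution. Hence no $d\rho^2/\rho^2$ term survives and the expression lands exactly in the form \eqref{thm41eq0}.
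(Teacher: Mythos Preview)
Your argument is correct and takes a genuinely different route from the paper. The paper works via the Mellin transform: writing $b_i(t)=\tfrac{1}{2\pi i}\int t^{-\sigma}\hat b_i(\sigma)\,d\sigma$ and then performing the change of variables $t'=\rho^2/(2t)$ in the inner integral. That substitution turns $y_i$ directly into $t'\tilde y_i$ (up to a constant), so the answer already emerges as a polynomial in $(\tilde y,d\tilde y)$ times $\rho^{-2\sigma}\Gamma(\sigma+m)\hat b(\sigma)$; the pole–plus–log structure then drops out of a contour shift (their Lemmas on the Mellin side). In particular no cancellation argument is needed: the potentially dangerous $s^{-1}ds$ you isolate never appears, because scaling the integration variable rather than the form variables keeps $ds$ from ever entering the picture. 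Your approach trades that piece of complex analysis for two elementary ingredients: the integration-by-parts recursion for $\Psi_k$, and the combinatorial identity that kills the $ds$–part of $\iota_{\mathbf u}\big((du)^B\big)$ (which is really $\iota_{\tilde{\mathbf y}}^2=0$ in disguise). One small caveat: as stated, your induction for $\Psi_k$ a priori produces terms of the form $s^{-j}\ln s$ at each step (from $s^{-1}\cdot H_{k-1}(s)\ln s$), and you do not indicate why those drop out. They do cancel, but you never actually need the sharp form: the weaker statement $\Psi_k(s)=\sum_{j=0}^{k-1}\big(A_j(s)+B_j(s)\ln s\big)s^{-j}$ with $A_j,B_j$ smooth is what your recursion proves directly, and since you eventually multiply by $s^{k-1}$ (after your cancellation), this weaker form already suffices for \eqref{thm41eq0}.
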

	
	\begin{proof}
		It suffices to consider the case where
		\begin{equation}\label{thm41eq1}
			\eta C_1 = a(\bfy, d\bfy) b_1,
		\end{equation}
		or
		\begin{equation}\label{thm41eq2}
			\eta C_2 = a(\bfy, d\bfy) b_2 dt,
		\end{equation}
		where $a$ is a monomial of degree $m$ with $m \geq 0$, and $b_1, b_2 \in C_c^\infty([0, \infty), \Gamma(\widetilde{E} \boxtimes \widetilde{E}))$. 
		
		Let $\hat{b}_i(s)$ denote the Mellin transform of $b_i(t)$, for $i=1,2$:
		\begin{equation}\label{mellin}
			\hat{b}_i(s) = \int_0^{\infty} \tau^{s-1} b_i(\tau)\, d\tau.
		\end{equation}
		Then, $\hat{b}_i(s)$ is a meromorphic function of $s$ with values in $\Gamma(\widetilde{E} \boxtimes \widetilde{E})$, possessing simple poles at $s = 0, -1, -2, \dots$.
		
		By the inverse Mellin transform, we obtain
		\begin{equation}\label{inverse-mellin}
			b_i(t) = \frac{1}{2 \pi i} \int_{c-i \infty}^{c+i \infty} t^{-s} \hat{b}_i(s) ds,
		\end{equation}
		for some constant $c>0$.
		
		Applying Fubini's theorem, \eqref{mellin} and \eqref{inverse-mellin}, we derive, for $c>0$,
		\begin{equation}\label{thm41eq3}
			\begin{aligned}
				\int_0^{\infty} \eta(t) e^{-\frac{\rho^2}{2t}} C_1 &= \frac{1}{2 \pi i} \int_0^{\infty} e^{-\frac{\rho^2}{2t}} a(\bfy, d\bfy) \int_{c-i \infty}^{c+i \infty} t^{-s} \hat{b}_1(s) ds  \\
				&= \frac{1}{2 \pi i} \int_{c-i \infty}^{c+i \infty}  \left(\int_0^{\infty} e^{-\frac{\rho^2}{2t}} a(\bfy, d\bfy) t^{-s}\right) \hat{b}_1(s) ds .
			\end{aligned}
		\end{equation}
		
		Next, we evaluate the inner integral. Changing variables via $t' = \frac{\rho^2}{2t}$, we rewrite it as
		\begin{equation}\label{thm41eq4}
			\int_0^{\infty} e^{-\frac{\rho^2}{2t}} a(\bfy, d\bfy) t^{-s}  = \left(\frac{\rho}{\sqrt{2}}\right)^{-2s} \int_0^{\infty} e^{-t'} a\big( t' \tilde{\bfy}, d(t' \tilde{\bfy}) \big) (t')^{s} .
		\end{equation}
		
		Expanding $a(t' \tilde{\bfy}, d(t' \tilde{\bfy}))$ in powers of $t'$, we obtain
		\begin{equation}\label{thm41eq5}
			a\big( t' \tilde{\bfy}, d(t' \tilde{\bfy}) \big) = a_1(\tilde{\bfy}, d\tilde{\bfy}) (t')^{m} + a_2(\tilde{\bfy}, d\tilde{\bfy}) (t')^{m-1} dt',
		\end{equation}
		for some polynomials $a_1, a_2$.
		
		Substituting \eqref{thm41eq5} into \eqref{thm41eq4}, we obtain
		\begin{equation}\label{thm41eq6}
			\int_0^{\infty} e^{-\frac{\rho^2}{2t}} a(\bfy, d\bfy) t^{-s}  = \left(\frac{\rho}{\sqrt 2}\right)^{-2s}  a_2(\tilde{\bfy}, d\tilde{\bfy}) \Gamma(s+m).
		\end{equation}
		
		Combining \eqref{thm41eq3} and \eqref{thm41eq6}, we deduce that for $c>0$,
		\begin{equation}\label{thm41eq7}
			\int_0^{\infty} \eta(t) e^{-\frac{\rho^2}{2t}} C_1 = \frac{1}{2 \pi i} \int_{c-i \infty}^{c+i \infty} a_2(\tilde{\bfy}, d\tilde{\bfy})\hat{b}_1(s) \left(\frac{\rho}{\sqrt 2}\right)^{-2s} \Gamma(s+m) ds.
		\end{equation}
		
		Similarly, for $c>1$, we find
		\begin{equation}\label{thm41eq8}
			\int_0^{\infty} \eta(t) e^{-\frac{\rho^2}{2t}} C_2 = \frac{1}{2 \pi i} \int_{c-i \infty}^{c+i \infty}  a_1(\tilde{\bfy}, d\tilde{\bfy})\hat{b}_2(s)\left(\frac{\rho}{\sqrt 2}\right)^{2-2s} \Gamma(s+m-1) ds.
		\end{equation}
		
		Since $ \Gamma(s) $ has simple poles at $ s = 0, -1, -2, \dots $, it follows that $ \hat{b}_1(s) \Gamma(s+m) $ has at most double poles at these points, while $ \hat{b}_2(s) \Gamma(s+m-1) $ has at most double poles at $ s = 0, -1, -2, \dots $, and a simple pole at $ s=1 $.
		
		From \eqref{decompostion ug} in \Cref{lem42}, \Cref{lem43}, along with \eqref{thm41eq7} and \eqref{thm41eq8}, we conclude that \eqref{thm41eq0} holds.
		
	\end{proof}
	
	\Cref{lem42} and \Cref{lem43} are instrumental in the proof of \Cref{thm41}. Although they follow from the standard theory of the Mellin transform, we could not find a reference that covers the precise versions we require, so we include a proof here for completeness.

	\def\Re{{\mathrm{Re}}}
	\def\Im{\mathrm{Im}}
	\def\half{{\frac{1}{2}}}
	\begin{lem}\label{lem42}
		Let $c>1$ be a real number, and let $h(s)$ be a meromorphic function with at most double poles at $s=0,-1,-2,\dots$, and a simple pole at $s=1$. 
		
		Furthermore, assume:
		\begin{enumerate}[(a)]
			\item Near $s=-k$ for each $k=0,1,2,\dots$, the function $h(s)$ has the Laurent series expansion
			\begin{equation}\label{laurent1}
				h(s) = \frac{a_k}{(s+k)^2} + \frac{b_k}{s+k} + \cdots.
			\end{equation}
			Near $s=1$, $h(s)$ has the Laurent series expansion
			\begin{equation}\label{laurent2}
				h(s) = \frac{b_{-1}}{s-1} + \cdots.
			\end{equation}
			\item For any $N>2$, consider the region 
			\[
			S_{-N,c} := \big\{s \in \mathbb{C} : \Re(s) \in [-N,c],\ |\Im(s)| \geq 1\big\}.
			\]
			For any integer $l\geq 1$, there exists a constant $C_{N,l} > 0$ such that for all $s \in S_{-N,c}$,
			\begin{equation}\label{lem42eq1}
				|h(s)| \leq \frac{C_{N,l}}{|\Im(s)|^l}.
			\end{equation}
			\item For any integer $t \in \mathbb{R}$, let 
			\[
			L_t := \{s \in \mathbb{C} : \Re(s) = t\}.
			\]
			Then for each integer $k \geq 2$, there exists a constant $M_k > 0$ such that for all $s \in L_{-k-\frac{1}{2}}$,
			\begin{equation}\label{lem42eq2}
				|h(s)| \leq M_k.
			\end{equation}
		\end{enumerate}
		
		Then the function $g(u)$ defined by
		\[
		g(u) := \int_{L_c} h(s)\, u^{-s} \, ds
		\]
		is smooth on $(0,\infty)$ and has the asymptotic expansion near $u=0$:
		\begin{equation}\label{expangu}
			g(u) \sim b_{-1}u^{-1}+ \sum_{j=0}^{\infty} \big(b_j u^j - a_j \ln u \cdot u^j\big).
		\end{equation}
		Consequently, there exists $\delta>0$, $\varphi,\psi\in C^\infty\big([0,\delta)\big)$, such that
		\begin{equation}\label{decompostion ug}
			g(u) = b_{-1}u^{-1} + \varphi(u) + \psi(u)\ln u, \quad u \in (0,\delta).
		\end{equation}
		
	\end{lem}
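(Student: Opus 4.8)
The plan is to prove this by the standard inverse-Mellin contour-shifting argument, reading off the expansion \eqref{expangu} from the residues picked up as the contour $L_c$ is moved to the left. I would first record smoothness of $g$ on $(0,\infty)$. Since $u^{-s}$ and each of its $u$-derivatives $\partial_u^j u^{-s} = (-s)(-s-1)\cdots(-s-j+1)\,u^{-s-j}$ grow only polynomially in $|s|$, while the decay estimate \eqref{lem42eq1} forces $|h(s)|$ to decay faster than any power of $|\Im(s)|$ along $L_c$, the integrand and all its formal $u$-derivatives are dominated by an integrable function locally uniformly in $u$. Differentiation under the integral sign then yields $g \in C^\infty\big((0,\infty)\big)$.

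Next, fix an integer $N \ge 2$ and shift the contour from $L_c$ to $L_{-N-\frac12}$. Closing the two vertical lines by horizontal segments at $\Im(s) = \pm T$ and letting $T \to \infty$, the estimate \eqref{lem42eq1} (applied with $l$ large in the strip $S_{-N,c}$) makes the horizontal contributions vanish, so by the residue theorem
\[
g(u) = \Big(\operatorname*{Res}_{s=1} + \sum_{k=0}^{N}\operatorname*{Res}_{s=-k}\Big)\big(h(s)\,u^{-s}\big) + \int_{L_{-N-\frac12}} h(s)\,u^{-s}\,ds,
\]
the normalizing factor $2\pi i$ being absorbed into $g$. The residue at the simple pole $s=1$ in \eqref{laurent2} is $b_{-1}u^{-1}$. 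For the double pole at $s=-k$ I would expand $u^{-s} = u^{k}\big(1 - (s+k)\ln u + \cdots\big)$ and multiply by the Laurent tail \eqref{laurent1}; the coefficient of $(s+k)^{-1}$ is then exactly $b_k u^k - a_k u^k \ln u$. This is the conceptual heart of the lemma: the \emph{double} poles are precisely what produce the logarithmic terms of \eqref{expangu}.

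It remains to control the remainder $R_N(u) := \int_{L_{-N-\frac12}} h(s)\,u^{-s}\,ds$. On this line $|u^{-s}| = u^{N+\frac12}$, and combining the uniform bound \eqref{lem42eq2} near $\Im(s)=0$ with the decay \eqref{lem42eq1} for $|\Im(s)|\ge 1$ shows $\int_{L_{-N-\frac12}}|h(s)|\,|ds| < \infty$, whence $R_N(u) = O\big(u^{N+\frac12}\big)$; differentiating under the integral as before gives $R_N^{(j)}(u) = O\big(u^{N+\frac12-j}\big)$ for every $j$. Letting $N \to \infty$ yields the asymptotic expansion \eqref{expangu}, together with the corresponding expansions of all derivatives of $g$.

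Finally, to pass from \eqref{expangu} to the decomposition \eqref{decompostion ug} with genuinely smooth $\varphi,\psi$, I would invoke Borel's lemma to choose $\psi\in C^\infty\big([0,\delta)\big)$ with Taylor expansion $\psi(u)\sim -\sum_{j\ge 0} a_j u^j$, set $\varphi := g - b_{-1}u^{-1} - \psi\,\ln u$ on $(0,\delta)$, and check that the matching asymptotics cause the logarithmic singularities to cancel, so that every derivative $\varphi^{(j)}(u)$ tends to a finite limit as $u\to 0^+$; a function on $(0,\delta)$ all of whose derivatives extend continuously to $0$ lies in $C^\infty\big([0,\delta)\big)$, which gives \eqref{decompostion ug}. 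The main obstacles I anticipate are the bookkeeping in the double-pole residue computation, where the $\ln u$ terms originate, and the rigorous justification of the contour shift and remainder bound from hypotheses (b) and (c); by contrast, the upgrade from an asymptotic expansion to a genuinely smooth splitting is routine once the derivative estimates $R_N^{(j)}(u)=O\big(u^{N+\frac12-j}\big)$ are in hand.
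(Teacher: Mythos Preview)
Your proposal is correct and follows essentially the same contour-shifting argument as the paper: shift $L_c$ to $L_{-N-\frac{1}{2}}$, pick up residues (the double poles yielding the $\ln u$ terms), and bound the remainder using hypotheses (b) and (c). You are in fact more careful than the paper in two places: you explicitly justify the derivative bounds $R_N^{(j)}(u)=O(u^{N+\frac12-j})$ needed for the smooth splitting, and you spell out the Borel-lemma step for \eqref{decompostion ug}, which the paper leaves implicit.
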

	
	\begin{proof}
		From \eqref{lem42eq1}, it follows that $g(u) \in C^{\infty}((0,\infty))$.
		
		Using \eqref{laurent1}, \eqref{laurent2}, \eqref{lem42eq1}, and the residue theorem, we obtain for any integer $k \geq 2$,
		\begin{equation}\label{lem42eq3}
			\int_{L_c} h(s) u^{-s} \, ds - \int_{L_{-k-\frac{1}{2}}} h(s) u^{-s} \, ds = b_{-1}u^{-1} + \sum_{j=0}^{k} \big(b_j u^j - a_j \ln u \cdot u^j\big).
		\end{equation}
		
		Applying \eqref{lem42eq1} and \eqref{lem42eq2}, we obtain the estimate
		\begin{equation}\label{lem42eq4}
			\bigg| \int_{L_{-k-\frac{1}{2}}} h(s) u^{-s} ds \bigg| \leq (2C_{-k+\frac{1}{2},2} + 2M_k) u^{k+\frac{1}{2}}.
		\end{equation}
		The asymptotic expansion \eqref{expangu} follows directly from \eqref{lem42eq3} and \eqref{lem42eq4}.
	\end{proof}

	\begin{lem}\label{lem43}
		Let $v \in C_c^{\infty}([0,\infty))$, and let $\hat{v}(s)$ be its Mellin transform:
		\[
		\hat{v}(s) = \int_0^{\infty} t^{s-1} v(t) dt.
		\]
		Then for any integer $m \geq -1$, the function $\hat{v}(s) \Gamma(s+m)$ satisfies the conditions of \Cref{lem42}.
	\end{lem}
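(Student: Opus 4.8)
The plan is to obtain the meromorphic continuation of $\hat v$ and its vertical-strip decay simultaneously from a single integration-by-parts identity, and then multiply by $\Gamma(s+m)$ and bookkeep poles and estimates. First I would integrate by parts $p$ times in the defining integral. Since $v\in C_c^\infty([0,\infty))$, all boundary terms vanish, and for $\Re(s)>0$ one obtains
\[
\hat v(s)=\frac{(-1)^p}{s(s+1)\cdots(s+p-1)}\int_0^\infty t^{s+p-1}v^{(p)}(t)\,dt.
\]
The integral on the right converges and is holomorphic for $\Re(s)>-p$, so this identity continues $\hat v$ meromorphically to $\Re(s)>-p$; since $p$ is arbitrary, $\hat v$ extends to a meromorphic function on $\mathbb C$ whose only poles are the simple poles at $s=0,-1,-2,\dots$ produced by the denominator, with residue $v^{(k)}(0)/k!$ at $s=-k$ (equivalently, these poles are read off from the Taylor expansion of $v$ at the origin).

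Next I would determine the pole structure of $h(s)=\hat v(s)\Gamma(s+m)$, recalling that $\Gamma(s+m)$ has simple poles exactly at $s=-m,-m-1,\dots$. For $m\ge 0$, the two factors have simultaneous simple poles precisely at $s=-k$ with $k\ge m$, giving double poles there, while at $s=0,\dots,-(m-1)$ only $\hat v$ contributes, giving simple poles, and no pole occurs at $s=1$. For $m=-1$, $\Gamma(s-1)$ contributes a simple pole at $s=1$ (where $\hat v$ is holomorphic) and simple poles at $s=0,-1,\dots$ (where $\hat v$ is also singular), producing double poles at the nonpositive integers together with the single simple pole at $s=1$. In all cases $h$ has at most double poles at $s=0,-1,-2,\dots$ and at most a simple pole at $s=1$, so the Laurent expansions \eqref{laurent1} and \eqref{laurent2} are forced, verifying condition (a).

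For condition (b), I would combine the displayed formula with Stirling's estimate. On the region $\Re(s)\in[-N,c]$ with $|\Im(s)|\ge 1$, choose $p\ge\max\{N+1,l\}$. Each factor obeys $|s+j|\ge|\Im(s)|$, so the denominator is bounded below by $|\Im(s)|^p$, while $\big|\int_0^\infty t^{s+p-1}v^{(p)}(t)\,dt\big|\le\int_0^\infty t^{\Re(s)+p-1}|v^{(p)}(t)|\,dt$ is bounded uniformly for $\Re(s)\in[-N,c]$. Hence $|\hat v(s)|\le C/|\Im(s)|^p\le C/|\Im(s)|^l$. Since $\Gamma(s+m)$ decays exponentially in $|\Im(s)|$ on this strip by Stirling, it is in particular bounded there (away from the real axis), so the product satisfies $|h(s)|\le C_{N,l}/|\Im(s)|^l$.

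Finally, condition (c) follows from the same estimates restricted to $L_{-k-\half}$. On such a half-integer line $h$ has no poles, hence is holomorphic; the bounds $|s+j|\ge\half$ keep $\hat v$ bounded along the line, and the Stirling decay of $\Gamma(s+m)$ forces $h(s)\to 0$ as $|\Im(s)|\to\infty$. A holomorphic function that is continuous on the line and vanishes at infinity is bounded, producing the constant $M_k$. The main work is the decay estimate (b): everything hinges on extracting enough negative powers of $s$ from the integration-by-parts denominator while keeping the remaining integral uniformly bounded on the strip, after which the exponential decay of $\Gamma$ is comfortable overkill.
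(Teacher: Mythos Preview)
Your proposal is correct and follows essentially the same approach as the paper: both obtain the meromorphic continuation and strip estimates for $\hat v$ via the integration-by-parts identity, and both invoke Stirling's asymptotics for $\Gamma(s+m)$. The only cosmetic difference is that you extract the polynomial decay in $|\Im(s)|$ from the denominator $s(s+1)\cdots(s+p-1)$ and use mere boundedness of $\Gamma$, whereas the paper uses the exponential decay of $\Gamma$ to absorb everything; either route suffices.
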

	
	\begin{proof}
		By the standard properties of the Gamma function and the Mellin transform, $\hat{v}(s) \Gamma(s+m)$ has at most double poles at $s = 0, -1, -2, \dots$ and a simple pole at $s = 1$.
		
		For any integer $k \geq 2$, when $\Re(s) > -k$, we have
		\begin{equation}\label{lem43eq1}
			\hat{v}(s) = \frac{1}{s(s+1)\cdots(s-1+k)} \int_0^{\infty} v^{(k)}(t) t^{s-1+k} dt.
		\end{equation}
		Furthermore, from Stirling’s formula, for $s \in S_{N,c}$,
		\begin{equation}\label{lem43eq2}
			|\Gamma(s+m)| \leq C_{m,N} e^{-\frac{\pi}{4} |\Im(s)|}.
		\end{equation}
		The bounds \eqref{lem42eq1} and \eqref{lem42eq2} follow directly from \eqref{lem43eq1} and \eqref{lem43eq2}.
	\end{proof}
	\def\Bl{\mathrm{Bl}}
	\def\ta{{\tilde{\a}}}
	\def\tb{{\tilde{\b}}}
	\def\ty{{\tilde{y}}}
	
	Let $\triangle:=\{(\bfz,\bfz):\bfz\in U\}$ denotes the diagonal of $U\times U$. The blow up of $U\times U$ is denoted by $\Bl_{\triangle}(U \times U)$, and we have a canonical map $p:\Bl_{\triangle}(U \times U)\rightarrow U\times U$, such that $p^{-1}(\triangle)$ is the exceptional divisor on $\Bl_{\triangle}(U \times U)$.
	
	The following proposition is the main result of this subsection:

	\begin{prop}\label{lem44}
		If $C \in S_\sm(U\times U)$, then we have 
		\begin{enumerate}[(1)]
			\item $\int_0^\infty \eta(t)e^{-\frac{\rho^2}{2t}}C$ has divisorial type singularities along $p^{-1}(\triangle)$ (see \Cref{divisorial type singularities}). Moreover, any holomorphic derivatives of $\int_0^\infty \eta(t)e^{-\frac{\rho^2}{2t}}C$ have divisorial type singularities along $p^{-1}(\triangle)$.
			\item \[
			\triangle ^{*}\Big(\int_0^\infty \eta(t)e^{-\frac{\rho^2}{2t}}C\Big):=\lim_{\epsilon\rightarrow0}\triangle ^{*}\Big(\int_{\epsilon}^{\infty} \eta(t)e^{-\frac{\rho^2}{2t}}C\Big)\]
			is a smooth bundle-valued differential form on $U$, where $\triangle ^{*}$ is the pull back of bundle-valued differential forms along the diagonal embedding
			\[
			\triangle:U\rightarrow U\times U.
			\]
			Moreover, the pull backs of arbitrary holomorphic derivatives of $\int_0^\infty \eta(t)e^{-\frac{\rho^2}{2t}}C$ along $\triangle$ are smooth bundle-valued differential forms.
		\end{enumerate}
	\end{prop}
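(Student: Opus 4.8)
The plan is to reduce everything to the closed form produced by \Cref{thm41} and then read off the singularity structure in the blow-up chart built in the proof of \Cref{fake distance function is defining function}. By \Cref{thm41}, on $U\times U\setminus\triangle$ we may write
\[
\int_0^\infty \eta(t)\,e^{-\frac{\rho^2}{2t}}\,C=\sum_{k}\alpha_k(\tilde{y}_i,d\tilde{y}_i)\,\beta_k+\sum_k\tilde{\alpha}_k(\tilde{y}_i,d\tilde{y}_i)\,\tilde{\beta}_k\ln(\rho^2),\qquad \tilde{y}_i=\frac{\bar z_i-\bar w_i}{\rho^2},
\]
with $\alpha_k,\tilde{\alpha}_k$ polynomials and $\beta_k,\tilde{\beta}_k$ smooth. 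Thus for (1) it suffices to understand the behaviour of $\tilde{y}_i$, $d\tilde{y}_i$ and $\ln\rho^2$ on $\Bl_{\triangle}(U\times U)$.

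For (1), I would work in the chart $U_1$ from the proof of \Cref{fake distance function is defining function}, writing $u:=z_1-w_1$, $z_i-w_i=u\lambda_i$, so that the exceptional divisor is $\{u=0\}$ and $\rho^2=|u|^2 g$ for a positive smooth $g$. Then $\tilde{y}_i=\bar u\bar\lambda_i/(|u|^2 g)=h_i/u$ with $h_i:=\bar\lambda_i/g$ smooth, whence $d\tilde{y}_i=u^{-1}\,dh_i-u^{-2}h_i\,du$ is a sum of smooth forms times negative powers of $u$. Consequently every monomial in $\tilde{y}_i,d\tilde{y}_i$ is a smooth form divided by a power of $u$; multiplying by the smooth $\beta_k$ shows the first sum is of divisorial type. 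For the second sum, $\ln\rho^2=\ln|u|^2+\ln g$ with $\ln g$ smooth, so each term is a smooth form times $u^{-N}$ times $\ln|u|^2$, plus a divisorial-type remainder, which is exactly a (logarithmic) divisorial type singularity in the sense of \Cref{divisorial type singularities}; the remaining charts $U_i$ are symmetric. For the holomorphic-derivative assertion I would use that $\partial_{z_k}\rho^2=\sum_l c_{kl}(\bar z_l-\bar w_l)$ for smooth $c_{kl}$ (a consequence of real analyticity and the order-two vanishing of $\rho^2$ along $\triangle$), which yields $\partial_{z_k}\tilde{y}_i=-\sum_l c_{kl}\tilde{y}_i\tilde{y}_l$ and $\partial_{z_k}\ln\rho^2=\sum_l c_{kl}\tilde{y}_l$; hence holomorphic differentiation sends the right-hand side above to another expression of the same shape, to which the same analysis applies (and likewise for $\partial_{w_k}$).

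For (2), I would sidestep the blow-up and compute with the regularized integral directly. For each fixed $\epsilon>0$ the form $\int_\epsilon^\infty \eta(t)\,e^{-\frac{\rho^2}{2t}}\,C$ is smooth on all of $U\times U$, so $\triangle^*$ is a genuine pullback and commutes with the $t$-integration. Since $\triangle^*(\bar z_i-\bar w_i)=0$ and $\triangle^*d(\bar z_i-\bar w_i)=0$, we get $\triangle^* y_i=\triangle^* dy_i=0$; expanding the $dt$-component of a positive-degree monomial in $\mathbf{y},d\mathbf{y}$ shows that every surviving term retains a factor $(\bar z_i-\bar w_i)$ or $d(\bar z_i-\bar w_i)$, hence pulls back to $0$. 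As $\triangle^* e^{-\frac{\rho^2}{2t}}=1$, only the $\mathbf{y}$-degree-zero part $b_0$ of $C$ contributes, so that $\triangle^*\!\big(\int_\epsilon^\infty\eta\,e^{-\frac{\rho^2}{2t}}C\big)=\int_\epsilon^\infty\eta(t)\,\triangle^* b_0$; since $b_0$ extends smoothly to $t=0$ and $\eta$ has compact support, the limit $\epsilon\to0$ exists and equals the smooth form $\int_0^\infty\eta(t)\,\triangle^* b_0$ on $U$. The holomorphic-derivative case reduces to this one after checking that $S_\sm$ is preserved by $\partial_{z_k}$ and $\partial_{w_k}$ and that $\partial_{z_k}\big(e^{-\frac{\rho^2}{2t}}C\big)=e^{-\frac{\rho^2}{2t}}C'$ with $C'\in S_\sm$, using that $\tfrac{1}{t}\partial_{z_k}\rho^2=\sum_l c_{kl}y_l\in S_\sm$.

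The hard part will be the local normal form of $\tilde{y}_i$ on the blow-up: establishing $\rho^2=|u|^2 g$ and hence $\tilde{y}_i=h_i/u$ with $h_i$ smooth is what converts the a priori wild quotient $(\bar z_i-\bar w_i)/\rho^2$ into a clean simple pole along the exceptional divisor, and it is also the step that forces the real-analyticity hypothesis (both through this factorization and through $\partial_{z_k}\rho^2=\sum_l c_{kl}(\bar z_l-\bar w_l)$). Everything else—commuting $\triangle^*$ with the integral, and tracking which monomials survive—is routine bookkeeping.
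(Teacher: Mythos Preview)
Your proposal is correct and follows essentially the same route as the paper: invoke \Cref{thm41}, then check that $\tilde{y}_i$, $d\tilde{y}_i$, and $\ln\rho^2$ have divisorial type singularities in the blow-up chart via the factorization $\rho^2=|u|^2 g$, and handle the diagonal pullback by observing that all positive-degree monomials in $\mathbf{y},d\mathbf{y}$ restrict to zero along $\triangle$. The only cosmetic difference is that the paper treats holomorphic derivatives uniformly \emph{before} integrating in $t$, by noting once that $e^{\frac{\rho^2}{2t}}\partial_{z_k}\!\big(e^{-\frac{\rho^2}{2t}}C\big)\in S_\sm(U\times U)$ (using \eqref{distance function on arbitrary}), whereas in part~(1) you differentiate the post-integration expression in $\tilde{y}_i$ directly; both are valid and rest on the same identity $\partial_{z_k}\rho^2=\sum_l c_{kl}(\bar z_l-\bar w_l)$.
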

	\begin{proof}
		First, by $(\ref{distance function on arbitrary})$, we can show that if $C \in S_\sm(U\times U)$, then $e^{\frac{\rho^2}{2t}}\partial_{z_{i}}\big(e^{-\frac{\rho^2}{2t}}C\big)$ and $e^{\frac{\rho^2}{2t}}\partial_{w_{i}}\big(e^{-\frac{\rho^2}{2t}}C\big)$ are also regular expressions. Moreover, if $C \in S_\sm(U\times U)$, then $\triangle ^{*}\big(e^{-\frac{\rho^2}{2t}}C\big)=\triangle ^{*}(C)$ is a smooth bundle-valued differential form, so 
		\[
		\triangle ^{*}\left(\int_0^\infty \eta(t)e^{-\frac{\rho^2}{2t}}C\right):=\lim_{\epsilon\rightarrow0}\triangle ^{*}\left(\int_{\epsilon}^{\infty} \eta(t)e^{-\frac{\rho^2}{2t}}C\right)\]
		is a smooth bundle-valued differential form on $U$. 
		
		Now, let's prove $\int_0^\infty \eta(t)e^{-\frac{\rho^2}{2t}}C$ has divisorial type singularities. By Proposition \ref{thm41}, we only need to prove that $ \tilde{y}_i$, $d( \tilde{y}_i)$, and $\ln(\rho^{2})$ have divisorial type singularities along $p^{-1}(\triangle)$. 
		
		Without loss of generality, we assume $U=\C^n$. We only need to prove that there's a neighborhood $W\subset \C^n \times \C^n$ at $(0,0)$, such that our claims hold in $p^{-1}(W)\subset \Bl_{\triangle}(\C^n \times \C^n)$.

		One can identify $\Bl_{\triangle}(\C^n \times \C^n)$ with the closure of the following natural embedding:
		\[
		i:\C^n \times \C^n \setminus \triangle \hookrightarrow \left( \C^n \times \C^n \right) \times \left( \C^n \times \C^n \setminus \triangle \right)/\C^*, \quad (\bfz,\bfw) \mapsto (\bfz,\bfw,\bfz,\bfw),
		\]
		where the $\C^*$-action is given by $\lambda \cdot (\bfz,\bfw) = (\lambda \bfz, \lambda \bfw)$ for $\lambda \in \C^*$.
		
		\def\P{\mathbb{P}}
		It is easy to see that the map
		\[
		(\bfz,\bfw) \mapsto (\bfw,\; [z_1 - w_1 : \cdots : z_n - w_n])
		\]
		yields an identification of $(\C^n \times \C^n \setminus \triangle)/\C^*$ with $\C^n \times \P^{n-1}$. Under this identification, $\Bl_{\triangle}(\C^n \times \C^n)$ has the following description:
		\begin{align*}
			&\Bl_{\triangle}(\C^n \times \C^n)\\
			&\cong\big\{(\bfz,\bfw,[\lambda_1 : \cdots : \lambda_n])\in \left( \C^n \times \C^n \right) \times  \P^{n-1} :\bfz-\bfw=k\cdot \boldsymbol{\lambda} \text{ for some }k\in \mathbb{C}\big\},
		\end{align*}
		where $\boldsymbol{\lambda} =(\lambda_1,\cdots,\lambda_n)\in\C^n.$
		
		Let $\{U_i\}_{i=1}^n$ be an open cover of $\Bl_\triangle(\C^n \times \C^n)$, where each $U_i$ corresponds to the subset defined by $\lambda_i\neq 0$.

		On the chart $U_1$, we may use the following local coordinates:
		\[
		(z_1 - w_1, \lambda_2, \ldots, \lambda_n, w_1, \ldots, w_n).
		\]
		
		By the proof of Lemma \ref{fake distance function is defining function}, there exists a small neighborhood $W\subset \C^n \times \C^n$ of $(0,0)$, such that $\frac{\rho^2(\bfz,\bar\bfz,\bfw,\bar\bfw)}{|z_1-w_1|^{2}}$ is a positive smooth function on $p^{-1}(W)\cap U_1$. On the other hand, on $p^{-1}(W)\cap U_1$, we have
		\[
		\begin{cases}
			\tilde{y}_i&=\frac{1}{z_{1}-w_{1}}\cdot \frac{\bar{\lambda}^i|z_{1}-w_{1}|^{2}}{\rho^2(\bfz,\bar{\bfz},\bfw,\bar{\bfw})},\\
			d(\tilde{y}_i)&=\frac{d(z_{1}-w_{1})}{(z_{1}-w_{1})^2}\cdot \frac{\bar{\lambda}^i|z_{1}-w_{1}|^{2}}{\rho^2(\bfz,\bar{\bfz},\bfw,\bar{\bfw})}+\frac{1}{z_{1}-w_{1}}\cdot d\Big(\frac{\bar{\lambda}^i|z_{1}-w_{1}|^{2}}{\rho^2(\bfz,\bar{\bfz},\bfw,\bar{\bfw})}\Big),\\
			\ln(\rho^{2})&=\ln\Big(\frac{\rho^{2}}{|z_1-w_1|^2}\Big)+\ln(|z_1-w_1|^2).
		\end{cases}
		\]
		Therefore, they have divisorial type singularities on $p^{-1}(W)\cap U_1$. Similar results hold on other coordinate charts. This proves our assertion.
	\end{proof}
	
	Proposition \ref{lem44} reduces the proof of Theorem \ref{propagator has divisorial type singularities} to showing that propagator in the Schwinger space admits a regular expression in a neighborhood of the diagonal. We will prove this in next section.
	
	\section{Regularities of Propagator in Schwinger Space}\label{Section 4}

	This section establishes the regularity of the propagator in Schwinger space. In \Cref{heatexp}, we review the notion of heat kernel expansions. In \Cref{getzler}, we prove \Cref{main0} using Getzler’s rescaling technique.

	Let $(M,E,\omega)$ be a triple as in \Cref{propagator has divisorial type singularities}.
	
	\begin{defn}\label{global-regular-expression}
		The linear space of regular expressions $S_\sm \subset \Omega^*\Big((0,\infty); \Gamma\big(\tE \boxtimes \tE\big)\Big)$ is generated by differential forms 
		\[
		w \in \Omega^*\Big((0,\infty); \Gamma\big(\tE \boxtimes \tE\big)\Big),
		\]
		such that for each local normal frame $(U, \bfz, \bfe)$, we have $w|_{U \times U} \in S_\sm(U \times U)$. Here, we identify $U$ with an open subset of $\C^n$, and view $E|_U$ as a trivial bundle via the chosen frame.

	\end{defn}

	\begin{defn}We use $\Op\big(S_\sm\big)\subset \End_\C\Big(\Omega^*\big((0,\infty),\Gamma(\tE\boxtimes \tE)\big)\Big)$ to denote the operators that preserve $S_\sm$. 
	\end{defn}
	
	It can be seen easily that $\nabla_{\frac{\p}{\p z^i}},\bar\p\in\Op\big(S_\sm\big).$
	
	Recall that the propagator in Schwinger space is given by 
	$$
	P_t=-d t \wedge\left(\bar{\partial}_{\widetilde{E}}^* \otimes \mathrm{id}\right) H_t+H_t \in \Omega^*\left((0,+\infty),\Gamma( \widetilde{E} \boxtimes\widetilde{E})\right).
	$$

	The main theorem in this section is that
	\begin{thm}\label{main0}
		We have  $e^{\frac{\rho^2}{2t}}P_t\in S_\sm$, where $\rho$ is the distance function induced by the metric $g$.
	\end{thm}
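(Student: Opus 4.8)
The plan is to localize and then deploy Getzler's rescaling. By \Cref{global-regular-expression} it is enough to show $e^{\rho^2/2t}P_t|_{U\times U}\in S_\sm(U\times U)$ for every normal coordinate chart $(U,\bfz)$ equipped with a normal holomorphic frame of $E$, and Proposition~\ref{indep-holo-coord} guarantees this is coordinate independent. First I would invoke the fiberwise criterion, Proposition~\ref{fiberwise regular}, to reduce to checking regularity along each fiber $\pi_2^{-1}(p')$, i.e.\ with the base point $\bfw=p'$ frozen. On such a fiber the real-analyticity of $(M,E,\omega)$ lets me treat $e^{\rho^2/2t}P_t$ as an element of $A(N(U,r);p')$: the construction reviewed in \Cref{heatexp} produces a short-time expansion of $H_t$ whose Hadamard coefficients are real-analytic and whose Gaussian factor is exactly $e^{-\rho^2/2t}$, so that $e^{\rho^2/2t}H_t$ and $e^{\rho^2/2t}(\bar\partial^*_{\tE}\otimes\mathrm{id})H_t$ become genuine, convergent analytic expressions.

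By Proposition~\ref{filtration and regularity} it then suffices to verify the two conditions
\[
\lim_{\epsilon\to0}\epsilon\,\delta_\epsilon\big(e^{\rho^2/2t}P_t\big)=0,
\qquad
\iota_X\Big(\lim_{\epsilon\to0}\delta_\epsilon\big(e^{\rho^2/2t}P_t\big)\Big)=0,
\]
namely membership in $F_0$ together with $\iota_X$-annihilation of the leading rescaled symbol. Here Getzler's rescaling enters. I would conjugate the heat operator, $e^{\rho^2/2t}(\Delta_{\tE}\otimes\mathrm{id})e^{-\rho^2/2t}$, apply $\delta_\epsilon$, and show that as $\epsilon\to0$ it converges to a model operator $L$ at $p'$ — a curvature-twisted harmonic oscillator assembled from the Kähler and bundle curvatures of $(M,E,\omega)$ — so that $\delta_\epsilon(e^{\rho^2/2t}H_t)$ and $\delta_\epsilon(e^{\rho^2/2t}(\bar\partial^*_{\tE}\otimes\mathrm{id})H_t)$ converge to the Gaussian-stripped heat kernel of $L$ and its $\bar\partial^*$-image. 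The reason the $F_0$ condition should hold is that the weights of $\delta_\epsilon$ — scaling $\bar{\tilde{\bfz}}$, $t$, $d\bar{\tilde{\bfz}}$ and $dt$ by $\epsilon$ while fixing $\tilde{\bfz}$, $\bfw$ and $d\bar{\bfw}$ — are calibrated so that the antiholomorphic degree carried by the singular part of the Dolbeault kernel supplies exactly the powers of $\epsilon$ that offset its short-time $t^{-n}$ blow-up; this is the Kähler counterpart of the absorption of the $(4\pi t)^{-n/2}$ factor by the Clifford rescaling in the heat-kernel proof of the index theorem.

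I would then read the limit off the explicit Mehler-type formula for $e^{-tL}$: it presents $\lim_{\epsilon\to0}\delta_\epsilon(e^{\rho^2/2t}P_t)$ as a polynomial in $\mathbf{y}=\bar{\tilde{\bfz}}/t$ and $d\mathbf{y}$ with coefficients that are smooth (indeed polynomial) in the frozen variables and regular at $t=0$, which is manifestly the leading form of a regular expression. Since $\iota_X y_i=\iota_X(dy_i)=0$ and $X$ is tangent to the fibers of $\pi_2$, this representation yields simultaneously the vanishing of $\iota_X(\lim_{\epsilon\to0}\delta_\epsilon)$ and of $\lim_{\epsilon\to0}\epsilon\,\delta_\epsilon$, so Proposition~\ref{fiberwise regular} and Proposition~\ref{filtration and regularity} close the argument. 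The extension to holomorphic derivatives needed downstream then costs nothing, since $\nabla_{\partial/\partial z^i},\bar\partial\in\Op(S_\sm)$.

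The step I expect to be the genuine obstacle is proving that the rescaled kernel converges to the model \emph{inside} the filtered space of analytic expressions, not merely at the level of the leading symbol: one must rule out all negative-order terms in the $\epsilon$-Laurent expansion of $\delta_\epsilon(e^{\rho^2/2t}P_t)$, which forces uniform control of the entire heat tail rather than of finitely many Hadamard coefficients. This is exactly where real-analyticity is indispensable — it upgrades the formal short-time expansion to a convergent one on $N(U,r)$, so that Duhamel remainder estimates can be folded into the filtration and the rescaling limit taken term by term. Identifying the precise model operator $L$ and verifying that its Gaussian-stripped heat kernel is a regular expression is the technical heart of \Cref{getzler}.
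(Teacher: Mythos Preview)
Your outline is close in spirit to the paper --- localize via \Cref{fiberwise regular}, apply Getzler's rescaling $\delta_\epsilon$, and verify the two conditions of \Cref{filtration and regularity} --- but the step where you ``read the limit off the explicit Mehler-type formula for $e^{-tL}$'' is a genuine gap. The model operator $L$ arising here is \emph{not} a constant-coefficient harmonic oscillator: because $\delta_\epsilon$ rescales $\bar{\tilde\bfz}$ but leaves the holomorphic variable $\tilde\bfz$ untouched, the limit $\Delta^{\mathrm{md}}$ retains $\tilde\bfz$-dependent connection and curvature coefficients (the holomorphic parts $w^{\bar k}_{\bar j\bar i}(\tilde\bfz,0)$ and $F^{E\otimes K}_{j\bar i}(\tilde\bfz,0)$). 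There is no closed Mehler kernel to invoke, so the claim that the rescaled limit is ``manifestly'' a polynomial in $(\mathbf y,d\mathbf y)$ is exactly the content to be proved.

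The paper handles the two conditions by separate, more hands-on arguments that you do not describe. For $F_0$-membership (the vanishing of the negative $\epsilon$-orders $C_l$, $l<0$), it decomposes the finite heat sum $\sum_{i=0}^n t^{i-n}u_i$ into rescaling-homogeneous pieces $C_l$, observes that each $q_tC_l$ solves $L(q_tC_l)=0$, and then uses a uniqueness result for this transport-type ODE system (\Cref{prop310}) together with the initial condition $u_0|_{\{p\}\times\{p\}}=\omega\,d^n(\bar\bfz-\bar\bfw)$, which forces the $t^{-n}$-coefficient of $C_l$ to vanish at the origin when $l<0$. For the $\iota_X$-condition, the limit of $e^{\rho^2/2t}H_t$ alone does \emph{not} satisfy it: $C_0$ carries a $dt/t$-term (\Cref{prop312}). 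The required cancellation comes from the $-dt\wedge(\bar\partial^*_{\tE}\otimes\mathrm{id})H_t$ piece, and closing it demands the extra vanishing $\sum_i \iota_{\partial_{\bar i}}\partial_i\tilde C_0=0$ (\Cref{prop318}), proved again via the ODE system plus commutator identities for the model connection. Your proposal does not isolate either mechanism, and neither follows from a Mehler kernel that is unavailable here. The remark about analyticity making the heat expansion converge is also slightly off the mark: the paper uses analyticity only to ensure each $u_i$ is analytic (needed for the ODE argument), while the tail is controlled by the finite-order remainder bound of \Cref{cor32}, not by summing the series.
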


	Fix a local K\"ahler normal frame $(U,\bfz, \bfe)$ as described in \Cref{local-coord} at some point $p\in M$. Let $(U, \mathbf{w},\bf{f})$ denote the same coordinate chart on another copy of $M$. 
	
	In this section, without loss of generality, we use this frame to identify $U$ with an open set in $\mathbb{C}^n$, equipped with an analytic K\"ahler metric $g_{i\bar{j}}$, and take $E|_U$ to be the trivial holomorphic vector bundle $E = U \times V$, endowed with an analytic Hermitian metric $h_{a\bar{b}}$. 
	
	\def\Chris{1}
	\if\Chris0

	The Christoffel symbols of the Chern connection on $TM \to M$ satisfy:
	\begin{equation}
		\begin{cases}
			\nabla_{\frac{\partial}{\partial z_i}}^{TM} \frac{\partial}{\partial z_j} = \Gamma_{ij}^k \frac{\partial}{\partial z_k}, \\
			\nabla_{\frac{\partial}{\partial z_i}}^{TM} \frac{\partial}{\partial \bar{z}_j} = 0, \\
			\nabla_{\frac{\partial}{\partial \bar{z}_i}}^{TM} \frac{\partial}{\partial \bar{z}_j} = \Gamma_{\bar{i}\bar{j}}^{\bar{k}} \frac{\partial}{\partial \bar{z}_k}, \\
			\nabla_{\frac{\partial}{\partial \bar{z}_i}}^{TM} \frac{\partial}{\partial z_j} = 0.
		\end{cases}
	\end{equation}
	
	Similarly, the Christoffel symbols of the Chern connection on $E \to M$ satisfy:
	\begin{equation}
		\begin{cases}
			\nabla_{\frac{\partial}{\partial z_i}}^{E} e_a = \Gamma_{ia}^{b} e_b, \\
			\nabla_{\frac{\partial}{\partial \bar{z}_j}}^{E} e_a = 0.
		\end{cases}
	\end{equation}
	\fi
	
	Then under this identification, the heat equation \eqref{heateq1} takes the more explicit form:
	\begin{align}\label{heateq2}
		\p_t H_t(\bfz,\bar\bfz,\bfw,\bar\bfw)&= -(\Delta_{\widetilde{E}} \otimes \mathrm{id}) H_t(\bfz,\bar\bfz,\bfw,\bar\bfw), \\
		H_0(\bfz,\bar\bfz,\bfw,\bar\bfw)&= \delta(\bfz - \bfw, \bar{\bfz} - \bar{\bfw}) \, \omega(\bfz, \bfw) \, d^n(\bar{\bfz} - \bar{\bfw}),\label{initial condition}
	\end{align}
	where $\omega(\bfz, \bfw)$ is some holomorphic section of $E \boxtimes E$, $\delta$ denotes the standard delta distribution, and $d^n(\bar{\bfz} - \bar{\bfw})=d(\bz_1-\bw_1)\wedge\cdots\wedge d(\bz_1-\bw_1).$
	
	Since being a regular expression is a local property, it suffices to prove \Cref{main0} on $U \times U$.
	
	%Let $ p: (0, \infty) \times M \times M \to M \times M $ denote the canonical projection. For simplicity, we will still denote the bundle $ p^* \widetilde{E} \boxtimes \widetilde{E} $ over $ (0, \infty) \times M \times M $ as $ \widetilde{E} \boxtimes \widetilde{E} $.
	
	\def\defnsreg{1}
	\if\defnsreg0
	\begin{defn}
		
		Let $S_{\sm} \subset \Omega^*\big((0,\infty); \Gamma(\widetilde{E} \boxtimes \widetilde{E})\big)$ denote the linear space generated by expressions of the form
		\[
		a(\mathbf{y}, d\mathbf{y})\, b,
		\]
		where $a(\mathbf{y}, d\mathbf{y})$ is a polynomial in the variables
		\[
		\mathbf{y} = (y_1, y_2, \ldots, y_n), \quad \text{and} \quad d\mathbf{y} = (dy_1, dy_2, \ldots, dy_n),
		\]
		with
		\[
		y_i = \frac{\overline{z_i} - \overline{w_i}}{t}.
		\]
		Here, $d$ denotes the de Rham differential on $(0,\infty) \times M \times M$, so
		\[
		dy_i = d\left( \frac{\overline{z_i} - \overline{w_i}}{t} \right) = \frac{d(\overline{z_i} - \overline{w_i})}{t} - \frac{(\overline{z_i} - \overline{w_i})\, dt}{t^2}.
		\]
		The factor $b \in \Omega^*\Big((0,\infty); \Gamma\big(\widetilde{E} \boxtimes \widetilde{E}\big)\Big)$ is assumed to extend smoothly to a differential form in $\Omega^*\Big([0,+\infty); \Gamma\big(\widetilde{E} \boxtimes \widetilde{E}\big)\Big)$.

		It can be seen easily that $S_\sm$ is an algebra.
		
	\end{defn}

	\begin{defn}
		We use $\Op(S_\sm)\subset \End_\C\Big(\Omega^*\big((0,\infty);\Gamma(\tE\boxtimes \tE)\big)\Big)$ to denote the operators that preserve $S_\sm$. 
	\end{defn}
	
	It could be seen easily that $\nabla^{\widetilde{E}}_{X},\alpha\wedge,{\bar\p}\in\Op(S_\sm),$ where $\alpha\in S_\sm$ and $X\in\Gamma\big(T^{1,0}(M\times M)\big)$ 
	
	\begin{rem}
		
		The space $S_{\sm}$ is independent of the choice of holomorphic normal frame. To see this, let $\mathbf{z}'$ be another holomorphic coordinate system, and consider a change of coordinates given by $z_i = f_i(\mathbf{z}')$. Then we can write
		\[
		f_i(\mathbf{z}') = f_i(\mathbf{w}') + \sum_{j=1}^n f_{ij}( \mathbf{w}') (z'_j - w'_j)+\sum_{1\leq j,k\leq n}f_{ijk}(\bfz',\bfw')(z_j'-w_j')(z_k'-w_k'),
		\]
		for some holomorphic functions $f_{ij}$ and $f_{ijk}$.

		Now let $y_i' := \frac{z_i' - w_i'}{t}$. Then for $\bfw=(f_1(\bfw'),\cdots, f_n(\bfw'))$,
		\[\ba
		y_i &= \frac{\bar{z}_i - \bar{w}_i}{t}
		= \frac{f_i(\bar{\mathbf{z}}') - f_i(\bar{\mathbf{w}}')}{t}
		\\&= \sum_j \overline{f_{ij}( \bfw')} \cdot \frac{\bar{z}'_j - \bar{w}'_j}{t}+t\sum_{j,k}\overline{f_{ijk}(\bfz',\bfw')}\frac{(\bfz_j'-\bfw_j')(\bfz_k'-\bfw_k')}{t^2}\\
		&= \sum_j \overline{f_{ij}( \bfw')} \cdot y_j'+t\sum_{j,k}\overline{f_{ijk}(\bfz',\bfw')}y_j'y_k'\\
		\ea\]
		and
		\[
		dy_i = \sum_j d\left( \overline{f_{ij}(\mathbf{z}' , \mathbf{w}')} \right) y'_j
		+ \overline{f_{ij}(\mathbf{z}' ,\mathbf{w}')} \, dy'_j,
		\]
		where we recall that $d$ denotes the de Rham differential on $(0,\infty) \times M \times M$.
		
	\end{rem}\fi

	\subsection{Heat kernel expansion}\label{heatexp}
	
	We aim to prove \Cref{main0} using heat kernel expansions and Getzler’s rescaling technique. In this subsection, we briefly review the relevant properties of heat kernel expansions.

	\def\tz{{\tilde{z}}}
	\def\tfz{{\tilde{\bfz}}}
	\def\btz{{\bar{\tilde{z}}}}
	\def\btfz{{\bar{\tilde{\bfz}}}}
	
	Fix a local K\"ahler normal frame $(U,\bfz, \bfe)$ as in \Cref{local-coord} at some point $p\in M$. Let $(U, \mathbf{w},\bf{f})$ denote the same coordinate chart on another copy of $M$. Let
	\be\label{tilde z coord}
	\begin{cases}
		\tz_i = z_i - w_i, \\
		\btz_i = \bar{z}_i - \bar{w}_i,
	\end{cases}
	\ee
	and let $\tfz = (\tz_1, \cdots, \tz_n)$. Then we will focus on the coordinates
	\be\label{shift-ccord}
	(\tfz, \bfw)
	\ee
	on $U \times U$.
	
	We have the following heat kernel expansion:
	
	\begin{lem}\label{HeatAsy}
		Let $H_t$ be the solution to the heat equation \eqref{heateq2} and \eqref{initial condition}. Then there exists a sequence of sections $u_{i} \in \Gamma(\tE \boxtimes \tE)$ for $i=0,1,2, \cdots$, such that:
		\begin{enumerate}[(1)]
			\item $u_i$ is analytic.
			\item For any nonnegative integers $k, l, m$, 
			\[
			\bigg\| \frac{\partial^{k}}{\partial t^{k}} \Big( t^n e^{\frac{\rho^{2}}{2t}} H_t - \sum_{i=0}^{m} t^i u_i \Big) \bigg\|_{C^l(K \times K)} \leq C(K,l) t^{m+1 - k},
			\]
			where $K \subseteq U$ is a compact subset and $C(K,l) > 0$ is a constant. Here\\ $\|\cdot\|_{C^l(K \times K)}$ denotes the standard $C^l$-norm over $K \times K$.
			
		\end{enumerate}
	\end{lem}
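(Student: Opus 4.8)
The plan is to treat this as the standard short-time heat kernel asymptotic expansion for a generalized Laplacian, following \cite[Chapter 2]{berline2003heat}, while additionally tracking that the coefficients are real analytic. By the K\"ahler (Bochner--Kodaira) identities, $\Delta_{\tE}$ is a generalized Laplacian whose principal symbol is $\frac{1}{2}$ that of the Laplace--Beltrami operator. Accordingly I would take as model Gaussian $q_t(\bfz,\bfw) = (2\pi t)^{-n} e^{-\frac{\rho^2}{2t}}$ --- the exponent $\frac{\rho^2}{2t}$ rather than $\frac{\rho^2}{4t}$ being forced by this factor $\frac{1}{2}$ --- and seek a formal parametrix
\[
K_t = q_t(\bfz,\bfw)\sum_{i=0}^{\infty} t^i \Phi_i, \qquad \Phi_i \in \Gamma(\tE\boxtimes\tE).
\]
Substituting into the heat equation \eqref{heateq2}, $\p_t K_t = -(\Delta_{\tE}\otimes\mathrm{id})K_t$, and collecting powers of $t$ yields the usual recursive transport equations: a homogeneous first order ODE along the radial geodesics emanating from the diagonal determines $\Phi_0$, while for $i\geq 1$ the source term is $(\Delta_{\tE}\otimes\mathrm{id})\Phi_{i-1}$. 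The initial condition \eqref{initial condition} normalizes $\Phi_0$ on the diagonal to the holomorphic datum $\omega(\bfz,\bfw)\, d^n(\bar\bfz-\bar\bfw)$. Setting $u_i := (2\pi)^{-n}\Phi_i$ then matches the claimed expansion of $t^n e^{\frac{\rho^2}{2t}}H_t$.

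Next I would solve the transport equations by integration along the radial geodesics. Writing $\gamma_{\bfz,\bfw}$ for the geodesic from $\bfw$ to $\bfz$ reparametrized over $[0,1]$ and $J$ for the van Vleck--Morette Jacobian, the recursion takes the schematic form
\[
\Phi_i(\bfz,\bfw) = -\,J(\bfz,\bfw)^{-\frac{1}{2}}\int_0^1 u^{i-1}\, J\big(\gamma_{\bfz,\bfw}(u),\bfw\big)^{\frac{1}{2}}\,\big[(\Delta_{\tE}\otimes\mathrm{id})\Phi_{i-1}\big]\big(\gamma_{\bfz,\bfw}(u),\bfw\big)\,du .
\]
Here is where the real analyticity hypothesis enters and does the essential work: since the metric $g_{i\bar j}$, the Hermitian metric $h_{a\bar b}$, the Chern connections, and hence the exponential map, the Jacobian $J$, and the squared distance $\rho^2$ are all real analytic in a neighborhood of the diagonal (via the expansion \eqref{distance function on arbitrary}), the integrand above depends real analytically on $u \in [0,1]$ and jointly real analytically on the endpoints $(\bfz,\bfw)$. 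As integration over a fixed compact interval of an integrand with analytic parameter dependence preserves analyticity, an induction on $i$, starting from the analytic $\Phi_0$, shows that every $\Phi_i$, and therefore every $u_i$, is real analytic near the diagonal. This establishes part (1).

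For part (2) I would argue exactly as in \cite[Theorem 2.30]{berline2003heat}. Let $K_t^{(m)} = q_t\sum_{i=0}^m t^i\Phi_i$ be the truncated parametrix. The transport equations are arranged so that all but the last term cancel, leaving
\[
(\p_t + \Delta_{\tE}\otimes\mathrm{id})K_t^{(m)} = t^{m}\, q_t\,(\Delta_{\tE}\otimes\mathrm{id})\Phi_m ,
\]
which is $O(t^{m-n})$ times the Gaussian $e^{-\frac{\rho^2}{2t}}$, smooth up to $t=0$. Duhamel's principle then expresses $H_t - K_t^{(m)}$ as the time-convolution of this remainder against the genuine heat kernel; estimating the convolution, multiplying by the normalization $t^n e^{\frac{\rho^2}{2t}}$, and differentiating $k$ times in $t$ (each derivative costing one power of $t$) yields
\[
\bigg\| \frac{\p^{k}}{\p t^{k}} \Big( t^n e^{\frac{\rho^{2}}{2t}} H_t - \sum_{i=0}^{m} t^i u_i \Big) \bigg\|_{C^l(K \times K)} \leq C(K,l)\, t^{m+1 - k}
\]
on each compact $K\subseteq U$, as desired. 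I expect the only genuinely new point beyond the classical theory --- and hence the main obstacle --- to be the analyticity in part (1): the standard construction yields only smooth coefficients, so the crux is to verify that solving the transport equations with real analytic geometric data, in particular that the radial-integration formula with its analytic parameter dependence, produces coefficients that are jointly real analytic in $(\bfz,\bfw)$ up to the diagonal. The remaining steps are a direct transcription of the generalized-Laplacian heat expansion to the Dolbeault Laplacian on the K\"ahler manifold.
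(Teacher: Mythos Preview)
Your proposal is correct and follows essentially the same approach as the paper: the paper's proof is extremely terse, citing \cite[Theorem 1.1]{ludewig2019strong} for the asymptotic estimate in part (2) and invoking the recursive transport ODEs from \cite[Chapter 2]{berline2003heat} together with real analyticity of the geometric data for part (1). You have simply written out what those citations contain --- the parametrix construction, the radial-integration formula for the $\Phi_i$, the analytic-parameter-dependence argument, and the Duhamel remainder estimate --- so there is no substantive divergence in method.
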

	
	\begin{proof}
		The second statement follows directly from \cite[Theorem 1.1]{ludewig2019strong}. Here we briefly explain the proof of the first statement.
		
		It follows from \eqref{initial condition} that 
		\be\label{u0 at diagonal}
		u_0(\bfz, \bar\bfz, \bfw, \bar\bfw)\big|_{\triangle} = \omega(\bfz,\bfw) \, d^n(\bar\bfz - \bar\bfw)|_{\triangle},
		\ee
		where $\triangle \subset U \times U$ denotes the diagonal.
		
		The heat kernel coefficients $u_i$ can be computed recursively by solving ODEs, as described in \cite[Chapter 2]{berline2003heat}. Since both $(M, g)$ and the bundle $E \to M$ are real analytic, and the initial condition is given by \eqref{u0 at diagonal}, it follows from standard theory of ODEs that each $u_i$ is analytic.
		
	\end{proof}
	
	It follows from \Cref{HeatAsy} that:
	\begin{cor}\label{cor32}
		We have $ H_t = e^{-\frac{\rho^2}{2t}} \Big(\big( \sum_{i=0}^n t^{-n+i} u_i \big)+ b_0 \Big) $ with $ t^{-1}b_0\in S_\sm $.
	\end{cor}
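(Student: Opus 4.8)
The plan is to read the decomposition straight off the heat kernel asymptotics of \Cref{HeatAsy} and then argue that the resulting error term lies in $S_\sm$ for the trivial reason that it extends smoothly across $t=0$. Concretely, apply \Cref{HeatAsy} with $m=n$ and set
\[
r_n := t^n e^{\frac{\rho^2}{2t}} H_t - \sum_{i=0}^n t^i u_i,
\]
so that $e^{\frac{\rho^2}{2t}} H_t = \sum_{i=0}^n t^{-n+i} u_i + t^{-n} r_n$, which is precisely the claimed identity with $b_0 = t^{-n} r_n$. Everything therefore reduces to showing that $t^{-1} b_0 = t^{-n-1} r_n \in S_\sm$.

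For this I would prove the stronger statement that $t^{-n-1} r_n$ extends to a smooth section in $\Omega^*\big([0,\infty); \Gamma(\tE\boxtimes\tE)\big)$: once this is known, it is a regular expression in the sense of \Cref{global-regular-expression} with trivial polynomial factor $a(\mathbf{y},d\mathbf{y})\equiv 1$, and we are done. To see smoothness up to $t=0$, fix the order $k$ of $t$-differentiation we care about and choose $N$ large (to be pinned down). Writing $\tilde r_N := t^n e^{\frac{\rho^2}{2t}} H_t - \sum_{i=0}^N t^i u_i$, we have the exact decomposition
\[
t^{-n-1} r_n = t^{-n-1}\tilde r_N + \sum_{j=0}^{N-n-1} t^{j}\, u_{n+1+j}.
\]
The finite sum is a polynomial in $t$ with analytic coefficients, hence manifestly smooth up to $t=0$. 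For the remainder, the Leibniz rule together with the bound $\|\partial_t^{k'}\tilde r_N\|_{C^l(K\times K)} \le C\, t^{N+1-k'}$ from \Cref{HeatAsy} gives
\[
\big\|\partial_t^{k}\big(t^{-n-1}\tilde r_N\big)\big\|_{C^l(K\times K)}
= \Big\| \sum_{j=0}^{k}\binom{k}{j}\, \partial_t^{j}(t^{-n-1})\,\partial_t^{k-j}\tilde r_N\Big\|_{C^l(K\times K)}
\le C'\, t^{\,N-n-k}.
\]
Hence for $N\ge n+k+1$ this $C^l$ norm tends to $0$ as $t\to 0$. Since $k$ and $l$ are arbitrary, the $t$-derivatives of $t^{-n-1} r_n$ of every order extend continuously across $t=0$ (with limiting values dictated by the polynomial part, namely $\partial_t^{k}|_{t=0} = k!\,u_{n+1+k}$), and the $C^l(K\times K)$ estimates make this smoothness joint in the spatial variables. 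Therefore $t^{-n-1} r_n \in \Omega^*\big([0,\infty);\Gamma(\tE\boxtimes\tE)\big)$, yielding $t^{-1}b_0\in S_\sm$.

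The main obstacle is precisely this joint smoothness of the rescaled remainder $t^{-n-1}r_n$ up to $t=0$: it is where the strong form of the heat kernel asymptotics—control of all $t$-derivatives in every $C^l$ norm, as supplied by \Cref{HeatAsy}—is essential, since a merely pointwise asymptotic expansion would not license differentiating the remainder. Everything else, namely reading off $b_0$ and recognizing a section that is smooth across $t=0$ as a regular expression with $a\equiv1$, is routine bookkeeping.
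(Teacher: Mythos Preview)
Your proof is correct and follows the same approach as the paper: both arguments subtract the higher-order expansion $\sum_{i=0}^{N} t^{i} u_i$ from $t^{n}e^{\rho^2/2t}H_t$, use the strong $C^l$ bounds of \Cref{HeatAsy} together with the Leibniz rule to show that the rescaled remainder $t^{-n-1}r_n$ extends smoothly to $t=0$, and then observe that any section smooth on $[0,\infty)$ lies in $S_\sm$ with trivial polynomial factor $a\equiv 1$. Your write-up is in fact more explicit than the paper's about the Leibniz step and the identification of $\partial_t^k|_{t=0}(t^{-1}b_0)=k!\,u_{n+1+k}$, but there is no substantive difference in strategy.
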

	
	\begin{proof}
		From \Cref{HeatAsy}, we have when $t\in(0,1)$
		\be\label{cor32eq1}
		\Big\| \frac{\p^{k}}{\p t^{k}} \Big( e^{\frac{\rho^2}{2t}} H_t - \sum_{i=0}^{n+k} t^{i-n} u_i \Big) \Big\|_{C^l(K \times K)} \leq C_{K,k,l} t.
		\ee
		This implies that the $ k $-th derivatives in the time variable and any spatial derivatives of
		\[
		e^{\frac{\rho^2}{2t}} H_t - \sum_{i=0}^{n+k} t^{i-n} u_i
		\]
		are well-defined.
		
		Since $ \sum_{i=n+1}^{n+k} t^{i-n} u_i $ is smooth, the $ k $-th time derivatives (for any $k$) and any spatial derivatives of
		\[
		b_0:=e^{\frac{\rho^2}{2t}} H_t - \sum_{i=0}^n t^{i-n} u_i
		\]
		also exist.
		
		Finally, by \eqref{cor32eq1}, we deduce that $t^{-1}b_0 \in S_\sm$.
	\end{proof}
	\subsection{Getzler's rescaling}\label{getzler}
	Our goal in this subsection is to study the regularity properties of the heat kernel $ H_t $ and the propagator $ P_t $ using Getzler’s rescaling technique.

	We introduce the following (Getzler's) rescaling $\delta_\epsilon$ associated with  K\"ahler normal frame $(U,\bfz,\bfe)$ and \eqref{tilde z coord} as in \eqref{rescaling}:
	\be
	\delta_\epsilon\big(a(\tilde{\bfz}, {\bar{\tilde{\bfz}}}, t, \bfw, \bar{\bfw}, d {\bar{\tilde{\bfz}}}, d \bar{\bfw}, dt)\big)=a (\tilde{\bfz}, \epsilon {\bar{\tilde{\bfz}}}, \epsilon t, \bfw, \bar{\bfw}, \epsilon d {\bar{\tilde{\bfz}}}, d \bar{\bfw}, \epsilon dt),
	\ee
	for any $a\in A\big(N(U,r);p\big)$.
	
	Note that by \Cref{HeatAsy}, $\sum_{i=0}^nt^{i-n}u_i\in A\big(N(U,r);p\big).$

	To analyze the action of $ \delta_{\epsilon} $ on the sum $\sum_{i=0}^n t^{i-n} u_i$, we consider the following decompositions associated with  $(U, \bfz, \bfe)$ and \eqref{tilde z coord}.
	
	First, it is clear that each $u_i$ admits the decomposition
	\[
	u_i = \sum_{\bfj, \bfl \in \{0,1\}^n} u_{i,\bfj,\bfl} \, d\btfz^{\bfj} \wedge d\bar{\bfw}^{\bfl},
	\]
	where $u_{i,\bfj,\bfl} \in \Gamma(E \boxtimes E)$. Here, for a multi-index $\bfi = (i_1, \dots, i_n) \in \{0,1\}^n$ such that $i_{j_1} = \cdots = i_{j_\ell} = 1$, we define 
	\[
	d\bfz^{\bfi} := dz_{j_1} \wedge \cdots \wedge dz_{j_\ell}.
	\]
	Further decomposing $u_{i,\bfj,\bfl}$, we obtain:
	\begin{align}
		\sum_{i=0}^n t^{i-n} u_i 
		&= \sum_{i=0}^n t^{i-n} \Big( 
		\sum_{m=0}^{n-i} \sum_{|\bfj| + |\bfk| = m} \sum_{\bfl} 
		b^i_{\bfk,\bfj,\bfl} \, \btfz^{\bfk} \, d\btfz^{\bfj} \wedge d\bar{\bfw}^{\bfl} \label{expansion1} \\
		&\qquad + \sum_{|\bfj| + |\bfk| = n - i + 1} \sum_{\bfl} 
		b^i_{\bfk,\bfj,\bfl} \, \btfz^{\bfk} \, d\btfz^{\bfj} \wedge d\bar{\bfw}^{\bfl} \Big), \label{expansion2}
	\end{align}
	with $\bfj, \bfl \in \{0,1\}^n$ and $\bfk \in \mathbb{Z}_{\geq 0}^n$, where
	\begin{enumerate}[(a)]
		\item if $|\bfk| + |\bfl| \leq n - i$, then $b^i_{\bfk,\bfj,\bfl} \in \Gamma(E \boxtimes E)$ is holomorphic in the $\tfz$ variable;
		\item if $|\bfk| + |\bfl| > n - i$, then $b^i_{\bfk,\bfj,\bfl} \in \Gamma(E \boxtimes E)$ (not necessarily holomorphic in $\tilde{\bfz}$).
	\end{enumerate}
	Here for $\bfj=(j_1,\cdots,j_n)\in\Z_{\geq0}^n\text{ or }\{0,1\}^n, |\bfj|:=j_1+\cdots+ j_n$.

	We can rearrange the above expansion as follows:
	\begin{prop}\label{prop33}
		We have  
		\[
		\sum_{i=0}^n t^{i-n} u_i = C_1+\sum_{l=-n}^0 C_l 
		\]
		where  
		\[
		C_l = \sum_{i=0}^n  t^{i-n}\sum_{|\bfj|+|\bfk|=l+n-i}\sum_{\bfl} b^i_{\bfk,\bfj,\bfl} \btfz^\bfk d\btfz^\bfj\wedge d{\bar\bfw}^\bfl,l\leq0
		\]  
		and 
		\[ C_1=\sum_{i=0}^n  t^{i-n}\sum_{|\bfj|+|\bfk|=n-i+1}\sum_{\bfl} b_{\bfk,\bfj,\bfl}^i\btfz^\bfk d \btfz^\bfj\wedge d{\bar{\bfw}}^\bfl.\]
		%Moreover, if $(U',\bfz',\bfe')$ be another K\"ahler normal frame at $p'\in M$, then $C_l'$ be the corresponding term associated with $(U',\bfz',\bfe')$,
		%\be\label{indep-of-charts}
		%\sum_{l=-n}^0C_l'-C_l\in S_{\sm}(U\cap U').
		%\ee
	\end{prop}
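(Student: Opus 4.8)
The plan is to treat this proposition as a pure reindexing of the decomposition \eqref{expansion1}--\eqref{expansion2}, grouping its summands according to the $\delta_\epsilon$-weight of their monomial factors. To this end I would assign to each summand $t^{i-n}\,b^i_{\bfk,\bfj,\bfl}\,\btfz^{\bfk}\,d\btfz^{\bfj}\wedge d\bar\bfw^{\bfl}$ the integer
\[
l := (i-n) + |\bfj| + |\bfk|,
\]
which is exactly the power of $\epsilon$ produced by $\delta_\epsilon$ acting on the monomial $t^{i-n}\btfz^{\bfk}d\btfz^{\bfj}$ (the factor $d\bar\bfw^{\bfl}$ being $\delta_\epsilon$-invariant). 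The whole argument then reduces to verifying that this grading reproduces the asserted blocks $C_l$ and $C_1$.

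First I would dispose of the top block. The terms in \eqref{expansion2} all satisfy $|\bfj|+|\bfk| = n-i+1$, hence $l = (i-n)+(n-i+1) = 1$ uniformly across all $i$; collecting them over $i$ is, by definition, precisely $C_1$, and no reindexing is needed. Next I would handle \eqref{expansion1}. For fixed $i$ the inner degree $m := |\bfj|+|\bfk|$ runs over $0,1,\dots,n-i$, so $l = m + i - n$ runs over $i-n,\dots,0$; as $i$ ranges over $\{0,\dots,n\}$, these values sweep exactly $\{-n,\dots,0\}$. Interchanging the order of summation to place the $l$-sum outside, I would collect, for each fixed $l\le 0$, all terms with $|\bfj|+|\bfk| = l+n-i$, which is the defining condition of $C_l$. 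The one point requiring care is that letting $i$ range over the full set $\{0,\dots,n\}$ in $C_l$ introduces no extraneous contributions: whenever $i > n+l$ the constraint $|\bfj|+|\bfk| = l+n-i$ forces a negative total degree, so the corresponding index set is empty and contributes nothing. Hence $\sum_{l=-n}^0 C_l$ equals \eqref{expansion1} term by term, and adding $C_1$ yields the claimed identity.

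The argument is entirely combinatorial bookkeeping, so there is no genuine analytic obstacle; the sole delicate point is the matching of index ranges, which is resolved by the emptiness observation above. It is worth recording, as motivation, that this $l$-grading is genuinely the grading by Getzler weight on the explicit monomial factors—this is the reason it is the natural decomposition for the rescaling estimates carried out in the remainder of the section, where each $C_l$ will be analyzed according to its weight $l$ under $\delta_\epsilon$.
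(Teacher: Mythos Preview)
Your proposal is correct and matches the paper's approach exactly: the paper presents this proposition without proof, introducing it simply as ``We can rearrange the above expansion as follows,'' so the content is precisely the combinatorial reindexing you describe. Your identification of $l=(i-n)+|\bfj|+|\bfk|$ as the $\delta_\epsilon$-weight and your emptiness check for $i>n+l$ make explicit what the paper leaves to the reader.
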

	%\begin{proof}
	%   The only nontrivial part of this proposition is \eqref{indep-of-charts}.
	%\end{proof}
	
	To prove \Cref{main0}, by \Cref{cor32}, we need to study $ \sum_{l=-n}^0C_l $ and $C_1$. By \cref{fiberwise regular}, we localize our study at $U\times\{p\}.$ We note that the following lemma, which would be proved in \cref{limit-L}.
	\begin{lem}\label{lem36}At $U\times\{p\}$, the limit \be\label{defn of L} L = \lim_{\epsilon \rightarrow 0} \epsilon \delta_\epsilon \left( \p_t+ \Delta_{\widetilde{E}} \otimes \mathrm{id} \right) \delta_\ep^{-1}\ee exists.

    More explicitly,
    \[
    L=\epsilon \delta_\epsilon \left( \p_t+ \Delta_{\widetilde{E}} \otimes \mathrm{id} \right) \delta_\ep^{-1}+O(\ep)
    \]
	\end{lem}
	We will show that
   \begin{prop}
       \label{cor39}
Let $q_t := e^{-|\tilde{\bfz}|^2/2t}$. Then, restricted to $U \times \{p\}$, 
\begin{enumerate}[(a)]
    \item Let $L$ denote the operator defined in \eqref{defn of L}. Then
    \[
        L\big(q_t C_l\big) = 0, \quad l \le 0.
    \]
    \item \label{thm48item2} We have 
    \[
        C_l = 0 \quad \text{for } l < 0.
    \]
\end{enumerate}
\end{prop}

	\begin{proof}
We first treat the case $l=-n$.  
By the construction above, $\delta_{\epsilon} C_l = C_l \epsilon^l$ and $\delta_{\epsilon} C_1 = O(\epsilon)$. Hence
\begin{equation}\label{rescaling Cl}
  \delta_{\epsilon}\!\left( \sum_{i=0}^n t^{i-n} u_i \right)
  = \sum_{l=-n}^0 C_l \epsilon^l + O(\epsilon).
\end{equation}

From the standard heat kernel expansion (see, e.g., \cite[Chap.~2]{berline2003heat}),
\begin{equation}\label{heat eq asym}
  \bigl( \partial_t + \Delta_{\widetilde{E}}\otimes\mathrm{id} \bigr)
  \!\left( e^{-\rho^2/2t}\!\sum_{i=0}^{n} t^{i-n} u_i \right)
  = e^{-\rho^2/2t}\bigl(\Delta_{\widetilde{E}}\otimes\mathrm{id}\bigr)u_n.
\end{equation}

By \eqref{Kahler distance}, on $U\times\{p\}$,
\begin{equation}\label{limit of metric}
  \delta_\epsilon\!\left(\frac{\rho^2}{2t}\right)
   = \frac{|\tilde{\bfz}|^2}{2t} + O(\epsilon).
\end{equation}

Restricting to $U\times\{p\}$ and using \Cref{lem36} together with
\eqref{rescaling Cl}–\eqref{limit of metric},
\begin{equation}\label{thm48eq1}
\begin{aligned}
 &\ \ \ \ L\!\left(q_t\!\sum_{l=-n}^0 C_l\epsilon^l\right)
 = L\!\left(\delta_{\epsilon} e^{-\rho^2/2t} + O(\epsilon)\right)
    \left(\delta_{\epsilon}\!\sum_{i=0}^n t^{i-n}u_i + O(\epsilon)\right) \\
 &= L(\delta_{\epsilon} e^{-\rho^2/2t})\,\delta_{\epsilon}\!\sum_{i=0}^n t^{i-n}u_i
    + O(\epsilon^{-n+1}) \\
 &= \epsilon\,\delta_{\epsilon}
    \bigl((\partial_t+\Delta_{\widetilde{E}}\otimes\mathrm{id})
    \delta_{\epsilon}^{-1}\bigr)
    \delta_{\epsilon}\!\left(e^{-\rho^2/2t}\!\sum_{i=0}^n t^{i-n}u_i\right)
    + O(\epsilon^{-n+1}) \\
 &= \epsilon\,\delta_{\epsilon}
    \!\left(e^{-\rho^2/2t}(\Delta_{\widetilde{E}}\otimes\mathrm{id})u_n\right)
    + O(\epsilon^{-n+1})
 = O(\epsilon^{-n+1}).
\end{aligned}
\end{equation}

Comparing powers of $\epsilon$ gives
\[
  L(q_t C_{-n})=0.
\]

To show \eqref{thm48item2} for $l=-n$, write
\[
  C_l=\sum_{i=0}^n t^{i-n}
   \sum_{|\bfj|+|\bfk|=l+n-i}\;
   \sum_{\bfl} b^i_{\bfk,\bfj,\bfl}\,
   \btfz^{\bfk} d\btfz^{\bfj}\wedge d\bar{\bfw}^{\bfl}.
\]
The coefficient of $t^{-n}$ is
\[
  C_l^{0}:=\sum_{|\bfj|+|\bfk|=l+n}\;\sum_{\bfl}
   b^0_{\bfk,\bfj,\bfl}\,
   \btfz^{\bfk} d\btfz^{\bfj}\wedge d\bar{\bfw}^{\bfl}.
\]

By \eqref{initial condition}, in the frame $(U,\bfz,\bfe)$,
\[
  u_0|_{\{p\}\times\{p\}}
  = \o(\bfz,\bfw)\, d^n(\bar{\bfz}-\bar{\bfw})|_{\{0\}\times\{0\}}.
\]

Thus $C_l^{0}|_{\{p\}\times\{p\}}=0$ for $l<0$, since then $|\bfj|\le n+l<n$.  
By \Cref{prop310} below, this implies $C_{-n}=0$.

With $C_{-n}=0$, \eqref{thm48eq1} improves to
\[
  L\!\left(q_t\sum_{l=-n+1}^0 C_l\epsilon^l\right)=O(\epsilon^{-n+2}).
\]

Repeating the argument yields $L(q_tC_{-n+1})=0$ and therefore $C_{-n+1}=0$.  
Proceeding inductively gives $C_l=0$ for all $l<0$ and $L(q_tC_l)=0$ for all $l\leq0$.

    \end{proof}
	
	The proposition below will be proved in \cref{proof of propositions}.
	\begin{prop}\label{prop310} Let $q_t:=e^{-\frac{|\tilde{\bfz}|^2}{2t}}$. Suppose $S\in\Omega^*\big((0,\infty);\Gamma(\tE\boxtimes\tE|_{U\times U})\big)$ admits an expansion of the form
		\[
		S = q_t \Big( \sum_{i=0}^{n} t^{i - n} v_i \Big), \quad v_i \in \Gamma(\widetilde{E} \boxtimes \widetilde{E}|_{U\times U}).
		\]
		Moreover, restricted to $U \times \{p\}$, $S$ is a solution to the equation $L S = 0$.
		Then, if $v_0 = 0$ on $\{p\}\times \{p\}$,  we must have $S \equiv 0$ on $U\times\{p\}$. Here, $L$ denotes the operator defined in \eqref{defn of L}. 
	\end{prop}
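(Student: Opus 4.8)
The plan is to restrict everything to the fiber $U\times\{p\}$ and exploit the explicit description of the rescaled limit operator $L$ from \Cref{lem36}. On this fiber the only surviving variables are $\tfz=\bfz-p$ and $\btfz$ (the factors $d\bar\bfw$ and $dw$ pull back to zero under the inclusion), and $L$ becomes the constant-coefficient model heat operator $L=\p_t+\Delta_0$, where $\Delta_0=-2\sum_{j=1}^n\p_{\tz_j}\p_{\btz_j}$ is the flat Dolbeault Laplacian on $\C^n$ for which $t^{-n}q_t$ is the fundamental solution. The whole argument is then a uniqueness statement for this model operator, organized as a recursion in the powers of $t$.

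First I would record the basic commutation identity. For a $t$-independent section $v$ and any exponent $a$, a direct computation using $\p_{\btz_j}q_t=-\frac{\tz_j}{2t}q_t$ and $\p_{\tz_j}q_t=-\frac{\btz_j}{2t}q_t$ gives
\[
L\big(t^a q_t\, v\big)=t^{a-1}q_t\,(a+n+\mathcal R)\,v+t^a q_t\,\Delta_0 v,
\]
where $\mathcal R:=\sum_{j=1}^n\big(\tz_j\p_{\tz_j}+\btz_j\p_{\btz_j}\big)$ is the Euler operator measuring total homogeneity in $\tfz,\btfz$; the two $|\tfz|^2\,t^{a-2}$ contributions from $\p_t q_t$ and from $\Delta_0 q_t$ cancel exactly, which is what makes the formula clean. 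Applying this term by term to $S=q_t\sum_{i=0}^n t^{i-n}v_i$ and collecting the coefficient of each power $t^{i-n-1}$, the equation $LS=0$ on $U\times\{p\}$ becomes equivalent to the recursion
\[
(i+\mathcal R)\,v_i=-\Delta_0 v_{i-1},\qquad i=0,1,\dots,n+1,
\]
with the conventions $v_{-1}=v_{n+1}=0$.

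Then I would solve the recursion from the bottom up. The $i=0$ equation reads $\mathcal R v_0=0$, so $v_0$ has constant coefficients in the chosen frame; since its only value is $v_0|_{\tfz=0}=v_0|_{(p,p)}$, the hypothesis $v_0|_{(p,p)}=0$ forces $v_0\equiv 0$. For $i\ge 1$ the operator $i+\mathcal R$ is injective on smooth (analytic) sections, because on the degree-$d$ homogeneous part of the Taylor expansion it acts as multiplication by $i+d>0$. Hence, inductively, $v_{i-1}=0$ gives $\Delta_0 v_{i-1}=0$ and then $(i+\mathcal R)v_i=0$ forces $v_i=0$. Thus all $v_i$ vanish and $S\equiv 0$ on $U\times\{p\}$, as claimed.

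The main obstacle is the first step: pinning down the exact model operator $L$ and verifying the commutation identity above, including the bookkeeping of the form degrees in $d\btfz$ and the bundle indices. One must check that after the asymmetric (holomorphic) Getzler rescaling the creation–annihilation pieces of $\bar\p_{\widetilde E}$ and $\bar\p_{\widetilde E}^*$ recombine into a component-wise scalar Laplacian, and that no first-order or curvature term of positive $\tfz$-homogeneity survives to spoil the triangular structure of the recursion—i.e. that $L$ is genuinely $\p_t+\Delta_0$ (together with at most constant-coefficient zeroth-order terms, which affect neither the $i=0$ equation nor the injectivity of $i+\mathcal R$). Granting the form of $L$ established in \Cref{lem36}, the remaining work is exactly the elementary homogeneity and injectivity analysis sketched above.
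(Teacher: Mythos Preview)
Your overall strategy---derive the transport recursion in powers of $t$ from the model heat equation and solve it bottom-up---is exactly what the paper does. But there is a genuine gap in your first step. ``Restricted to $U\times\{p\}$'' does \emph{not} mean pulling back differential forms along the inclusion; it means evaluating the coefficient functions at $\bfw=p$ while keeping the $d\bar w_j$ as generators of the fiber $\Lambda^\bullet(T^*_{0,1})_pM$ in the second tensor factor of $\tE\boxtimes\tE$. So the $d\bar w_j$ do \emph{not} vanish, and the limit operator computed in \Cref{lem36} is not $\p_t+\Delta_0$ but $\p_t+\Delta^{\md}$, where
\[
\Delta^{\md}=-2\sum_i\nabla^{\md}_i\nabla^{\md}_{\bar i}+2\sum_{i,j}d\bar w_i\,\iota_{\p_{\bar j}}F^{E\otimes K}_{j\bar i}(\bfz,0),
\qquad
\nabla^{\md}_{\bar j}=\p_{\bar j}+\sum_{l,k}w^{\bar k}_{\bar l\bar j}(\tilde\bfz,0)\,d\bar w_l\,\iota_{\p_{\bar k}}.
\]
These extra terms are neither constant-coefficient nor zeroth-order: they feed into the radial part of the recursion, so your caveat at the end does not cover them.

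Concretely, the correct commutation identity replaces your Euler operator $\mathcal R$ by the covariant radial derivative $\nabla^{\md}_{r\nabla r}=\sum_i(\tz_i\p_i+\btz_i\nabla^{\md}_{\bar i})$, and the recursion becomes
\[
\big(i+\nabla^{\md}_{r\nabla r}\big)v_i=-\Delta^{\md}v_{i-1}.
\]
Your injectivity argument via homogeneity no longer applies verbatim, because $\nabla^{\md}_{r\nabla r}$ does not act diagonally on the Taylor grading. The paper handles this by rewriting the $i$-th equation as the first-order ODE $\nabla^{\md}_{\nabla r}(r^i v_i)=0$ along radial rays; since $r^i v_i$ vanishes at the origin, parallel transport forces $v_i\equiv 0$. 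This is the correct replacement for your ``$i+\mathcal R$ is injective'' step, and with it your outline goes through.
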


	By \Cref{cor39} and  \Cref{prop33}, restricted on $U\times\{p\}$,
	\be\label{no other Cl} \sum_{i=0}^{n}t^{i-n}u_{i}=C_{0}+C_1.\ee To prove \Cref{main0}, we need to deal with the terms $ C_0 $ and $ C_1 $. For this purpose, the following lemma is needed:
	\begin{lem}\label{lem311}
		For any multi-index $\bfk\in \Z_{\geq0}^n,\bfj\in\{0,1\}^n$, we have
		\begin{enumerate}[(1)]
			\item\label{itemlem3111} We have $\btfz^\bfk d\btfz^\bfj =t^{|\bfk|+|\bfj|}\bfy^\bfk d\bfy^\bfj+t^{|\bfk|+|\bfj|-1}\sum_{j=1}^ndt\wedge \iota_{\btz_j\frac{\p}{\p \bz_j}}\bfy^\bfk d\bfy^\bfj.$
			\item\label{itemlem3112} We have $\sum_{j=1}^{n}\iota_{\btz_j\frac{\p}{\p \bz_j}}\in \Op(S_{\sm}).$
		\end{enumerate}
	\end{lem}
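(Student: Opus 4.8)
The plan is to carry out both parts as direct computations in the shifted coordinates $(\tfz,\bfw)$, using only the tautological relation $\btz_i = t\,y_i$ together with its differential $d\btz_i = y_i\,dt + t\,dy_i$ (equivalently, the inversion of $dy_i = t^{-1}d\btz_i - t^{-2}\btz_i\,dt$), and the two contraction rules $\iota_{\p_{\bz_j}}(dy_i) = t^{-1}\delta_{ij}$ and $\iota_{\p_{\bz_j}}(d\btz_i)=\delta_{ij}$. These follow because $\p_{\bz_j}=\p_{\btz_j}$ in these coordinates and $t$ is independent of the spatial variables, so $\iota_{\p_{\bz_j}}(dt)=0$.

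For part \eqref{itemlem3111} I would begin with $\btfz^\bfk = t^{|\bfk|}\bfy^\bfk$ and expand
\[
d\btfz^\bfj = \bigwedge_{l\in\operatorname{supp}(\bfj)}\bigl(y_l\,dt + t\,dy_l\bigr).
\]
Since $dt\wedge dt = 0$, at most one factor can contribute its $y_l\,dt$ piece: the all-$dy$ term yields $t^{|\bfj|}\,d\bfy^\bfj$, while the terms carrying a single $dt$ share a common factor $t^{|\bfj|-1}$ and, after moving $dt$ to the front with the Koszul sign, reassemble into $t^{|\bfj|-1}\,dt\wedge\sum_{j}\iota_{\btz_j\frac{\p}{\p\bz_j}}(d\bfy^\bfj)$; here I use $\iota_{\btz_j\frac{\p}{\p\bz_j}}(dy_i)=y_j\delta_{ij}$ to match the two expressions term by term. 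Multiplying by $\btfz^\bfk=t^{|\bfk|}\bfy^\bfk$ and absorbing the degree-zero factor $\bfy^\bfk$ through the contraction then gives the claimed identity.

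For part \eqref{itemlem3112} I would write $D:=\sum_j\iota_{\btz_j\frac{\p}{\p\bz_j}}=\sum_j\btz_j\,\iota_{\p_{\bz_j}}$ and evaluate it on a generator $a(\bfy,d\bfy)\,b$ of $S_\sm$, where $b$ extends smoothly to $t=0$. Since $\iota_{\p_{\bz_j}}$ is an antiderivation,
\[
D\bigl(a\,b\bigr)=\Bigl(\sum_j\btz_j\,\iota_{\p_{\bz_j}}a\Bigr)b + (-1)^{|a|}\,a\,\Bigl(\sum_j\btz_j\,\iota_{\p_{\bz_j}}b\Bigr).
\]
In the first summand, each contraction lowers a $dy_i$ to $t^{-1}\delta_{ij}$, and the coefficient $\btz_j=t\,y_j$ cancels the resulting $t^{-1}$ exactly, so $\sum_j\btz_j\,\iota_{\p_{\bz_j}}a$ is again a polynomial in $\bfy,d\bfy$; hence this summand lies in $S_\sm$. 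In the second summand, $\iota_{\p_{\bz_j}}b$ is again a form extending smoothly to $t=0$ — contraction with the $t$-independent vector field $\p_{\bz_j}$ preserves smoothness up to $t=0$ and produces no negative power of $t$ — and multiplying by the $t$-independent smooth function $\btz_j$ preserves this; as $a(\bfy,d\bfy)$ is already an admissible polynomial, the second summand also lies in $S_\sm$. By linearity $D$ preserves $S_\sm$.

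The only subtle point — and the conceptual heart of the lemma — is this cancellation of the pole in $t$: the bare interior product $\iota_{\p_{\bz_j}}$ creates a factor $t^{-1}$ when applied to $dy_i$, which would leave the algebra of regular expressions, but the coefficient $\btz_j=t\,y_j$ of the contracting vector field removes it precisely. This is exactly why the full vector field $\sum_j\btz_j\frac{\p}{\p\bz_j}$, rather than $\sum_j\frac{\p}{\p\bz_j}$, is the operator that preserves $S_\sm$, and it underlies both parts of the lemma.
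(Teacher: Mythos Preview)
Your proof is correct and follows essentially the same approach as the paper: both rely on the identity $d\btz_i = y_i\,dt + t\,dy_i$ together with $dt\wedge dt=0$ for part~(1), and on the antiderivation (Leibniz) property plus the key contraction $\iota_{\btz_j\p_{\bz_j}}dy_i = y_i\delta_{ij}$ for part~(2). Your treatment is more detailed than the paper's---in particular you spell out the Koszul sign bookkeeping in~(1) and explicitly handle the action on the smooth factor $b$ in~(2), whereas the paper simply lists the values of the contraction on the generators $y_i,dy_i,t,dt$ and leaves the rest implicit---but the underlying argument is the same.
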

	\begin{proof}
		We first prove (\ref{itemlem3112}). Note that $\sum_{i=1}^{n} \iota_{\btz_i \frac{\partial}{\partial \bz_i}}$ satisfies the Leibniz rule. It suffices to verify the following identities:
		\be\label{lem319eq1}
		\sum_{j=1}^{n} \iota_{\btz_j \frac{\partial}{\partial \bz_j}} dy_i = y_i, \quad 
		\sum_{j=1}^{n} \iota_{\btz_j \frac{\partial}{\partial \bz_j}} dt = 0,
		\ee
		\be\label{lem319eq2}
		\sum_{j=1}^{n} \iota_{\btz_j \frac{\partial}{\partial \bz_j}} t = 
		\sum_{j=1}^{n} \iota_{\btz_j \frac{\partial}{\partial \bz_j}} y_i = 0,
		\ee
		which are straightforward to verify.
		
		For (\ref{itemlem3111}), just note that
		\[
		d\bar{\tilde{z}}_i = y_i\, dt + t\, dy_i,
		\]  
		\[
		\sum_{j=1}^n \iota_{\btz_j\frac{\p}{\p \bz_j}}\, dy_i = y_i,
		\]
		and 
		\[
		y_i\, dt \wedge y_j\, dt = 0.
		\]

		\def\infact{1}
		\if\infact0
		In fact, (\ref{itemlem3111}) can be interpreted as the ``Taylor expansion" of  
		\[
		P(\bar{\bfz}, d\bar{\bfz}) = P(\bfy t, t d\bfy + \bfy dt)
		\]  
		around the ``point $ (t\bfy ,  t d\bfy) $." More precisely, we consider the expansion  
		\[
		P(t\bfy , t d \bfy + \bfy d t) = P_1(\bfy,t,d\bfy) + d t\wedge P_2(\bfy,t, d \bfy),
		\]
		where $P_1$ and $P_2$ are polynomials.
		%where $ P_1(y,t,dy) = i_{\frac{\p}{\p t}}(dt\wedge P(y t, t dy + y dt)) $ and $ P_2(y,t,dy) = dt\wedge i_{\frac{\p}{\p t}} P(y t, t dy + y dt) $.
		
		Noting that  
		\[
		d t \wedge P_1(\bfy,t, d \bfy) = d t \wedge P( t\bfy, t d \bfy + \bfy d t) = d t \wedge P(t\bfy, t d \bfy),
		\]  
		we introduce the vector field  
		\[
		X = t \frac{\partial}{\partial t}+\sum_{i=1}^n \bar{\tilde{z}}_i \frac{\partial}{\partial \bar{\tilde{z}}_i}.
		\]  
		Since $ \iota_X d y_i = 0 $, we obtain  
		\[
		t P_1(\bfy,t, d \bfy) = \iota_X (d t \wedge P_1) = \iota_X (d t \wedge P) = t P(t\bfy , t d \bfy).
		\]  
		Thus,  
		\[
		P_1(\bfy, t, d \bfy) = P(t\bfy, t d \bfy).
		\]
		
		Next, using  
		\[
		0 = \iota_{\frac{\p}{\p t}} P(\bar{\tilde{\bfz}}, d \bar{\tilde{\bfz}}) = \iota_{\frac{\p}{\p t}}P_1(\bfy,t,d\bfy) + P_2(\bfy,t,d\bfy) - dt\wedge \iota_{\frac{\p}{\p t}}P_2(\bfy,t,d\bfy),
		\]  
		and multiplying by $ dt $, we obtain  
		\[
		d t\wedge \iota_{\frac{\partial}{\partial t}} P_1(\bfy,t,d\bfy) + d t \wedge P_2(\bfy,t,d\bfy) = 0.
		\]  
		That is,  
		\[
		d t\wedge \iota_{\frac{\partial}{\partial t}} P(t\bfy, t d\bfy) + d t \wedge P_2(\bfy,t,d\bfy) = 0.
		\]
		
		Consequently, we deduce  
		\[
		\begin{aligned}
			P(\bar{\tilde{\bfz}}, d \bar{\tilde{\bfz}}) &= P( t\bfy, t d \bfy) + d t \wedge P_2(\bfy,t,d\bfy) \\
			&= P(t\bfy , t d \bfy) - d t\wedge \iota_{\frac{\p}{\p t}} P(t\bfy , t d \bfy).
		\end{aligned}
		\]
		
		Noting that  
		\[
		\iota_X P(t\bfy , t d \bfy) = 0,
		\]  
		we conclude that  
		\[
		\iota_{\frac{\p}{\partial t}} P(t\bfy , t d \bfy) = -\frac{1}{t} \sum_{i=1}^n \iota_{\bar{\tilde{z}}_i \frac{\p}{\p \bar{\tilde{z}}_i}} P(t\bfy , t d \bfy).
		\]
		
		Thus, we have proved the first statement.
		\fi
	\end{proof}
	
	It follows from \Cref{lem311} that
	\begin{prop}\label{prop312}
		We have $$C_{0}=\Big(1+\sum_{j=1}^{n}\frac{dt}{t}\wedge \iota_{\btz_j\frac{\p}{\p \bz_j}}\Big)\tilde{C}_{0}=\Big(1+\sum_{j=1}^{n}{dt}\wedge y^j\iota_{\frac{\p}{\p \bz_j}}\Big)\tilde{C}_{0},$$
		where \be\label{tilde C0 smooth}\tilde{C}_{0}=\sum_{i=0}^n  \sum_{|\bfj|+|\bfk|=n-i}\sum_{\bfl} b^i_{\bfk,\bfj,\bfl}\bfy^\bfk d\bfy^\bfj\wedge d{\bar\bfw}^\bfl\in S_\sm.\ee
	\end{prop}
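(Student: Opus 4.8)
The plan is to read $C_0$ off from \Cref{prop33} and rewrite each monomial using \Cref{lem311}. Recall that in the expression
\[
C_0 = \sum_{i=0}^n t^{i-n}\sum_{|\bfj|+|\bfk|=n-i}\sum_{\bfl} b^i_{\bfk,\bfj,\bfl}\,\btfz^\bfk\, d\btfz^\bfj\wedge d\bar\bfw^\bfl
\]
the summation is constrained by $|\bfj|+|\bfk|=n-i$, so the scalar prefactor satisfies $t^{i-n}=t^{-(|\bfj|+|\bfk|)}$. Substituting the first identity of \Cref{lem311} and using this constraint to cancel the powers of $t$, I obtain for each such monomial
\[
t^{i-n}\btfz^\bfk d\btfz^\bfj
= \bfy^\bfk d\bfy^\bfj + \frac{1}{t}\sum_{j=1}^n dt\wedge \iota_{\btz_j\frac{\p}{\p\bz_j}}\bfy^\bfk d\bfy^\bfj
= \Big(1+\sum_{j=1}^n \frac{dt}{t}\wedge \iota_{\btz_j\frac{\p}{\p\bz_j}}\Big)\bfy^\bfk d\bfy^\bfj,
\]
which already displays the operator in the statement acting on the pure $\bfy$-monomials.

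It then remains to pull this operator out of the entire sum, that is, past the coefficient sections $b^i_{\bfk,\bfj,\bfl}$ and past the factors $d\bar\bfw^\bfl$; this is the only step requiring genuine care, and it will be the main point of the argument. I would verify that $\iota_{\btz_j\frac{\p}{\p\bz_j}}$ annihilates both objects. Since each $b^i_{\bfk,\bfj,\bfl}$ is a section of $E\boxtimes E$, it is a form of degree zero and is therefore killed by any interior product. Moreover, in the shifted coordinates $(\tfz,\bfw)$ one has $\frac{\p}{\p\bz_j}=\frac{\p}{\p\btz_j}$ at fixed $\bfw$, so $\iota_{\btz_j\frac{\p}{\p\bz_j}}(d\bar\bfw^\bfl)=0$. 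Hence the operator $1+\sum_j \frac{dt}{t}\wedge\iota_{\btz_j\frac{\p}{\p\bz_j}}$ commutes with left multiplication by $b^i_{\bfk,\bfj,\bfl}$ and with right wedging by $d\bar\bfw^\bfl$, and I may factor it out of the sum. This yields the first claimed identity
\[
C_0=\Big(1+\sum_{j=1}^n \frac{dt}{t}\wedge \iota_{\btz_j\frac{\p}{\p\bz_j}}\Big)\tilde C_0
\]
with $\tilde C_0$ as defined in \eqref{tilde C0 smooth}. For the second equality I rewrite the operator: since $\iota_{\btz_j\frac{\p}{\p\bz_j}}=\btz_j\,\iota_{\frac{\p}{\p\bz_j}}$ and $y^j=\btz_j/t$, one has $\frac{dt}{t}\wedge\iota_{\btz_j\frac{\p}{\p\bz_j}}=dt\wedge y^j\,\iota_{\frac{\p}{\p\bz_j}}$, giving the two displayed forms of $C_0$.

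Finally, to confirm $\tilde C_0\in S_\sm$, I would check directly against the definition of a regular expression: each summand of $\tilde C_0$ has the form $a(\bfy,d\bfy)\,b$ with $a(\bfy,d\bfy)=\bfy^\bfk d\bfy^\bfj$ a polynomial in $\bfy$ and $d\bfy$, and $b=b^i_{\bfk,\bfj,\bfl}\wedge d\bar\bfw^\bfl$. The coefficients $b^i_{\bfk,\bfj,\bfl}$ are analytic sections independent of $t$, arising from the $t$-independent heat coefficients $u_i$ via \Cref{prop33}, and $d\bar\bfw^\bfl$ carries no $t$-dependence, so $b$ extends smoothly (indeed trivially) across $t=0$. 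This is precisely the generating form for $S_\sm$, so $\tilde C_0\in S_\sm$. Once the commutation check in the second paragraph is in hand, the remainder is routine bookkeeping built on \Cref{lem311}.
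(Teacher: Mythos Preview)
Your proposal is correct and follows essentially the same approach as the paper: both arguments start from the expression for $C_0$ in \Cref{prop33}, apply \Cref{lem311}(\ref{itemlem3111}) to each monomial $t^{i-n}\btfz^\bfk d\btfz^\bfj$, and then use $\iota_{\frac{\p}{\p\bz_j}}d\bar\bfw^\bfl=0$ to pull the operator outside the sum. Your write-up simply spells out in more detail the commutation with $b^i_{\bfk,\bfj,\bfl}$ and the verification that $\tilde C_0\in S_\sm$, points the paper leaves implicit.
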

	\begin{proof}
		By \Cref{lem311} and note that $\iota_{\frac{\p}{\p \bz^i}}{d\bar\bfw}^\bfl=0$, we compute
		\begin{align*}
			\begin{split}
				C_0&=\sum_{i=0}^n  t^{i-n}\sum_{|\bfj|+|\bfk|=n-i}\sum_{\bfl} b^i_{\bfk,\bfj,\bfl} \btfz^\bfk d\btfz^\bfj\wedge d{\bar\bfw}^\bfl\\
				&=\sum_{i=0}^n  \sum_{|\bfj|+|\bfk|=n-i}\sum_{\bfl} b^i_{\bfk,\bfj,\bfl}\Big(\bfy^\bfk d\bfy^\bfj+\sum_{j=1}^n\frac{dt}{t}\wedge \iota_{\btz_j\frac{\p}{\p \bz_j}}\bfy^\bfk d\bfy^\bfj\Big)\wedge d{\bar\bfw}^\bfl\\
				&=\Big(1+\sum_{j=1}^n\frac{dt}{t}\wedge \iota_{\btz_j\frac{\p}{\p \bz_j}}\Big)\sum_{i=0}^n  \sum_{|\bfj|+|\bfk|=n-i}\sum_{\bfl} b^i_{\bfk,\bfj,\bfl}\bfy^\bfk d\bfy^\bfj\wedge d{\bar\bfw}^\bfl.
			\end{split}
		\end{align*}
		
	\end{proof}
	
	As a result,
	\begin{prop}\label{prop313}
		There exists $a_1,a_2\in S_\sm$, such that
		%\be\label{expan Ht Cl}\ba
		%e^{\frac{\rho^2}{2t}}H_t&=C_0+ta_{1}+dta_{2}+\sum_{l=-n}^{-1}C_l\\
		%&=\left(1+\sum_{j=1}^{n}{dt}\wedge y^j\iota_{\frac{\p}{\p \bz_j}}\right)\tilde{C}_{0}+ta_{1}+dta_{2}+\sum_{l=-n}^{-1}C_l
		%\ea\ee
		restricted on $U\times\{p\}$, 
		\be\ba\label{expan Ht C0} e^{\frac{\rho^{2}}{2t}}H_{t}&={C}_{0}+ta_{1}+dta_{2}\\
		&=\Big(1+\sum_{j=1}^{n}{dt}\wedge y^j\iota_{\frac{\p}{\p \bz_j}}\Big)\tilde{C}_{0}+ta_{1}+dta_{2}.\ea\ee
	\end{prop}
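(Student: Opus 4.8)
The plan is to assemble the statement from three ingredients already in place: the heat-kernel expansion of \Cref{cor32}, the vanishing of the strictly negative rescaling weights from \Cref{cor39}, and the explicit shape of $C_0$ from \Cref{prop312}, with \Cref{lem311} handling the bookkeeping of powers of $t$. By \Cref{cor32} we have on $U\times U$
\[
e^{\frac{\rho^2}{2t}}H_t=\sum_{i=0}^n t^{i-n}u_i+b_0,\qquad t^{-1}b_0\in S_\sm,
\]
and by \eqref{no other Cl}, once we restrict to the fiber $U\times\{p\}$ (legitimate by \Cref{fiberwise regular}), the leading sum collapses to $\sum_{i=0}^n t^{i-n}u_i=C_0+C_1$. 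Hence on $U\times\{p\}$ one has $e^{\frac{\rho^2}{2t}}H_t=C_0+C_1+b_0$, so it suffices to produce $a_1,a_2\in S_\sm$ with $C_1+b_0=t\,a_1+dt\,a_2$; the final equality in the statement then follows by substituting the formula for $C_0$ supplied by \Cref{prop312}.

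First I would expand $C_1$ via \Cref{lem311}. In the sum defining $C_1$ the multi-indices satisfy $|\bfj|+|\bfk|=n-i+1$, so the first identity of \Cref{lem311} gives the crucial power count
\[
t^{i-n}\,\btfz^{\bfk}\,d\btfz^{\bfj}
=t\,\bfy^{\bfk}\,d\bfy^{\bfj}
+\sum_{j=1}^n dt\wedge\iota_{\btz_j\frac{\p}{\p\bz_j}}\big(\bfy^{\bfk}\,d\bfy^{\bfj}\big),
\]
since $t^{i-n}t^{|\bfk|+|\bfj|}=t$ and $t^{i-n}t^{|\bfk|+|\bfj|-1}=1$. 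Substituting and collecting the $dt$-free and $dt$-parts yields $C_1=t\,a_1'+dt\,a_2'$ with $a_1'=\sum b^i_{\bfk,\bfj,\bfl}\,\bfy^{\bfk}\,d\bfy^{\bfj}\wedge d\bar{\bfw}^{\bfl}$ and $a_2'=\sum b^i_{\bfk,\bfj,\bfl}\big(\sum_{j}\iota_{\btz_j\frac{\p}{\p\bz_j}}(\bfy^{\bfk}d\bfy^{\bfj})\big)\wedge d\bar{\bfw}^{\bfl}$. Here $a_1'\in S_\sm$ because it is a polynomial in $\bfy,d\bfy$ with coefficients $b^i_{\bfk,\bfj,\bfl}\,d\bar{\bfw}^{\bfl}$ that extend smoothly across $t=0$; for $a_2'$ I would use $\bfy^{\bfk}d\bfy^{\bfj}\in S_\sm$ together with the inclusion $\sum_j\iota_{\btz_j\frac{\p}{\p\bz_j}}\in\Op(S_\sm)$ from \Cref{lem311}, so each $\sum_j\iota_{\btz_j\frac{\p}{\p\bz_j}}(\bfy^{\bfk}d\bfy^{\bfj})\in S_\sm$, and then multiply by the smooth factors, invoking that $S_\sm$ is an algebra.

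The term $b_0$ is handled directly: since $t^{-1}b_0\in S_\sm$, we write $b_0=t\,(t^{-1}b_0)$, which is of the form $t\,a_1''$ with $a_1'':=t^{-1}b_0\in S_\sm$. Crucially, $a_1''$ is permitted to carry its own $dt$, so no splitting of $b_0$ by $dt$-degree is required; this matters because $S_\sm$ is \emph{not} closed under projection to fixed $dt$-degree (the $dt$-part of $dy_i$ involves $t^{-1}$). Setting $a_1:=a_1'+a_1''$ and $a_2:=a_2'$, both lie in $S_\sm$ and give $C_1+b_0=t\,a_1+dt\,a_2$ on $U\times\{p\}$, which is the first asserted equality; the second is immediate from \Cref{prop312}.

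The one genuinely substantive point, as opposed to bookkeeping, is the power count above: it is precisely the borderline degree $|\bfj|+|\bfk|=n-i+1$ defining $C_1$ that forces its contribution to be $O(t)$ off the $dt$-direction and $O(1)$ along it—the strictly negative weights $C_l$ $(l<0)$, which would otherwise produce genuinely singular $t^{<0}$ terms, having already been removed in \Cref{cor39}. The only verification needing care is that the $dt$-coefficient $a_2'$ remains regular, which is exactly what the second part of \Cref{lem311} ensures; I expect this to be the main thing to check, with no new idea beyond the cited lemmas required.
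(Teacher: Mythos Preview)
Your proposal is correct and follows essentially the same route as the paper: both start from \Cref{cor32} and \eqref{no other Cl} to write $e^{\frac{\rho^2}{2t}}H_t=C_0+C_1+b_0$ on $U\times\{p\}$, then use \Cref{lem311} to recast $C_1$ as $t\,a_1'+dt\,a_2'$ with regular coefficients (the paper packages this as $C_1=(1+\sum_j\frac{dt}{t}\wedge\iota_{\btz_j\partial_{\bz_j}})\tilde C_1$ with $t^{-1}\tilde C_1\in S_\sm$, which unwinds to exactly your $a_1',a_2'$), absorb $b_0$ into $t\,a_1$ via $t^{-1}b_0\in S_\sm$, and finish by quoting \Cref{prop312}. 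Your more explicit power count $t^{i-n}t^{|\bfk|+|\bfj|}=t$ and your remark that $a_1$ need not be $dt$-free are precisely the points the paper uses implicitly.
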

	\begin{proof}
		First, by \Cref{cor32} and \eqref{no other Cl}, on $U\times\{p\},$
		\[
		H_{t} = e^{-\frac{\rho^2}{2t}} \Big( \sum_{i=0}^{n} t^{i-n} u_{i} + b_{0} \Big) = e^{-\frac{\rho^2}{2t}} \left( C_{0} + C_1 + b_{0} \right).
		\]
		Next, proceeding as in \cref{prop312}, we have
		\begin{equation*}
			C_1=\Big(1+\sum_{j=1}^{n}\frac{dt}{t}\wedge \iota_{\btz_j\frac{\p}{\p \bz_i}}\Big)\tilde{C}_{1} \quad \text{with\quad $ t^{-1}\tilde{C}_1\in S_\sm.$}
		\end{equation*}
		By \eqref{itemlem3112} in \Cref{lem311} and the fact that $ \frac{b_0}{t} \in {S}_{\sm} $ (see \Cref{cor32}), we can set:
		\[
		a_1 = t^{-1}\tilde{C}_1+ t^{-1} b_0
		\]
		and
		\[
		a_2 = \sum_{j=1}^{n} \iota_{\btz_j \frac{\partial}{\partial \bz_j}} t^{-1}\tilde{C}_1.
		\]
		
		Together with \Cref{prop312}, the result follows.
	\end{proof}
	
	Next, we would like to deal with the  term $dt\wedge(\bar{\partial}_{\widetilde{E}}^*\otimes \mathrm{id})H_t$ in $P_t$.

	\begin{lem}\label{lem316}
		Let ${\mathcal{M}}=e^{\frac{\rho^2}{2t}}(\bar{\partial}_{\widetilde{E}}^*\otimes \mathrm{id})e^{-\frac{\rho^2}{2t}}$ and
		$${{\mathcal{N}}}:=\sum_{i=1}^n{y_i}\iota_{\frac{\p}{\p \bz_i}}-2\sum_{i=1}^n\iota_{\frac{\p}{\p \bz_i}}\frac{\p}{\p z_i}\in \Op(S_\sm).$$
		Restricted on $U\times\{p\}$, ${{\mathcal{M}}}={{\mathcal{N}}}.$ Thus for any $S\in S_\sm$,  there exists $S'\in S_\sm$ such that restricted on $U\times\{p\}$,\be\label{difference}{{\mathcal{N}}}S-{\mathcal{M}}S=S'.\ee
	\end{lem}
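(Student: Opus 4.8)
The plan is to treat the two assertions separately: first that $\mathcal N$ lies in $\Op(S_\sm)$, and then that $\mathcal M$ agrees with $\mathcal N$ along $U\times\{p\}$ up to an operator preserving $S_\sm$, from which the displayed identity \eqref{difference} is immediate (with $S':=\mathcal N S-\mathcal M S$). For the membership $\mathcal N\in\Op(S_\sm)$ I would use the closure properties already recorded in this section, namely $\nabla_{\frac{\partial}{\partial z_i}},\bar\partial\in\Op(S_\sm)$. It then remains to note that multiplication by $y_i$ preserves $S_\sm$, since $y_i\cdot a(\mathbf y,d\mathbf y)\,b=(y_i a)\,b$ is again of the generating form, and that the contraction $\iota_{\frac{\partial}{\partial\bar z_i}}$, acting on the $\Lambda^\bullet T^*_{0,1}M$-factor of $\tE\boxtimes\tE$, preserves $S_\sm$ because it annihilates the base variables $\mathbf y,d\mathbf y$ and acts on the factor $b$ as an antiderivation that keeps it smooth up to $t=0$. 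Since $\frac{\partial}{\partial z_i}$ differs from $\nabla_{\frac{\partial}{\partial z_i}}$ by a smooth (analytic) multiplication operator, it too lies in $\Op(S_\sm)$, and $\mathcal N$, being a finite sum of compositions $y_i\iota_{\frac{\partial}{\partial\bar z_i}}$ and $\iota_{\frac{\partial}{\partial\bar z_i}}\frac{\partial}{\partial z_i}$, preserves $S_\sm$.

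To compute $\mathcal M$, I would start from the local expression of the formal adjoint in the Kähler normal frame,
\[
\bar\partial^*_{\tE}=-2\sum_{i,j}g^{i\bar j}\,\iota_{\frac{\partial}{\partial\bar z_j}}\nabla^{\tE}_{\frac{\partial}{\partial z_i}}+Z,
\]
where $Z$ is a zeroth-order (smooth) operator built from the first derivatives of $g_{i\bar j}$ and $h_{a\bar b}$, and the leading constant is fixed by matching the principal symbol of $\Delta_{\tE}$ against the Gaussian convention $e^{-\rho^2/2t}$. Because $\bar\partial^*_{\tE}$ is first order, conjugation by $e^{\pm\rho^2/2t}$ contributes only the multiplication term coming from differentiating the exponent in the holomorphic directions, $e^{\rho^2/2t}\nabla_{\frac{\partial}{\partial z_i}}e^{-\rho^2/2t}=\nabla_{\frac{\partial}{\partial z_i}}-\tfrac{1}{2t}\tfrac{\partial\rho^2}{\partial z_i}$, so that
\[
\mathcal M=\bar\partial^*_{\tE}+\frac1t\sum_{i,j}g^{i\bar j}\Big(\frac{\partial\rho^2}{\partial z_i}\Big)\iota_{\frac{\partial}{\partial\bar z_j}}.
\]

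Next I would restrict to $U\times\{p\}$, i.e. $\mathbf w=p=0$, and feed in the normal-coordinate data: $g^{i\bar j}=\delta_{ij}+O(|\mathbf z|^2)$ with vanishing first derivatives at $0$, $\nabla_{\frac{\partial}{\partial z_i}}=\frac{\partial}{\partial z_i}+\Gamma$ with $\Gamma(0)=0$, and the expansion \eqref{distance function on arbitrary}, which writes $\rho^2=\sum_{i,j}(\delta_{ij}+f_{i\bar j})\,\tilde z_i\,\bar{\tilde z}_j$ with $f_{i\bar j}$ smooth and vanishing on $\triangle$. The decisive point is that every term of $\rho^2$ carries an antiholomorphic factor, so $\partial_{z_i}\rho^2|_{\mathbf w=0}=\sum_j A_{ij}(\mathbf z)\bar z_j$ with $A_{ij}$ smooth and $A_{ij}(0)=\delta_{ij}$; hence $\frac1t\partial_{z_i}\rho^2|_{\mathbf w=0}=\sum_j A_{ij}(\mathbf z)\,y_j$ is a genuine regular expression, with no residual $1/t$. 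Reading off the values at $\mathbf z=0$ isolates exactly $\mathcal N=\sum_i y_i\iota_{\frac{\partial}{\partial\bar z_i}}-2\sum_i\iota_{\frac{\partial}{\partial\bar z_i}}\frac{\partial}{\partial z_i}$, while all remaining pieces — the factors $g^{i\bar j}-\delta_{ij}$ and $A_{ij}-\delta_{ij}$, the Christoffel terms $\Gamma$, and the operator $Z$ — are smooth coefficients vanishing at $0$ multiplying operators already in $\Op(S_\sm)$. Writing $\mathcal M=\mathcal N+R$ with $R\in\Op(S_\sm)$ on $U\times\{p\}$ then gives \eqref{difference} with $S'=-RS\in S_\sm$.

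The main obstacle I anticipate is twofold: pinning down the precise local formula for $\bar\partial^*_{\tE}$ with the correct normalization (the leading constant, which is tied to the convention $e^{-\rho^2/2t}$ and must be reconciled with the leading symbol of $\Delta_{\tE}$), and then checking that each correction term genuinely maps $S_\sm$ into $S_\sm$. The heart of this bookkeeping is the cancellation $\bar{\tilde z}_i=t\,y_i$, which trades the dangerous $1/t$ produced by conjugation for the admissible variable $y_i$, together with the mixed structure of $\rho^2$ ensuring that each holomorphic derivative leaves behind an antiholomorphic factor. Once these observations are in place, the remaining verifications are routine applications of the closure properties of $S_\sm$ established above.
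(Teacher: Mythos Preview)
Your overall strategy---compute $\mathcal M$ from the local formula for $\bar\partial^*_{\tE}$, restrict to $U\times\{p\}$, and show that the remainder $\mathcal M-\mathcal N$ preserves $S_\sm$---is exactly the paper's approach. But there is a concrete error that undermines both halves of your argument.

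You assert that $\iota_{\frac{\partial}{\partial\bar z_i}}$ ``annihilates the base variables $\mathbf y,d\mathbf y$.'' This is false for $d\mathbf y$: since $dy_j=\frac{d\bar z_j-d\bar w_j}{t}-\frac{\bar{\tilde z}_j\,dt}{t^2}$, one has
\[
\iota_{\frac{\partial}{\partial\bar z_i}}\,dy_j=\frac{\delta_{ij}}{t}.
\]
Hence $\iota_{\frac{\partial}{\partial\bar z_i}}$ by itself does \emph{not} preserve $S_\sm$; what is true (and what the paper records as \eqref{lem315}) is that $t\,\iota_{\frac{\partial}{\partial\bar z_i}}\in\Op(S_\sm)$. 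This already breaks your argument that each building block of $\mathcal N$ lies in $\Op(S_\sm)$, and it also invalidates your remainder bound: writing the corrections as ``smooth coefficients vanishing at $0$ multiplying operators already in $\Op(S_\sm)$'' is not enough, because the operators $\iota_{\bar j}$, $\iota_{\bar j}\nabla_i$ appearing in $\mathcal M$ are \emph{not} in $\Op(S_\sm)$.

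The fix is to track the missing factor of $t$. The paper does this by showing, using the specific Kähler--normal--frame vanishing (each correction carries at least one antiholomorphic factor $\bar{\tilde z}_k=t\,y_k$), that on $U\times\{p\}$ the three ``error'' multiplications
\[
t^{-2}\bigl(\partial_{z_i}\rho^2-\partial_{z_i}|\tilde{\mathbf z}|^2\bigr),\qquad
t^{-1}\bigl((g^{-1})^{i\bar j}-2\delta^{ij}\bigr),\qquad
t^{-1}\bigl(h^{-1}\partial_{z_i}h\bigr)
\]
all send $S_\sm$ to $S_\sm$. Combined with $t\,\iota_{\frac{\partial}{\partial\bar z_i}}\in\Op(S_\sm)$, this gives $(\mathcal M-\mathcal N)S\in S_\sm$, which is \eqref{difference}. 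You already observed the key mechanism (``$\bar{\tilde z}_i=t\,y_i$ trades the dangerous $1/t$ for $y_i$'') for the conjugation term; you need to apply the same bookkeeping to \emph{every} correction, and in particular to the bare $\iota_{\frac{\partial}{\partial\bar z_j}}$'s rather than treating them as harmless.
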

	\begin{proof}
		Since $t\iota_{\frac{\p}{\p\bz_i}}dy_j=\delta_{ij}$, we have
		\be\label{lem315}
		t\,\iota_{\frac{\p}{\p\bz_i}}\in \Op(S_\sm).
		\ee
		Let ${\mathcal{M}}=e^{\frac{\rho^2}{2t}}(\bar{\partial}_{\widetilde{E}}^*\otimes \mathrm{id})e^{-\frac{\rho^2}{2t}}$,   note that $\bar{\partial}_{\widetilde{E}}^*\otimes \mathrm{id}=-(g^{-1})^{i\bar{j}}\iota_{\frac{\p}{\p \bz_j}}\nabla^{\tE}_{\frac{\p}{\p z_i}}$,
		
		\be\label{lem314}{\mathcal{M}}=\sum_{i,j}\Big(\frac{1}{2t}\frac{\p \rho^2}{\p z_i}(g^{-1})^{i\bar{j}}\iota_{\frac{\p}{\p \bz_j}}-(g^{-1})^{i\bar{j}}\iota_{\frac{\p}{\p \bz_j}}\nabla^{\tE}_{\frac{\p}{\p z_i}}\Big).\ee 
		It follows from the properties of the K\"ahler normal frame $(U, \bfz, \bfe)$ (see \Cref{local-coord}), together with \eqref{lem314} and \eqref{lem315}, that for any $T \in S_\sm$, there exist $T_i, T_i', T_{ij} \in S_\sm$ such that, when restricted to $U \times \{p\}$, the following identities hold:
		\[
		t^{-2} \Big( \frac{\partial}{\partial z_i} \rho^2(\bfz, \bar{\bfz}, \bfw, \bar{\bfw}) - \frac{\partial}{\partial z_i} |\tilde{\bfz}|^2 \Big) T 
		= \Big(t^{-2}  \frac{\partial}{\partial z_i} \rho^2(\bfz, \bar{\bfz}, \bfw, \bar{\bfw}) - t^{-1}y_i\Big)  T 
		= T_i,
		\]
		\[
		t^{-1} \left( (g^{-1})^{i \bar{j}} - 2 \delta^{ij} \right) T = T_{ij},
		\]
		and
		\[
		t^{-1} \Big( \nabla^{\widetilde{E}}_{\frac{\partial}{\partial z_i}} - \frac{\partial}{\partial z_i} \Big) T 
		= t^{-1} \Big( h^{-1} \frac{\partial h}{\partial z_i} \Big) T = T_i'.
		\]
		The last statement then follows.

	\end{proof}
	\begin{prop}\label{prop325} There exists $a_1',a_2'\in S_\sm$ such that
		restricted on $U\times \{p\}$,
		\be\label{prop325eq2}P_t=e^{-\frac{\rho^2}{2t}}\bigg(\tilde{C}_{0}+2dt\wedge\Big(\sum_{i=1}^{n}\iota_{\frac{\p}{\p\bz_i}}\frac{\p}{\p z_i}\tilde{C}_{0}\Big)+ta_{1}'+dta_{2}'\bigg).\ee
	\end{prop}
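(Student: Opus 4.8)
The plan is to conjugate everything by $e^{\frac{\rho^2}{2t}}$ and reduce the operator $\bar{\partial}_{\tE}^*\otimes\mathrm{id}$ to the model operator $\mathcal{N}$ of \Cref{lem316}. Write $S:=e^{\frac{\rho^2}{2t}}H_t$ and recall $\mathcal{M}=e^{\frac{\rho^2}{2t}}(\bar{\partial}_{\tE}^*\otimes\mathrm{id})e^{-\frac{\rho^2}{2t}}$. Since $e^{\frac{\rho^2}{2t}}$ is a scalar function it commutes with $dt\wedge$, so the definition of $P_t$ gives, on $U\times\{p\}$,
\[
e^{\frac{\rho^2}{2t}}P_t=-dt\wedge\mathcal{M}S+S .
\]
By \Cref{prop313} I may expand $S=\tilde C_0+\sum_j dt\wedge y^j\iota_{\frac{\p}{\p\bz_j}}\tilde C_0+ta_1+dt\,a_2$ with $\tilde C_0,a_1,a_2\in S_\sm$.

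The first key point is a $dt$-bookkeeping: because $\mathcal{M}$ is assembled from $\iota_{\frac{\p}{\p\bz_j}}$, $\nabla_{\frac{\p}{\p z_i}}$ and multiplication by functions of $(\bfz,\bfw,t)$ — none of which create or contract $dt$ — every summand of $S$ already carrying $dt$ is sent by $\mathcal{M}$ to another $dt$-summand and is therefore annihilated by the outer $dt\wedge$. Hence only the $dt$-free pieces $\tilde C_0$ and $ta_1$ contribute, and using that $\mathcal{M}$ commutes with multiplication by $t$,
\[
-dt\wedge\mathcal{M}S=-dt\wedge\mathcal{M}\tilde C_0-t\,dt\wedge\mathcal{M}a_1 .
\]
Now I would invoke the last assertion of \Cref{lem316}: on $U\times\{p\}$ one has $\mathcal{M}\tilde C_0=\mathcal{N}\tilde C_0-S_1'$ and $\mathcal{M}a_1=\mathcal{N}a_1-S_2'$ with $S_1',S_2'\in S_\sm$, while $\mathcal{N}\in\Op(S_\sm)$. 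Expanding $\mathcal{N}\tilde C_0=\sum_i y_i\iota_{\frac{\p}{\p\bz_i}}\tilde C_0-2\sum_i\iota_{\frac{\p}{\p\bz_i}}\frac{\p}{\p z_i}\tilde C_0$ yields
\[
-dt\wedge\mathcal{M}\tilde C_0=-dt\wedge\sum_i y_i\iota_{\frac{\p}{\p\bz_i}}\tilde C_0+2\,dt\wedge\sum_i\iota_{\frac{\p}{\p\bz_i}}\frac{\p}{\p z_i}\tilde C_0+dt\wedge S_1' .
\]

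Assembling $e^{\frac{\rho^2}{2t}}P_t=-dt\wedge\mathcal{M}S+S$, the decisive observation is that the term $-dt\wedge\sum_i y_i\iota_{\frac{\p}{\p\bz_i}}\tilde C_0$ produced by $\mathcal{N}\tilde C_0$ cancels exactly the term $+\sum_j dt\wedge y^j\iota_{\frac{\p}{\p\bz_j}}\tilde C_0$ present in $S$. This cancellation is the heart of the argument, because that $y^j\iota$–term is precisely the one that is \emph{not} itself a regular expression (it carries an uncompensated factor $t^{-1}$), so only its identical disappearance — not a mere equality modulo $S_\sm$ — leaves a genuinely $S_\sm$ remainder. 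What survives is
\[
e^{\frac{\rho^2}{2t}}P_t=\tilde C_0+2\,dt\wedge\sum_i\iota_{\frac{\p}{\p\bz_i}}\frac{\p}{\p z_i}\tilde C_0+ta_1+dt\wedge\big(S_1'-t\,\mathcal{N}a_1+tS_2'+a_2\big).
\]
Setting $a_1':=a_1$ and $a_2':=S_1'-t\,\mathcal{N}a_1+tS_2'+a_2$, both lie in $S_\sm$ — using $\mathcal{N}a_1\in S_\sm$ and the closure of $S_\sm$ under multiplication by $t$ — which is exactly \eqref{prop325eq2}.

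The step I expect to require the most care is the combination of the $dt$-bookkeeping with the cancellation: one must check both that $\mathcal{M}$ truly preserves the $dt$-grading, so that the $dt$-carrying parts of $S$ drop out under the leading $dt\wedge$ and $\mathcal{M}$ need only be evaluated on the honestly regular pieces $\tilde C_0$ and $a_1$, and that the leftover $y^j\iota$ terms cancel on the nose. Any slack there would leave a term of the form $dt\wedge y^j\iota_{\frac{\p}{\p\bz_j}}\tilde C_0$, which cannot be absorbed into $dt\,a_2'$ with $a_2'\in S_\sm$, so the exactness of the cancellation is what makes the proposition true rather than merely approximately true.
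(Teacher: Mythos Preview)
Your proposal is correct and follows essentially the same route as the paper. The paper works modulo $dt\wedge S_\sm$ throughout and packages the cancellation by invoking \Cref{prop312} to rewrite $dt\wedge\sum_i y_i\iota_{\partial_{\bar z_i}}\tilde C_0$ as $C_0-\tilde C_0$, which then cancels against the $C_0$ in the expansion of $e^{\rho^2/2t}H_t$; you instead carry all terms explicitly and observe the same cancellation directly, which is the identical content. One small point: in your final sentence you justify $t\,\mathcal N a_1\in S_\sm$ via ``$\mathcal N a_1\in S_\sm$'', but what is really needed (and what holds cleanly by \eqref{lem315} and \Cref{lem311}) is only that $t\mathcal N\in\Op(S_\sm)$, so you may want to phrase it that way.
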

	\begin{proof}

		We calculate, restricted on $U\times \{p\}$,
		\begin{align*}
			&\quad dt \wedge e^{\frac{\rho^2}{2t}}(\bar{\partial}_\tE^* \otimes \mathrm{id})H_t \\
			&= dt \wedge \Big(\sum_{i=1}^n {y_i} \iota_{\frac{\partial}{\partial \bar{z}_i}} -2 \sum_{i=1}^n \iota_{\frac{\partial}{\partial \bz_i}} \frac{\partial}{\partial z_i}\Big)\tilde{C}_0 \mod (dt \wedge {S}_{\sm}) \\
			&=C_0 - \tilde{C}_0 - 2dt \wedge \Big(\sum_{i=1}^n \iota_{\frac{\partial}{\partial \bz_i}} \frac{\partial}{\partial z_i}\Big)\tilde{C}_0 \mod (dt \wedge {S}_{\sm}),
		\end{align*}
		where the first equality follows from the fact that $ (dt)^2 = 0 $, along with \eqref{expan Ht C0} and \Cref{lem316}; the second equality follows from \Cref{prop312}.
		
		Note that $P_t=-dt\wedge(\bar{\p}_{\tE}^*\otimes\mathrm{id})H_t+H_t$, by \Cref{prop313} and the computation above, \eqref{prop325eq2} follows.
	\end{proof}
	Note that $\tilde{C}_0\in S_\sm$, to prove \Cref{main0}, we still need the following proposition, which will be proved in \cref{proof of propositions}.
	\begin{prop}\label{prop318}
		Restricted on $U\times \{p\}$,
		\[ \sum_{i=1}^n \iota_{\frac{\partial}{\partial\bz_i}} \frac{\partial}{\partial z_i}\tilde{C}_0 = 0. \]\end{prop}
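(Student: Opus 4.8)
The plan is to localize everything on the fiber $U\times\{p\}$—which is legitimate by the fiberwise criterion (\Cref{fiberwise regular})—and to exploit the fact that on this fiber $\tilde{C}_0$ collapses to a single top-degree monomial. First I would observe that the heat operator $\Delta_{\tE}\otimes\mathrm{id}$ preserves the bidegree in $\tE\boxtimes\tE$ (the form degree in each factor), so the bidegree content of $H_t$ is frozen at that of the initial condition \eqref{initial condition}, namely the decomposition of $\omega(\bfz,\bfw)\,d^n(\bar\bfz-\bar\bfw)=\omega\wedge_i(d\bz_i-d\bw_i)$ into pieces of total degree $n$. Pulling back along the fiber inclusion $U\cong U\times\{p\}\hookrightarrow U\times U$ annihilates every $d\bar\bfw$, so only the component of top antiholomorphic degree in the first factor survives; hence each heat coefficient $u_i$, and therefore $\tilde{C}_0$, restricts on $U\times\{p\}$ to a multiple of $d\bz_1\wedge\cdots\wedge d\bz_n$. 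Tracking weights under $\delta_\epsilon$ in the expansion \eqref{expansion1}--\eqref{expansion2} then forces the weight-zero part to arise only from $u_0$ evaluated at $\btfz=0$, giving
\[
\tilde{C}_0\big|_{U\times\{p\}}=U_0\,dy_1\wedge\cdots\wedge dy_n,
\]
where $U_0\in\Gamma(E\boxtimes E)$ is the value at $\btfz=0$ of the coefficient of $d\bz_1\wedge\cdots\wedge d\bz_n$ in $u_0|_{U\times\{p\}}$ (and is holomorphic in $\bfz$).

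With this reduction in hand, a direct computation using $\iota_{\frac{\partial}{\partial\bz_i}}dy_j=\delta_{ij}/t$ and the fact that $\partial_{z_i}$ annihilates the antiholomorphic quantities $y_j,dy_j$ gives, on $U\times\{p\}$,
\[
\sum_{i=1}^n\iota_{\frac{\partial}{\partial\bz_i}}\frac{\partial}{\partial z_i}\tilde{C}_0=\frac1t\sum_{i=1}^n(-1)^{i-1}\Big(\frac{\partial}{\partial z_i}U_0\Big)\,dy_1\wedge\cdots\widehat{dy_i}\cdots\wedge dy_n.
\]
Thus \Cref{prop318} is equivalent to the vanishing $\partial_{z_i}U_0=0$ for all $i$, i.e. to the assertion that $U_0$ is constant along the fiber.

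To prove that $U_0$ is constant I would use the heat-kernel recursion together with the defining properties of the K\"ahler normal frame. In such a frame $g_{i\bar j}$ and $h_{a\bar b}$ agree with the flat ones to first order and all of their correction terms carry at least one factor of $\btfz$; hence along $\{\btfz=0,\ \bfw=p\}$ the geometry is flat and the Chern connection is trivial. Since $(M,g)$ and $E$ are real analytic, the leading coefficient $u_0$ continues analytically in $(\bfz,\btfz)$, and on the flat locus $\btfz=0$ the van Vleck determinant and parallel transport are trivial; combined with the coincidence value \eqref{u0 at diagonal} this forces $U_0\equiv\omega(p,p)$. The more self-contained route is to use the paper's own machinery: by \Cref{Cl satisfies heat eq}, $q_tC_0$ solves $L(\,\cdot\,)=0$, and the constant-coefficient model built from $\omega(p,p)$ solves the same equation and shares the leading $t^{-n}$-coefficient $\omega(p,p)$ at $\{p\}\times\{p\}$ by \eqref{u0 at diagonal}; applying the uniqueness statement \Cref{prop310} to the difference yields $U_0\equiv\omega(p,p)$, hence $\partial_{z_i}U_0=0$.

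The main obstacle is precisely this last step. The degree bookkeeping above shows that the $\bar\partial$-closedness of $H_t$ imposes no constraint on the top-degree fiber component, so the vanishing genuinely requires either the transport equation together with the flatness of the K\"ahler normal frame along $\{\btfz=0\}$, or the internal uniqueness result \Cref{prop310} applied to the constant model solution. Once the top-degree collapse of $\tilde{C}_0$ on the fiber is established, everything else is a formal computation.
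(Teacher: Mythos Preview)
Your reduction in step 1 is where the argument breaks. In the paper, ``restricted on $U\times\{p\}$'' means evaluating the coefficient functions at $\bfw=p$; it is \emph{not} a pullback of differential forms, so the $d\bar\bfw^{\bfl}$ do not die. They are generators of the second tensor factor $(\tE)_p$ and persist. The total-degree constraint you invoke only says $|\bfj|+|\bfl|=n$; for each $|\bfl|>0$ the weight-zero piece of $\tilde C_0$ receives contributions from several $u_i$ (e.g.\ for $|\bfl|=1$ one gets both an $i=0,|\bfk|=1$ term and an $i=1,|\bfk|=0$ term), so $\tilde C_0$ does not collapse to $U_0\,dy_1\wedge\cdots\wedge dy_n$. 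Your computation therefore only treats the $d\bar\bfw$-degree-zero component, leaving the rest of the statement unproved. The ``constant model'' route in step 4 also fails: the limit operator $L=\partial_t+\Delta^{\md}$ contains genuinely $\bfz$-dependent pieces $w^{\bar k}_{\bar j\bar i}(\bfz,0)\,d\bar w_j\,\iota_{\partial_{\bar k}}$ and $F^{E\otimes K}_{j\bar i}(\bfz,0)\,d\bar w_i\,\iota_{\partial_{\bar j}}$, so the constant section $q_t\,t^{-n}\omega(p,p)\,d^n(\bar\bfz-\bar\bfw)$ is not a solution of $L(\cdot)=0$, and \Cref{prop310} cannot be applied to the difference.

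The paper's argument is structurally different and avoids any explicit identification of $\tilde C_0$. It observes that $\bar\partial^{\md,*}=-\sum_l\iota_{\partial_{\bar l}}\partial_l$ commutes with $\nabla^{\md}_j$, $\nabla^{\md}_{\bar j}$, and $\Delta^{\md}$ (\Cref{lem51}); applying $\sum_l\iota_{\partial_{\bar l}}\partial_l$ to the transport recursion \eqref{sec3eq13} for the coefficients $v_k$ of $C_0$ then yields the shifted system \eqref{prop52eq1} for $\tilde v_k:=\sum_l\iota_{\partial_{\bar l}}\partial_l v_k$, the shift coming from $[\sum_l\iota_{\partial_{\bar l}}\partial_l,\,\tilde z_i\partial_i]=\iota_{\partial_{\bar i}}\partial_i$. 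The same ODE argument as in \Cref{prop310} forces all $\tilde v_k=0$, which gives the result for every $d\bar\bfw$-component simultaneously.
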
 
	
	\subsubsection{Proof of {\Cref{main0}} and {\Cref{propagator has divisorial type singularities}}}\label{proof of technical theorem}
	
	By \Cref{fiberwise regular}, to prove \Cref{main0}, it suffices to show that for any point $ p \in M $, and any K\"ahler normal frame $ (U, \bfz, \bfe) $ at $ p $, there exists $ C \in S_\sm $ such that on $ U \times \{p\} $,
	\begin{equation} \label{local-check}
		e^{\frac{\rho^2}{2t}} P_t = C.
	\end{equation}
	
	The identity \eqref{local-check} follows directly from \Cref{prop325} and \Cref{prop318}. Therefore, \Cref{main0} is proved.
	
	It follows from \Cref{main0}, \Cref{propagator singularities and heat}, and \Cref{lem44} that $e^{\frac{\rho^2}{2t}}P_t\in S_\sm. $ Next, by $(\ref{distance function on arbitrary})$, we can show that if $C \in S_\sm(U\times U)$, then $e^{\frac{\rho^2}{2t}}\partial_{z_{i}}\big(e^{-\frac{\rho^2}{2t}}C\big)$ and $e^{\frac{\rho^2}{2t}}\partial_{w_{i}}\big(e^{-\frac{\rho^2}{2t}}C\big)$ are also regular expressions, thus we have prove the first statement of \Cref{propagator has divisorial type singularities}. The second statement of \Cref{propagator has divisorial type singularities} follows from the first statement trivially.

	\def\proofof{1}
	\if\proofof0
	\begin{proof}[Proof of Theorem \ref{main0}]\ \\
		Recall that the operators ${\mathcal{M}},{\tilde{\mathcal{N}}}$
		and ${\tilde{\mathcal{M}}}$ are described in \Cref{lem316}.
		
		Since $S_\sm$ is independent of the choice of holomorphic normal frame, by examining \eqref{prop325eq1}, it suffices to prove that on $U \times U$
		\begin{align*}
			&C_l = 0, \quad l < -1,\\ 
			-dt\wedge & {\tilde{\mathcal{M}}}C_0+\frac{1}{2}C_0-\frac{1}{2}\tilde{C}_0=0
		\end{align*}
		which is equivalent to showing that for any $p' \in U$, when restricted to $U \times \{p'\}$,
		\begin{align}
			&C_l = 0, \quad l < -1, \label{Cl vanishes}\\ 
			-dt\wedge & {\tilde{\mathcal{M}}}C_0+\frac{1}{2}C_0-\frac{1}{2}\tilde{C}_0=0. \label{C0 vanishes}
		\end{align}
		
		Let $(U', \bfz', \bfe')$ be a K\"ahler normal frame at $p' \in U$, with $U' \subset U$. Applying \eqref{expan Ht C0} in the version corresponding to the frame $(U', \bfz', \bfe')$, there exists $S' \in S_\sm$ such that, when restricted to $U' \times \{p'\}$,
		\be\label{exp Ht U prime}
		e^{\frac{\rho^2}{2t}} H_t = C_0' + S'.
		\ee
		Here, $C_0'$ denotes the term associated with the frame $(U', \bfz', \bfe')$.
		
		Let $z_i' = f_i(\bfz)$ denote the coordinate transformation from $\bfz$ to $\bfz'$. Then, by Taylor expansion,
		\[
		z_i' = f_i(\bfw) + \sum_j f_{ij}(\bfw)(z_j - w_j) + \sum_{j,k} f_{ijk}(\bfz, \bfw)(z_j - w_j)(z_k - w_k)
		\]
		for some holomorphic $f_{ij},f_{ijk}.$
		\def\bff{{\mathbf{f}}}
		Let $\bff = (f_1, \dots, f_n)$. Then, for $\bfw' = \bff(\bfw)$,
		\begin{equation} \label{yi prime}
			y_i' := \frac{\bz_i' - \bw_i'}{t} = \sum_j \bar{f}_{ij}(\bar{\bfw}) y_j + t \sum_{j,k} \bar{f}_{ijk}(\bar{\bfz}, \bar{\bfw}) y_j y_k,
		\end{equation}
		\begin{equation}\ba \label{d tilde zi prime}
			d\tilde{z}_i' &:= \sum_j \left( \bar{f}_{ij}(\bar{\bfw}) d\btz_j + d\bar{f}_{ij}(\bar{\bfw}) \btz_j \right) \\
			&+ t\sum_{j,k} \left( d\bar{f}_{ijk}(\bar{\bfz}, \bar{\bfw}) y_j \btz_k + \bar{f}_{ijk}(\bar{\bfz}, \bar{\bfw}) y_k d\btz_j + \bar{f}_{ijk}(\bar{\bfz}, \bar{\bfw}) y_j d\btz_k \right).
			\ea\end{equation}
		
		and
		\begin{equation} \label{d yi prime}
			\begin{aligned}
				dy_i' &= \sum_j \left( \bar{f}_{ij}(\bar{\bfw}) dy_j + d\bar{f}_{ij}(\bar{\bfw}) y_j \right) \\
				&\quad + t \sum_{j,k} \left( d\bar{f}_{ijk}(\bar{\bfz}, \bar{\bfw}) y_j y_k + \bar{f}_{ijk}(\bar{\bfz}, \bar{\bfw}) y_k dy_j + \bar{f}_{ijk}(\bar{\bfz}, \bar{\bfw}) y_j dy_k \right).
			\end{aligned}
		\end{equation}
		
		Thus, using \eqref{yi prime}-\eqref{d yi prime}, uniqueness of Taylor expansions, one can show that on $U' \times U'$,
		\begin{equation} \label{C0 minus C0 prime}
			\tilde{C}_0 - \tilde{C}_0' \in t S_\sm \quad \text{and}\quad {C}_0 - {C}_0' \in t S_\sm \oplus dt\wedge S_\sm
		\end{equation}
		
		Thus, on $U' \times \{p'\}$, by \eqref{C0 minus C0 prime} and \eqref{exp Ht U prime},
		\begin{equation} \label{Cn vanishing1}
			\lim_{\ep \to 0} \ep^{n} \delta_\ep e^{\frac{\rho^2}{2t}} H_t = 0,
		\end{equation}
		where $\delta_\ep$ is associated with $(U, \bfz, \bfe)$.
		
		On the other hand, by \eqref{expan Ht Cl}, on $U' \times \{p'\}$,
		\begin{equation} \label{Cn vanishing2}
			\lim_{\ep \to 0} \ep^{n} \delta_\ep e^{\frac{\rho^2}{2t}} H_t = C_{-n},
		\end{equation}
		where $\delta_\ep$ is again the one associated with $(U, \bfz, \bfe)$.
		
		It then follows from \eqref{Cn vanishing1} and \eqref{Cn vanishing2} that $C_{-n} = 0$ on $U' \times \{p'\}$. Since $C_{-n}$ is analytic, we conclude that $C_{-n} = 0$ on $U \times \{p'\}$.
		
		This establishes \eqref{Cl vanishes} for $l = -n$. By repeating the same argument inductively, we eventually establish \eqref{Cl vanishes} for all $l \leq -1$.
		
		The proof of \eqref{C0 vanishes} is more subtle.

		By applying the version of \Cref{prop318} and \eqref{prop325eq2} associated with the frame $(U', \bfz', \bfe')$—noting also that the operator ${\mathcal{M}}$ is coordinate-independent—and using \eqref{C0 minus C0 prime}, we obtain the existence of $T', T_1, T_2 \in S_\sm$ such that, when restricted to $U' \times \{p'\}$, the following holds:
		\[
		e^{\frac{\rho^2}{2}} P_t - \frac{1}{2}\tilde{C}_0 - t T' = e^{\frac{\rho^2}{2}} P_t - \frac{1}{2}\tilde{C}_0' = t T_1 + dt \wedge T_2.
		\]
		
		Therefore, restricted to $U' \times \{p'\}$,
		\begin{equation} \label{C0 vanishes1}
			\delta_\ep \left( e^{\frac{\rho^2}{2}} P_t - \frac{1}{2}\tilde{C}_0 \right) = \delta_\ep \left( e^{\frac{\rho^2}{2}} P_t - \frac{1}{2}\tilde{C}_0' \right) = 0,
		\end{equation}
		where $\delta_\ep$ is associated with $(U, \bfz, \bfe)$.
		
		It can be checked that $\delta_\ep (dt\wedge {\mathcal{M}}C_0)=dt\wedge {\tilde{\mathcal{M}}} C_0$, $\delta_\ep(C_0)=C_0$ and $\delta_\ep \tilde{C}_0 = \tilde{C}_0$. Thus, by \eqref{Cl vanishes} and \eqref{prop325eq1}, restricted to $U' \times \{p'\}$,
		\begin{equation} \label{C0 vanishes2}
			\delta_\ep \left( e^{\frac{\rho^2}{2}} P_t - \frac{1}{2}\tilde{C}_0 \right) = -dt\wedge {\tilde{\mathcal{M}}}C_0+\frac{1}{2}C_0-\frac{1}{2}\tilde{C}_0==0.
		\end{equation}

		It then follows from \eqref{C0 vanishes1} and \eqref{C0 vanishes2} that \eqref{C0 vanishes} holds on $U' \times \{p'\}$, which implies that \eqref{C0 vanishes} holds on $U \times \{p'\}$, since all objects in \eqref{C0 vanishes} are analytic. 
	\end{proof}\fi
	\subsubsection{Proof of \Cref{lem36}}\label{limit-L}
	\def\bj{{\bar{j}}}
	The proof of \Cref{lem36} follows from the standard argument of Getzler's rescaling \cite{getzler1986short,berline2003heat}; we include it here for completeness.
	
	Recall that we have fixed a local K\"ahler normal chart $(U, \mathbf{z}, \bfe)$ for the bundle $E \to M$, with $\mathbf{z} = (z_i)_{i=1}^n$ centered at some point $p \in M$. Let $(U, \mathbf{w}, \mathbf{f})$ denote the same coordinate chart on another copy of $M$.

	We use the following coordinates:
	\be
	\begin{cases}
		\tz_i = z_i - w_i, \\
		\btz_i = \bar{z}_i - \bar{w}_i.
	\end{cases}
	\ee
	and introduce the rescaling $\delta_\ep$ associated with these coordinates.
	
	In this subsection, we study the limit over the slice $U \times \{p\}$:
	\[
	\lim_{\ep \to 0} \ep \deep \left( \p_t+ \Delta_{\tE} \otimes \mathrm{id} \right) \deep^{-1}.
	\]
	We adopt the following abbreviations:
	\begin{itemize}
		\item For vector fields: 
		\[
		\p_i := \frac{\p}{\p z_i},\text{ and }\p_\bj := \frac{\p}{\p \bz_j}.
		\]
		\item For any connection $\nabla$ on a complex vector bundle $H \to M$:
		\[
		\nabla_i : = \nabla_{\p_i}, \quad \nabla_{\bj} := \nabla_{\p_{\bj}}.
		\]
		\item If $F$ is the curvature of $\nabla$, then:
		\[
		F_{i\bj} := F \left(\p_i,\p_{\bj} \right) \in \End(H).
		\]
	\end{itemize}

	First, we recall the local expression for the Laplacian:
	\[
	\Delta_{\tE} = -\sum_{i,j}\Big(g^{i\bj} \nabla^{\tE}_{\p_i} \nabla^{\tE}_{\p_{\bj}} + g^{i\bj} \nabla^{\tE}_{\nabla^{TM}_{\p_i} \p_{\bj}} - \sum_ld\bz_i \, \iota_{\p_{\bar{l}}} F^{E \otimes K}_{j\bar{i}} g^{j\bar{l}}\Big),
	\]
	where $K$ denotes the canonical bundle, and $F^{E \otimes K}$ is the curvature of $E \otimes K$, induced by the connections $\nabla^{TM}$ and $\nabla^E$.
	
	\def\wework{1}
	\if\wework0
	We work in the following local coordinates near the diagonal ($i = 1, 2, \dots, n$):
	\begin{equation*}
		\begin{cases}
			\tz_i = z_i - w_i, \\
			\btz_i = \bz_i - \bw_i, \\
			\tw_i = w_i, \\
			\btw_i = \bw_i.
		\end{cases}
	\end{equation*}
	
	\def\tfw{{\tilde{\bfw}}}
	\def\btfw{{\bar{\tilde{\bfw}}}}
	These coordinates satisfy:
	\be
	\begin{cases}
		d\tz_i = dz_i - dw_i, \\
		d\btz_i = d{\bz_i} - d\bw_i, \\
		d\tw_i = dw_i, \\
		d\btw_i = d\bw_i;
	\end{cases}
	\ee
	and
	\be
	\begin{cases}
		\frac{\p}{\p \tz_i} = \frac{\p}{\p z_i}, \\
		\frac{\p}{\p \btz_i} = \frac{\p}{\p \bz_i}, \\
		\frac{\p}{\p \tw_i} - \frac{\p}{\p \tz_i} = \frac{\p}{\p w_i}, \\
		\frac{\p}{\p \btw_i} - \frac{\p}{\p \btz_i} = \frac{\p}{\p \bw_i}.
	\end{cases}
	\ee
	\fi

	The following identities from K\"ahler geometry will be useful:
	\begin{align*}
		\nabla^{TM}_i\p_{\bj} &= 0, \quad \nabla^{TM}_\bj\p_i = 0, \\
		\nabla^{TM}_i \p_j &= \sum_kw_{ij}^k \p_k, \quad \nabla^{TM}_{\bar{i}}\p_{\bj} = \sum_kw_{\bar{i}\bj}^{\bar{k}} \p_{\bar{k}}, \\
		\nabla^{E}_i e_a &= \sum_b\Gamma_{ia}^b e_b, \quad \nabla_{\bj}^E e_a = 0,
	\end{align*}
	where
	\be
	\ba
	w_{ij}^k = \sum_lg^{k\bar{l}}\p_i g_{j\bar{l}}, \quad 
	w_{\bar{i}\bj}^{\bar{k}} = \sum_l g^{l\bar{k}} \p_{\bar{i}} g_{l\bar{j}}, \quad 
	\Gamma_{ia}^b = \sum_c h^{b\bar{c}}\p_i h_{a\bar{c}}.
	\ea
	\ee
	Note that $\nabla^E_{\bj} =\p_{\bj}$ and $\nabla^{TM}_{i} d\bz_j = 0$. Then we obtain:
	\begin{align*}
		\nabla^{\tE}_{\bar{j}} &=\p_{\bj} + \sum_{i,k}w_{\bar{i}\bar{j}}^{\bar{k}} d\bz_i \wedge \iota_{\p_{\bar{k}}} \\
		&=\p_{\bj} + \sum_{i,k}w_{\bar{i}\bar{j}}^{\bar{k}} (d\btz_i + d\bar{w}_i) \wedge \iota_{\p_{\bar{k}}}, \\
		\nabla^{\tE}_i &=\p_i + \Gamma_i, \\
		\sum_{i,j,l} d\bz_i \, \iota_{\p_{\bar{l}}} F_{j\bar{i}}^{E \otimes K} g^{j\bar{l}} &= \sum_{i,j,l} (d\btz_i + d\bw_i) \, \iota_{\p_{\bar{l}}} F_{j\bar{i}}^{E \otimes K} g^{j\bar{l}},
	\end{align*}
	where $\Gamma_i := \sum_{a,b} \Gamma_{ia}^b e^a \otimes e_b \in \End(E)$, and $\{e^a\}$ is the dual frame of $\{e_a\}$.
	
	Note that on $U \times \{p\}$, where $\tilde{\bfz} = \bfz$, $\bar{\tilde{\bfz}} = \bar{\bfz}$, we have:
	\[
	\Gamma_{ia}^b(\bfz, \bar\bfz) = \Gamma_{ia}^b(\tfz, \bar\tfz) = \sum_j f_{ija}^b(\tfz, \bar\tfz) \btz_j,
	\]
	for some analytic $f_{ija}^b$. As a result,
	\[
	\delta_\ep \Gamma_{ia}^b = O(\ep).
	\]
	Thus, at $U \times \{p\}$,
	\begin{align*}
		\ep \deep \nabla_{\bj}^{\tE} \deep^{-1} &= \bp_\bj + \sum_{i,k}w_{\bar{i}\bar{j}}^{\bar{k}}(\tilde\bfz,0) d\bw_i \iota_{\bp_{\bar{k}}} + O(\ep),\\ 
		\deep \nabla^{\tE}_i \deep^{-1} &= \p_i + \deep \Gamma_i = \p_i + O(\ep),\\
		\deep g^{i\bj} &= 2\delta^{ij} + O(\ep),\\
		\ep \deep \Big(\sum_{i,j,l}(d\btz_i + d\bw_i)\iota_{\bp_{\bar{l}}}F_{j\bar{i}}^{E\otimes K} g^{j\bar{l}} \Big) &= \sum_{i,j} d\bw_i \iota_{\bp_{\bj}} F_{j\bar{i}}^{E\otimes K} + O(\ep).
	\end{align*}
	Here, for an analytic function $f$ on $C^\infty(\C^n)$, $f(\bfz, 0)$ denotes its holomorphic component.
	
	Therefore, at $U \times \{p\}$,
	\begin{equation}
		\ep \deep \Delta_{\tE} \deep^{-1} = -2\sum_{i}\p_i\Big(\bp_{\bar{i}} + \sum_{j,k}w_{\bar{j}\bar{i}}^{\bar{k}}({\bfz}, 0) d\bw_j \iota_{\bp_{\bar{k}}} \Big) + 2\sum_{i,j} d\bw_i \iota_{\bp_\bj} F_{j\bar{i}}^{E\otimes K}({\bfz}, 0) + O(\ep).
	\end{equation}
	\def\md{{\mathrm{md}}}
	We define
	\[
	\Delta^{\md} := -2\sum_{i}\p_i\Big(\bp_{\bar{i}} +\sum_{j,k} w_{\bar{j}\bar{i}}^{\bar{k}}({\bfz}, 0) d\bw_j \iota_{\bp_{\bar{k}}} \Big) + 2\sum_{i,j} d\bw_i \iota_{\bp_\bj} F_{j\bar{i}}^{E\otimes K}({\bfz}, 0).
	\]
	
	Then the operator $L$ appearing in \Cref{lem36} is given by
	\[
	L = \p_t + \Delta^{\md} \otimes \mathrm{id}.
	\]
	
	% Note that $\bp^*_{\tE}=-g^{i\bj}i_{\bp_\bj}\nabla_i^{\tE}$,
	% \[\ep\deep\bp^*_{\tE}\deep^{-1}=-\sum_ii_{\bp_{\bar{i}}}\p_i+O(\ep).\]
	
	%We set $\bp^{\md,*}=-\sum_ii_{\bp_{\bar{i}}}\p_i.$
	
	%Set 
	%\[P^\md_{t}=-dt\wedge(\bp_\md^*\otimes 1)H^\md_{t}+H^\md_{t}.\]
	
	%We will show that $P^\md_{t}$ is smooth in the next section.
	\subsubsection{Proof of \Cref{prop310} and \Cref{prop318}}\label{proof of propositions}

	Let $\nabla^\md$ be the operator given by
	\[
	\nabla^\md_{\bar{j}} = \p_{\bar{j}} +\sum_{l,k} w_{\bar{l} \bar{j}}^{\bar{k}}(\tilde \bfz, 0) d\bw_l \iota_{\p_{\bar{k}}}, \quad \text{and} \quad \nabla^{\md}_j = \p_j.
	\]
	Then the Laplacian $\Delta^{\md}$ is the connection Laplacian associated with $\nabla^{\md}$:
	\[
	\Delta^{\md} = -2\sum_i \nabla^{\md}_i \nabla^{\md}_{\bar{i}} +2 \sum_{i,j} d\bar{w}_i \iota_{\bar{\p}_{\bar j}} F_{j\bar{i}}^{E\otimes K}(\bfz, 0).
	\]
	
	Suppose $S \in \Omega^*((0,\infty); \Gamma(\tE \times \tE))$ satisfies the condition in \Cref{prop310}. That is, it solves the heat equation at $U \times \{p\}$:
	\begin{equation}
		\big(\p_t + \Delta^\md\big) S = 0,
	\end{equation}
	and admits the following asymptotic expansion at $U\times \{p\}$:
	\be
	S = e^{-\frac{|\tilde{\bfz}|^2} {2t}} \Big(\sum_{k=0}^n t^{k-n} v_k \Big), \quad v_k \in \Gamma(\widetilde{E} \boxtimes \widetilde{E}).
	\ee
	We compute, at $U \times \{p\}$,
	\be\ba\label{sec5eq1}
	\Big(\p_t +\Delta^{\md}  \Big) S
	=t^{-n} e^{-\frac{|\tilde{\bfz}|^2} {2t}} \bigg( \p_t +\Delta^{\md}  + \sum_i \Big( \frac{\bar{\tilde{z}}_i}{t} \nabla^{\md}_{\bar{i}} + \frac{\tilde{z}_i}{t} \nabla^{\md}_i \Big) \bigg) \sum_{k=0}^n t^{k} v_k.
	\ea\ee
	Let $r = |\tfz|$. Then \eqref{sec5eq1} implies, at $U \times \{p\}$,
	\be\ba
	\Big(\p_t +\Delta^{\md} \Big) S
	=t^{-n} e^{-\frac{|\tilde{\bfz}|^2} {2t}} \left(\p_t + \Delta^{\md} + \frac{\nabla_{r \nabla r}}{t} \right) \sum_{k=0}^n t^{k} v_k.
	\ea\ee
	Hence, we obtain the recursive system:
	\be\label{sec3eq13}
	\begin{cases}
		\nabla^{\md}_{r \nabla r} v_0 = 0, \\
		\nabla^{\md}_{r \nabla r} v_{k+1} + (k+1) v_{k+1} + \Delta^{\md} v_k = 0, \quad k \geq 0.
	\end{cases}
	\ee
	
	\begin{proof}[Proof of \Cref{prop310}]
		We focus on the analysis at $U \times \{p\}$.
		
		It follows from standard ODE theory that if $v_0 = 0$ at $\tfz = 0$ and satisfies \eqref{sec3eq13}, then $v_0 \equiv 0$.
		
		Next, since $v_0$ already vanishes, the second equation in \eqref{sec3eq13} yields
		\[
		\nabla^\md_{\nabla r} (r v_1) = 0.
		\]
		By the same reasoning, since $rv_1=0$ at $\tilde{\bfz}=0$, we have $r v_1 \equiv 0$, and hence $v_1 \equiv 0$.
		
		Proceeding inductively, suppose $v_k \equiv 0$. Then from \eqref{sec3eq13}, we obtain
		\[
		\nabla^\md_{\nabla r} (r^{k+1} v_{k+1}) = 0.
		\]
		Since $r^{k+1} v_{k+1}$ vanishes at $\tfz = 0$, it follows that $v_{k+1} \equiv 0$. Therefore, $v_k \equiv 0$ for all $k \geq 0$, completing the proof.
		
	\end{proof}
	
	\begin{lem}\label{lem51}
		At $U \times \{p\}$, the following commutation relations hold:
		\begin{enumerate}[(a)]
			\item $\left[\sum_l \iota_{\p_{\bar{l}}} \partial_l, \nabla^\md_{{j}} \right] = 0.$
			\item $\left[\sum_l \iota_{\p_{\bar{l}}} \partial_l, \nabla^\md_{\bar{j}} \right] = 0.$
			\item $\left[\sum_l \iota_{\partial_{\bar{l}}} \partial_l, \Delta^\md \right] = 0.$
		\end{enumerate}
	\end{lem}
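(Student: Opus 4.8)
The plan is to reduce all three identities to two structural facts about the K\"ahler normal frame $(U,\bfz,\bfe)$ (see \Cref{local-coord}): in such a frame the holomorphic components of the metric and of the Hermitian metric are constant, so that $\p_m g_{i\bar j}(\bfz,0)=0$ and $\Gamma^b_{ia}(\bfz,0)=0$. Consequently every first-order connection coefficient entering $\nabla^\md$ has vanishing holomorphic component, and after passing to holomorphic components a covariant derivative agrees with the corresponding ordinary derivative. Write $T:=\sum_l\iota_{\p_{\bar l}}\p_l$ for the operator in question. Part (a) is then immediate: since $\nabla^\md_j=\p_j$, and $\p_j$ commutes both with each $\p_l$ (mixed partials commute) and with the constant-coefficient contraction $\iota_{\p_{\bar l}}$ (which sees only the form $d\bz_l$), we get $[T,\nabla^\md_j]=0$.

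For part (b) I would decompose $\nabla^\md_{\bar j}=\p_{\bar j}+B$ with $B=\sum_{l,k}w^{\bar k}_{\bar l\bar j}(\tilde\bfz,0)\,d\bar w_l\,\iota_{\p_{\bar k}}$. As in (a) one has $[T,\p_{\bar j}]=0$. The key computation is $[T,B]$. Since $\iota_{\p_{\bar m}}d\bar w_l=0$ (the form lives on the other factor) and $\iota_{\p_{\bar m}}$ anticommutes with $\iota_{\p_{\bar k}}$, a short check shows $\iota_{\p_{\bar m}}$ commutes with each operator $d\bar w_l\,\iota_{\p_{\bar k}}$; feeding this and the Leibniz rule for $\p_m$ into $TB-BT$ leaves only the term where $\p_m$ hits the coefficient,
\[
[T,B]=\sum_{m,l,k}\big(\p_m w^{\bar k}_{\bar l\bar j}(\bfz,0)\big)\,d\bar w_l\,\iota_{\p_{\bar k}}\iota_{\p_{\bar m}}.
\]
Because $g^{s\bar k}(\bfz,0)=\delta_{sk}$ and the quadratic Christoffel correction drops out on holomorphic components, $\p_m w^{\bar k}_{\bar l\bar j}(\bfz,0)=\p_m\p_{\bar l}g_{k\bar j}(\bfz,0)=R_{k\bar j m\bar l}(\bfz,0)$. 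The K\"ahler symmetry $R_{k\bar j m\bar l}=R_{m\bar j k\bar l}$ makes this coefficient symmetric in $m\leftrightarrow k$, while $\iota_{\p_{\bar k}}\iota_{\p_{\bar m}}$ is antisymmetric, so the double sum vanishes and $[T,\nabla^\md_{\bar j}]=0$.

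Part (c) then follows by combining (a), (b) with one more commutator. Writing $\Delta^\md=-2\sum_i\nabla^\md_i\nabla^\md_{\bar i}+2\sum_{i,j}d\bar w_i\,\iota_{\p_{\bar j}}F^{E\otimes K}_{j\bar i}(\bfz,0)$, the Leibniz rule for ordinary commutators gives
\[
[T,\nabla^\md_i\nabla^\md_{\bar i}]=[T,\nabla^\md_i]\,\nabla^\md_{\bar i}+\nabla^\md_i\,[T,\nabla^\md_{\bar i}]=0
\]
by (a) and (b). The curvature term is treated exactly like $B$: the same manipulation produces $\sum_{m,i,j}\big(\p_m F^{E\otimes K}_{j\bar i}(\bfz,0)\big)\,d\bar w_i\,\iota_{\p_{\bar j}}\iota_{\p_{\bar m}}$, and the second Bianchi identity $\nabla_m F_{j\bar i}=\nabla_j F_{m\bar i}$ for the Chern connection—with covariant derivatives replaced by $\p_m,\p_j$ after taking holomorphic components—shows $\p_m F^{E\otimes K}_{j\bar i}(\bfz,0)$ is symmetric in $m\leftrightarrow j$. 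Pairing this symmetry against the antisymmetric $\iota_{\p_{\bar j}}\iota_{\p_{\bar m}}$ again gives zero, so $[T,\Delta^\md]=0$.

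The main obstacle is isolating the precise symmetry that annihilates the surviving double-contraction terms: neither $[T,B]$ nor the curvature commutator vanishes term by term, and the cancellation is a genuine manifestation of the K\"ahler condition (the symmetry $R_{i\bar j k\bar l}=R_{k\bar j i\bar l}$ and the second Bianchi identity) used in tandem with the normal-frame vanishing of holomorphic components of the connection, which is what lets $\p_m(\cdot)(\bfz,0)$ be identified with the curvature in the first place. A secondary point requiring care is the bookkeeping of graded signs when moving $\iota_{\p_{\bar m}}$ past $d\bar w_l\wedge$ and $\iota_{\p_{\bar k}}$, since it is exactly the antisymmetry of $\iota_{\p_{\bar k}}\iota_{\p_{\bar m}}$ that must be matched against the symmetry of the coefficients.
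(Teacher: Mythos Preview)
Your arguments for (a) and (b) are essentially the paper's own: both reduce $[T,\nabla^\md_{\bar j}]$ to a term of the form $\sum_{m,k}(\p_m w^{\bar k}_{\bar l\bar j})(\bfz,0)\cdot(\text{antisymmetric in }m,k)$ and kill it with the K\"ahler curvature symmetry. The paper writes this as $R^{\bar k}_{\bar i l\bar j}=R^{\bar l}_{\bar i k\bar j}$, you write it as $R_{k\bar j m\bar l}=R_{m\bar j k\bar l}$; these are the same identity.

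For (c) you take a genuinely different route. The paper observes that $T=-\bar\partial^{\md,*}$ and that $\Delta^\md=[\bar\partial^{\md},\bar\partial^{\md,*}]$, whence $[T,\Delta^\md]=0$ follows from $(\bar\partial^{\md,*})^2=0$ and the Jacobi identity---a one-line structural argument. You instead split $\Delta^\md$ into its connection-Laplacian and curvature pieces, dispatch the first with (a) and (b) via the Leibniz rule, and handle the curvature term $\sum_{i,j}d\bar w_i\,\iota_{\p_{\bar j}}F^{E\otimes K}_{j\bar i}(\bfz,0)$ by a second application of the symmetry-versus-antisymmetry trick, this time powered by the second Bianchi identity $\nabla_m F_{j\bar i}=\nabla_j F_{m\bar i}$ (which indeed reduces to $\p_m F_{j\bar i}(\bfz,0)=\p_j F_{m\bar i}(\bfz,0)$ in the normal frame). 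Your argument is correct and more self-contained: it does not require introducing the model Dolbeault operator $\bar\partial^{\md}$ or verifying the Weitzenb\"ock-type identity $\Delta^\md=[\bar\partial^{\md},\bar\partial^{\md,*}]$, which the paper invokes without writing out. The paper's approach, on the other hand, is conceptually cleaner and makes transparent \emph{why} the commutator vanishes---$T$ is one of the two building blocks of the Laplacian---at the cost of that extra setup.
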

	\begin{proof}
		We again work at $U \times \{p\}$. 
		
		Part (a) is straightforward. 
		
		To prove (b), observe that the curvature term satisfies
		\[
		R_{\bar{i} l \bar{j}}^{\bar{k}} = \partial_l w_{\bar{i} \bar{j}}^{\bar{k}}, \quad \text{and} \quad R_{\bar{i} l \bar{j}}^{\bar{k}} = R_{\bar{i} k \bar{j}}^{\bar{l}}.
		\]
		Then
		\begin{equation}\label{lem51eq1}
			\begin{aligned}
				&\quad\Big[\sum_l \iota_{\partial_{\bar{l}}} \partial_l, \sum_{i,k}w_{\bar{i} \bar{j}}^{\bar{k}}(\tfz, 0) d \bar{w}_i \iota_{{\partial}_{\bar{k}}} \Big]
				= -\sum_{i,k,l}\partial_l w_{\bar{i} \bar{j}}^{\bar{k}}(\tfz,0) d \bar{w}_i \iota_{\partial_{\bar{l}}} \iota_{{\partial}_{\bar{k}}} \\
				&= -\sum_{i,k,l}R_{\bar{i} l \bar{j}}^{\bar{k}}(\tfz,0) d \bar{w}_i \iota_{\p_{\bar{l}}} \iota_{\partial_{\bar{k}}} = -\frac{1}{2} \sum_{i,k,l}R_{\bar{i} l \bar{j}}^{\bar{k}}(\tfz,0) d \bar{w}_i [\iota_{\p_{\bar{l}}}, \iota_{\partial_{\bar{k}}}] = 0.
			\end{aligned}
		\end{equation}
		
		Since
		\[
		\nabla^\md_{\bar{j}} = \bar{\partial}_{\bar{j}} + \sum_{i,k}w_{\bar{i} \bar{j}}^{\bar{k}}(\tfz,0) d \bar{w}_i \iota_{{\partial}_{\bar{k}}},
		\]
		equation \eqref{lem51eq1} implies part (b).
		
		For (c), recall that
		\[
		\sum_l \iota_{\p_{\bar{l}}} \partial_l = -\bar{\partial}^{\md,*}, \quad \text{and} \quad \Delta^\md = [\bar{\partial}^{\md}, \bar{\partial}^{\md,*}].
		\]
		Thus,
		\[
		[\bar{\partial}^{\md,*}, \Delta^\md] = 0,
		\]
		which proves (c).
	\end{proof}
	
	Consider
	\[
	\tilde{v}_k := \sum_l \iota_{\p_{\bar{l}}} \p_l v_k, \quad k \geq 0.
	\]
	
	By \Cref{lem51}, the $\tilde{v}_k$ satisfy:
	\begin{equation}\label{prop52eq1}
		\begin{cases}
			\nabla^\md_{r \nabla r} \tilde{v}_0 + \tilde{v}_0 = 0, \\
			\nabla^\md_{r \nabla r} \tilde{v}_{k+1} + (k+2) \tilde{v}_{k+1} + \Delta^\md \tilde{v}_k = 0, \quad k \geq 0,
		\end{cases}
	\end{equation}
	where $r = |\tilde{\bfz}|$.
	
	\begin{proof}[Proof of \Cref{prop318}]
		As in the proof of \Cref{prop310}, equations \eqref{prop52eq1} implies that $\tilde{v}_k = 0$ for all $k \geq 0$ at $U \times \{p\}$.
	\end{proof}

	\appendix
	\def\bfi{{\mathbf{i}}}
	\def\bfj{{\mathbf{j}}}
	\def\bfk{{\mathbf{k}}}
	\section{Existence of Canonical Local Coordinates}\label{local-coord}
	
	We introduce suitable coordinate systems to simplify our discussion, specifically K\"ahler normal coordinates on the manifold and holomorphic local frames on the vector bundle.

	It is standard that
	\begin{prop}\label{Kahler normal}
		Let $M$ be an $n$-dimensional real analytic K\"ahler manifold, and let $p \in M$. Then there exists a holomorphic coordinate system $\mathbf{z} = (z_1, \ldots, z_n)$ in a neighborhood $U$ of $p$ such that $\mathbf{z}(p) = 0$ and 
		\[
		\left(\frac{\p^kg_{i\bar{j}}}{\p z_{i_1}\cdots\p z_{i_k}} \right)(p) = 0, \quad \text{for } k \geq 1,
		\]
		and $g_{i\bar{j}}(p) = \half \delta_{ij}$,  
		where $g$ denotes the Riemannian metric on $M$ and $g_{i\bar{j}} := g\left( \frac{\partial}{\partial z_i}, \frac{\partial}{\partial \bar{z}_j} \right)$. The pair $(U, \mathbf{z})$ is called the \textbf{K\"ahler normal chart at $p$}, and $\mathbf{z}$ is called a K\"ahler coordinate system at $p$, which is unique up to unitary transformations.
		
	\end{prop}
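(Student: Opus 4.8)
The plan is to reduce everything to the local K\"ahler potential and then normalize its Taylor expansion by a holomorphic change of coordinates. First I would start from an arbitrary holomorphic chart $\bfw=(w_1,\dots,w_n)$ centered at $p$, and after a $\C$-linear change diagonalize the Hermitian form $g_{i\bar j}(p)$ so that $g_{i\bar j}(p)=\half\delta_{ij}$. Since $M$ is K\"ahler and real analytic, near $p$ the K\"ahler form admits a real analytic potential $\phi$ with $g_{i\bar j}=\p_i\p_{\bar j}\phi$; writing its Taylor series $\phi=\sum_{\a,\b}c_{\a\b}\,\bfw^\a\bar\bfw^\b$, one checks that the normalization $\big(\p^\gamma g_{i\bar j}\big)(p)=0$ for all $|\gamma|\ge 1$ is equivalent to the vanishing of the mixed coefficients $c_{\mu,e_j}=0$ for every $|\mu|\ge 2$ and every $j$, where $e_j$ is the $j$th unit multi-index (the conjugate conditions follow from $c_{\a\b}=\overline{c_{\b\a}}$). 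The potential is only determined up to adding a pluriharmonic term $h+\bar h$, which affects only the purely (anti)holomorphic coefficients; hence these mixed coefficients are the genuine obstruction to be removed.

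For existence I would kill the offending coefficients inductively by holomorphic coordinate changes of increasing order. Assume $c_{\mu,e_j}=0$ for all $2\le|\mu|\le N$. Substituting $w_i=v_i+P_i(v)$ with $P_i$ homogeneous holomorphic of degree $N+1$, the only contribution to bidegree $(N+1,1)$ coming from the leading term $\half\sum_i|w_i|^2$ is $\half\sum_i \bar v_i\,P_i(v)$, while every source monomial of total degree $d$ feeds, through a nontrivial substitution, only bidegrees of total degree $\ge d+N$; in particular all bidegree-$(k,1)$ terms with $k\le N$ are untouched. Choosing the coefficients of $P_i$ as a fixed multiple of the current coefficients $c_{\mu,e_j}$ exactly cancels the bidegree-$(N+1,1)$ part without disturbing the previous normalizations. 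This produces a formal holomorphic coordinate change; its convergence in a neighborhood of $p$ is where real analyticity is essential, and I would establish it by a majorant estimate, bounding the coefficients of the transformed potential (hence of $P_i$) against the geometrically decaying coefficients of the real analytic $\phi$ (equivalently, by working with the holomorphic complexification $\Phi(\bfw,\zeta)$ of $\phi$ and applying the holomorphic implicit function theorem).

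For uniqueness, suppose $\bfz$ and $\bfz'$ are two such charts, related by a biholomorphism $\bfz'=F(\bfz)$ with $F(0)=0$, and write $F=A+(\text{higher order})$ with $A$ linear. Potentials pull back, and two potentials for the same metric differ by a pluriharmonic function, so $\phi'\circ F=\phi+h+\bar h$. Comparing bidegree $(1,1)$ parts gives $\half|A\bfz|^2=\half|\bfz|^2$, i.e.\ $A$ is unitary; comparing the bidegree $(2,1)$ parts, which vanish on the right since both normalized potentials have no $(2,1)$ term, yields $\half\sum_i B_i(\bfz)\,\overline{(A\bfz)_i}=0$ for the quadratic part $B$ of $F$, forcing $B=0$, and an induction on the order repeats this argument to give $F=A\in U(n)$.

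The main obstacle I anticipate is exactly the convergence of the normalizing coordinate change: the algebraic cancellation at each order is immediate and purely formal, but turning the resulting formal power series into an honest biholomorphism on a neighborhood requires the quantitative control afforded by the real analyticity hypothesis through the majorant/complexification argument, and this is precisely where the standing analyticity assumption enters.
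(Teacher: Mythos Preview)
The paper does not give its own proof of this proposition; it simply cites \cite{Kontsevich1994-1995,ruan1996canonical}. Your proposal is the standard Bochner-type construction and is essentially what those references contain: pass to a local real analytic K\"ahler potential, kill the unwanted Taylor coefficients $c_{\mu,e_j}$ with $|\mu|\ge 2$ order by order via holomorphic substitutions $w_i=v_i+P_i(v)$, and use analyticity (via a majorant or complexification argument) to secure convergence of the resulting formal change of variables. Your identification of the convergence step as the place where real analyticity is genuinely used, and your uniqueness argument via comparison of bidegree-$(k,1)$ terms of the two normalized potentials, are both correct. In short, your sketch is sound and aligned with the cited literature; there is nothing to compare against in the paper itself beyond the reference.
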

	\begin{proof}
		The proof can be found in \cite{Kontsevich1994-1995,ruan1996canonical}.
	\end{proof}
	
	We have the following corollary for K\"ahler normal coordinates:
	
	\begin{cor}
		Let $M$ be an $n$-dimensional real analytic K\"ahler manifold. Let $p \in M$, and let $(U, \mathbf{z})$ be a K\"ahler normal chart at $p$. Consider another copy of $M$, and let $(U, \mathbf{w})$ denote the corresponding K\"ahler normal chart at $p$ on this second copy.
		
		Let $\rho^2 : M \times M \to \mathbb{R}$ denote the square of the distance function. With respect to the charts above, we write it as $\rho^2(\mathbf{z}, \bar{\mathbf{z}}, \mathbf{w}, \bar{\mathbf{w}})$. Then:
		
		\begin{enumerate}[(1)]
			\item On $U \times \{p\}$, we have
			\begin{equation} \label{Kahler distance}
				\rho^2(\mathbf{z}, \bar{\mathbf{z}}, 0, 0)
				= \sum_{i=1}^{n} z_i \bar{z}_i 
				+ \sum_{i,j,k,l=1}^{n} z_i z_j \bar{z}_k \bar{z}_l\, f_{ij\bar{k}\bar{l}}(\mathbf{z}, \bar{\mathbf{z}}),
			\end{equation}
			where $f_{ij\bar{k}\bar{l}}$ are real analytic functions on $U$.
			
			\item In general, we have
			\begin{equation} \label{distance function on arbitrary}
				\rho^2(\mathbf{z}, \bar{\mathbf{z}}, \mathbf{w}, \bar{\mathbf{w}})
				= \sum_{i=1}^{n} (z_i - w_i)(\bar{z}_i - \bar{w}_i)
				+ \sum_{i,j=1}^{n} (z_i - w_i)(\bar{z}_j - \bar{w}_j)\, f_{i\bar{j}}(\mathbf{z}, \bar{\mathbf{z}}, \mathbf{w}, \bar{\mathbf{w}}),
			\end{equation}
			where $f_{i\bar{j}}$ are real analytic functions on $U \times U$ satisfying $f_{i\bar{j}}(0,0,0,0) = 0$.
		\end{enumerate}
		
	\end{cor}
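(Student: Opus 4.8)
The plan is to derive both expansions from the Hamilton--Jacobi (eikonal) equation satisfied by the squared distance function, combined with the special structure of the metric in K\"ahler normal coordinates from Proposition \ref{Kahler normal}. Throughout I use that for a real-analytic K\"ahler metric the exponential map is real-analytic, so $\rho^2$ is real-analytic on a neighborhood of the diagonal; this lets me argue at the level of convergent Taylor series rather than mere jets.

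For part (1), fix the K\"ahler normal chart $(U,\mathbf z)$ at $p$ and set $E := \tfrac12\rho^2(\mathbf z,\bar{\mathbf z},0,0)$. Then $E$ satisfies the eikonal equation $g^{i\bar j}\,\partial_{z_i}E\,\partial_{\bar z_j}E = E$, the normalization constant being fixed to $1$ by the leading behaviour $E = \tfrac12\sum_i z_i\bar z_i + \cdots$ (using $g_{i\bar j}(p)=\tfrac12\delta_{ij}$). I grade monomials $\mathbf z^\alpha\bar{\mathbf z}^\beta$ by their bidegree $(|\alpha|,|\beta|)$ and write $E = \sum_{d\ge2}E_d$ in homogeneous total degree. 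By Proposition \ref{Kahler normal} together with Hermitian symmetry $g_{i\bar j}=\overline{g_{j\bar i}}$, the difference $g_{i\bar j}-\tfrac12\delta_{ij}$, and hence $g^{i\bar j}-2\delta_{ij}$, contains only monomials of bidegree $(\ge1,\ge1)$. Substituting into the eikonal equation and collecting the degree-$d$ part, the terms pairing $E_2$ with $E_d$ through the leading inverse-metric value $2\delta_{ij}$ combine into the Euler operator $\sum_i(z_i\partial_{z_i}+\bar z_i\partial_{\bar z_i})$ applied to $E_d$, giving $dE_d$; thus the equation reads $(d-1)E_d = -(Q_d+P_d)$, where $Q_d,P_d$ collect strictly lower-degree data. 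Since $d-1\ne0$ for $d\ge2$, this recursion determines each $E_d$, and a bidegree bookkeeping shows inductively that $E_2 = \tfrac12\sum_i z_i\bar z_i$, $E_3=0$, and every $E_d$ with $d\ge4$ has bidegree $(\ge2,\ge2)$: the factor $g^{i\bar j}-2\delta_{ij}$ supplies bidegree $(\ge1,\ge1)$, while $\partial_{z_i}E_a$ and $\partial_{\bar z_j}E_b$ inherit the inductive structure (here the intermediate fact $E_3=0$ is used to exclude the borderline low terms). Finally, a real-analytic function whose Taylor series contains only bidegrees $(\ge2,\ge2)$ is of the form $\sum z_iz_j\bar z_k\bar z_l f_{ij\bar k\bar l}$ with real-analytic $f_{ij\bar k\bar l}$, which yields \eqref{Kahler distance}.

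For part (2), the key point is to upgrade the base-point-$p$ statement to every base point $\mathbf w$ near $p$ while keeping the fixed chart $\mathbf z$. The plan is to apply part (1) at each $\mathbf w$ using a K\"ahler normal chart $\mathbf u$ centered there: this shows that $\rho^2(\cdot,\mathbf w)$ has vanishing pure-holomorphic and pure-antiholomorphic parts in the $\mathbf u$-variables at $\mathbf u=0$. The crucial observation is that the vanishing of the holomorphic $\infty$-jet of $\rho^2(\cdot,\mathbf w)$ at $\mathbf w$ is invariant under holomorphic centered coordinate changes---setting the conjugate variables to zero is intrinsic---so it holds equally in the fixed chart. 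Hence, for every $\mathbf w$, the Taylor expansion of $\rho^2$ in the transverse variables $(\mathbf z-\mathbf w,\bar{\mathbf z}-\bar{\mathbf w})$ contains no purely holomorphic and no purely antiholomorphic monomials. A double application of the analytic Hadamard lemma (first factoring out one $(z_i-w_i)$, then one $(\bar z_j-\bar w_j)$, the vanishing of the pure parts guaranteeing at each stage that the quotient again has no pure part) produces real-analytic $G_{i\bar j}$ with $\rho^2 = \sum_{i,j}(z_i-w_i)(\bar z_j-\bar w_j)G_{i\bar j}$. Evaluating the transverse Hessian at $(p,p)$ through part (1) gives $G_{i\bar j}(0,0,0,0)=\delta_{ij}$, so $f_{i\bar j}:=G_{i\bar j}-\delta_{ij}$ satisfies $f_{i\bar j}(0,0,0,0)=0$ and yields \eqref{distance function on arbitrary}.

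I expect the main obstacle to be the bidegree induction in part (1): one must verify carefully that the nonlinear term $g^{i\bar j}\partial E\,\partial E$ together with the metric correction never generates a monomial of holomorphic or antiholomorphic degree below $2$, which is precisely where the vanishing of the pure derivatives of the metric (Proposition \ref{Kahler normal}) and the intermediate identity $E_3=0$ are indispensable. The other delicate point is the conceptual step in part (2), namely that the vanishing of the pure transverse jets is coordinate-independent and therefore transfers from the moving normal charts to the fixed chart.
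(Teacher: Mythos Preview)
Your proposal is correct and follows essentially the same approach as the paper: for part (1), both use the eikonal equation $g^{i\bar j}\partial_i\rho^2\,\partial_{\bar j}\rho^2=2\rho^2$ together with an inductive Taylor (bidegree) analysis in K\"ahler normal coordinates; for part (2), both observe that the vanishing of the pure holomorphic and pure antiholomorphic transverse jets established in part (1) is invariant under holomorphic coordinate changes, hence holds at every diagonal point in the fixed chart, after which the Hadamard-type factorization is routine. Your write-up simply makes the inductive mechanism in part (1) (the Euler operator and the role of $E_3=0$) more explicit than the paper does.
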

	\begin{proof}
		We recall the Eikonal equation on K\"ahler manifolds (see \cite[Example 3.2]{doi:10.1142/S0129055X24500478}):
		\be
		\label{Eikonal}
		\sum_{i,j=1}^n\partial_{i}\big(\rho^2(\bfz,\bar\bfz,0,0)\big)g^{i\bar{j}}(\bfz,\bar\bfz)\partial_{\bar{j}}\big(\rho^2(\bfz,\bar\bfz,0,0)\big)=2\rho^2(\bfz,\bar\bfz,0,0).
		\ee
		By inserting the Taylor expansion of $\rho^2(\bfz,\bar\bfz,0,0)$ into $(\ref{Eikonal})$ and using the properties of K\"ahler normal coordinates, we can prove the first assertion inductively.
		
		For the second assertion, we notice that $(\ref{Kahler distance})$ implies the following identities:
		\be\label{vanishing derivatives}
		\begin{cases}
			\frac{\partial}{\partial z_{i_k}} \cdots \frac{\partial}{\partial z_{i_1}} \rho^2(0,0,0,0) = 0, \quad\text{for } k \geq 1 \\
			\frac{\partial}{\partial \bz_{i_k}} \cdots \frac{\partial}{\partial \bz_{i_1}} \rho^2(0,0,0,0) = 0, \quad\text{for } k\geq 1
		\end{cases} 
		\ee
		Since $(\ref{vanishing derivatives})$ is invariant under arbitrary holomorphic coordinate transformations, it is true at any point of the diagonal in any holomorphic coordinate chart. Therefore, by using the natural coordinates on $U\times U$, we can get the expansion $(\ref{distance function on arbitrary})$.
	\end{proof}
	
	\begin{prop}
		Let $M$ be an $n$-dimensional real analytic complex manifold, and let $E \to M$ be a real analytic Hermitian holomorphic vector bundle. Let $p \in M$, and let $(U, \mathbf{z})$ be a holomorphic chart centered at $p$, i.e., $\mathbf{z}(p) = 0$, such that $E|_U \to U$ is trivial.
		
		Then there exists a local holomorphic frame $\mathbf{e} = (e_a)_{a=1}^{\mathrm{rank}(E)}$ of $E$ over $U$ such that
		\[
		\left( \frac{\partial^k h_{a\bar{b}}}{\partial z_{i_1} \cdots \partial z_{i_k}} \right)(p) = 0 \quad \text{for all } k \geq 1,
		\]
		and
		\[
		h_{a\bar{b}}(p) = \delta_{ab},
		\]
		where $h_{a\bar{b}} := h(e_a, e_b)$.
		
		We refer to $(U, \mathbf{z}, \mathbf{e})$ as a \textbf{holomorphic normal frame} at $p$. For each $(U,\bfz)$, the frame $\bfe$ is unique up to unitary transformations. Moreover, if $(U, \mathbf{z})$ is a K\"ahler normal chart at $p$, then we say that $(U, \mathbf{z}, \mathbf{e})$ is a \textbf{K\"ahler normal frame} at $p$.

	\end{prop}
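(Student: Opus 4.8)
The plan is to obtain the normal frame by a holomorphic change of frame applied to an arbitrary initial trivialization, using real-analyticity to pass to the complexified metric. First I would fix any holomorphic frame $\mathbf{f}=(f_a)$ trivializing $E$ over $U$ and set $H_{a\bar b}:=h(f_a,f_b)$, a real-analytic, Hermitian positive-definite matrix-valued function with $H(p)$ invertible. I then look for a holomorphic, pointwise-invertible matrix $A(\mathbf{z})$ and declare $\mathbf{e}=\mathbf{f}A$, so that the new metric is the congruence $h=\bar{A}^{t}HA$ of $H$ by $A$ (the exact placement of transpose and conjugation depends on the sesquilinearity convention and is immaterial for the argument). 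The structural point I exploit is that the conjugated factor $\bar{A}^{t}$ is \emph{anti-holomorphic} in $\mathbf{z}$.

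The key observation is that, since $H$ is real-analytic near $p$, its Taylor series $H(\mathbf{z},\bar{\mathbf{z}})=\sum_{\mu,\nu}c_{\mu\nu}\mathbf{z}^{\mu}\bar{\mathbf{z}}^{\nu}$ converges on a polydisc; treating $\mathbf{z}$ and $\bar{\mathbf{z}}$ as independent variables yields a genuine holomorphic function $H(\mathbf{z},\mathbf{w})$ of $2n$ variables near $(0,0)$. Its restriction $\tilde{H}(\mathbf{z}):=H(\mathbf{z},0)=\sum_{\mu}c_{\mu 0}\mathbf{z}^{\mu}$ is then holomorphic with $\tilde{H}(0)=H(p)$ invertible, so $\tilde{H}(\mathbf{z})^{-1}$ is holomorphic near $p$. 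I would fix a constant matrix $A_0$ with $\bar{A_0}^{t}H(p)A_0=I$, which exists because $H(p)$ is Hermitian positive-definite, and define $A(\mathbf{z}):=\tilde{H}(\mathbf{z})^{-1}(\bar{A_0}^{t})^{-1}$. This $A$ is holomorphic and invertible near $p$, and a direct check using $\bar{A_0}^{t}H(p)A_0=I$ gives $A(p)=A_0$.

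To verify that this $A$ does the job, complexify the new metric to $h(\mathbf{z},\mathbf{w})=A^{\dagger}(\mathbf{w})H(\mathbf{z},\mathbf{w})A(\mathbf{z})$, where $A^{\dagger}$ is the complexification of $\bar{A}^{t}$ and depends on $\mathbf{w}$ alone, with $A^{\dagger}(0)=\bar{A_0}^{t}$. Restricting to $\mathbf{w}=0$ gives $h(\mathbf{z},0)=\bar{A_0}^{t}\tilde{H}(\mathbf{z})A(\mathbf{z})=I$ identically, by the very definition of $A$. Since the pure holomorphic derivatives $\partial_{\mathbf{z}}^{\alpha}h(p)$ are precisely the Taylor coefficients of the map $\mathbf{z}\mapsto h(\mathbf{z},0)$, they all vanish for $|\alpha|\geq 1$ while $h(p)=I$; the pure anti-holomorphic derivatives then vanish automatically from the Hermitian symmetry $h_{a\bar b}=\overline{h_{b\bar a}}$. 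This establishes existence of the holomorphic normal frame $(U,\mathbf{z},\mathbf{e})$, and the K\"ahler normal frame terminology is then just a definition.

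For uniqueness up to unitary transformations I would take a second normal frame $\mathbf{e}'=\mathbf{e}B$ for the same $(U,\mathbf{z})$ with $B$ holomorphic, run the identical complexification on $h'=\bar{B}^{t}hB$, and use $h(\mathbf{z},0)=h'(\mathbf{z},0)\equiv I$ to force $\bar{B_0}^{t}B(\mathbf{z})\equiv I$; thus $B\equiv B_0$ is constant, and $h'(p)=\bar{B_0}^{t}B_0=I$ shows $B_0$ is unitary. The only genuinely non-formal point in the whole argument, and the place where the real-analyticity hypothesis is indispensable, is the passage from the real-analytic metric $H$ to the holomorphic matrix $\tilde{H}(\mathbf{z})=H(\mathbf{z},0)$: it is this step that simultaneously guarantees that the gauge $A$ is honestly holomorphic rather than a merely formal power series and collapses the infinite family of derivative conditions into the single identity $h(\mathbf{z},0)\equiv I$. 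I expect this to be the main obstacle in the sense that, absent analyticity, one would instead have to solve the triangular recursion for the Taylor coefficients of $A$ coming from $\partial_{\mathbf{z}}^{\alpha}h(p)=0$ and argue convergence by a majorant estimate.
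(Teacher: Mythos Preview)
Your proposal is correct and takes essentially the same approach as the paper: both complexify the real-analytic metric $H(\mathbf{z},\bar{\mathbf{z}})$ to a holomorphic function of $(\mathbf{z},\mathbf{w})$, restrict to $\mathbf{w}=0$, and define the gauge to be (essentially) $\tilde H(\mathbf{z})^{-1}$ so that the new metric satisfies $h(\mathbf{z},0)\equiv I$. The only cosmetic difference is that the paper first normalizes the initial frame so that $H(p)=I$ (absorbing your $A_0$ into the starting data) and then sets $G(\mathbf{z})=H(\mathbf{z},0)^{-1}$ directly; your formulation with a general $A_0$ is equivalent. You also supply the uniqueness argument explicitly via the same complexification trick, which the paper asserts but does not write out.
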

	
	\begin{proof}
		First, for each holomorphic coordinate system $\bfz$ such that $\bfz(p) = 0$, choose an arbitrary holomorphic frame $\bfe'=(e_a')$ such that $h_{a\bar{b}}(p) = \delta_{ab}$. To establish the proposition, we aim to construct a holomorphic matrix $G_a^b(\mathbf{z})$ such that  
		\begin{equation} \label{eq1}
			G_a^b(\mathbf{z})\, h_{b\bar{c}}(\mathbf{z}, \bar{\mathbf{z}})\, \overline{G_d^c(\mathbf{z})} = \delta_{ad} + \sum_{\mathbf{j} \in \mathbb{Z}_{>0}^n} \bar{\mathbf{z}}^{\mathbf{j}} f_{ad, \mathbf{j}}(\mathbf{z}, \bar{\mathbf{z}})
		\end{equation}
		for some smooth $f_{ab,\bfj}$,
		where for any $\mathbf{j} = (j_1, \ldots, j_n) \in \mathbb{Z}_{>0}^n$, we set $\bar{\mathbf{z}}^{\mathbf{j}} := \bar{z}_1^{j_1} \cdots \bar{z}_n^{j_n}$.  
		With this matrix, setting $e_a = G_a^b e'_b$ and $\bfe=(e_a)$, it follows from \eqref{eq1} that $(\bfz, \bfe)$ forms a holomorphic normal frame.
		
		To determine $G_a^b$, we analytically continue \eqref{eq1} to a neighborhood of $(p, \bar{p}) \in M \times \bar{M}$ and evaluate it at $(\mathbf{z}, 0)$, obtaining
		\[
		G_a^b(\mathbf{z})\, h_{b\bar{c}}(\mathbf{z}, 0)\, \overline{G_d^c(0)} = \delta_{ad}.
		\]
		By setting 
		\[
		G_a^b(\mathbf{z}) = \left(h^{-1}(\mathbf{z}, 0)\right)^{b\bar{c}} \delta_{ca},
		\]
		we verify that $G_a^b(\mathbf{z})$ satisfies equation \eqref{eq1}.
	\end{proof}

	\section{Cauchy Principal Value}
	\label{section Cauchy principal value}
	In this appendix, we will introduce the Cauchy principal value for the reader's convenience. We will mainly follow \cite{herrera1971residues}, with extra attention to logarithmic terms.
	
	Let's first consider the local case.
	
	\subsection*{Local case}
	
	For any $\bfz=(z_1,\cdots,z_{m+n})\in \C^{m+n}$, $\bfi=(i_1,\cdots,i_m)\in\Z^m$, we set $$\bfz^\bfi=z_1^{i_1}\cdots z_m^{i_m}$$
	and $$ 
	\ln(\bfz,\bfi)=\big(\ln(|z_1|^2)\big)^{i_1}\cdots \big(\ln(|z_m|^2)\big)^{i_m}
	$$

	\begin{defn}
		Given a differential form $\alpha \in \Omega^{*, *}\left(\left(\mathbb{C}^*\right)^m \times \mathbb{C}^n\right)$. We say $\alpha$ is a differential form on $\mathbb{C}^{m+n}$ with \textbf{divisorial type singularities of $(\bfi,\bfk)$-type} along the principal divisor: $f=z_{1} z_2 \cdots z_m$, if there exist $\beta \in \Omega^{*,*}\left(\mathbb{C}^{m+n}\right)$, $\bfi \in\Z_{\geq0}^m$ and $\bfk\in\Z_{\geq 0}^m$,
		such that
		$$
		\alpha=\bfz^{-\bfi}\ln(\bfz,\bfk)\beta.
		$$
		If $\bfk=0$, we say $\a$ is logarithm-free.
	\end{defn}
	
	We will prove in this appendix that
	\begin{thm}\label{thm-a2}
		Given the following data:
		\begin{enumerate}[(a)]
			\item   A compactly supported $(m+n,m+n)$ differential form $\alpha$ on $\mathbb{C}^{m+n}$ with divisorial type singularities of $(\bfi,\bfk  )$ along $z_1 \cdots z_m$.
			\item A holomorphic function $\bfz^{\bfj}$, where $\bfj \in\mathbb{Z}_{>0}^m$.
			\item A nowhere vanishing smooth function $f$ on a neighborhood of support of $\a$.
		\end{enumerate}
		Then the following limit exists:
		\be\label{thm-a-eq}\lim_{\delta\to0}\int_{|f\bfz^\bfj|>\delta}\a.\ee
		Moreover, the limit is independent of the choice of $\bfj$ and smooth function $f$.
	\end{thm}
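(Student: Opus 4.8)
The plan is to localize the problem and then split $\alpha$ into an absolutely integrable piece, on which the limit is trivial, and a genuinely singular piece, on which existence of the limit is forced by an angular cancellation. First I would use a partition of unity subordinate to a finite cover of $\mathrm{supp}(\alpha)$ by coordinate polydiscs, reducing to the case $\alpha=\bfz^{-\bfi}\ln(\bfz,\bfk)\,\beta$ with $\beta$ a smooth compactly supported top form and divisor the normal crossing $z_1\cdots z_m=0$. Writing each singular variable in polar form $z_l=r_l e^{\sqrt{-1}\theta_l}$, the monomial contributes $\prod_l r_l^{-i_l}e^{-\sqrt{-1}\,i_l\theta_l}$ while $\beta$ and the Jacobian contribute a smooth density carrying $\prod_l r_l\,dr_l\,d\theta_l$. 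Power counting in the $r_l$ shows that the terms with $i_l\le 1$ for every $l$ are absolutely integrable (logarithms are harmless), so for this piece dominated convergence gives at once that $\lim_{\delta\to 0}\int_{|f\bfz^{\bfj}|>\delta}$ equals the ordinary integral over $\C^{m+n}\setminus\{z_1\cdots z_m=0\}$, a quantity manifestly independent of $f$ and $\bfj$.

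For the remaining terms, where some $i_l\ge 2$, I would reduce the pole order by integration by parts. Using $z_l^{-i_l}=\tfrac{1}{1-i_l}\partial_{z_l}(z_l^{1-i_l})$ and the fact that $\alpha$ is of top degree, one rewrites the corresponding summand as $d\eta+\alpha'$, where $\eta$ is an $(m+n,m+n-1)$ form whose pole along $\{z_l=0\}$ has order $i_l-1$, and $\alpha'$ is a sum of divisorial type forms of strictly lower pole order or lower logarithmic power (a derivative falling on $\beta$ lowers the pole, one falling on $\ln(\bfz,\bfk)$ lowers the log). Stokes' theorem on $\{|f\bfz^{\bfj}|\ge\delta\}$ (there is no contribution at infinity, $\alpha$ being compactly supported) gives
\[
\int_{|f\bfz^{\bfj}|>\delta}\alpha = -\int_{|f\bfz^{\bfj}|=\delta}\eta \;+\; \int_{|f\bfz^{\bfj}|>\delta}\alpha'.
\]
Iterating until every pole order is $\le 1$ and every logarithmic power is exhausted reduces the bulk integral to the convergent case already handled, so existence of the limit is reduced to controlling the boundary terms $\int_{|f\bfz^{\bfj}|=\delta}\eta$.

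The boundary terms are where the genuine principal value phenomenon lives, and this is the step I expect to be the main obstacle. On $\{|f\bfz^{\bfj}|=\delta\}$ the form $\eta$ still carries a factor $z_l^{1-i_l}=r_l^{1-i_l}e^{-\sqrt{-1}(i_l-1)\theta_l}$ with $i_l-1\ge 1$, so the oscillatory factor has vanishing average over the circle; were the domain a product of round circles, angular integration would annihilate these terms outright. The difficulty is that the sharp level set $\{|f\bfz^{\bfj}|=\delta\}$ is \emph{not} angularly symmetric — both the nonvanishing factor $f$ and the uneven weights $\bfj$ deform it. To recover the cancellation I would adopt the defining function itself as radial coordinate, so the boundary becomes a fixed level set while the angles are integrated freely; expanding the smooth data in an angular Fourier series then isolates the single resonant mode, whose coefficient is the value of $\beta$ along the divisor (hence $\theta$-independent at leading order) and whose radial weight, after the change of variables, is $\tfrac1{j_l}\ln\delta$ up to a $\theta$-dependent correction of size $O(\delta^{1/j_l})$. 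The leading $\theta$-independent part is killed by the oscillatory integral, the correction vanishes with $\delta$, and the surviving finite contribution is governed only by the resonant Fourier coefficients of $\beta$ restricted to the divisor. (Equivalently, one may encode this via the meromorphic family $\int|f\bfz^{\bfj}|^{2s}\alpha$, whose Mellin relation to $\int_{|f\bfz^{\bfj}|>\delta}\alpha$ shows the latter has no negative-power or $\ln\delta$ term at $\delta=0$.)

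Finally, independence of $f$ and $\bfj$ is immediate from the structure of the limit just obtained: the bulk term is an absolutely convergent integral over $\C^{m+n}$ minus the divisor, hence independent of the cutoff, while the boundary contributions were shown to depend only on the resonant coefficients of $\beta$ along the divisor. Alternatively, and more robustly, one compares two admissible cutoffs directly: the symmetric difference of $\{|f_1\bfz^{\bfj_1}|>\delta\}$ and $\{|f_2\bfz^{\bfj_2}|>\delta\}$ is a shell collapsing onto the divisor, and the same dichotomy — dominated convergence on the integrable part, angular cancellation on the singular part — shows its contribution tends to $0$. This yields both the existence asserted in \eqref{thm-a-eq} and its independence of $\bfj$ and $f$.
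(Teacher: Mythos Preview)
Your outline—reduce poles by integration by parts in $z_l$, push the difficulty to boundary integrals over $\{|f\bfz^{\bfj}|=\delta\}$, and kill those by angular cancellation—is a reasonable dual of the paper's argument, but the boundary step is where the actual work lives and your sketch does not close it. Two concrete issues. First, after one IBP in $z_l$ the form $\eta$ still carries the full singularity $\bfz_{\hat l}^{-\bfi_{\hat l}}\ln(\bfz,\bfk)$ in the remaining variables, so $\int_{|f\bfz^{\bfj}|=\delta}\eta$ is itself a singular integral on the hypersurface; your single-variable Fourier picture does not address this, and the estimate you would need is precisely Lemma~\ref{lema3} together with the Taylor splitting of Lemma~\ref{lem-a5}. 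Second, for general $f$ the level set is not a circle in any $z_l$-slice, and ``use the defining function as radial coordinate'' is not an argument: the paper makes this precise via the change of variable $\zeta_1=z_1h$ with $h^{j_1}=f$ (Proposition~\ref{prop-a7}), which straightens the level set back to $|\boldsymbol\zeta^{\bfj}|=\delta$ and then requires an induction on the logarithmic power $k_1$ to absorb the new terms coming from $\ln|\tilde h|^2$. (Incidentally, these boundary integrals vanish in the limit; there is no ``surviving finite contribution''.)

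The paper organizes the same cancellation more cleanly by working on the bulk rather than the boundary: Taylor-expand the smooth factor in $z_l,\bar z_l$ (Lemma~\ref{lem-a5}); the low-order terms have coefficients independent of $z_l$, so Fubini plus the angular integral makes their contribution to $\int_{B_\delta^{\bfj}}$ and $\int_{S_\delta^{\bfj}}$ vanish \emph{identically for every} $\delta$ (Lemma~\ref{lema4}), while the Taylor remainder carries enough powers of $z_l,\bar z_l$ to cancel $z_l^{-i_l}$ and render the boundary integrand bounded (so Lemma~\ref{lema3} applies). Independence of $\bfj$ is then a separate Stokes-plus-induction-on-$m$ argument (Proposition~\ref{prop-a8}). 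Your IBP route can in principle be completed, but the missing ingredients are exactly these lemmas.
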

	\begin{defn}
		With the same assumptions as Theorem $\ref{thm-a2}$, we call the limit
		\[
		\lim_{\delta\to0}\int_{|f\bfz^\bfj|>\delta}\a
		\]
		the integral of $\alpha$ over $\mathbb{C}^{m+n}$ in the sense of Cauchy principal value. We will use \[
		\dashint_{\mathbb{C}^{m+n}}\alpha
		\]
		to denote it.
	\end{defn}
	By Theorem $\ref{thm-a2}$ $(c)$, we see that the Cauchy principal value only depends on the support of the divisor, and not on the multiplicity or the choice of the holomorphic function $\bfz^\bfj$ representing it. This feature allows us to define the Cauchy principal value on general complex manifolds.
	
	Before we get into the global case, let us prove Theorem $\ref{thm-a2}$.
	
	Let \be\label{defn of B} B:=\{\bfz\in \C^{m+n}:\sup_{1\leq i\leq m+n}|z_i|<1\}.\ee

	Let $\mathbf{k} \in \Z_{\ge 0}^m$, with components $(k_1, \dots, k_m)$,  
and denote 
\[
|\mathbf{k}| := k_1 + \dots + k_m.
\]

	For each $\bfj\in \Z_{>0}^m$, let $S^\bfj_\delta:=\big\{\bfz\in B:|\bfz^\bfj|=\delta\big\}$ and $B_\delta^\bfj:=\big\{\bfz\in B:|\bfz^\bfj|\in(\delta,1)\big\}$.
	
	\def\hi{{\hat{i}}}
	For each $l\in \{1,2,\cdots,m+n\}$, set $\bfz_{\hi}=(z_1,\cdots,z_{l-1},z_{l+1},\cdots,z_{m+n})\in \C^{m+n-1}$ and for each ${l}\in \{1,\cdots,m\}$, $\bfj_{\hat{l}}:=(j_1,\cdots,j_{l-1},j_{l+1},\cdots,j_m).$
	
	Similar to \cite[Lemma 6.4]{herrera1971residues}, we have
	\begin{lem}\label{lema3}
		Let $\phi \in C\big((\C^*)^{m} \times \C^n\big)$ be bounded.  
Then for any $\bfk \in \Z_{\ge 0}^m$, $\bfj \in \Z_{>0}^m$, and any 
$i \in \{1,2,\dots,m\}$, 
		\be\ba\label{lema3eq1}
		\lim_{\delta\to0}\int_{S^\bfj_\delta} \phi(\bfz,\bar\bfz)\ln(\bfz,\bfk) &dz_i\wedge d\bfz_{\hi}\wedge d\bar\bfz_{\hi}=0,\\
		\lim_{\delta\to0}\int_{S^\bfj_\delta} \phi(\bfz,\bar\bfz) \ln(\bfz,\bfk)&d\bar{z}_i\wedge d\bfz_{\hi}\wedge d\bar\bfz_{\hi}=0.
		\ea\ee
		Here $d\bfz_{\hi}:=dz_1\wedge\cdots \wedge dz_{i-1}\wedge dz_{i+1}\cdots \wedge dz_{m+n}$ and $d\bar{\bfz}_{\hi}:=\overline{d\bfz_{\hi}}.$
	\end{lem}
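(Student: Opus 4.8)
The plan is to bound the modulus of each integral in \eqref{lema3eq1} and show that it decays like a positive power of $\delta$ times a power of $\ln(1/\delta)$, which forces the limit to vanish. Note first that $dz_i\wedge d\bfz_{\hi}=\pm\,dz_1\wedge\cdots\wedge dz_{m+n}$, so the first integrand is $\phi\,\ln(\bfz,\bfk)$ times the full holomorphic volume form wedged with all antiholomorphic differentials \emph{except} $d\bar z_i$, while the second integrand has $dz_i$ absent and $d\bar z_i$ present; since $\ln(\bfz,\bfk)$ is real, the second is (up to sign) the complex conjugate of a form of the same type, so it suffices to treat the first identity and then repeat the argument verbatim. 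Bounding $|\phi|\le\|\phi\|_\infty$, the task reduces to estimating the integral over $S^\bfj_\delta$ of $\prod_{l=1}^m|\ln|z_l|^2|^{k_l}$ against the modulus of the pullback to $S^\bfj_\delta$ of $dz_1\wedge\cdots\wedge dz_{m+n}\wedge d\bar\bfz_{\hi}$.

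First I would introduce polar coordinates $z_l=r_l e^{i\theta_l}$ for $l\le m$ and keep $z_{m+1},\dots,z_{m+n}$ Euclidean, so that $S^\bfj_\delta=\{\prod_{l=1}^m r_l^{j_l}=\delta,\ r_l\le 1\}$ and the defining relation differentiates to $\sum_{l\le m}j_l\,dr_l/r_l=0$ on $S^\bfj_\delta$. The crucial observation is the shape of the pullback: for each $l\le m$ with $l\neq i$ both $dz_l$ and $d\bar z_l$ occur and contribute a factor proportional to $r_l\,dr_l\wedge d\theta_l$; for $l>m$ one gets the Euclidean area form; and for the distinguished index $i$ only $dz_i=e^{i\theta_i}(dr_i+ir_i\,d\theta_i)$ occurs. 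Using the defining relation to solve for $dr_i$ in terms of the remaining $dr_l$, and noting that every such $dr_l$ already appears in the product $\prod_{l\neq i}(dr_l\wedge d\theta_l)$, the $dr_i$-contribution is annihilated and only the $ir_i\,d\theta_i$ piece survives. Thus, up to a factor of modulus one, the pullback equals $\big(\prod_{l=1}^m r_l\big)\,\prod_{l\neq i}(dr_l\wedge d\theta_l)\wedge d\theta_i\wedge\prod_{l>m}(dx_l\wedge dy_l)$, so the coefficient of this coordinate volume form is exactly $\prod_{l=1}^m r_l$. After integrating out the angles and the bounded variables $z_{l>m}$, everything is controlled by the radial integral
\[
I(\delta):=\int_{R_\delta}\Big(\prod_{l=1}^m|\ln r_l^2|^{k_l}\,r_l\Big)\prod_{l\neq i}dr_l,\qquad r_i=\Big(\delta\big/\textstyle\prod_{l\neq i}r_l^{j_l}\Big)^{1/j_i},
\]
over $R_\delta=\{r_l\in(0,1],\ \prod_{l\neq i}r_l^{j_l}\ge\delta\}$.

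To estimate $I(\delta)$ I would substitute $s_l=\ln r_l$, $t_l=-s_l\ge0$, and set $T:=-\ln\delta\to+\infty$; the constraint becomes $\sum_l j_lt_l=T$, whence $t_i=(T-\sum_{l\neq i}j_lt_l)/j_i\le T/j_i$. Carrying out the change of variables gives
\[
I(\delta)\le C\,T^{k_i}\,e^{-T/j_i}\int_{\{t_l\ge0,\ \sum_{l\neq i}j_lt_l\le T\}}\ \prod_{l\neq i}t_l^{k_l}\,e^{-(2-j_l/j_i)t_l}\,dt_l,
\]
a Laplace-type integral over a simplex. This is where the main difficulty lies: when $j_l\ge 2j_i$ the exponent $2-j_l/j_i\le0$, so the integrand \emph{grows} in $t_l$ and one cannot simply drop the simplex constraint to decouple the variables (the resulting integral would diverge). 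I intend to resolve this by retaining the constraint and bounding the product of exponentials by $\exp\big(\max\sum_{l:\,j_l>2j_i}(j_l/j_i-2)t_l\big)$, where the maximum is over the simplex $\sum_l j_lt_l\le T$; being linear, it is attained at a vertex and equals $T\max_l(1/j_i-2/j_l)$. Combined with the prefactor $e^{-T/j_i}$ this produces a net exponential decay $e^{-2T/j_{l^\ast}}$ for the extremal index $l^\ast$ (and even faster decay $e^{-T/j_i}$ if no $j_l$ exceeds $2j_i$), while the simplex volume and the polynomial weights $t_l^{k_l}$ contribute only a power of $T$. Hence $I(\delta)\le C\,T^{N}e^{-cT}=C\,(\ln(1/\delta))^{N}\delta^{\,c}\to0$ for some $c>0$, and the second identity follows by the identical computation with the roles of $dz_i$ and $d\bar z_i$ exchanged.
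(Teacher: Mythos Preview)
Your argument is correct, and the reduction is the same one the paper carries out: parametrize $S_\delta^{\bfj}$ by singling out $z_i$ (the paper takes $i=1$ without loss of generality), observe that on the level set the $dr_i$ part of $dz_i$ is killed by the already-present $dr_l$'s so only $r_i\,d\theta_i$ survives, and arrive at a radial integrand with density $\prod_{l\le m} r_l=\delta^{1/j_i}\prod_{l\neq i}\rho_l^{1-j_l/j_i}$. The only divergence is in the final estimate. The paper first bounds all logarithmic factors by $(\ln\delta)^{|\bfk|}$ and then shows
\[
\lim_{\delta\to0}\;\delta^{1/j_i}(\ln\delta)^{r}\int_{E_\delta}\prod_{l\neq i}\rho_l^{\,1-j_l/j_i}\,d\rho_l=0
\]
by induction on $m$, integrating out one radial variable at a time with a dichotomy $j_l\le 2j_i$ (the inner integral produces at worst an extra logarithm) versus $j_l>2j_i$ (the inner integral costs a negative power of $\delta$ but shortens the product, with exponents rescaled to $1-2j_{l'}/j_l$). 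Your substitution $t_l=-\ln\rho_l$ together with the vertex bound for the linear exponent on the simplex is a clean, induction-free alternative that also reads off the explicit rate $c=2/\max_l j_l$. Either route works; the paper's recursion dovetails with the inductive organization of the surrounding lemmas, while your direct simplex bound is self-contained and arguably more transparent about where the decay comes from.
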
 
	\begin{proof}

		Without loss of generality, we may assume that $i = 1$. Recall that by the definition of $S_\delta^{\mathbf{j}}$, we have $S_\delta^{\mathbf{j}} \subset B$. Consider the following parametrization of $S_\delta^{\mathbf{k}}$:
		\[
		(\theta, \bfz_{\hat{1}}) \mapsto \Bigg( \bigg( \frac{\delta}{\prod_{s=2}^{m} \rho_s^{j_s}} \bigg)^{1/j_1} e^{i\theta}, \bfz_{\hat{1}} \Bigg),
		\]
		where $z_s = \rho_s e^{i\theta_s}$. With this parametrization, one verifies that the integrals appearing in \eqref{lema3eq1} are bounded, up to a constant, by
		\begin{equation} \label{lemaeq2}
			\delta^{1/j_1} \left( \ln \delta \right)^{|\mathbf{k}|} \cdot \int_{E^\bfj_\delta} \prod_{s=2}^{m} \rho_s^{1 - j_s / j_1} \, d\rho_s,
		\end{equation}
		where for each $\mathbf{r}=(r_1,\cdots,r_m)\in\R_{>0}^{m},$
		\[
		E^{\mathbf{r}}_\delta := \Big\{ (\rho_2, \dots, \rho_m) \in\R_{\geq0}^m: \delta \leq \prod_{s=2}^{m} \rho_s^{r_s} \leq 1,\quad \sup_{s\in\{2,\cdots,m\}} \rho_s \leq 1 \Big\}.
		\]
		
		We now prove by induction that for any $r \geq 0$ and $\mathbf{r}=(r_1, \dots, r_m) \in \mathbb{R}_{>0}^m$, the following limit holds:
		\begin{equation} \label{lemaeq3}
			\lim_{\delta \to 0} \delta^{1/r_1} (\ln \delta)^r \cdot \int_{E^{\mathbf{r}}_\delta} \prod_{s=2}^{m} \rho_s^{1 - r_s / r_1} \, d\rho_s = 0.
		\end{equation} 
		The base case $m=2$ follows from a direct computation.
		
		Let
		\[
		E'_\delta := \Big\{ (\rho_3, \dots, \rho_m)\in\R_{\geq0}^m : \delta \leq \prod_{s=3}^{m} \rho_s^{r_s} \leq 1,\quad \sup_{s\in\{3,\cdots,m\}} \rho_s \leq 1 \Big\}.
		\]
		We distinguish two cases:
		\begin{enumerate}[(a)]
			\item Case $r_2 \leq 2r_1$:\\
			By Fubini's theorem,
			\[
			\begin{aligned}
				&\delta^{1/r_1} (\ln \delta)^r \cdot \int_{E^{\mathbf{r}}_\delta} \prod_{s=2}^{m} \rho_s^{1 - r_s / r_1} \, d\rho_s \\
				&\leq \delta^{1/r_1} (\ln \delta)^r \cdot \int_{E'_\delta} \prod_{s=3}^{m} \rho_s^{1 - r_s / r_1} \, d\rho_s \int_{\delta^{1/r_2}}^1 \rho_2^{1 - r_2 / r_1} \, d\rho_2 \\
				&\leq C \delta^{1/r_1} (\ln \delta)^{r+1} \cdot \int_{E'_\delta} \prod_{s=3}^{m} \rho_s^{1 - r_s / r_1} \, d\rho_s.
			\end{aligned}
			\]
			By the induction hypothesis, the right-hand side tends to zero as $\delta \to 0$, hence the limit in \eqref{lemaeq3} holds in this case.
			
			\item Case $r_2 > 2r_1$:\\
			Again applying Fubini's theorem,
			\[
			\begin{aligned}
				&\delta^{1/r_1} (\ln \delta)^r \cdot \int_{E^{\mathrm{r}}_\delta} \prod_{s=2}^{m} \rho_s^{1 - r_s / r_1} \, d\rho_s \\
				&\leq \delta^{1/r_1} (\ln \delta)^r \cdot \int_{E'_\delta} \prod_{s=3}^{m} \rho_s^{1 - r_s / r_1} \, d\rho_s \int_{\left( \delta / \prod_{s=3}^{m} \rho_s^{r_s} \right)^{1/r_2}}^{+\infty} \rho_2^{1 - r_2 / r_1} \, d\rho_2 \\
				&\leq C_1 \delta^{2/r_2} (\ln \delta)^r \cdot \int_{E'_\delta} \prod_{s=3}^{m} \rho_s^{1 - 2r_s / r_2} \, d\rho_s 
			\end{aligned}
			\]
			By the induction hypothesis, the right-hand side vanish as $\delta \to 0$. Therefore, the desired limit in \eqref{lemaeq3} holds in this case as well.
		\end{enumerate}
		Now we complete the induction and establish \eqref{lemaeq3}.
		
	\end{proof}
	Similar to \cite[Lemma 6.3]{herrera1971residues}, we have
	\begin{lem}\label{lema4}
		Let $g \in C^\infty\left(\mathbb{C}^{m+n}\right)$ be independent of $z_l$, and $\bfi,\bfk \in \Z_{\geq0}^m,\bfj\in\Z_{>0}^m$. Then
		for all $r, s \in \Z_{\geq0}$ such that $r+s<i_l$, and each $\delta>0$\begin{enumerate}[(1)]
			%\item $\int_{B_\delta^\bfj} z_l^r \bar{z}_l^s \bfz^{-\bfi}\ln(\bfz,\bfk) g d\bfz \wedge d \bar{\bfz}=0;$
			\item $\int_{S_\delta^\bfj} z_l^r \bar{z}_l^s \bfz^{-\bfi}\ln(\bfz,\bfk) g d \bar{z}_l \wedge d \bfz_{\hat{l}} \wedge d \bar{\bfz}_{\hat{l}}=0$;
            \item If $s\geq1$, then  $\int_{S_\delta^\bfj} z_l^r \bar{z}_l^s \bfz^{-\bfi}\ln(\bfz,\bfk) g d {z}_l \wedge d \bfz_{\hat{l}} \wedge d \bar{\bfz}_{\hat{l}}=0$.
		\end{enumerate}
		Here $d\bfz:=dz_1\wedge\cdots \wedge dz_{m+n}$ and $d\bar\bfz:=\overline{d\bfz}.$

	\end{lem}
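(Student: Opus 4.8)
The plan is to integrate out the angular variable of $z_l$ first and to invoke the orthogonality of the characters $e^{ik\theta}$ on the circle. Writing $z_l=\rho_l e^{i\theta_l}$, note that $g$ is independent of $z_l$ and $\bar z_l$, that the factor $\bfz^{-\bfi}$ contributes to the $z_l$-sector only the monomial $z_l^{-i_l}=\rho_l^{-i_l}e^{-ii_l\theta_l}$, and that the logarithmic factor $\ln(\bfz,\bfk)$ depends on $z_l$ only through $\ln|z_l|^2=2\ln\rho_l$, hence is phase-free in $\theta_l$. Consequently the entire $\theta_l$-dependence of the coefficient is carried by the phase of $z_l^r\bar z_l^s z_l^{-i_l}$, namely $e^{i(r-s-i_l)\theta_l}$. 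The hypothesis $r+s<i_l$ gives $r-s\le r<i_l$, so the exponent $r-s-i_l$ is strictly negative, in particular nonzero; this single arithmetic fact drives both statements.

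For (1), observe that both the polydisc $B$ and the region $\{|\bfz^\bfj|\in(\delta,1)\}$ are invariant under the rotations $z_l\mapsto e^{i\phi}z_l$ in the $z_l$-plane, since $|\bfz^\bfj|=\prod_s\rho_s^{j_s}$ does not see $\theta_l$. Thus $B_\delta^\bfj$ fibers over the remaining radii and angles with the full circle $\theta_l\in[0,2\pi)$ as fiber. In the volume form $d\bfz\wedge d\bar{\bfz}$ the factor $dz_l\wedge d\bar z_l$ equals $-2i\rho_l\,d\rho_l\wedge d\theta_l$, so by Fubini the inner integral is
\[
\Big(\int_0^{2\pi}e^{i(r-s-i_l)\theta_l}\,d\theta_l\Big)\times(\text{factors independent of }\theta_l),
\]
and the bracket vanishes because $r-s-i_l\neq0$. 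This proves (1).

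For (2), I would parametrize $S_\delta^\bfj$ analogously: the constraint $\prod_s\rho_s^{j_s}=\delta$ involves only radii, so the rotation $z_l\mapsto e^{i\phi}z_l$ again preserves $S_\delta^\bfj$ and exhibits it as a circle bundle with fiber $\theta_l$. The only new point is that the form now carries $d\bar z_l$ in place of $dz_l\wedge d\bar z_l$; writing $d\bar z_l=e^{-i\theta_l}(d\rho_l-i\rho_l\,d\theta_l)$, the $d\rho_l$-part contributes no $d\theta_l$ and hence dies under fiber integration, while the surviving $d\theta_l$-part carries the extra phase $e^{-i\theta_l}$ (the remaining wedge $d\bfz_{\hat{l}}\wedge d\bar{\bfz}_{\hat{l}}$ involves only the variables $z_s$, $s\ne l$, so is $\theta_l$-independent). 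The accumulated phase is therefore $e^{i(r-s-i_l-1)\theta_l}$, and since $r-s-i_l<0$ the exponent $r-s-i_l-1$ is again strictly negative, so $\int_0^{2\pi}e^{i(r-s-i_l-1)\theta_l}\,d\theta_l=0$. Fiber integration thus annihilates the form over the base, giving (2).

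The computation is elementary, and I expect no genuine analytic difficulty; the only care needed is bookkeeping. Concretely, the step to get right is the fibered (Fubini) description of $B_\delta^\bfj$ and $S_\delta^\bfj$ over the $\theta_l$-circle and, in (2), the correct extraction of the $d\theta_l$-component of $d\bar z_l$ together with its extra phase $-1$ in the exponent, which must be checked against the hypothesis $r+s<i_l$. Once these phases are tracked, both identities follow immediately from the orthogonality $\int_0^{2\pi}e^{ik\theta}\,d\theta=0$ for $k\neq0$.
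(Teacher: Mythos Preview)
Your proposal is correct and follows essentially the same approach as the paper: both arguments fiber the domain over the $\theta_l$-circle (using that the constraints $|\bfz^\bfj|>\delta$ and $|\bfz^\bfj|=\delta$ are $\theta_l$-invariant), isolate the total phase $e^{i(r-s-i_l)\theta_l}$ (respectively $e^{i(r-s-i_l-1)\theta_l}$), and conclude via $\int_0^{2\pi}e^{ik\theta}\,d\theta=0$ for $k\neq 0$, the hypothesis $r+s<i_l$ being exactly what forces the exponent to be nonzero. The paper writes this out slightly more explicitly as an iterated integral over $B(\bfz_{\hat l},\delta)$ (for (1)) and over the circle $\{|z_l^{j_l}|=\delta/|\bfz_{\hat l}^{\bfj_{\hat l}}|\}$ (for (2)), but the content is the same.
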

	\begin{proof}
		Let $B_{\delta,\hat{l}}^{\bfj}:=\Big\{\bfz_{\hat{l}}\in \C^{m+n-1}:\sup\{|z_{a}|:a\neq l\}<1, \delta<|\bfz_{\hat{l}}^{\bfj_{\hat{l}}}|<1\Big\}.$

		%For each $\bfz_{\hat{l}} \in B_{\delta,\hat{l}}^\bfj$, define
		%$$
		%B(\bfz_{\hat{l}}, \delta)=\left\{z_l \in \mathbb{C} : \delta /|\bfz_{\hat{l}}^{{\bfj}_{\hat{l}}}|<|z_l^{j_l}|<1\right\},
		%$$
		%The partial integrals
		%$$\ba
		%I(\bfz_{\hat{l}}, \delta)&=\int_{B(\bfz_{\hat{l}}, \delta)} z_l^{r-i_l} \bar{z}_l^s\big(\ln(|z_l|^2)\big)^{k_l} d z_l \wedge d \bar{z}_l\\
		%&=\int_{\left(\delta /|\bfz_{\hat{l}}^{\bfj_{\hat{l}}}|\right)^{\frac{1}{j_l}}}^1 \rho^{r-i_l+s+1} \big(\ln(\rho^2)\big)^{k_l} d \rho \int_0^{2 \pi} e^{i \theta\left(r-i_l-s\right)} d \theta
		%\ea$$
		%are zero for all $\bfz_{\hat{l}} \in B_{\delta,\hat{l}}^\bfj$, because $r-s \leq r+s<i_l$. Then the integral in (1) is zero, since it is equal to
		
		%$$
		%\int_{B_{\delta,\hat{l}}^{\bfj}} I(\bfz_{\hat{l}}, \delta) \cdot \bfz_{\hat{l}}^{-\bfi_{\hat{l}}} \ln(\bfz_{\hat{l}},\bfk_{\hat{l}})g d \bfz_{\hat{l}} \wedge d \bar{\bfz}_{\hat{l}}.
		%$$

		To compute (1), one integrates over $B_{\delta,\hat{l}}^\bfj$ the differential form $\bfz_{\hat{l}}^{-\bfi_{\hat{l}}}\ln(\bfz_{\hat{l}},\bfk_{\hat{l}}) \cdot g d \bfz_{\hat{l}}$ $\wedge d \bar{\bfz}_{\hat{l}}$ times the integrals
		$$
		\int_{\left\{z_l:|z_l^{j_l}|=\delta /|\bfz_{\hat{l}}^\bfj|\right\}} z_l^{r-i_l} \bar{z}_l^s\big(\ln(|z_l|^2)\big)^{k_l} d \bar{z}_l, \quad \bfz_{\hat{l}}\in B_{\delta,\hat{l}}^\bfj,
		$$
		which are all zero if $s\geq0$, since in such case $r-i_l-s-1\leq r-i_l+s-1 <0$.
		We deduce that the integral in (1) is zero.

        For (2), just note that if $s\geq1$, $r-i_l-s+1\leq r-i_1+s-2+1<0$.
	\end{proof}
	
	Similar to \cite[Lemma 6.2]{herrera1971residues}, 
	\begin{lem}\label{lem-a5}
		Let $k\in C^\infty(\C^{m+n})$, and $\bfi\in\Z_{\geq0}^m$. There exists a decomposition
		$$
		\begin{aligned}
			& k(\bfz,\bar{\bfz})=\sum_{r+s<i_l} z_{l}^r \bar{z}_{l}^s g_{r, s}^l(\bfz_{\hat{l}}, \bar{\bfz}_{\hat{l}})+K^l(\bfz,\bar{\bfz}), \\
			& K^l(\bfz,\bar{\bfz})=\sum_{\bfk+\bfj=\bfi}\bfz^\bfj \cdot \bar{\bfz}^\bfk \cdot K^l_{\bfj, \bfk}(\bfz,\bar{\bfz})
		\end{aligned}
		$$
		such that $g_{r, s}^l$ and $K_{\bfj, \bfk}^l$ are smooth functions.
	\end{lem}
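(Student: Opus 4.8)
The plan is to read the asserted decomposition as an ordinary Taylor expansion in the single complex variable $z_l$ (equivalently in the real pair $x_l:=\Re z_l$, $y_l:=\Im z_l$), carried out with all remaining coordinates $\bfz_{\hat{l}}$ treated as smooth parameters, and to secure the smoothness of the remainder coefficients through the integral form of Taylor's theorem. I would use the Wirtinger derivatives $\partial_{z_l}=\tfrac{1}{2}(\partial_{x_l}-i\,\partial_{y_l})$ and $\partial_{\bar{z}_l}=\tfrac{1}{2}(\partial_{x_l}+i\,\partial_{y_l})$, together with the elementary fact that the monomials $z_l^{r}\bar{z}_l^{s}$ with $r+s=d$ span the same space as the real monomials of total degree $d$ in $(x_l,y_l)$. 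Thus the condition $r+s<i_l$ in the statement is exactly ``total real order $<i_l$,'' and the remainder will automatically be divisible by monomials of combined degree $i_l$ in $(z_l,\bar{z}_l)$.

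First I would introduce the radial scaling in the $l$-th slot. Writing $N:=i_l$, for a fixed value of $(\bfz,\bar{\bfz})$ set $\phi(t):=k(\dots,t z_l,\dots;\dots,t\bar{z}_l,\dots)$ for $t\in[0,1]$, so that $\phi$ is smooth, $\phi(0)=k|_{z_l=0}$, and $\phi(1)=k$ (the scaling preserves the domain, since $\C^{m+n}$ is invariant under $z_l\mapsto t z_l$). A direct chain-rule computation gives $\phi^{(p)}(t)=\sum_{r+s=p}\binom{p}{r}z_l^{r}\bar{z}_l^{s}\,(\partial_{z_l}^{r}\partial_{\bar{z}_l}^{s}k)$ evaluated at the scaled argument. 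Applying Taylor's theorem with integral remainder at order $N$,
\[
\phi(1)=\sum_{p=0}^{N-1}\frac{\phi^{(p)}(0)}{p!}+\frac{1}{(N-1)!}\int_0^1(1-t)^{N-1}\phi^{(N)}(t)\,dt,
\]
and substituting the formula for $\phi^{(p)}(0)$ into the first sum produces precisely the polar part $\sum_{r+s<i_l}z_l^{r}\bar{z}_l^{s}\,g^l_{r,s}(\bfz_{\hat{l}},\bar{\bfz}_{\hat{l}})$ with $g^l_{r,s}=\tfrac{1}{r!\,s!}(\partial_{z_l}^{r}\partial_{\bar{z}_l}^{s}k)\big|_{z_l=0}$. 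Since these coefficients are evaluated at $z_l=0$, they are smooth functions of $\bfz_{\hat{l}}$ alone, as required.

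The remainder integral then expands as $\sum_{r+s=i_l}z_l^{r}\bar{z}_l^{s}\,K^l_{r,s}$ with $K^l_{r,s}:=\tfrac{1}{(N-1)!}\binom{N}{r}\int_0^1(1-t)^{N-1}(\partial_{z_l}^{r}\partial_{\bar{z}_l}^{s}k)(\dots,t z_l,\dots)\,dt$, which is exactly the asserted divisible form for the remainder $K^l$, each term carrying combined degree $i_l$ in the variable $z_l$. The only point requiring genuine (if standard) justification, and where I expect the sole real work to lie, is the smoothness of each $K^l_{r,s}$ across the locus $z_l=0$: this follows from differentiation under the integral sign, since the integrand is jointly smooth in $(t,\bfz,\bar{\bfz})$ and the $t$-integration runs over the compact interval $[0,1]$. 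I emphasize that knowing only that the remainder vanishes to order $i_l$ would not by itself produce a smooth quotient; it is the integral representation that guarantees $K^l_{r,s}\in C^\infty$. Assembling the polar part and the remainder yields the desired decomposition.
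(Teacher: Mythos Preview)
Your argument is correct and is exactly the Taylor-expansion proof the paper has in mind; the paper's own proof is a one-line pointer to Taylor expansion and to \cite{herrera1971residues}. One remark: your remainder is $\sum_{r+s=i_l} z_l^{r}\bar z_l^{s}K^l_{r,s}$ in the single variable $z_l$, whereas the displayed statement writes the remainder with multi-index notation $\bfj+\bfk=\bfi$; this appears to be a carry-over from the cited reference (which sums over all $l$ and then obtains a genuinely multi-index remainder), and the single-variable form you establish is what is actually required and used in the subsequent propositions.
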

	\begin{proof}
		The lemma is nearly identical to \cite[Lemma 6.2]{herrera1971residues}, except that it does not involve summation over $l = 1, \dots, m$. Consequently, the expression $g_{r,s}^l$ differs slightly from that in \cite[Lemma 6.2]{herrera1971residues}.
		
		The proof also follows similarly, based on a Taylor expansion.
		
	\end{proof}
	
	Proceeding as in \cite[Proposition 6.5]{herrera1971residues}, using \Cref{lema3}-\Cref{lem-a5}, we have:
	\begin{prop}\label{prop-a5}
		For any $\bfi,\bfk \in \Z_{\geq0}^m,\bfj\in\Z_{>0}^m$ and $g \in C^\infty\left(\mathbb{C}^{m+n}\right)$, %the limit
		%\be\label{prop-a-eq1}
		%\lim _{\delta \rightarrow 0} \int_{B_\delta^\bfj} \bfz^{-\bfi}  \ln(\bfz,\bfk)g d \bfz \wedge d \bar{\bfz}
		%\ee
		%exists.
	%Moreover,
	we have
        \be\label{prop-a-eq3}
		\lim _{\delta \rightarrow 0} \int_{S_\delta^\bfj} \bfz^{-\bfi} \ln(\bfz,\bfk)g d \bar{z}_l \wedge d \bfz_{\hat{l}} \wedge d \bar{\bfz}_{\hat{l}}=0
		\ee
and
        \be\label{prop-a-eq4}
		\lim _{\delta \rightarrow 0} \int_{S_\delta^\bfj} \bfz^{-\bfi} \ln(\bfz,\bfk)\bar{z}_lg d {z}_l \wedge d \bfz_{\hat{l}} \wedge d \bar{\bfz}_{\hat{l}}=0.
		\ee

	\end{prop} 
Thus, we obtain the following.

\begin{cor}\label{cor-a5}
	For any $\bfi,\bfk \in \Z_{\geq 0}^m$, $\bfj \in \Z_{>0}^m$, and $g \in C_c^\infty(\C^{m+n})$, we have
	\be\label{cor-a-eq3}
		\lim_{\delta \to 0} 
		\int_{|\bfz^\bfj|=\delta} 
		\bfz^{-\bfi}\,\ln(\bfz,\bfk)\, g\;
		d\bar z_l \wedge d\bfz_{\hat l} \wedge d\bar{\bfz}_{\hat l}
		=0,
	\ee
	and
	\be\label{cor-a-eq4}
		\lim_{\delta \to 0} 
		\int_{|\bfz^\bfj|=\delta} 
		\bfz^{-\bfi}\,\ln(\bfz,\bfk)\,\bar z_l g\;
		d z_l \wedge d\bfz_{\hat l} \wedge d\bar{\bfz}_{\hat l}
		=0.
	\ee
\end{cor}

\begin{proof}
	We may assume $g \in C_c^\infty(B)$.  
	Indeed, if not, there exists $N>1$ such that $\mathrm{supp}(g) \subset B_N:=\{\bfz\in\C^{m+n}:\sup_{1\le j\le m+n} |z_j|<N\}$.  
	In this case, one verifies that \Cref{prop-a5} continues to hold when the set $B$  is replaced by $B_N$.

	Now assume $g \in C_c^\infty(B)$. Then
	\[
		\int_{|\bfz^\bfj|=\delta} 
		\bfz^{-\bfi}\,\ln(\bfz,\bfk)\, g\;
		d\bar z_l \wedge d\bfz_{\hat l} \wedge d\bar{\bfz}_{\hat l}
		=
		\int_{S_\delta} 
		\bfz^{-\bfi}\,\ln(\bfz,\bfk)\, g\;
		d\bar z_l \wedge d\bfz_{\hat l} \wedge d\bar{\bfz}_{\hat l},
	\]
	so \eqref{cor-a-eq3} follows directly.  
	The argument for \eqref{cor-a-eq4} is analogous.
\end{proof}

	\begin{prop}\label{prop-a7}
		Let $f$ be a real nowhere vanishing smooth function on $\C^{m+n}$, and let 
$\bfj \in \Z_{>0}^m$, $\bfi,\bfk \in \Z_{\geq 0}^m$. 
Then, for any $g \in C_c^\infty(\C^{m+n})$, we have
\be\label{prop-a6-eq2}
	\lim_{\delta \to 0}
	\int_{|f\bfz^\bfj|=\delta}
	\bfz^{-\bfi}\,\ln(\bfz,\bfk)\, g\;
	d\bar z_1 \wedge d\bfz_{\hat 1} \wedge d\bar{\bfz}_{\hat 1}
	=0.
\ee

	\end{prop}
	\begin{proof}
		
		\def\supp{\mathrm{supp}}
		
		We may assume that $|f|>1$ on a precompact open neighborhood $K$ containing the support of $g$.  

Let $h$ be a real smooth function such that $h^{j_1}=f$ on $K$, and set
\[
\Lambda := 3\sup_{\bfz\in K} |\nabla h|(\bfz).
\]
For $1\le j\le m$, consider
\[
U_j := \{\bfz\in\C^{m+n} : |z_j| < \Lambda^{-1}\}.
\]

Consider a map $\lambda_j : U_j \cap K \to \C^{m+n}$ by
\[
\zeta_j = z_j\, h(\bfz,\bar\bfz), \qquad 
\zeta_l = z_l \quad (l\neq j).
\]
Then the map 
\[
    \lambda_j : U_j \cap K \;\longrightarrow\; \lambda_j(U_j \cap K)
\]
is a diffeomorphism. (Indeed, by our construction of \(U_j\), the Jacobian determinant of \(\lambda_j\) is nonzero on \(U_j \).) Suppose its inverse $\mu_j : \lambda_j(U_j\cap K) \to U_j\cap K$ is given by
\[
z_j = \zeta_j\, \tilde{h}_j(\boldsymbol\zeta,\bar{\boldsymbol\zeta}),
\qquad 
z_l = \zeta_l \quad (l\neq j),
\]
where $\tilde{h}_j$ is a smooth nowhere vanishing function on $\lambda_j(U_j\cap K)$.

Let
\[
\delta_0 := \Lambda^{-|\bfj|}, \qquad
K_{\delta_0} := K \cap \{\bfz\in\C^{m+n} : |f\bfz^\bfj| \le \delta_0\}.
\]
Then
\be\label{contain}
	K_{\delta_0} \subset \bigcup_{j=1}^m U_j.
\ee

Let $\{\phi_j\}_{j=1}^m$ be a partition of unity subordinate to $\{U_j\}_{j=1}^m$.  
By \eqref{contain}, if $\delta < \delta_0$ then
\[
\int_{|f\bfz^\bfj|=\delta} 
	\bfz^{-\bfi}\ln(\bfz,\bfk) g\;
	d\bar z_1 \wedge d\bfz_{\hat 1} \wedge d\bar{\bfz}_{\hat 1}
=
\sum_{j=1}^m
\int_{|f\bfz^\bfj|=\delta} 
	\bfz^{-\bfi}\ln(\bfz,\bfk)\, \phi_j g\;
	d\bar z_1 \wedge d\bfz_{\hat 1} \wedge d\bar{\bfz}_{\hat 1}.
\]

It therefore suffices to show that
\[
\lim_{\delta\to 0}
\int_{|f\bfz^\bfj|=\delta}
	\bfz^{-\bfi}\ln(\bfz,\bfk) (\phi_j g)\;
	d\bar z_1 \wedge d\bfz_{\hat 1} \wedge d\bar{\bfz}_{\hat 1}
=0.
\]
We prove only the case $j=1$.

Changing variables via $\bfz=\mu_1(\boldsymbol{\zeta},\bar{\boldsymbol{\zeta}})$ yields
\be\label{change-coord}
\begin{aligned}
&\int_{|f\bfz^\bfj|=\delta}
	\bfz^{-\bfi}\ln(\bfz,\bfk)\, (\phi_1 g)(\bfz,\bar\bfz)\;
	d\bar z_1 \wedge d\bfz_{\hat 1} \wedge d\bar{\bfz}_{\hat 1}
\\
&= 
\int_{|\boldsymbol\zeta^\bfj|=\delta}
	{\boldsymbol\zeta}^{-\bfi}
	\sum_{l=0}^{i_1}
	\binom{i_1}{l}
	(\ln|\tilde h|^2)^l
	(\ln|\zeta_1|^2)^{i_1-l}
	\ln({\boldsymbol\zeta}_{\hat 1},\bfk_{\hat 1})
	\,\tilde{k}_1(\boldsymbol\zeta,\bar{\boldsymbol\zeta})
	\ d\bar\zeta_1 \wedge d\boldsymbol\zeta_{\hat 1} \wedge d\bar{\boldsymbol\zeta}_{\hat 1}
\\
&\quad+
\int_{|\boldsymbol\zeta^\bfj|=\delta}
	{\boldsymbol\zeta}^{-\bfi}
	\sum_{l=0}^{i_1}
	\binom{i_1}{l}
	(\ln|\tilde h|^2)^l
	(\ln|\zeta_1|^2)^{i_1-l}
	\ln({\boldsymbol\zeta}_{\hat 1},\bfk_{\hat 1})
	\,\bar\zeta_1\,\tilde{k}_2(\boldsymbol\zeta,\bar{\boldsymbol\zeta})
	\ d\zeta_1 \wedge d\boldsymbol\zeta_{\hat 1} \wedge d\bar{\boldsymbol\zeta}_{\hat 1},
\end{aligned}
\ee
where
\[
\tilde{k}_1(\boldsymbol\zeta,\bar{\boldsymbol\zeta})
=
\tilde{h}_1^{-i_1}(\phi_1 g)(\mu,\bar\mu)
\left(
	\bar{\tilde h}_1 + \bar\zeta_1 \frac{\partial\bar{\tilde h}_1}{\partial \bar\zeta_1}
\right),
\qquad
\tilde{k}_2(\boldsymbol\zeta,\bar{\boldsymbol\zeta})
=
\tilde{h}_1^{-i_1}(\phi_1 g)(\mu,\bar\mu)
\left(
	\frac{\partial\bar{\tilde h}_1}{\partial \zeta_1}
\right).
\]

By \Cref{cor-a5} and \eqref{change-coord}, we conclude that
\[
\lim_{\delta\to 0}
\int_{|f\bfz^\bfj|=\delta}
	\bfz^{-\bfi}\ln(\bfz,\bfk)\,\phi_1 g\;
	d\bar z_1 \wedge d\bfz_{\hat 1} \wedge d\bar{\bfz}_{\hat 1}
=0.
\]

	\end{proof}
	\def\bfl{{\mathbf{l}}}
	\begin{prop}[\Cref{thm-a2}]\label{prop-a8}
Let $\bfi,\bfk \in \Z_{\ge 0}^m$ and $\bfj \in \Z_{>0}^m$.  
Let $f \in C^\infty(\C^{m+n})$ be a real nowhere–vanishing function.  
Then the limit
\be\label{prop-a8-eq0}
	\lim_{\delta \to 0}
	\int_{|f\bfz^\bfj|>\delta}
		\bfz^{-\bfi}\, \ln(\bfz,\bfk)\, g\;
		d\bfz \wedge d\bar{\bfz}
\ee
exists and is independent of the choice of $f$ and $\bfj$, for every
$g \in C_c^\infty(\C^{m+n})$.
\end{prop}

\begin{proof}
If $0\le i_l\le 1$ for $1\leq l \leq m$, then 
\[
\bfz^{-\bfi}\,\ln(\bfz,\bfk)\,g
\]
is Lebesgue integrable on $\C^{m+n}$, so the limit \eqref{prop-a8-eq0} exists and is independent of $f$ and $\bfj$.

We introduce a lexicographic order on $\Z_{\ge0}\times\Z_{\ge0}$ by
\[
(p_1,p_2)\le (q_1,q_2)
\quad\Longleftrightarrow\quad
\big(p_1<q_1\big)\ \text{ or }\ 
\big(p_1=q_1 \text{ and } p_2\le q_2\big).
\]
Assume inductively that the proposition holds for all pairs
\((|\bfi|,|\bfk|)<(p_1,p_2)\).  
We prove it for \((|\bfi|,|\bfk|)=(p_1,p_2)\).

The case $0\le i_l\le1$ is already handled above, so we may assume
that $i_1\ge2$.

Consider
\[
b \;=\; \frac{1}{1-i_1}\,\bfz^{-\bfi} z_1\,
d\bar z_1\wedge d\bfz_{\hat1}\wedge d\bar{\bfz}_{\hat1}.
\]

We compute:
\[
\begin{aligned}
&\quad\int_{|f\bfz^\bfj|>\delta}
\bfz^{-\bfi}\ln(\bfz,\bfk)\,g\;d\bfz\wedge d\bar{\bfz}
\\[0.3em]
&=\int_{|f\bfz^\bfj|>\delta}
d\left(\bfz^{-\bfi}\ln(\bfz,\bfk)\,g\, b\right)
-
\int_{|f\bfz^\bfj|>\delta}
(\partial_{z_1}\ln(\bfz,\bfk))\,g\;dz_1\wedge b
-
\int_{|f\bfz^\bfj|>\delta}
\ln(\bfz,\bfk)\,\partial_{z_1}g\;dz_1\wedge b
\\
&=\int_{|f\bfz^\bfj|=\delta}
\bfz^{-\bfi}\ln(\bfz,\bfk)\,g\, b
-\int_{|f\bfz^\bfj|>\delta}
(\partial_{z_1}\ln(\bfz,\bfk))\,g\;dz_1\wedge b
\;-\;
\int_{|f\bfz^\bfj|>\delta}
\ln(\bfz,\bfk)\,\partial_{z_1}g\;dz_1\wedge b
\\[0.3em]
&=: I_1(\delta)+I_2(\delta)+I_3(\delta).
\end{aligned}
\]

By \Cref{prop-a7}, 
\[
\lim_{\delta\to0}I_1(\delta)=0.
\]

For $I_2(\delta)$, the integrand has $(\bfi,\bfk')$–type divisorial 
singularities with $|\bfk'|\le|\bfk|-1$.  
By the induction hypothesis, $\lim_{\delta\to0}I_2(\delta)$ exists and
is independent of $f$ and $\bfj$.

For $I_3(\delta)$, the integrand has $(\bfi',\bfk)$–type divisorial 
singularities with $|\bfi'|\le|\bfi|-1$.  
Again by the induction hypothesis,  
$\lim_{\delta\to0}I_3(\delta)$ exists and is independent of $f$ and $\bfj$.

Since $\Z_{\ge0}\times\Z_{\ge0}$ is well–ordered by the above 
lexicographic order, the claim follows by transfinite induction.
\end{proof}

	\subsection*{Global case}
	Now, we generalize Cauchy principal values to complex manifolds.
	
	\begin{defn}
		Let $M$ be a complex manifold with a simple normal crossing divisor $D$. A coordinate chart \textbf{compatible with} $D$ is an open subset $U\subset M$ together with a biholomorphic map \[\varphi:U\rightarrow V\subset\mathbb{C}^{m+n},\] where $V$ is an open subset of $\mathbb{C}^{m+n}$, and $m,n$ are non-negative integers, such that $\varphi(D\cap U)=(\{0\}\times \mathbb{C}^{n})\cap V$. We say that a property \textbf{holds locally}, if it holds in any coordinate chart compatible with $D$.
	\end{defn}
	
	\begin{defn}\label{divisorial type singularities}
		Let $M$ be a complex manifold with a simple normal crossing divisor $D$. Let $\alpha$ be a differential form on $M\backslash D$. We say $\alpha$ is a differential form with \textbf{divisorial type singularities} along $D$, if $\alpha$ is a differential form with divisorial type singularities locally. Likewise, we say $\alpha$ is \textbf{logarithm-free}, if it is logarithm-free locally.
	\end{defn}
	\begin{rem}
		For holomorphic vector bundles over complex manifolds with simple normal crossing divisors, we can also define sections with divisorial type singularities in a similar manner.
	\end{rem}
	One nice property of divisorial type singularities is that they are stable under pull back:
	\begin{prop}\label{pull back of divisorial type singularities}
		Let $M$ and $N$ be complex manifolds with simple normal crossing divisors $D$ and $D'$ respectively. Assume $F:M\rightarrow N$ is a holomorphic map which satisfies $F^{-1}(D')=D$ as sets, and $\alpha$ is a differential form on $N$ with divisorial type singularities along $D'$. Then $F^{*}(\alpha)$ is a differential form on $M$ with divisorial type singularities along $D$. Moreover, if $\alpha$ is logarithm-free, $F^{*}(\alpha)$ is also logarithm-free.
	\end{prop}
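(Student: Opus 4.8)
The plan is to reduce the statement to a local computation in coordinate charts compatible with the two divisors, since by Definition~\ref{divisorial type singularities} having divisorial type singularities is a local property. Fix $p\in M$ and choose a chart $(V,\mathbf{w})$ on $N$ compatible with $D'$ around $F(p)$, so that $D'\cap V=\{w_1\cdots w_{m'}=0\}$; by continuity pick a chart $(U,\mathbf{z})$ on $M$ compatible with $D$ around $p$ with $F(U)\subset V$, so that $D\cap U=\{z_1\cdots z_m=0\}$. On $V$ we may write $\alpha=\mathbf{w}^{-\mathbf{i}}\ln(\mathbf{w},\mathbf{k})\,\beta$ with $\beta$ a smooth form, and the task is to describe $F^{*}\alpha$ on $U$.

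The key step, which I expect to be the main obstacle, is the local factorization of the pulled-back coordinate functions. Because $F^{-1}(D')=D$ as sets, for each $l=1,\ldots,m'$ the holomorphic function $w_l\circ F$ on $U$ has zero set contained in $\bigcup_{i=1}^{m}\{z_i=0\}$. Working in the ring of convergent power series $\mathbb{C}\{z_1,\ldots,z_{m+n}\}$, which is a UFD, and using that an irreducible hypersurface germ contained in $\bigcup_i\{z_i=0\}$ must coincide with one of the coordinate hyperplanes $\{z_i=0\}$ (an irreducible analytic germ inside a finite union of such germs lies in one of them, and two irreducible germs of equal codimension that are nested must agree), every prime factor of $w_l\circ F$ is an associate of some $z_i$ with $i\le m$. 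Hence, after possibly shrinking $U$,
\[
w_l\circ F=u_l\,\mathbf{z}^{\mathbf{a}_l},\qquad l=1,\ldots,m',
\]
where $u_l$ is a nowhere-vanishing holomorphic function and $\mathbf{a}_l=(a_{l,1},\ldots,a_{l,m})\in\mathbb{Z}_{\geq0}^{m}$. Everything beyond this factorization is bookkeeping.

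Granting the factorization, I would pull back the two singular factors separately. For the polar part, $F^{*}(\mathbf{w}^{-\mathbf{i}})=\bigl(\prod_{l}u_l^{-i_l}\bigr)\,\mathbf{z}^{-\mathbf{b}}$, where $\mathbf{b}=\sum_{l}i_l\mathbf{a}_l\in\mathbb{Z}_{\geq0}^{m}$ and $\prod_{l}u_l^{-i_l}$ is smooth and nowhere vanishing. For the logarithmic part, since $\ln|w_l\circ F|^{2}=\ln|u_l|^{2}+\sum_{i=1}^{m}a_{l,i}\ln|z_i|^{2}$ with $\ln|u_l|^{2}$ smooth, expanding $\prod_{l}(\ln|w_l\circ F|^{2})^{k_l}$ by the multinomial theorem produces a finite sum $\sum_{\mathbf{c}}g_{\mathbf{c}}\,\ln(\mathbf{z},\mathbf{c})$ with each $g_{\mathbf{c}}$ smooth and $\mathbf{c}\in\mathbb{Z}_{\geq0}^{m}$. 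Combining these with the smooth form $F^{*}\beta$ gives
\[
F^{*}\alpha=\mathbf{z}^{-\mathbf{b}}\sum_{\mathbf{c}}\ln(\mathbf{z},\mathbf{c})\,\widetilde\beta_{\mathbf{c}},
\]
a finite sum of forms of divisorial type along $D\cap U$, which establishes the first assertion. For the final claim, if $\alpha$ is logarithm-free then $\mathbf{k}=0$, no logarithmic factors occur, and the formula collapses to $F^{*}\alpha=\mathbf{z}^{-\mathbf{b}}\,\widetilde\beta$ with $\widetilde\beta$ smooth, so $F^{*}\alpha$ is logarithm-free as well.
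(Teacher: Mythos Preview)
Your proposal is correct and follows essentially the same approach as the paper: reduce to a local computation and invoke the monomial factorization of a holomorphic function whose zero locus lies in the union of coordinate hyperplanes. The paper simply states this factorization as a fact without proof, whereas you supply the justification via the UFD property of $\mathbb{C}\{z_1,\ldots,z_{m+n}\}$ and then carry out the explicit bookkeeping for the polar and logarithmic factors; in this sense your argument is a more detailed version of the paper's.
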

	\begin{proof}
		We only need to check that $F^{*}(\alpha)$ has divisorial type singularities (possible logarithm-free) locally. The local case can be derived from the following fact: for a open subset $U\subset \mathbb{C}^{m+n}$ and a holomorphic function $g:U\rightarrow\mathbb{C}$, which satisfies 
		\[
		g^{-1}(0)=\{(z_{1},z_{2},...,z_{m+n})\in U:z_{i}=0\text{ for some }i\in\{1,2,...,m\}\},
		\] there exist $\bfj \in \Z_{>0}^m$ and a nowhere vanishing holomorphic function $\tilde{g}$ on $U$, such that $g= \tilde{g}\, \bfz^{\bfj}$.
	\end{proof}
	Now, we can define the Cauchy principal value.
	
	\begin{defn}
		Let $M$ be a complex manifold with a simple normal crossing divisor $D$. Assume $\{U_{i}\}_{i\in I}$ is a coordinate cover compatible with $D$, and $\{\rho_{i}\}_{i\in I}$ is a partition of unity for $\{U_{i}\}_{i\in I}$. Given a compactly supported differential form with divisorial type singularities along $D$, which we denote by $\alpha$. The integral of $\alpha$ over $M$, \textbf{in the sense of Cauchy principal value}, is defined by
		\[
		\dashint_{M}\alpha=\sum_{i\in I}\dashint_{U_{i}}\rho_{i}\alpha.
		\]
	\end{defn}
	It is standard to prove the Cauchy principal value does not depend on the choice of $\{U_{i}\}_{i\in I}$ and $\{\rho_{i}\}_{i\in I}$.
	
	Finally, we introduce another way to characterize the Cauchy principal value on complex manifolds, which will be used in the main content.
	
	\begin{defn}\label{continuous defining function}
		Let $M$ be a complex manifold with simple normal crossing divisor $D$. A \textbf{smooth defining function of $D$} is a smooth function $h:M\rightarrow \mathbb{R}$ that satisfies the following conditions:
		\begin{enumerate}[(1)]
			\item $h$ is non-negative, and $h^{-1}(0)=D$.
			\item For each point $p\in D$, there exists an open neighborhood $U\subset M$ of $p$, a smooth positive function $f:U\rightarrow\mathbb{R}$, and a holomorphic function $g:U\rightarrow \mathbb{C}$, such that $h^{-1}(0)=D\cap U$ and $h|_{U}=fg\bar{g}$.
		\end{enumerate}
	\end{defn}
	
	The concept of smooth defining function is stable under pull back:
	\begin{prop}\label{pull back of defining function}Let $M$ and $N$ be complex manifolds with simple normal crossing divisors $D$ and $D'$ respectively. Assume $F:M\rightarrow N$ is a holomorphic map which satisfies $F^{-1}(D')=D$ as sets, and $h$ is a smooth defining function of $D'$. Then $h\circ F$ is a smooth defining function of $D$.
	\end{prop}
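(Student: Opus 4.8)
The plan is to verify directly that $h\circ F$ satisfies the two conditions of Definition~\ref{continuous defining function}, transporting the local factorization of $h$ through the holomorphic map $F$ by precomposition. Since $F$ is holomorphic, hence smooth, and $h$ is smooth and nonnegative, $h\circ F$ is a smooth nonnegative function on $M$; this settles nonnegativity. For the global zero-set condition I would simply compute
\[
(h\circ F)^{-1}(0)=F^{-1}\bigl(h^{-1}(0)\bigr)=F^{-1}(D')=D,
\]
where the middle equality uses condition~(1) for $h$ and the last equality is the standing hypothesis $F^{-1}(D')=D$. This establishes condition~(1) for $h\circ F$.

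For condition~(2), fix $p\in D$. Because $F^{-1}(D')=D$, we have $F(p)\in D'$, so the defining-function property of $h$ supplies an open neighborhood $V\subset N$ of $F(p)$, a smooth positive function $f':V\to\mathbb{R}$, and a holomorphic function $g':V\to\mathbb{C}$ with $h|_V=f'\,g'\bar{g'}$. I would then set $U:=F^{-1}(V)$, an open neighborhood of $p$, and define $f:=f'\circ F$ and $g:=g'\circ F$ on $U$. Precomposition with the holomorphic map $F$ keeps $g$ holomorphic and keeps $f$ smooth and strictly positive, and on $U$ one has
\[
(h\circ F)\big|_U=(h|_V)\circ F=(f'\circ F)\,(g'\circ F)\,\overline{(g'\circ F)}=f\,g\bar{g}.
\]
Combined with the local zero-set identity $(h\circ F)\big|_U^{-1}(0)=D\cap U$, which is merely the restriction of condition~(1), this is exactly condition~(2).

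There is no serious obstacle here: the statement is essentially a functoriality assertion, and the only point requiring attention is the bookkeeping, namely checking that smoothness and positivity of $f'$ and holomorphicity of $g'$ survive precomposition with $F$ (which they do precisely because $F$ is holomorphic), and that the set-theoretic hypothesis $F^{-1}(D')=D$ is exactly what is needed both for the global zero set and for guaranteeing $F(p)\in D'$, so that the local factorization of $h$ is actually available at $F(p)$. I note that no use of the simple normal crossing structure is needed beyond what is already encoded in the assumption that $h$ is a smooth defining function; the SNC hypotheses on $D$ and $D'$ play no active role in this argument and serve only to make the notion of a defining function meaningful.
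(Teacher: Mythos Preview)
Your proof is correct and is exactly the direct verification the paper has in mind; the paper's own proof consists of the single sentence ``This can be verified directly.'' You have simply supplied the details of that verification, and there is nothing to add.
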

	\begin{proof}
		This can be verified directly. 
	\end{proof}
	The following proposition provides another way to characterize the Cauchy principal value:
	\begin{prop}\label{prop-a18}
		Let $M$ be a complex manifold with a simple normal crossing divisor $D$. Let $h$ be a smooth defining function of $D$, and let $\alpha$ be a compactly supported differential form with divisorial-type singularities along $D$.
		Then we have
		\[
		\lim_{\delta\rightarrow 0}\int_{h>\delta}\alpha=\dashint_{M}\alpha.
		\]
	\end{prop}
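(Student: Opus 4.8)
The plan is to reduce the identity to the local statement of \Cref{thm-a2} by means of a partition of unity, using crucially that the local Cauchy principal value is insensitive to the choice of defining data. First I would fix a coordinate cover $\{U_i\}_{i\in I}$ compatible with $D$ and a subordinate partition of unity $\{\rho_i\}_{i\in I}$, as in the definition of $\dashint_M$. Since $\alpha$ is compactly supported, only finitely many $\rho_i$ meet its support, so for every $\delta>0$ one has $\int_{h>\delta}\alpha=\sum_{i\in I}\int_{h>\delta}\rho_i\alpha$ with a finite sum, and the limit $\delta\to0$ may be exchanged with this sum once each individual limit is shown to exist.

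The core step is to match the region $\{h>\delta\}$ in each chart with the regions treated in \Cref{thm-a2}. On $U_i$ choose coordinates in which $D\cap U_i=\{z_1\cdots z_m=0\}$. By \Cref{continuous defining function}, after shrinking $U_i$ we may write $h|_{U_i}=f\,g\bar g$ with $f>0$ smooth and $g$ holomorphic satisfying $g^{-1}(0)=D\cap U_i$. By the factorization recalled in the proof of \Cref{pull back of divisorial type singularities}, there exist $\bfj\in\Z_{>0}^m$ and a nowhere vanishing holomorphic $\tilde g$ with $g=\tilde g\,\bfz^{\bfj}$, so that $h|_{U_i}=\tilde f\,|\bfz^{\bfj}|^2$, where $\tilde f:=f|\tilde g|^2$ is smooth and strictly positive. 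Writing $F_i:=\sqrt{\tilde f}$, which is again smooth and nowhere vanishing, we obtain the identity of regions $\{h>\delta\}\cap U_i=\{|F_i\,\bfz^{\bfj}|>\sqrt\delta\}$.

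With this identification, the substitution $\delta'=\sqrt\delta$ gives
\[
\lim_{\delta\to0}\int_{h>\delta}\rho_i\alpha
=\lim_{\delta'\to0}\int_{|F_i\bfz^{\bfj}|>\delta'}\rho_i\alpha .
\]
Because $\rho_i\alpha$ is compactly supported with divisorial type singularities along $z_1\cdots z_m$ and $F_i$ is a nowhere vanishing smooth function, \Cref{thm-a2} ensures that the right-hand limit exists and, being independent of the choice of $F_i$ and of $\bfj$, coincides with the local principal value $\dashint_{U_i}\rho_i\alpha$. Summing over the finitely many contributing indices and invoking the definition of $\dashint_M$ then yields $\lim_{\delta\to0}\int_{h>\delta}\alpha=\sum_{i\in I}\dashint_{U_i}\rho_i\alpha=\dashint_M\alpha$.

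I do not expect a serious obstacle here, since the substantive analysis is already contained in \Cref{thm-a2}; the only point demanding care is the passage in the second step from the abstract defining function $h$ to the explicit model $|F_i\,\bfz^{\bfj}|^2$. This hinges on the factorization of the holomorphic defining function into a monomial times a unit, together with the fact that \Cref{thm-a2} permits an arbitrary nowhere vanishing smooth factor $F_i$; these are precisely what make the partition-of-unity definition of $\dashint_M$ agree with the cutoff $\{h>\delta\}$ regardless of the chosen defining function.
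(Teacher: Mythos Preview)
Your proposal is correct and follows essentially the same route as the paper: localize via a partition of unity subordinate to charts compatible with $D$, write $h$ on each chart as a positive smooth multiple of $|\bfz^{\bfj}|^2$ by factoring the holomorphic part of the defining data, and then invoke \Cref{thm-a2} (with the harmless reparametrization $\delta'=\sqrt{\delta}$) to identify the cutoff integral with the local Cauchy principal value. The paper's proof differs only cosmetically: it singles out $U_0=M\setminus D$ as one extra chart on which $h$ is bounded below on $\mathrm{supp}(\rho_0\alpha)$, whereas you absorb this into the general argument (the $m=0$ case, or equivalently taking $g\equiv 1$, $f=h$ on charts disjoint from $D$).
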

	\begin{proof}
		\def\supp{\mathrm{supp}}
		For each $p\in D$, we choose a coordinate neighborhood $U_{p}$ compatible with $D$, on which $h|_{U_{p}}=f_{p}g_{p}\bar{g}_{p}$, where $f_{p}:U_{p}\rightarrow \mathbb{R}$ is a smooth positive function and $g_{p}:U_{p}\rightarrow\mathbb{C}$ is holomorphic. Let $U_{0}=M\backslash D$, then $\{U_{0}\}\cup\{U_{p}\}_{p\in D}$ is a open cover of $M$. We denote a corresponding partition of unity by $\{\rho_0\}\cup\{\rho_{p}\}_{p\in D}$.
		
		We notice that $\supp (\rho_{0}\alpha)$ is compact, so $h|_{\supp(\rho_{0}\alpha)}$ has a nonzero minimum. As a consequence,
		\be
		\label{one tech}
		\lim_{\delta\rightarrow 0}\int_{h>\delta}\rho_{0}\alpha=\int_{M}\rho_{0}\alpha=\dashint_{M}\rho_{0}\alpha.
		\ee
		For $\rho_{p}\alpha$, we notice the following fact: for a open subset $U\subset \mathbb{C}^{m+n}$ and a holomorphic function $g:U\rightarrow\mathbb{C}$, which satisfies 
		\[
		g^{-1}(0)=\big\{(z_{1},z_{2},\cdots,z_{m+n})\in U:z_{i}=0\text{ for some }i\in\{1,2,...,m\}\big\},
		\]
		there exist $\bfj \in \Z_{>0}^m$ and a nowhere vanishing holomorphic function $\tilde{g}$ on $U$, such that $g=\tilde{g}\, \bfz^{\bfj}$. By using this fact and Theorem \ref{thm-a2}, we have
		\be
		\label{two tech}
		\lim_{\delta\rightarrow 0}\int_{h>\delta}\rho_{p}\alpha=\dashint_{M}\rho_{p}\alpha
		\ee
		for $p\in D$. Combining (\ref{one tech}) and (\ref{two tech}), we proved this proposition.
	\end{proof}
	
	%\nocite{costellorenormalization,costello_gwilliam_2021,costello_gwilliam_2016}
	%\section{Review on complex Mellin transformation}
	\bibliography{lib}
	\bibliographystyle{plain}
\end{document}